\newtheorem{theorem}{Theorem}[section]
\newtheorem{lemma}[theorem]{Lemma}
\newtheorem{proposition}[theorem]{Proposition}
\newcommand{\C}{{\mathcal{C}}}
\newcommand{\W}{{\mathcal{W}}}
\newenvironment{proof}[1][Proof]{\begin{trivlist}
\item[\hskip \labelsep {\bfseries #1}]}{\end{trivlist}}
\newenvironment{definition}[1][Definition]{\begin{trivlist}
\item[\hskip \labelsep {\bfseries #1}]}{\end{trivlist}}
\newcommand{\prom}{\mathsf{prom}}
\newcommand{\hd}{\mathsf{hd}}
\newcommand{\tl}{\mathsf{tl}}
\newcommand{\ana}{\mathsf{ana}}
\newcommand{\inn}{\mathsf{in}}
\newcommand{\parr}{\bindnasrepma}
\newcommand{\mul}{\mathsf{mul}}
\newcommand{\p}{\mathsf{P}}
\newcommand{\dst}{\mathsf{dst}}
\mathchardef\mhyphen="2D
\newcommand{\op}{\mathsf{op}}
\newcommand{\set}{\mathsf{set}}
\newcommand{\reify}{\mathsf{reify}}
\newcommand{\dist}{\mathsf{dist}}
\newcommand{\wk}{\mathsf{wk}}
\newcommand{\seqo}{\mathsf{Seq}}
\newcommand{\prf}{\mathsf{Prf}}
\newcommand{\tot}{\mathsf{Tot}}
\newcommand{\fprod}{\mathsf{prod}}
\newcommand{\id}{\mathsf{id}}
\newcommand{\sym}{\mathsf{sym}}
\newcommand{\ri}{\mathsf{runit}_\otimes}
\newcommand{\passoc}{\mathsf{pasc}}
\newcommand{\assoc}{\mathsf{assoc}}
\newcommand{\cut}{\mathsf{cut}}
\newcommand{\af}{\mathsf{af}}
\newcommand{\psym}{\mathsf{psym}}
\newcommand{\app}{\mathsf{app}}
\newcommand{\lfe}{\mathsf{lfe}}
\newcommand{\match}{\mathsf{ma}}
\newcommand{\FamInj}{\mathsf{FamInj}}
\newcommand{\li}{\mathsf{lunit}_\otimes}
\newcommand{\unit}{\mathsf{unit}}
\newcommand{\dec}{\mathsf{dec}}
\newcommand{\ini}{\mathsf{in}_i}
\newcommand{\com}{\mathbf{com}}
\newcommand{\con}{\mathsf{con}}
\newcommand{\ar}{\mathsf{ar}}
\newcommand{\der}{\mathsf{der}}
\newcommand{\term}{\mathsf{t}}
\newcommand{\var}{\mathbf{var}}
\newcommand{\tru}{\mathtt{tt}}
\newcommand{\ff}{\mathtt{ff}}
\journal{Annals of Pure and Applied Logic}
\begin{document}

\begin{frontmatter}

\title{Imperative Programs as Proofs via Game Semantics
\footnote{NOTICE: this is the author’s version of a work that was accepted for publication in Annals of Pure and Applied Logic. Changes resulting from the publishing process, such as peer review, editing, corrections, structural formatting, and other quality control mechanisms may not be reflected in this document. Changes may have been made to this work since it was submitted for publication. A definitive version was subsequently published in Annals of Pure and Applied Logic, volume 164, issue 11 at \url{http://dx.doi.org/10.1016/j.apal.2013.05.005}}
}

\author[label2]{Martin Churchill}
\author[label1]{Jim Laird}
\author[label1]{Guy McCusker}

\address[label1]{University of Bath}
\address[label2]{Swansea University}

\begin{abstract}
  Game semantics extends the Curry-Howard isomorphism to a three-way
  correspondence: proofs, programs, strategies. But the universe of
  strategies goes beyond intuitionistic logics and lambda calculus, to
  capture stateful programs. In this paper we describe a logical
  counterpart to this extension, in which proofs denote such
  strategies. The system is expressive: it contains all of the
  connectives of Intuitionistic Linear Logic, and first-order
  quantification. Use of Laird's \emph{sequoid} operator allows proofs
  with imperative behaviour to be expressed. Thus, we can embed
  first-order Intuitionistic Linear Logic into this system, Polarized
  Linear Logic, and an imperative total programming language.

  The proof system has a tight connection with a simple game model,
  where games are forests of plays. Formulas are modelled as games,
  and proofs as history-sensitive winning strategies. We provide a
  strong \emph{full completeness} result with respect to
  this model: each finitary strategy is the denotation of a unique
  analytic (cut-free) proof. Infinite strategies correspond to
  analytic proofs that are infinitely deep. Thus, we can normalise
  proofs, via the semantics.

\end{abstract}

\begin{keyword}
game semantics \sep full completeness \sep history-sensitive strategies \sep sequentiality

\MSC[2010] 68Q55 \sep 03B70 \sep 03F52 \sep 18C50


\end{keyword}

\end{frontmatter}

\section{Introduction}

The Curry-Howard isomorphism between proofs in intuitionistic logics
and functional programs is a powerful theoretical and practical
principle for specifying and reasoning about programs.  Game semantics
provides a third axis to this correspondence: each proof/program at a
given type denotes  a strategy for the associated game, and typically
a \emph{full completeness} result establishes that this correspondence
is also an isomorphism \cite{AJ_MLL}. However, in languages with
side-effects such as mutable state it is evident that there are many
programs which do not correspond to intuitionistic proofs. Game
semantics has achieved notable success in providing models of such
programs  \cite{AMc_LSS,AHM_GR,Lai_FLC}, in which they typically denote
``history-sensitive'' strategies --- strategies which may break the
constraints of innocence \cite{HO_PCF} or history-freeness
\cite{AJ_MLL} imposed in fully complete models of intuitionistic or
linear logic. The full completeness of these models means there is a
precise correspondence between programs and history-sensitive
strategies, which raises the question: is there a logic to flesh out
the proofs/imperative programs/history-sensitive strategies
correspondence? 

In this paper we present a first-order logic, \textsf{WS1}, and a
games model for it in which proofs denote history-sensitive
strategies. Thus total imperative programs correspond, via the game
semantics, to proofs in \textsf{WS1}. Moreover, because \textsf{WS1} is
more expressive than the typing system for a typical programming
language, it can express finer  behavioural properties of strategies. In
particular, we can embed  first-order intuitionistic logic with
equality, Polarized Linear Logic, and a finitary imperative language
with ground store, coroutines and some infinite data structures. We
also take first steps towards answering some of the questions posed by
the logic and its semantics: Are there any formulas which only have
`imperative proofs', but no proofs in a traditional `functional' proof
system?  Can we use the expressivity of \textsf{WS1}
to 
specify imperative programs?


\subsection{Related Work}

The games interpretation of linear logic upon which \textsf{WS1} is
based was introduced by Blass in a seminal paper
\cite{Bla_LL}. Blass also gives instances of history sensitive
strategies which are not denotations of linear logic proofs; these do,
however, correspond to proofs in \textsf{WS1}. The particular
symmetric monoidal closed category of games underlying our semantics
has been studied extensively from both logical and programming
perspectives \cite{Cur_SS,Lam_PLL,Hy_GS}. Longley's project
to develop a programming language based on it \cite{Long_PLGM} may be
seen as complementary to our aim of understanding it from a logical
perspective.

Several logical systems have taken games or interaction as a semantic basis yielding a richer notion of meaning  than classical or intuitionistic truth, including Ludics \cite{Gir_LS} and Computability Logic \cite{JCL}. The latter also provides an analysis of Blass's examples, suggesting further connections with our logic, although there is a difference of emphasis: the research described here is focused on investigating the structural properties of the games model on which it is based.


Perhaps closest in spirit to our work is tensorial logic, introduced
in~\cite{MT_RM}. Like \textsf{WS1}, tensorial logic is directly
inspired by the structure of strategies in game semantics, and
in~\cite{MelliesPA:gamssd}, Melli\`es demonstrates a tight correspondence between
the logic and categories of innocent strategies on dialogue games. Our
focus in this paper is somewhat different, because we are primarily
concerned with the history-sensitive behaviour characteristic of (game
semantics of) imperative programs, rather than the purely functional
programs that denote innocent strategies.

In \cite{CC_CGC} a proof theory for Conway games is presented, where formulas are the game trees themselves. In \cite{G_LLB}, the $\lambda \overline{\lambda}$-calculus is presented, where individual moves of game semantics are represented by variables and binders. Both settings deal with history-sensitive strategies, and have dynamics corresponding to composition of strategies.

A quite different formalisation of game semantics for first order logic is given in \cite{Lau_FOL}, also with a full completeness result.

\subsection{Contribution} 

The main contribution of this paper is to present an expressive
logical system and its semantics, in which proofs correspond to
history sensitive strategies. Illustrating the expressive power of
this system, we show how proofs of intuitionistic first-order logic,
Polarized Linear Logic and imperative programming constructs 
may be embedded in it. We also demonstrate how formulas in the logic
can be used to represent some properties of imperative programs: for
example, we describe a formula for which any proof corresponds to a
well-behaved (single write) Boolean storage cell.
 
The interpretation of \textsf{WS1} includes some interesting developments of game semantics. In particular,  the exponentials are treated in a novel way: we use the fact that the  semantic exponential introduced in \cite{Hy_GS} is a final coalgebra, and reflect this explicitly in the logic in the style of \cite{Cla_FIX}. This formulation allows us to express the usual exponential introduction rules (promotion and dereliction) but also proofs that correspond to strategies on $!A$ that act differently on each interrogation, such as the reusable Boolean reference cell. Another development is the interpretation of first-order logic with equality. A proof corresponds to a family of  winning strategies --- one for each possible interpretation of the atoms determined by a standard notion of ${\mathcal{L}}$-structure --- which must be \emph{uniform} across ${\mathcal{L}}$-structures. This notion of uniformity is precisely captured by the requirement that strategies are \emph{lax natural transformations}  between the relevant functors. 

The main technical results of this paper concern the sharp
correspondence between proofs and strategies: \emph{full completeness}
results. 
We show that any bounded uniform winning strategy is the denotation of a unique (cut-free) \emph{analytic proof}. In the exponential-free fragment, where all strategies are bounded, it follows that many rules such as cut are admissible; and it allows us to normalise proofs to analytic proofs via the semantics. 
For the full logic, since the exponentials correspond to final coalgebras, proofs can be unfolded to infinitary form. Extending semantics-based normalisation to the full \textsf{WS1}, the resulting normal forms are \emph{infinitary} analytic proofs.

\section{Games and Strategies}


Our notion of game is essentially that introduced by \cite{Bla_LL},
and similar to that of \cite{AJ_MLL,Lam_SGL}, augmented with winning
conditions introduced as in \cite{Hy_GS}. We make use of the
categorical structure on games and strategies first introduced in
\cite{JoyalA:remslt}.

Informally, a game is a tree where Player and Opponent own alternate
nodes, together with a polarity specifying which protagonist owns the
starting node. A play proceeds down a particular branch, with
Opponent/Player choosing the subtree for nodes they control. A
strategy for Player specifies which choice Player should make in
response to Opponent's moves so far. The winner of a finite play is
the last protagonist to play a move. The winner of an infinite play is
specified by a winning condition for each game.


If $A$ is a set, let $A^\ast$ denote the free monoid (set of
sequences) over $A$, $A^\omega$ the set of infinite sequences over
$A$, and $\epsilon$ the empty sequence. We write $s \sqsubseteq t$ if
$s$ is a prefix of $s$, and $s \sqsubset t$ if $s$ is a strict
(finite) prefix of (possibly infinite)~$t$. If $X \subseteq A^\ast$,
write $\overline{X} = \{ s \in A^\omega : \forall t \sqsubset s, t \in
X \}$.

\begin{definition}
A \emph{game} is a tuple $(M_A, \lambda_A, b_A, P_A, W_A)$ where
\begin{itemize}
 \item $M_A$ is a set of moves
 \item $\lambda_A : M_A \rightarrow \{ O, P \}$
  \begin{itemize} \item We call $m$ an
   \emph{\textsf{O}-move} if $\lambda_A(m) = O$ and a
   \emph{\textsf{P}-move} if $\lambda_A(m) = P$. \end{itemize}
 \item $b_A \in \{ O, P \}$ specifies a starting player
  \begin{itemize}
  \item We call $s \in M_A^\ast$ \emph{alternating} if $s$ starts with
    a $b_A$-move and alternates between \textsf{O}-moves and
    \textsf{P}-moves. Write $M_A^\varoast$ for the set of such
    sequences.
  \end{itemize}
 \item $P_A \subseteq M_A^\varoast$ is a nonempty prefix-closed set of
   valid plays.
 \item $W_A \subseteq \overline{P_A}$ represents the set of infinite
   plays that are P-winning; we say an infinite play is O-winning if
   it is not P-winning.
\end{itemize}
\end{definition}
\label{natgame}

For finite plays, the last player to play a move wins: let $W_A^\ast =
W_A \cup E_A$ where $E_A$ is the set of plays that end in a P-move. We
will call a game $A$ \emph{negative} if $b_A = O$ and \emph{positive}
if $b_A = P$. We write $A, B, C, \ldots$ for arbitrary games; $L, M,
N, \ldots$ for arbitrary negative games and $P, Q, R, \ldots$ for
arbitrary positive games.


\label{seqnot}

\begin{definition} If $A$ is a game, we define its \emph{negation} by changing its polarity, and swapping its Player/Opponent labelling. Define $\neg : \{O , P\}
\rightarrow \{O , P\}$ by $\neg(O) = P$ and $\neg(P) = O$.
 $$A^\perp = (M_A, \neg
\circ \lambda_A, \neg b_A , P_A, \overline{P_A} - W_A).$$
Negation is evidently an involutive bijection between negative and positive games.
\end{definition}


\begin{definition}
A \emph{strategy} $\sigma$ for a game $(M_A, \lambda_A, b_A, P_A, W_A)$ is a subset of $P_A$ (a set of traces) satisfying:
\begin{itemize}
 \item If $sa \in \sigma$, then $\lambda_A(a) = P$
 \item If $sab \in \sigma$, then $s \in \sigma$
 \item If $sa,sb \in \sigma$, then $a = b$
 \item If $\sigma = \varnothing$ then $b_A = P$, and if $\epsilon \in
  \sigma$ then $b_A = O$.
\end{itemize}
\end{definition}

\noindent We say a strategy $\sigma$ is \emph{bounded} if $\exists k \in \mathbb{N}
. \forall s \in \sigma . |s| \leq k$; in which case we write
$\mathsf{depth}(\sigma)$ for the smallest such $k$ (the length of the
longest play in $\sigma$).

\begin{definition}
  A strategy on a game $A$ is \emph{total} if it is nonempty and
  whenever $s \in \sigma$ and $sa \in P_A$, there is some $b \in M_A$
  such that $sab \in \sigma$.  A total strategy $\sigma$ is
  \emph{winning} if whenever $s \in \overline{P_A}$ and all prefixes
  of $s$ ending in a \textsf{P}-move are in $\sigma$, then $s \in
  W_A$.
\end{definition}

\subsection{Connectives}

\label{gameconnectives}

We next describe  operations on games, which will correspond to
connectives in our logic. These come in dual pairs, determined by
involutive negation. 

First, some notation. If $X$ and $Y$ are sets, let $X + Y = \{
\mathsf{in}_1(x) : x \in X \} \cup \{ \mathsf{in}_2(y) : y \in Y
\}$. We use standard notation $[f,g]$ for copairing. If $s \in (X +
Y)^\ast$ or $s \in (X + Y)^\omega$ then $s|_i$ is the subsequence of
$s$ consisting of elements of the form $\mathsf{in}_i(z)$. If $X_1
\subseteq X^\ast$ and $Y_1 \subseteq Y^\ast$ let $X_1 \| Y_1 = \{ s
\in (X+Y)^\ast : s|_1 \in X_1 \wedge s|_2 \in Y_1 \}$. If $X_1
\subseteq X^\omega$ and $Y_1 \subseteq Y^\omega$ let $X_1 \| Y_1 = \{
s \in (X+Y)^\omega : s|_1 \in
X_1 \wedge s|_2 \in Y_1 \}$.


\paragraph{Empty Game} We define a negative game with no
moves: $$\mathbf{1} = (\emptyset, \emptyset, O, \{\epsilon\},
\emptyset).$$ There is one strategy on $\mathbf{1}$ given by $\{
\epsilon \}$, and this strategy is total (and winning, as
$\overline{P_\mathbf{1}}$ is empty).

There is one strategy,  $\emptyset$,  on the empty \emph{positive} game $\mathbf{0} = \mathbf{1}^\perp$. This strategy is not total
(intuitively, it is Player's turn to play first but he has no moves to
play).


\paragraph{One-move Game} We write $\bot$ for the negative game with a
 single move $q$ and maximal play consisting of $q$ : $$\bot = (\{q\}, \{ q \mapsto O\} , O,
\{ \epsilon, q \} , \emptyset).$$ There is a single strategy $\{
\epsilon \}$ on $\bot$;  this is not
total. 

We write $\top$ for the positive game with a single move, $\bot^\perp$. 
 There are two strategies on $\top$: $\varnothing$ (which is evidently not total) and $\{q\}$ which
is  total (and thus, trivially winning).

\paragraph{Disjoint Union} 
The negative game $L \& N$ is played over the disjoint union of the moves of $L$ and $N$: a play in this game is either a (tagged) play in $L$ or a (tagged) play in $N$. A play is $P$-winning if it is a $P$-winning play from $L$ or a $P$-winning play from $N$. Thus, on
Opponent's first move he chooses to play either in $L$ or $N$, and
thereafter play remains in that component. Formally, define  
 $$L \& N = (M_L + M_N,[\lambda_L , \lambda_N], O, P_L +^\ast P_N, \{ \inn_1^\omega(s) : s§
\in W_L \} \cup \{ \inn_2^\omega(s) : s \in W_N \})$$ 
where $X_1 +^\ast Y_1 = \{ s \in X_1 \| Y_1 : s|_1 = \epsilon \vee s|_2 = \epsilon \}$ if $X_1 \subseteq X^\ast$ and $Y_1 \subseteq Y^\ast$, and  if $s \in X_i^\ast$
(resp. $X_i^\omega$) we write $\inn_i^\ast(s)$ (resp. $\inn_i^\omega$)
for the corresponding sequence in $(X_1 + X_2)^\ast$ (resp. $(X_1 + X_2)^\omega$).
 A (winning) strategy on $L \& N$ corresponds to a
pairing of a (winning) strategy on $L$ with a (winning)
strategy on $N$ --- hence the identification of this connective with
the ``with'' of linear logic. 

Similarly, the positive game  $Q \oplus R = (Q^\perp \& R^\perp)^\perp$ corresponds to a disjoint union of plays from $Q$ and $R$ where  Player's first move constitutes   a choice to play either in $Q$ or $R$. An infinite play in $Q
\oplus R$ is P-winning if it is P-winning in the relevant component. Thus a
 winning strategy on $Q \oplus R$ corresponds to either a winning strategy on $Q$ or a  winning
strategy on $R$. 

We may  form any set-indexed conjunctions and  disjunctions in this way. Let $X$ be a
set and $\{ N_x : x \in X \}$ a family of negative games indexed
by $X$. We define the game $\prod_{x \in X} N_x$ by \begin{small}$$(\sum_{x \in X}
M_{N_x}, \inn_x(m) \mapsto \lambda_{N_x}(m),O,\{ \inn_x^\ast(s) : x
\in X , s \in P_{N_x} \},\{ \inn_x^\omega(s) : x \in X, s \in W_{N_x}
\}).$$\end{small}
If $\{ Q_x : x \in X \}$ is a family of positive games then $\bigoplus_{x \in X} Q_x
 = (\prod_{x \in X} N_x^\perp)^\perp$.

\paragraph{Symmetric Merge} If $L$ and $N$ are negative games, 
a play in  the negative game $L \otimes N$ is an
\emph{interleaving} of a play in $L$ with a play in $N$.
 Define \begin{small}$$L \otimes N = (M_L + M_N, [\lambda_L , \lambda_N], O,
(P_L \| P_N) \cap M_{L \otimes N}^\varoast, \{ s \in \overline{P_{L \otimes
  N}} : s|_1 \in W_L^\ast \wedge s|_2 \in W_N^\ast \}).$$\end{small}
The
fact that the play restricted to each component must be alternating,
and that the play overall must be alternating, ensures that only
Opponent may switch between components. This operation may be used to interpret the ``times'' of linear logic \cite{Bla_LL}.
 An infinite play in $L \otimes
N$ is P-winning if its restriction to $L$ is P-winning and its
restriction to $N$ is P-winning. 


Similarly, if $Q$ and $R$ are positive games, plays in the positive game $Q \parr R = (Q^\perp \otimes R^\perp)^\perp$
consist of interleavings of plays in $Q$ and $R$ in which Player may switch between the two
components. An infinite play in $Q \parr R$ is P-winning if its
restriction to $Q$ is P-winning or its restriction to $R$ is
P-winning.

\paragraph{Left Merge} Let $A$ be a game of polarity $a$ (positive or negative), and $N$ a negative
game. The game $A \oslash N$ has polarity $a$: a play in this game  is an interleaving of a play in $A$ with a play in $N$ \emph{such that the first move, if any, is in $A$}.  An infinite play in $A \oslash N$ is P-winning if both of its
restrictions are P-winning.    Formally, define \begin{small}$$A \oslash N = (M_A + M_N, [ \lambda_A , \lambda_N ] ,
b_A, (P_A \|_L P_N) \cap M_{A \oslash N}^\varoast, \{ s \in P_{A
  \oslash N}^\omega : s|_1 \in W_A^\ast \wedge s|_2 \in W_N^\ast
\}).$$\end{small} 
where 
$X_1 \|_L Y_1
= \{ s \in X_1 \| Y_1 : s|_1 = \epsilon \Rightarrow s|_2 = \epsilon
\}$. 
Observe that it is Opponent who switches between components: if $A$ is
negative then $A \oslash N$ consists of the plays in $A \otimes N$
which start in $A$ (or are empty). This connective on games, the
\emph{sequoid}, was introduced in 
\cite{Lai_HOS} and its properties can be used to model stateful
effects \cite{Lai_HOS,Lai_FPC}.

If $Q$ is a positive game, the game $A \lhd Q = (A^\perp \oslash Q^\perp)^\perp$  has the same polarity as $A$, and consists of interleavings of a play in $A$ and a play in $Q$,  starting in $A$ and with Player switching between components and winning an infinite play if he wins in either $A$ or $Q$.




\paragraph{Exponentials} Let $N$ be a negative game. The negative game
$\mathop{!}N$ consists of countably many 
copies of $N$, tagged with natural numbers. A play over $\mathop{!}N$ is an interleaving of plays in each copy, such that any move in $N_{i+1}$ is preceded by a move in $N_i$.
An
infinite play is winning just if it is winning in each
component. Define $$\mathop{!}N = (M_N \times \mathbb{N} ,
\lambda_N \circ \pi_1 , \{ s : \forall i . s|_i \in P_N \wedge s|_i =
\epsilon \Rightarrow s|_{i+1} = \epsilon \} , \{ s : \forall i . s|_i
\in W_N^\ast \}).$$ As with the tensor, there is an implicit switching
condition: only
Opponent can  open new copies and switch between copies. This operation may be used to interpret the ``of course'' of linear logic \cite{Hy_GS}.


Dually, if $Q$ is a positive game, $\mathop{?}Q = (!Q^\perp)^\perp$ is 
the game consisting of an infinite number of copies of $Q$, where
Player can spawn new copies and switch between them. An infinite play in  $\mathop{?}Q$
is winning if it is winning in at least one component. 


\subsubsection{Derived Connectives}
We shall also make use of the following derived operations:

\paragraph{Lifts} We can use left merge  to add a single move at the beginning of a game. If $N$ is a negative game, a play in the positive
game $$\downarrow N = \top \oslash N$$ consists of a play in $N$
prefixed by an extra P-move.  A strategy on $\downarrow N$ is either $\emptyset$ or corresponds
to a strategy on $N$. 
A winning strategy on $\downarrow N$ corresponds to a  winning strategy on
$N$. If $P$ is a positive game, a play in the negative game $$\uparrow P = \bot \lhd P$$ consists of a play in $P$ prefixed by an extra O-move. A (winning) strategy on  $\uparrow P$ corresponds to a (winning) strategy on $P$.

\paragraph{Affine Implication} If $M$ and $N$ are negative games, we may define 
 $$M \multimap N = N \lhd M^\perp.$$ A play in $M \multimap
N$ consists of a play in $N$ interleaved with a play in
$M^\perp$ (an `input
version' of $M$), starting in $N$. It is winning if its
restriction to $N$ is P-winning or its restriction to $M^\perp$ is
P-winning (i.e. its
restriction to $M$ is O-winning), agreeing with \cite{Hy_GS}.





\subsubsection{Isomorphisms of Games}

\label{isotrees}

Given two games $A$ and $B$, we say that $A$ and $B$ are \emph{forest
  isomorphic} if $b_A = b_B$ and there is a bijection from $P_A$ to $P_B$ which is monotone with respect to the prefix order, and restricts to a bijection on the $P$-winning plays. 
Some forest isomorphisms between games are given in Figure
\ref{game-isos}. Each isomorphism $M \cong N$ gives rise to winning
strategies $M \multimap N$ and $N \multimap M$, which are mutually
inverse. Thus, winning strategies on $M$ are in bijective
correspondence with winning strategies on $N$.


\begin{figure*}[ht]
\caption{Some Characteristic Isomorphisms of Games}
\centering
\label{game-isos}
\vspace{1ex}

\hrule
\hrule


\begin{tabular}{ccc}
$M \otimes N \cong N \otimes M$ & $P \parr Q \cong Q \parr P$ \\
$M \otimes (N \otimes L) \cong (M \otimes N) \otimes L$ & $P \parr (Q \parr R) \cong (P \parr Q) \parr L$ \\
$M \otimes \mathbf{1} \cong M \cong M \& \mathbf{1}$ & $P \parr \mathbf{0} \cong P \cong P \oplus \mathbf{0}$ \\
$M \& N \cong N \& M$ & $P \oplus Q \cong Q \oplus P$ \\
$M \& (N \& L) \cong (M \& N) \& L$ & $P \oplus (Q \oplus R) \cong (P \oplus Q) \oplus R$ \\
$(M \otimes N) \multimap L \cong M \multimap (N \multimap L)$ & $P \oslash (M \otimes N) \cong (P \oslash M) \oslash N$ \\
$M \multimap (N \& L) \cong (M \multimap N) \& (M \multimap L)$ & $(P \oplus Q) \oslash N \cong (P \oslash N) \oplus (Q \oslash N)$ \\
$M \multimap \mathbf{1} \cong \mathbf{1} \oslash M \cong \mathbf{1}$ & $\mathbf{0} \oslash M \cong \mathbf{0} \lhd M^\perp \cong \mathbf{0}$ \\
$M \otimes N \cong (M \oslash N) \& (N \oslash M)$ & $P \parr Q \cong (P \lhd Q) \oplus (Q \lhd P)$ \\
$(M \& N) \oslash L \cong (M \oslash L) \& (N \oslash L)$ & $(P \oplus Q) \lhd R \cong (P \lhd R) \oplus (Q \lhd R)$ \\
$\prod_{x \in X} (M_x \oslash L) \cong \prod_{x \in X} M_x \oslash L$ & $\bigoplus_{x \in X} (P_x \lhd R) \cong \bigoplus_{x \in X} P_x \lhd R$ \\
$M \oslash (N \otimes L) \cong (M \oslash N) \oslash L$ & $P \lhd (Q \parr R) \cong (P \lhd Q) \lhd R$ \\
$M \oslash \mathbf{1} \cong M$ & $P \lhd \mathbf{0} \cong P$ \\
$(M \multimap N) \multimap \bot \cong (N \multimap \bot) \oslash M$ & $\top \oslash (M \lhd Q) \cong (\top \oslash M) \lhd Q$ \\
$\bot \oslash M \cong \bot$ & $\top \lhd P \cong \top$ \\
$!N \cong N \oslash !N$ & $?P \cong P \lhd ?P$ \\
$!(N \& M) \cong !N \otimes !M$ & $?(P \oplus Q) \cong ?P \parr ?Q$ \\
\end{tabular}

\hrule
\hrule

\end{figure*}

\subsection{Imperative Objects as Strategies}

\label{bangsigma}
\label{impobjstrat}
We may model higher-order programming languages with imperative features by interpreting \emph{types} as games and \emph{programs} as strategies. (Such a semantics of a full object-oriented language, using essentially the notion of game described here, is described in  \cite{Wol_OO}.)
Here, we illustrate the capacity of our games and strategies to represent imperative objects by describing a strategy with the behaviour of a Boolean reference cell, on a  game corresponding to the type of imperative Boolean variables --- essentially the \emph{cell} strategy first described, for a different notion of game, in \cite{AMc_LSS}. (We will later
see how this strategy can be represented as a proof in our logic.)

Let  $\mathbf{B} = \bot
\lhd \top \oplus \top$ be the (negative) game of ``Boolean output'' --- this
has one initial Opponent-move \texttt{q} and two possible Player
responses, representing \texttt{True} or \texttt{False}. Let
$\mathbf{Bi}  = (\bot \& \bot) \lhd \top$ be the (negative) game of
``Boolean input'' which has two starting Opponent-moves
\texttt{in(tt)} and \texttt{in(ff)} and one possible response to this,
\texttt{ok}. The game $!(\mathbf{Bi} \& \mathbf{B})$ represents the
type of a Boolean variable --- it is a product of a \texttt{write}
method which accepts a Boolean input and a \texttt{read} method which
on interrogation produces  a Boolean output, under an exponential
which allows these methods to be used arbitrarily many times. 

The strategy \emph{cell} on this game represents a reference cell
which accepts Boolean input on the left, and returns the last value
written to it as output on the right (we assume it is initialised with
$\mathtt{ff}$).  For readability, we will 
omit the tags on the product and the exponential (since they can be
inferred). 

\[
  \begin{array}{cccccl}
   !(\mathbf{Bi} & \& & \mathbf{B}) \\ 
             &         &  {\mathtt{q}}          & \mathsf{O} \\
             &         &     {\mathtt{ff}}       & \mathsf{P} \\
     { \mathtt{in(tt)}}       &         &            & \mathsf{O} \\
      {\mathtt{ok}}       &         &            & \mathsf{P} \\
              &         &     {\mathtt{q}}      & \mathsf{O} \\
              &         &     {\mathtt{tt}}      & \mathsf{P} \\
  \end{array}
\]

In contrast with the \emph{history-free} strategies which denote
proofs of linear logic in the model of \cite{AJ_MLL}, this strategy is
\emph{history-sensitive} --- the move prescribed by the strategy
depends on the entire play so far.  It is this property which allows
the state of the object to be described implicitly, as in \cite{AMc_LSS}.

\section{The Logic \textsf{WS1}}

\subsection{Formulas of \textsf{WS1}}


The formulas of \textsf{WS1} are based on first-order linear logic,
with some additional connectives, and subject to a notion of
polarity. A \emph{first-order language} consists of:
\begin{itemize}
\item A collection of complementary pairs of predicate symbols $\phi$
  (negative) and $\overline{\phi}$ (positive), each with an arity in
  $\mathbb{N}$ such that $\ar(\phi) = \ar(\overline{\phi})$. This must
  include the binary symbol $=$ (negative), and we write $\neq$ for
  its complement
\item A collection of function symbols, each with an arity.
\end{itemize}

The negative and positive formulas of \textsf{WS1} over
${\mathcal{L}}$ are defined by the following grammar. $M,N$ range
over negative formulas and $P, Q$ over positive formulas; variables
range over some global set $\mathcal{V}$.
\begin{center}
\begin{tabular}{rlllllllllllllllll}
$M$, $N$ := & $\mathbf{1}$ &  $|$ & $\bot$ &$|$ & $\phi(\overrightarrow{s})$ & $|$\\
 & $M \otimes N$ & $|$ & $M \varoslash N$ & $|$ & $N \lhd P$ & $|$ \\
 & $\forall x . N$  & $|$ & $M \& N$ & $|$ & $!N$ \\
$P$, $Q$ := & $\mathbf{0}$ & $|$ & $\top$ & $|$ & $\overline{\phi}(\overrightarrow{s})$ & $|$ \\
 & $P \parr Q$ & $|$ & $P \lhd Q$ & $|$ & $P \varoslash N$  & $|$\\
 & $\exists x . P$ & $|$ & $P \oplus Q$ & $|$ & $?P$ 
\end{tabular}
\end{center}

\noindent Here, $s$ ranges over $\mathcal{L}$-terms, $x$ over
variables, and $\phi(\overrightarrow{s})$ over $n$-ary predicates
$\phi$ applied to a tuple of terms $\overrightarrow{s} = (s_1 , \ldots
, s_n)$.

The involutive negation operation $(\_)^\perp$ sends negative formulas
to positive ones and \emph{vice versa} by exchanging each atom, unit
or connective for its dual --- i.e. ${\mathbf{1}}$ for $\mathbf{0}$,
$\bot$ for $\top$, $\phi(\overrightarrow{x})$ for
$\overline{\phi}(\overrightarrow{x})$, $\otimes$ for $\parr$,
$\varoslash$ for $\lhd$, $\forall$ for $\exists$, $\&$ for $\oplus$
and $!$ for $?$.

\subsubsection{Interpreting Formulas as Games}
We may interpret each positive formula as a positive game, and each negative formula as a negative game, by fixing a truth assignment for the atomic formulas via a standard notion of first-order structure.
\begin{definition}
  An $\mathcal{L}$-structure $L$ is a set $|L|$ together with an
  interpretation function $I_L$ sending:
\begin{itemize}
\item each predicate symbol (with
  arity $n$) to a function $|L|^n \rightarrow \{\tru,\ff\}$ such that
  $I_L(\phi)(\overrightarrow{a}) \not =
  I_L\left(\overline{\phi}\right)(\overrightarrow{a})$ for all
  $\vec{a}$ and $I_L(=)(a,b) = \tru$ iff $a = b$;
\item each function
  symbol $f$ (with arity $n$) to a function $I_L(f) : |L|^n
  \rightarrow |L|$.
\end{itemize}
For any $X \subseteq \mathcal{V}$, an
  \emph{$\mathcal{L}$-{model} over $X$} is a pair $(L,v)$ where $L$ is
  an $\mathcal{L}$-structure and $v : X \rightarrow |L|$ a valuation
  function, yielding an assignment of truth values to all atomic
  formulas with variables in $X$.
\end{definition}

Given a $\mathcal{L}$-model $(L,v)$ over $X$, we may interpret each formula $A$ with free variables in $X$ as a game $\llbracket A\rrbracket(L,v)$  in as follows:
\begin{itemize}
\item Each of the units and connectives
  $\otimes$,$\parr$,$\oslash$,$\lhd$,$\mathbf{1}$,$\mathbf{0}$,$\top$,$\bot$,$!$,$?$,$\&$,$\oplus$
  is interpreted as the corresponding operation on games  from Section
  \ref{gameconnectives}, lifted to an action on families of games.
\item  Positive atoms which are assigned \emph{true} in $(L,v)$ are interpreted as the game with a single (Player) move ($\top$); positive atoms which are assigned \emph{false} are interpreted as the game with no moves ($\mathbf{0}$). Conversely, negative atoms which are assigned \emph{true} in $(L,v)$ are interpreted as the empty game ($\mathbf{1}$), whilst negative atoms which are assigned \emph{false} are interpreted as the game with a single Opponent move ($\bot$).
\item Quantifiers are interpreted as additive conjunctions and
  disjunctions over the domain of $L$ --- i.e. $\llbracket\forall x
  . N\rrbracket(L,v) = \prod_{l \in |L|} \llbracket N \rrbracket(L,v[x
  \mapsto l])$ and $\llbracket \exists x . P \rrbracket(L,v) =
  \bigoplus_{l \in |L|} \llbracket P \rrbracket(L,v[x \mapsto l])$. In
  the case of $\forall x . N$, this is equivalent to Opponent choosing
  an $x \in |L|$ and play proceeding in $N(x)$. In the case of
  $\exists x. P$, this is equivalent to Player choosing an $x \in |L|$
  and play proceeding in $P(x)$.
\end{itemize}

\noindent Note that $\llbracket A^\perp \rrbracket = \llbracket A
\rrbracket^\perp$.

\subsection{Proofs}

\label{proofinterp}

A proof of a formula $\vdash A$ will be interpreted as a \emph{uniform
  family} of \emph{winning strategies} on $\llbracket A
\rrbracket(L,v)$ for each $(L,v)$. We will formalise this
interpretation (and, importantly, the meaning of ``uniformity'') in
Section 6, but with this in mind, we can define proof rules for
\textsf{WS1}. A \emph{sequent} of \textsf{WS1} is of the form $X ;
\Theta \vdash \Gamma$ where $X \subseteq \mathcal{V}$, $\Theta$ is a
set of positive atomic formulas and $\Gamma$ is a nonempty list of
formulas such that $FV(\Theta,\Gamma) \subseteq X$. The explicit free
variable set $X$ is required for the tight correspondence between the
syntax and semantics. For brevity, let $\Phi$ range over $X ; \Theta$
contexts.

We shall interpret such a sequent as a (family of) dialogue games by
interpreting the comma operator in $\Gamma$ as left-associative
left-merge (i.e. either $\oslash$ or $\lhd$ depending on the polarity of
the right-hand operand), so that the first move must occur in the
first element (or head formula) of $\Gamma$. For example, if $M,N$ are negative
formulas and $P,Q$ positive formulas, the sequent $$\vdash M, P, Q, N$$
is semantically equivalent to $$\vdash ((M \lhd P) \lhd Q) \oslash
N.$$ Thus, in the game interpretation of a sequent $\Gamma$ the first
move must occur in the first (or \emph{head}) formula of $\Gamma$.

The derivation rules for proofs are
partitioned into \emph{core rules} and \emph{other rules}. Here $M,N$
range over negative formulas, $P,Q$ over positive formulas,
$\Gamma,\Delta$ over lists of formulas, $\Gamma^\ast$ over non-empty
lists of formulas and $\Gamma^+,\Delta^+$ over lists of positive
formulas.

\subsubsection{Core Rules}

Each $n$-ary connective $\varodot$ of \textsf{WS1} is associated with
\emph{core introduction rules} which introduce that connective in the head position of a sequent: they conclude $\Phi \vdash
\varodot(A_1, \ldots, A_n), \Gamma$ from some premises. These rules
are given in Figure \ref{coreintros}. These core introduction rules
are all \emph{additive}  (by contrast to linear logic: note in particular the difference with respect to the $\otimes$ introduction rule).

\begin{figure*}[ht]
\caption{Core Introduction Rules for \textsf{WS1}}
\begin{small}
\centering
\label{coreintros}

\vspace{1ex}
\hrule
\hrule
\begin{tabular}{ccc}
\\[0.2ex]

\AxiomC{}
\LeftLabel{$\p_\mathbf{1}$}
\UnaryInfC{$\Phi \vdash \mathbf{1}, \Gamma$}
\DisplayProof

&

\AxiomC{$\Phi\vdash A,N,\Gamma$}
\LeftLabel{$\p_\oslash$}
\UnaryInfC{$\Phi\vdash A \varoslash N, \Gamma$}
\DisplayProof

&

\AxiomC{$\Phi\vdash A,P,\Gamma$}
\LeftLabel{$\p_\lhd$}
\UnaryInfC{$\Phi\vdash A \lhd P, \Gamma$}
\DisplayProof

\\[2.5ex]

 \AxiomC{$\Phi \vdash M,N,\Gamma$}
 \AxiomC{$\Phi \vdash N,M,\Gamma$}
\LeftLabel{$\p_\otimes$}
 \BinaryInfC{$\Phi \vdash M \otimes N, \Gamma$}
\DisplayProof

&

\AxiomC{$\Phi \vdash P,Q,\Gamma$}
\LeftLabel{${\p_\parr}_1$}
\UnaryInfC{$\Phi \vdash P \parr Q, \Gamma$}
\DisplayProof

&

\AxiomC{$\Phi \vdash Q,P,\Gamma$}
\LeftLabel{${{\p_\parr}}_2$}
\UnaryInfC{$\Phi \vdash P \parr Q, \Gamma$}
\DisplayProof

\\[2.5ex] 

\AxiomC{$\Phi \vdash M, \Gamma$}
\AxiomC{$\Phi \vdash N, \Gamma$}
\LeftLabel{$\p_\&$}
\BinaryInfC{$\Phi \vdash M \& N, \Gamma$}
\DisplayProof

&
 \AxiomC{$\Phi \vdash P, \Gamma$}
\LeftLabel{${{\p_\oplus}}_1$}
 \UnaryInfC{$\Phi \vdash P \oplus Q, \Gamma$}
\DisplayProof

& 

 \AxiomC{$\Phi \vdash Q, \Gamma$}
\LeftLabel{${{\p_\oplus}}_2$}
 \UnaryInfC{$\Phi \vdash P \oplus Q, \Gamma$}
\DisplayProof

\\[2.5ex]

\AxiomC{}
\LeftLabel{$\p_\top$}
\UnaryInfC{$\Phi \vdash \top$}
\DisplayProof

&

\AxiomC{$\Phi \vdash N $}
\LeftLabel{$\p_\top^-$}
\UnaryInfC{$\Phi \vdash \top , N $}
\DisplayProof

&

\AxiomC{$\Phi \vdash P $}
\LeftLabel{$\p_\bot^+$}
\UnaryInfC{$\Phi \vdash \bot , P $}
\DisplayProof

\\[2.5ex]

\AxiomC{$X ; \Theta , \overline{\phi}(\overrightarrow{s}) \vdash \bot, \Gamma$}
\LeftLabel{$\p_\mathsf{at-}$}
\UnaryInfC{$X ; \Theta \vdash \phi(\overrightarrow{s}) , \Gamma$}
\DisplayProof

&

\AxiomC{$X ; \Theta , \overline{\phi}(\overrightarrow{s}) \vdash \top , \Gamma$}
\LeftLabel{$\p_\mathsf{at+}$}
\UnaryInfC{$X ; \Theta , \overline{\phi}(\overrightarrow{s}) \vdash \overline{\phi}(\overrightarrow{s}) , \Gamma$}
\DisplayProof

&

\AxiomC{$\Phi \vdash N , !N , \Gamma$}
\LeftLabel{$\p_!$}
\UnaryInfC{$\Phi \vdash !N , \Gamma$}
\DisplayProof

\\[4ex]

\AxiomC{$X ; \Theta \vdash P[s/x] , \Gamma$}
\LeftLabel{$\p_\exists^s$}
\RightLabel{$FV(s) \subseteq X$}
\UnaryInfC{$X ; \Theta \vdash \exists x . P, \Gamma$}
\DisplayProof

&

\AxiomC{$X \uplus \{ x \} ; \Theta \vdash N , \Gamma$}
\LeftLabel{$\p_\forall$}
\UnaryInfC{$X ; \Theta \vdash \forall x . N , \Gamma$}
\DisplayProof

&

\AxiomC{$\Phi \vdash P , ?P , \Gamma$}
\LeftLabel{$\p_?$}
\UnaryInfC{$\Phi \vdash ?P , \Gamma$}
\DisplayProof

\\[2.5ex]
 
\end{tabular}

\hrule
\hrule
\end{small}
\end{figure*}


We may interpret each of the core introduction rules with respect to
$(L,v)$ as follows:

\begin{itemize}
\item The interpretation of $\p_\mathbf{1}$ is the unique total
  strategy on the game $\mathbf{1},\Gamma$ (where it is Opponent's
  turn to start, but there are no moves for him to play since the
  first move must take place in the empty game $\mathbf{1}$).
\item The interpretation of $\p_\top$ is the unique total strategy on
  the game $\top$, where Player plays a move and the game is over.
\item The interpretation of the unary rule $\p_\oslash$ is the
  identity function, as the game denoted by the conclusion is the same
  game as that denoted by the premise. The interpretation of $\p_\lhd$
  is similar.
\item For $\p_\&$ we note that given strategies $\sigma : M , \Gamma$
  and $\tau : N , \Gamma$ we can construct a strategy on $M \& N ,
  \Gamma$ which plays as $\sigma$ if Opponent's first move is in $M$,
  and as $\tau$ if Opponent's first move is in $N$.
\item Similarly, for $\p_\otimes$ we note that given strategies
  $\sigma : M , N , \Gamma$ and $\tau : N , M , \Gamma$ we can
  construct a strategy on $M \otimes N$ which plays as $\sigma$ if
  Opponent's first move is in $M$, and as $\tau$ if Opponent's first
  move is in $N$. Here we are making use of the isomorphism $M \otimes
  N \cong (M \oslash N) \& (N \oslash M)$ --- each play in $M \otimes
  N$ must either start in $M$ (and thus be a play in $M \oslash N$) or
  in $N$ (and thus be a play in $N \oslash M$). Thus, \textsf{WS1}
  commits to a particular interpretation of $\otimes$, rather than an
  arbitrary monoidal structure.
\item For ${\p_\oplus}_1$ we note that given a strategy $\sigma : P ,
  \Gamma$ we can construct a strategy on $P \oplus Q , \Gamma$ with
  Player choosing to play his first move in $P$ and thereafter playing
  as $\sigma$. For ${\p_\oplus}_2$ Player can play his first move in $Q$
  and then play as the given strategy.
\item Similarly, for the $\p_\parr$ rules, we note that in a strategy
  on $P \parr Q , \Gamma$ Player may choose to either play his first
  move in $P$ (requiring a strategy on $P , Q, \Gamma$) or in $Q$
  (requiring a strategy on $Q,P,\Gamma$).
\item The interpretation of $\p_\bot^+$ uses the observation that
  total strategies on $\bot , P = \mathop{\uparrow} P$ are in
  correspondence with total strategies on $P$. Similarly, the
  interpretation of $\p_\top^-$ uses the observation that total
  strategies on $\top, N = \mathop{\downarrow} N$ are in
  correspondence with total strategies on $N$.
\item For ${\p_\mathsf{at}}_-$, we know that
  $\phi(\overrightarrow{s}), \Gamma$
  is interpreted by $\mathbf{1},\Gamma$ if $(L,v) \models
  \phi(\overrightarrow{s})$ and by  $\bot,\Gamma$ if $(L,v) \not \models
  \phi(\overrightarrow{s})$. In the former case, there are no moves to
  respond to, so we only need to consider the case when $(L,v) \models
  \overline{\phi}(\overrightarrow{s})$.
\item For ${\p_\mathsf{at}}_+$, we can only provide a family of
  strategies on a game whose first move is in
  $\overline{\phi}(\overrightarrow{s})$ if we know that $(L,v) \models
  \overline{\phi}(\overrightarrow{s})$ since otherwise our family has
  to contain a winning strategy on the empty positive game
  $\mathbf{0}$, of which there are none.
\item For $\p_\forall$, to give a family of strategies on $\forall x
  . N , \Gamma$ we must give a strategy on $N , \Gamma$ for each
  choice of $x$ --- that is, a family of strategies on the set of
  $\Theta$-satisfying $\mathcal{L}$-models over $X \uplus \{ x \}$.
\item For $\p_\exists$, to give a family of strategies on $\exists x
  . P , \Gamma$ we must choose a value $s$ for $x$ and give a family
  of strategies on $P[s/x] , \Gamma$. 
\end{itemize}

As well as the core introduction rules, there is a small set of
\emph{core elimination rules}, found in Figure \ref{coreelims}. 
These
permit decomposition of the second and third formula in a sequent, if
the first formula is $\bot$ or $\top$. They correspond to isomorphisms
between the premise and conclusion in the semantics, which induces a
bijection between the winning strategies on each.
For example,
$\p_\bot^-$ uses the isomorphism $\bot \oslash N \cong \bot$, and
$\p_\bot^\parr$ the isomorphism $\bot \lhd (P \parr Q) \cong (\bot
\lhd P) \lhd Q$ and $\p_\bot^\oslash$ the isomorphism $\bot \lhd (P
\oslash N) \cong (\bot \lhd P) \oslash N$.

\begin{figure*}
\caption{Core Elimination Rules for \textsf{WS1}}
\centering
\label{coreelims}

\vspace{1ex}
\hrule
\hrule
\begin{tabular}{ccc}
\\[0.2ex]

\AxiomC{$\Phi \vdash \bot , \Gamma$}
\LeftLabel{$\p_\bot^-$}
\UnaryInfC{$\Phi \vdash \bot , N , \Gamma$}
\DisplayProof

&

\AxiomC{$\Phi \vdash \bot , P \parr Q, \Gamma$}
\LeftLabel{$\p_\bot^\parr$}
\UnaryInfC{$\Phi \vdash \bot , P,  Q, \Gamma$}
\DisplayProof

& 

\AxiomC{$\Phi \vdash \bot , P \varoslash N, \Gamma$}
\LeftLabel{$\p_\bot^\varoslash$}
\UnaryInfC{$\Phi \vdash \bot , P, N, \Gamma$}
\DisplayProof

\\[2.5ex]

\AxiomC{$\Phi \vdash \top , \Gamma$}
\LeftLabel{$\p_\top^+$}
\UnaryInfC{$\Phi \vdash \top , P , \Gamma$}
\DisplayProof

&

\AxiomC{$\Phi \vdash \top , M \otimes N , \Gamma$}
\LeftLabel{$\p_\top^\otimes$}
\UnaryInfC{$\Phi \vdash \top , M , N, \Gamma$}
\DisplayProof

& 

\AxiomC{$\Phi \vdash \top , N \lhd P , \Gamma $}
\LeftLabel{$\p_\top^\lhd$}
\UnaryInfC{$\Phi \vdash \top , N, P, \Gamma$}
\DisplayProof

\\[2.5ex]

\end{tabular}

\hrule
\hrule

\end{figure*}

\begin{figure*}
\caption{Core Equality Rules for \textsf{WS1}}
\centering
\label{coreequals}

\vspace{1ex}
\hrule
\hrule
\begin{tabular}{ccc}
\\[0.2ex]

\AxiomC{$(X ; \Theta \vdash \Gamma)[ \frac{z}{x} , \frac{z}{y} ]$}
\AxiomC{$X ; \Theta , x \neq y \vdash \Gamma$}
\LeftLabel{$\p_\match^{x,y,z}$}
\BinaryInfC{$X ; \Theta \vdash \Gamma$}
\DisplayProof

&

\AxiomC{}
\LeftLabel{$\p_{\neq}$}
\UnaryInfC{$X ; \Theta , x \neq x \vdash \Gamma$}
\DisplayProof

\\[2.5ex]
 
\end{tabular}

\hrule
\hrule

\end{figure*}

Finally, there are \emph{core equality rules} which deal with
equality, given in Figure \ref{coreequals}. We can interpret the core
equality rules at a model $(L,v)$ as follows:

\begin{itemize}
\item To interpret $\p_{\neq}$ (reflexivity of identity), we take the empty family of
  strategies, since there are no $\Theta$-satisfying
  $\mathcal{L}$-models if $\Theta$ contains $x \neq x$.
\item To interpret the matching rule $\p_\match^{x,y,z}$, we note that the collection of
  $\Theta$-satisfying $\mathcal{L}$-models can be decomposed into
  those where $x$ and $y$ are identified (the left-hand premise) and
  those where they are distinct (the right-hand premise).
\end{itemize}

Once  a discipline regarding where the matching rule
is applied has been introduced, proof search in this core subsystem is particularly
simple, as the form of the sequent to be proved  determines the choice of
final rule. We will later show that the core rules
are sufficient to denote any finitary family of uniform winning
strategies.

\label{focpol}
  We make a brief note on polarities and reversibility, and a
  comparison with focused proof systems. In such systems,
  polarisation is used to differentiate between connectives whose
  corresponding rules are \emph{reversible} or \emph{irreversible}
  \cite{And_Foc}. Irreversible rules act on positive formulas. An
  irreversible rule is one where (reading upwards) in applying the
  rule one must make some definite choice, a choice which could
  determine whether the proof search succeeds or not. Thus, additive
  disjunction introduction is always an irreversible rule, and in
  linear logic so is the tensor introduction rule, since a choice must
  be made regarding how the context is split.

  In \textsf{WS1}, the core introduction rule for tensor (as for all
  such rules) is additive, not multiplicative. Thus, this rule is
  reversible, and $\otimes$ is resultantly a negative connective. In
  contrast, $\parr$ is a positive connective as there are two
  different core introduction rules, which are not reversible. Thus,
  as well as the semantic motivation, we can view our distinction
  between positive and negative formulas in the same light as the
  polarities of focused systems.

  However, there is an important distinction. In focused systems, the
  proof search alternates between negative  phases, in
  which reversible rules are  applied, and positive phases, in which irreversible rules are applied. Analytic proof search in \textsf{WS} follows a
  different two-phase discipline, in which  we first \emph{decompose}
  the first formula of a sequent into a unit using the core
  introduction rules, and then \emph{collate} the tail formulas
  together using the core elimination rules. We will give an embedding
  of \textsf{LLP} inside \textsf{WS} in Section \ref{LLP-WS}.

\subsubsection{Other Rules}

\begin{figure*}[ht]
\caption{Non-core rules of \textsf{WS1}}
\begin{small}
\centering
\label{otherrules}

\vspace{1ex}
\hrule
\hrule
\begin{tabular}{cccc}
\\[0.2ex]

\AxiomC{$\Phi \vdash \Gamma^\ast, \Delta$}
\doubleLine
\LeftLabel{$\p_\mathbf{1}^\mathsf{T}$}
\UnaryInfC{$\Phi \vdash \Gamma^\ast, \mathbf{1}, \Delta$}
\DisplayProof

&

\AxiomC{$\Phi \vdash \Gamma^\ast, M, N, \Delta$}
\doubleLine
\LeftLabel{$\p_\otimes^\mathsf{T}$}
\UnaryInfC{$\Phi \vdash \Gamma^\ast, M \otimes N, \Delta$}
\DisplayProof

& 

\AxiomC{$\Phi \vdash \Gamma^\ast, M, N, \Delta$}
\LeftLabel{$\p_\mathsf{sym}^-$}
\UnaryInfC{$\Phi \vdash \Gamma^\ast, N, M, \Delta$}
\DisplayProof

& 

\AxiomC{$\Phi \vdash \Gamma^\ast, M, \Delta$}
\LeftLabel{$\p_\wk^-$}
\UnaryInfC{$\Phi \vdash \Gamma^\ast, \Delta$}
\DisplayProof

\\[2.5ex]

\AxiomC{$\Phi \vdash \Gamma^\ast, \mathbf{0} , \Delta$}
\doubleLine
\LeftLabel{$\p_\mathbf{0}^\mathsf{T}$}
\UnaryInfC{$\Phi \vdash \Gamma^\ast, \Delta$}
\DisplayProof

&

\AxiomC{$\Phi \vdash \Gamma^\ast, P, Q, \Delta$}
\doubleLine
\LeftLabel{$\p_\parr^\mathsf{T}$}
\UnaryInfC{$\Phi \vdash \Gamma^\ast, P \parr Q, \Delta$}
\DisplayProof

&

\AxiomC{$\Phi \vdash \Gamma^\ast, P, Q, \Delta$}
\LeftLabel{$\p_\mathsf{sym}^+$}
\UnaryInfC{$\Phi \vdash \Gamma^\ast, Q, P, \Delta$}
\DisplayProof

&

\AxiomC{$\Phi \vdash \Gamma^\ast, \Delta$}
\LeftLabel{$\p_\wk^+$}
\UnaryInfC{$\Phi \vdash \Gamma^\ast, P, \Delta$}
\DisplayProof


\\[3ex]

\multicolumn{2}{c}{
\AxiomC{$\Phi \vdash \Gamma^\ast, N^\perp, \Gamma_1$}
\AxiomC{$\Phi \vdash N, \Delta^+$}
\LeftLabel{$\p_\cut$}
\BinaryInfC{$\Phi \vdash \Gamma^\ast, \Delta^+ , \Gamma_1$}
\DisplayProof
}

&

\multicolumn{2}{c}{
\AxiomC{$\Phi \vdash N^\perp$}
\AxiomC{$\Phi \vdash N , Q$}
\LeftLabel{$\p_\cut^0$}
\BinaryInfC{$\Phi \vdash Q$}
\DisplayProof
}

\\[2.5ex]

\multicolumn{2}{c}{
\AxiomC{$\Phi \vdash N , Q , \Delta^+$}
\LeftLabel{$\p_{\id\oslash}$}
\UnaryInfC{$\Phi \vdash M , N , M^\perp \lhd Q , \Delta^+$}
\DisplayProof
}

&

\multicolumn{2}{c}{
\AxiomC{$\Phi \vdash \Gamma, P_i, \Delta$}
\LeftLabel{$\p_{\oplus i}^\mathsf{T}$}
\UnaryInfC{$\Phi \vdash \Gamma, P_1 \oplus P_2, \Delta$}
\DisplayProof
}

\\[2.5ex]

\multicolumn{2}{c}{
\AxiomC{$\Phi \vdash M, \Gamma, \Delta^+$}
\AxiomC{$\Phi \vdash N, \Delta_1^+$}
\LeftLabel{$\p_\mul$}
\BinaryInfC{$\Phi \vdash M, \Gamma, N, \Delta^+, \Delta_1^+$}
\DisplayProof
}

&



\AxiomC{}
\LeftLabel{$\p_\id$}
\UnaryInfC{$\Phi \vdash N, N^\perp$}
\DisplayProof 

&

\AxiomC{$\Phi \vdash M , N , N^\perp$}
\LeftLabel{$\p_{\mathsf{ana}}$}
\UnaryInfC{$\Phi \vdash !M , N^\perp$}
\DisplayProof

\\[3ex]

\multicolumn{2}{c}{
\AxiomC{$\Phi \vdash M,\Gamma,P$}
\AxiomC{$\Phi \vdash N,\Delta^+$}
\LeftLabel{$\p_\multimap$}
\BinaryInfC{$\Phi \vdash M,\Gamma,P \oslash N,\Delta^+$}
\DisplayProof
}

&

\multicolumn{2}{c}{
\AxiomC{$\Phi \vdash \Gamma, M_1 \& M_2, \Delta$}
\LeftLabel{$\p_{\&}^\mathsf{T}{}_i$}
\UnaryInfC{$\Phi \vdash \Gamma, M_i, \Delta$}
\DisplayProof
}

\\[2.5ex]

\AxiomC{$\Phi \vdash \Gamma , !M , \Delta$}
\LeftLabel{$\p_\der^!$}
\UnaryInfC{$\Phi \vdash \Gamma , M , \Delta$}
\DisplayProof

&

\AxiomC{$\Phi \vdash \Gamma , !M , \Delta$}
\LeftLabel{$\p_\con^!$}
\UnaryInfC{$\Phi \vdash \Gamma , !M , !M , \Delta$}
\DisplayProof

&

\AxiomC{$\Phi \vdash \Gamma , P , \Delta$}
\LeftLabel{$\p_\der^?$}
\UnaryInfC{$\Phi \vdash \Gamma , ?P , \Delta$}
\DisplayProof

&

\AxiomC{$\Phi \vdash \Gamma , ?P , ?P , \Delta$}
\LeftLabel{$\p_\con^?$}
\UnaryInfC{$\Phi \vdash \Gamma , ?P , \Delta$}
\DisplayProof

\\[2.5ex]

\multicolumn{2}{c}{
\AxiomC{$X ; \Theta \vdash \Gamma , \forall x . N , \Delta$}
\LeftLabel{$\p_\forall^\mathsf{T}$}
\RightLabel{$FV(s) \subseteq X$}
\UnaryInfC{$X ; \Theta \vdash \Gamma , N[s/x] , \Delta$}
\DisplayProof
}

&

\multicolumn{2}{c}{
\AxiomC{$X ; \Theta \vdash \Gamma , P[s/x] , \Delta$}
\LeftLabel{$\p_\exists^\mathsf{T}$}
\RightLabel{$FV(s) \subseteq X$}
\UnaryInfC{$X ; \Theta \vdash \Gamma , \exists x . P , \Delta$}
\DisplayProof
}

\\[2.5ex]






\end{tabular}

\hrule
\hrule
\end{small}
\end{figure*}

The non-core rules of \textsf{WS1} are given in Figure
\ref{otherrules}, with $\Delta^+$ ranging over lists of positive
formulas, $\Gamma^*$ over non-empty lists of formulas.  These rules
reflect some of the categorical structure enjoyed by our games model,
and allow straightforward interpretation of other logics and
programming languages inside \textsf{WS1}. They include a cut rule,
a multiplicative $\otimes$ rule, a restricted form of the exchange
rule, weakening, and so on. We will later see that these rules are
admissible with respect to the rules in Figures \ref{coreintros},
\ref{coreelims} and \ref{coreequals}, when restricted to the
exponential-free subsystem of \textsf{WS1}. Informally, we can
interpret each of these rules as follows:

\begin{itemize}
\item In the cases of $\p^\mathsf{T}_\otimes$,
  $\p_\mathbf{1}^\mathsf{T}$, $\p_\parr^\mathsf{T}$,
  $\p_\mathbf{0}^\mathsf{T}$, $\p_\sym^+$ and $\p_\sym^-$, the premise
  and conclusion are the same game, up to retagging, and can be
  interpreted using game isomorphisms.
\item In the cases of $\p_\&^\mathsf{T}{}_1$, $\p_\&^\mathsf{T}{}_2$,
  $\p_\wk^-$, $\p_\der^!$ a strategy on the conclusion can be obtained
  by using only part of the strategy on the premise. For example, for
  $\p_\wk^-$ we remove all moves in $M$.
\item In the cases of $\p_\oplus^\mathsf{T}{}_1$,
  $\p_\oplus^\mathsf{T}{}_2$, $\p_\wk^+$, $\p_\der^?$, a strategy on
  the conclusion can be obtained by using the strategy on the premise
  and ignoring the extra moves available to Player.
\item The $\p_\id$ rule requires a strategy on $N \multimap N$: we can
  use a \emph{copycat} strategy in which Player always switches
  component, playing the move that Opponent previously played. The
  ${\p_\id}_\oslash$ rule can be interpreted by playing copycat in the
  $M$ component.
\item The $\p_\cut$ and $\p_\cut^0$ rules can be interpreted by
  playing the two strategies given by the premises against each other
  in the $N$ component: ``parallel composition plus hiding''.
\item The $\p_\mul$ rule can be interpreted by combining the
  strategies given by the premises in a multiplicative manner:
  Opponent's moves in $M,\Gamma$ are responded to in accordance with
  the first premise, and moves in $N$ in accordance with the
  second. The $\p_\multimap$ rule can be interpreted similarly.
\item To interpret $\p_\con^?$, we can construct a strategy on the
  conclusion by identifying the two copies of $?P$ in the premise. To
  interpret $\p_\con^!$, we can construct a strategy on the conclusion
  by identifying the two copies of $!M$ in the conclusion.
\item We can interpret $\p_\ana$ using the following construction:
  given a map $N \multimap M \oslash N$, we may ``unwrap'' it an
  infinite number of times to yield a strategy on $N \multimap
  {!}M$. The $N$ component represents a parameter that can be used to pass
  information between the separate threads, to admit history-sensitive
  behaviour.
\end{itemize}


\subsubsection{Embedding of Intuitionistic Linear Logic}
For any negative formulas $M,N$, define $M \multimap N$ to be $N \lhd
M^\perp$. Thus any formula of first-order Intuitionistic Linear Logic
is a negative formula of \textsf{WS1}. We sketch an embedding into
\textsf{WS1} of proofs of \textsf{ILL} (over the connectives
$\otimes$,$\multimap$,$\forall$,$\&$,$\mathbf{1}$,$\bot$,$!$ and
(negative) atoms, formulated with left- and right- introduction rules
as in \cite{Scha_CatMod}).
 
\begin{proposition}For any proof $p$ of $M_1, \ldots, M_n \vdash N$ in
  \textsf{ILL} with free variables in $X$, there is a proof
  $\kappa(p)$ in \textsf{WS1} of $X ; \emptyset \vdash N , M_1^\perp,
  \ldots, M_n^\perp$.
\end{proposition}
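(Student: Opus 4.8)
The plan is to define $\kappa$ by recursion on the \textsf{ILL} derivation $p$, translating each \textsf{ILL} inference into a short block of \textsf{WS1} rules acting on the one-sided sequent $X ; \emptyset \vdash N, M_1^\perp, \ldots, M_n^\perp$. Under this translation the (unique) \textsf{ILL} succedent $N$ stays put as the negative head formula, the antecedent becomes the list of positive tail formulas, and the context $\Theta$ stays empty throughout, the \textsf{ILL} fragment in question making no use of equality. The non-commutativity of the \textsf{WS1} comma is harmless: the tail is always a list of positive formulas, so $\p_\sym^+$ gives full exchange there, and the head never needs to move; we leave the resulting routine applications of $\p_\sym^+$ implicit below.

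Most cases are then a single derived rule. The \textsf{ILL} axiom $M \vdash M$ and cut translate to $\p_\id$ and $\p_\cut$, and antecedent exchange to $\p_\sym^+$. The right rules $\&R$, $\multimap R$ and $\forall R$ translate to the core rules $\p_\&$, $\p_\lhd$ (recalling that $M \multimap N = N \lhd M^\perp$) and $\p_\forall$; the left rules $\otimes L$, $\&L_i$, $\multimap L$ and $\forall L$ translate, using the de Morgan identities $(M \otimes N)^\perp = M^\perp \parr N^\perp$, $(M \& N)^\perp = M^\perp \oplus N^\perp$, $(M \multimap N)^\perp = N^\perp \oslash M$ and $(\forall x.N)^\perp = \exists x.N^\perp$, to $\p_\parr^{\mathsf T}$, $\p_{\oplus i}^{\mathsf T}$, $\p_\multimap$ and $\p_\exists^{\mathsf T}$. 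The rule $\otimes R$ is almost as direct: since the core rule $\p_\otimes$ is additive, we first use $\p_\mul$ (in both argument orders) to obtain its two premises, then apply $\p_\otimes$. The propositional units are handled by $\p_\mathbf{1}$, $\p_\mathbf{1}^{\mathsf T}$, $\p_\wk^+$ and the analogous core rules, and the three left rules for $!$ --- dereliction, weakening and contraction --- translate, via $(!M)^\perp = ?M^\perp$, to $\p_\der^?$, $\p_\wk^+$ and $\p_\con^?$ respectively.

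The one rule with no direct counterpart, and where the real work lies, is promotion: from a \textsf{WS1} proof of $\vdash N, ?M_1^\perp, \ldots, ?M_k^\perp$ we must build one of $\vdash !N, ?M_1^\perp, \ldots, ?M_k^\perp$. The key is that this can be synthesised from the anamorphism rule $\p_\ana$, whose very role is to reflect the coalgebraic exponential: $\p_\ana$ turns a proof of $\Phi \vdash M, N, N^\perp$ into one of $\Phi \vdash !M, N^\perp$. Concretely, $\p_\id$ provides a proof of $\vdash !M_i, ?M_i^\perp$ for each $i$; fusing these onto the translated premise by repeated $\p_\mul$, and then collapsing the duplicated tail formulas using $\p_\sym^+$ and $\p_\con^?$, yields $\vdash N, !M_1, \ldots, !M_k, ?M_1^\perp, \ldots, ?M_k^\perp$. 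Fusing the negative block with $\p_\otimes^{\mathsf T}$ and the positive block with $\p_\parr^{\mathsf T}$ gives $\vdash N, (!M_1 \otimes \cdots \otimes !M_k), (?M_1^\perp \parr \cdots \parr ?M_k^\perp)$, whose last tail formula is exactly the negation of its first --- precisely the shape $\Phi \vdash M, N, N^\perp$ required by $\p_\ana$. Applying $\p_\ana$ produces $\vdash !N, (?M_1^\perp \parr \cdots \parr ?M_k^\perp)$, and unfolding the $\parr$ with $\p_\parr^{\mathsf T}$ returns the sequent we want. I expect this promotion case to be the main obstacle --- one must track carefully which formulas occupy the context slots of $\p_\mul$, and check that the assembled sequent really matches the premise shape of $\p_\ana$ --- while every other case is mechanical, so the recursion defining $\kappa$ goes through.
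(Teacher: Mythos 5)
Your proposal is correct, and its overall shape is the same as the paper's: a rule-by-rule translation in which the \textsf{ILL} succedent stays as the negative head, the antecedent becomes a positive tail (so $\p_\sym^+$ handles exchange), and each \textsf{ILL} rule is matched by the corresponding core or non-core \textsf{WS1} rule exactly as in the paper ($\p_\parr^{\mathsf T}$ for $\otimes L$, $\p_\mul$ twice plus $\p_\otimes$ for $\otimes R$, $\p_\multimap$ for $\multimap L$, $\p_\lhd$ for $\multimap R$, $\p_{\oplus i}^{\mathsf T}$/$\p_\&$ for the additives, $\p_\exists^{\mathsf T}$/$\p_\forall$ for the quantifier, and $\p_\der^?$, $\p_\con^?$, $\p_\wk^+$ for the $?$-rules). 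Where you genuinely diverge is the promotion case with a context of several formulas. The paper first derives a single-context-formula rule $\p_\prom$ (via $\p_\id$, $\p_\mul$, $\p_\con^?$, $\p_\ana$ --- exactly your construction for $k=1$) and then handles $k>1$ by explicitly proving the two directions of $!(M\&N)\cong{!M}\otimes{!N}$ and cutting against them, packaging the context as a single $?(P_1\oplus\cdots\oplus P_n)$. You instead feed $\p_\ana$ directly with the parameter $N:={!M_1}\otimes\cdots\otimes{!M_k}$, assembling the premise $\vdash N_{\mathsf{ILL}},\,K,\,K^\perp$ with $\p_\id$, $\p_\mul$, $\p_\con^?$, $\p_\otimes^{\mathsf T}$ and $\p_\parr^{\mathsf T}$; this is valid (the de Morgan dual of the tensor block is precisely the par block, and all the adjacency/polarity side conditions of the rules you invoke are met), and it is arguably simpler, since it avoids the auxiliary proofs $p_1,p_2$ and the extra cuts. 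What the paper's detour buys is an explicit link to the monoidal-exponential isomorphism used elsewhere (the Lafont-category reading of the semantics); what yours buys is economy. One small shared caveat: both your construction and the paper's tacitly assume the promotion context is nonempty; for $\Gamma=\emptyset$ one needs the degenerate variant (e.g.\ apply $\p_\ana$ with parameter $\mathbf{1}$, using $\p_\mathbf{1}^{\mathsf T}$ and $\p_\mathbf{0}^{\mathsf T}$ to add and remove the dummy pair), which is trivial but worth stating.
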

\begin{proof}
  We show that for each rule of \textsf{ILL} there is a derivation in
  \textsf{WS1} of the conclusion from the premises.

The left $\otimes$ rule corresponds to $\p_\parr^\mathsf{T}$. For the right $\otimes$ rule, with $\Gamma = G_1, \ldots, G_n$ and $\Delta = D_1, \ldots, D_m$, we duplicate the proof and use $\p_\mul$ as follows:

\begin{footnotesize}
\begin{prooftree}
\AxiomC{$\vdash M , G_1, \ldots, G_n$}
\AxiomC{$\vdash N , D_1, \ldots, D_m$}
\LeftLabel{$\p_\mul$}
\BinaryInfC{$\vdash M , N , G_1, \ldots, G_n , D_1, \ldots, D_m$}

\AxiomC{$\vdash N , D_1, \ldots, D_m$}
\AxiomC{$\vdash M , G_1, \ldots, G_n$}
\LeftLabel{$\p_\mul$}
\BinaryInfC{$\vdash N , M , D_1, \ldots, D_m , G_1, \ldots, G_n$}
\LeftLabel{$\p_\mathsf{sym}^+$}
\UnaryInfC{$\vdots$}
\LeftLabel{$\p_\mathsf{sym}^+$}
\UnaryInfC{$\vdash N , M , G_1, \ldots, G_n , D_1, \ldots, D_m$}

\LeftLabel{$\p_\mathsf{\otimes}$}
\BinaryInfC{$\vdash M \otimes N , G_1, \ldots, G_n , D_1, \ldots, D_m$}
\end{prooftree}
\end{footnotesize}

\noindent The left $\mathbf{1}$ rule corresponds to
$\p_\mathbf{0}^\mathsf{T}$. The right $\mathbf{1}$ rule  corresponds
to $\p_\mathbf{1}$. The left $\multimap$ rule can be derived as
follows: 

\begin{footnotesize}
\begin{prooftree}
\AxiomC{$\vdash L , D_1, \ldots, D_m, N^\perp$}
\AxiomC{$\vdash M , G_1, \ldots, G_n$}
\LeftLabel{$\p_\multimap$}
\BinaryInfC{$\vdash L , D_1, \ldots, D_m, N^\perp \oslash M, G_1, \ldots, G_n$}
\LeftLabel{$\p_\mathsf{sym}^+$}
\UnaryInfC{$\vdots$}
\LeftLabel{$\p_\mathsf{sym}^+$}
\UnaryInfC{$\vdash L , G_1, \ldots, G_n, N^\perp \oslash M , D_1, \ldots, D_m$}
\end{prooftree}
\end{footnotesize}

\noindent The right $\multimap$ rule corresponds to $\p_\lhd$. The left $\&$ rules correspond to the $\p_\oplus^\mathsf{T}$ rules. The right $\&$ rule corresponds to $\p_\&$. The right-$\forall$ rule corresponds to $\p_\forall$ and the left-$\forall$ rule corresponds to $\p_\exists^\mathsf{T}$.

The dereliction, contraction and weakening rules for the exponential
correspond to $\p_\der^?$, $\p_\con^?$ and $\p_\wk^+$ respectively. We
next give the translation of the right $!$ rule (promotion). We first
assume $\Gamma$ consists of a single formula $L$.

\begin{footnotesize}
\begin{prooftree}
\AxiomC{$\vdash N , ?L^\perp$}

\AxiomC{}
\LeftLabel{$\p_\id$}
\UnaryInfC{$\vdash !L , ?L^\perp$}
\LeftLabel{$\p_\mul$}

\BinaryInfC{$\vdash N , !L, ?L^\perp, ?L^\perp$}
\LeftLabel{$\p_\con^?$}
\UnaryInfC{$\vdash N , !L, ?L^\perp$}
\LeftLabel{$\p_\mathsf{ana}$}
\UnaryInfC{$\vdash !N , ?L^\perp$}
\end{prooftree}
\end{footnotesize}

\noindent We will later refer to this derived rule as $\p_\prom$. If $\Gamma$ contains more than one formula, we use the equivalence of $!M \otimes !N$ and $!(M \& N)$ in \textsf{WS1}.

The first direction $p_1 \vdash !M \otimes !N \multimap !(M \& N)$ is
defined as follows:

\begin{footnotesize}
\begin{prooftree}

\AxiomC{}
\LeftLabel{$\p_\id$}
\UnaryInfC{$\vdash !M , ?M^\perp$}

\AxiomC{}
\LeftLabel{$\p_\id$}
\UnaryInfC{$\vdash !N , ?N^\perp$}

\LeftLabel{$\p_\mul$}
\BinaryInfC{$\vdash !M , !N , ?M^\perp , ?N^\perp$}
\LeftLabel{$\p_\con^!$}
\UnaryInfC{$\vdash !M , !M , !N , ?M^\perp , ?N^\perp$}
\LeftLabel{$\p_\der^!$}
\UnaryInfC{$\vdash M , !M , !N , ?M^\perp , ?N^\perp$}
\LeftLabel{$\p_\parr^\mathsf{T}$}
\UnaryInfC{$\vdash M , !M , !N , ?M^\perp \parr ?N^\perp$}
\LeftLabel{$\p_\otimes^\mathsf{T}$}
\UnaryInfC{$\vdash M , !M \otimes !N , ?M^\perp \parr ?N^\perp$}

\AxiomC{}
\LeftLabel{$\p_\id$}
\UnaryInfC{$\vdash !M , ?M^\perp$}

\AxiomC{}
\LeftLabel{$\p_\id$}
\UnaryInfC{$\vdash !N , ?N^\perp$}

\LeftLabel{$\p_\mul$}
\BinaryInfC{$\vdash !M , !N , ?M^\perp , ?N^\perp$}
\LeftLabel{$\p_\con^!$}
\UnaryInfC{$\vdash !M , !N , !N , ?M^\perp , ?N^\perp$}
\LeftLabel{$\p_\sym$}
\UnaryInfC{$\vdash !N , !M , !N , ?M^\perp , ?N^\perp$}
\LeftLabel{$\p_\der^!$}
\UnaryInfC{$\vdash N , !M , !N , ?M^\perp , ?N^\perp$}
\LeftLabel{$\p_\parr^\mathsf{T}$}
\UnaryInfC{$\vdash N , !M , !N , ?M^\perp \parr ?N^\perp$}
\LeftLabel{$\p_\otimes^\mathsf{T}$}
\UnaryInfC{$\vdash N , !M \otimes !N , ?M^\perp \parr ?N^\perp$}

\LeftLabel{$\p_\&$}
\BinaryInfC{$\vdash M \& N , !M \otimes !N , ?M^\perp \parr ?N^\perp$}
\LeftLabel{$\p_\mathsf{ana}$}
\UnaryInfC{$\vdash !(M \& N) , ?M^\perp \parr ?N^\perp$}

\end{prooftree}
\end{footnotesize}

\noindent The second direction $p_2 \vdash !(M \& N) \multimap !M \otimes !N$ is given as follows:

\begin{footnotesize}
\begin{prooftree}
\AxiomC{}
\LeftLabel{$\p_\id$}
\UnaryInfC{$\vdash M , M^\perp$}
\LeftLabel{$\p_\oplus^\mathsf{T}{}_1$}
\UnaryInfC{$\vdash M , M^\perp \oplus N^\perp$}
\LeftLabel{$\p_\der^?$}
\UnaryInfC{$\vdash M , ?(M^\perp \oplus N^\perp)$}
\LeftLabel{$\p_\prom$}
\UnaryInfC{$\vdash !M , ?(M^\perp \oplus N^\perp)$}

\AxiomC{}
\LeftLabel{$\p_\id$}
\UnaryInfC{$\vdash N , N^\perp$}
\LeftLabel{$\p_\oplus^\mathsf{T}{}_2$}
\UnaryInfC{$\vdash N , M^\perp \oplus N^\perp$}
\LeftLabel{$\p_\der^?$}
\UnaryInfC{$\vdash N , ?(M^\perp \oplus N^\perp)$}
\LeftLabel{$\p_\prom$}
\UnaryInfC{$\vdash !N , ?(M^\perp \oplus N^\perp)$}

\LeftLabel{${\p_\mul}_\otimes$}
\BinaryInfC{$\vdash !M \otimes !N , ?(M^\perp \oplus N^\perp), ?(M^\perp \oplus N^\perp)$}
\LeftLabel{$\p_\con^?$}
\UnaryInfC{$\vdash !M \otimes !N , ?(M^\perp \oplus N^\perp)$}
\end{prooftree}
\end{footnotesize}

\noindent We can then generalise $\p_\prom$ to 

\begin{footnotesize}
\begin{prooftree}

\AxiomC{$\vdash M, ?P_1 , ?P_2, \ldots, ?P_{n-1}, ?P_n$}
\UnaryInfC{$\vdots$}
\UnaryInfC{$\vdash M, ?(P_1 \oplus P_2 \oplus \ldots \oplus P_{n-1}) , ?P_n$}
\LeftLabel{$\p_\parr^{\mathsf{T}}$}
\UnaryInfC{$\vdash M, ?(P_1 \oplus P_2 \oplus \ldots \oplus P_{n-1}) \parr ?P_n$}

\AxiomC{$p_2$}
\LeftLabel{$\p_\cut$}
\BinaryInfC{$\vdash M, ?(P_1 \oplus P_2 \oplus \ldots \oplus P_{n-1} \oplus P_n)$}
\LeftLabel{$\p_\prom$}
\UnaryInfC{$\vdash !M, ?(P_1 \oplus P_2 \oplus \ldots \oplus P_{n-1} \oplus P_n)$}
\UnaryInfC{$\vdots$}
\UnaryInfC{$\vdash !M, ?(P_1 \oplus P_2), \ldots, ?P_{n-1}, ?P_n$}

\AxiomC{$p_1$}
\LeftLabel{$\p_\cut$}
\BinaryInfC{$\vdash !M, ?P_1 \parr ?P_2, \ldots, ?P_{n-1} , ?P_n$}
\LeftLabel{$\p_\parr^\mathsf{T}$}
\UnaryInfC{$\vdash !M, ?P_1 , ?P_2, \ldots, ?P_{n-1}, ?P_n$}
\end{prooftree}
\end{footnotesize}

\noindent and interpret the right ! rule of \textsf{ILL}. \qed
\end{proof}

A detailed proof-theoretic analysis of the properties of this
translation is beyond the scope of this paper. However, we note that
the translation is semantically natural, in the following sense. We
shall see in Section~\ref{sec:catsemantics} that the categorical models of~\textsf{WS1} have
(among other properties) the structure of a standard categorical model
of~\textsf{ILL}: they are~\emph{Lafont
  categories}~\cite{MelliesPA:catsll}. The semantics of the
quantifier-free fragment of~\textsf{ILL}
induced by translation into~\textsf{WS1} followed by interpretation in
a categorical model coincides with the expected semantics
of~\textsf{ILL} in a Lafont category.

\subsubsection{New Theorems}

We next sketch some examples of formulas that are not provable in
\textsf{ILL} but are provable in \textsf{WS1} --- i.e. they denote
games on which there are uniform winning history-sensitive strategies
which are expressible in \textsf{WS1}.

The formulas  \\ \\
$((A \otimes B \multimap \bot) \otimes (C \otimes D \multimap \bot) \multimap \bot) \multimap \\
((A \multimap \bot) \otimes (C \multimap \bot) \multimap \bot) \otimes  ((B \multimap \bot) \otimes (D \multimap \bot) \multimap \bot)$ \\
\\
\noindent are not provable, in general, in   intuitionistic linear logic (in particular, when $A,B,C,D$ are instantiated as negative atoms). They are a  counterpart in \textsf{ILL} of the \emph{medial} rule $[(A \otimes B) \parr (C \otimes D)] \multimap [(A \parr C) \otimes (B \parr D)]$, using an interpretation of depolarised formulas in a polarised setting following \cite{MT_RM}.

As observed by Blass  \cite{Bla_LL}, however, there are (uniform) history-sensitive winning strategies for medial. Informally, suppose:
\begin{itemize}
\item Opponent first choses the left hand component in the output (choice 1)
\item Opponent then chooses the right hand component in the input (choice 2)
\end{itemize}
\noindent Player can then play copycat in $C$. If Opponent then switches to the second output component $((B \multimap \bot) \otimes (D \multimap \bot) \multimap \bot)$, Player must enter copycat in $D$. But this decision relies on knowledge of Opponent's choice 2, which is not possible in an innocent setting and requires history-sensitive knowledge.

An outline \textsf{WS1} proof of this formula is given in Figure
\ref{medproof}. The use of the $\p_\otimes$ demonstrates where the
proof branches; there are four branches corresponding to the two uses
of $\p_\otimes$. In each of these four branches different ${\p_\parr}_i$
proof rules are chosen at the points labelled ${\p_\parr}_1$ here.

\begin{figure*}
\caption{Outline Proof of Medial}
\begin{footnotesize}
\centering
\label{medproof}

\vspace{1ex}
\hrule
\hrule

\begin{prooftree}
\AxiomC{}
\UnaryInfC{$\overline{\alpha} , \overline{\gamma} \vdash \top$}
\UnaryInfC{$\overline{\alpha} , \overline{\gamma} \vdash \top , \overline{\delta}$}
\UnaryInfC{$\overline{\alpha} , \overline{\gamma} \vdash \overline{\gamma} , \overline{\delta}$}
\UnaryInfC{$\overline{\alpha} , \overline{\gamma} \vdash \overline{\gamma} \parr \overline{\delta}$}
\UnaryInfC{$\overline{\alpha} , \overline{\gamma} \vdash \bot , (\bot \lhd (\top \oslash \beta) \parr (\top \oslash \delta)) , \overline{\gamma} \parr \overline{\delta}$}
\UnaryInfC{$\overline{\alpha} \vdash \gamma , (\bot \lhd (\top \oslash \beta) \parr (\top \oslash \delta)) , \overline{\gamma} \parr \overline{\delta}$}
\UnaryInfC{$\overline{\alpha} \vdash \top , \gamma , (\bot \lhd (\top \oslash \beta) \parr (\top \oslash \delta)) , \overline{\gamma} \parr \overline{\delta}$}
\UnaryInfC{$\overline{\alpha} \vdash (\top \oslash \gamma) , (\bot \lhd (\top \oslash \beta) \parr (\top \oslash \delta)) , \overline{\gamma} \parr \overline{\delta}$}
\UnaryInfC{$\overline{\alpha} \vdash (\top \oslash \gamma) \oslash (\bot \lhd (\top \oslash \beta) \parr (\top \oslash \delta)) , \overline{\gamma} \parr \overline{\delta}$}
\UnaryInfC{$\overline{\alpha} \vdash \bot , \overline{\gamma} \parr \overline{\delta} , (\top \oslash \gamma) \oslash (\bot \lhd (\top \oslash \beta) \parr (\top \oslash \delta))$}
\UnaryInfC{$\overline{\alpha} \vdash (\bot \lhd \overline{\gamma} \parr \overline{\delta}) , (\top \oslash \gamma) \oslash (\bot \lhd (\top \oslash \beta) \parr (\top \oslash \delta))$}
\UnaryInfC{$\overline{\alpha} \vdash \top , \overline{\beta} , (\bot \lhd \overline{\gamma} \parr \overline{\delta}) , (\top \oslash \gamma) \oslash (\bot \lhd (\top \oslash \beta) \parr (\top \oslash \delta))$}
\UnaryInfC{$\overline{\alpha} \vdash \overline{\alpha} , \overline{\beta} , (\bot \lhd \overline{\gamma} \parr \overline{\delta}) , (\top \oslash \gamma) \oslash (\bot \lhd (\top \oslash \beta) \parr (\top \oslash \delta))$}
\LeftLabel{${\p_\parr}_1$}
\UnaryInfC{$\overline{\alpha} \vdash \overline{\alpha} \parr \overline{\beta} , (\bot \lhd \overline{\gamma} \parr \overline{\delta}) , (\top \oslash \gamma) \oslash (\bot \lhd (\top \oslash \beta) \parr (\top \oslash \delta))$}
\UnaryInfC{$\overline{\alpha} \vdash \overline{\alpha} \parr \overline{\beta} \oslash (\bot \lhd \overline{\gamma} \parr \overline{\delta}) , (\top \oslash \gamma) \oslash (\bot \lhd (\top \oslash \beta) \parr (\top \oslash \delta))$}
\UnaryInfC{$\overline{\alpha} \vdash \bot , \top \oslash \gamma , (\bot \lhd (\top \oslash \beta) \parr (\top \oslash \delta)) , \overline{\alpha} \parr \overline{\beta} \oslash (\bot \lhd \overline{\gamma} \parr \overline{\delta})$}
\UnaryInfC{$\vdash \alpha , \top \oslash \gamma , (\bot \lhd (\top \oslash \beta) \parr (\top \oslash \delta)) , \overline{\alpha} \parr \overline{\beta} \oslash (\bot \lhd \overline{\gamma} \parr \overline{\delta})$}
\UnaryInfC{$\vdash \top , \alpha , \top \oslash \gamma , (\bot \lhd (\top \oslash \beta) \parr (\top \oslash \delta)) , \overline{\alpha} \parr \overline{\beta} \oslash (\bot \lhd \overline{\gamma} \parr \overline{\delta})$}
\UnaryInfC{$\vdash \top \oslash \alpha , \top \oslash \gamma , (\bot \lhd (\top \oslash \beta) \parr (\top \oslash \delta)) , \overline{\alpha} \parr \overline{\beta} \oslash (\bot \lhd \overline{\gamma} \parr \overline{\delta})$}
\LeftLabel{${\p_\parr}_1$}
\UnaryInfC{$\vdash (\top \oslash \alpha) \parr (\top \oslash \gamma) , (\bot \lhd (\top \oslash \beta) \parr (\top \oslash \delta)) , \overline{\alpha} \parr \overline{\beta} \oslash (\bot \lhd \overline{\gamma} \parr \overline{\delta})$}
\UnaryInfC{$\vdash (\top \oslash \alpha) \parr (\top \oslash \gamma) \oslash (\bot \lhd (\top \oslash \beta) \parr (\top \oslash \delta)) , \overline{\alpha} \parr \overline{\beta} \oslash (\bot \lhd \overline{\gamma} \parr \overline{\delta})$}
\UnaryInfC{$\vdash \bot , \overline{\alpha} \parr \overline{\beta} , (\bot \lhd \overline{\gamma} \parr \overline{\delta}) , (\top \oslash \alpha) \parr (\top \oslash \gamma) \oslash (\bot \lhd (\top \oslash \beta) \parr (\top \oslash \delta))$}
\UnaryInfC{$\vdash \bot \lhd \overline{\alpha} \parr \overline{\beta} , (\bot \lhd \overline{\gamma} \parr \overline{\delta}) , (\top \oslash \alpha) \parr (\top \oslash \gamma) \oslash (\bot \lhd (\top \oslash \beta) \parr (\top \oslash \delta))$}
\AxiomC{$\vdots$}
\LeftLabel{$\p_\otimes$}
\BinaryInfC{$\vdash (\bot \lhd \overline{\alpha} \parr \overline{\beta}) \otimes (\bot \lhd \overline{\gamma} \parr \overline{\delta}) , (\top \oslash \alpha) \parr (\top \oslash \gamma) \oslash (\bot \lhd (\top \oslash \beta) \parr (\top \oslash \delta))$}
\UnaryInfC{$\vdash \top , ((\bot \lhd \overline{\alpha} \parr \overline{\beta}) \otimes (\bot \lhd \overline{\gamma} \parr \overline{\delta})) , (\top \oslash \alpha) \parr (\top \oslash \gamma) \oslash (\bot \lhd (\top \oslash \beta) \parr (\top \oslash \delta))$}
\UnaryInfC{$\vdash \top \oslash ((\bot \lhd \overline{\alpha} \parr \overline{\beta}) \otimes (\bot \lhd \overline{\gamma} \parr \overline{\delta})) , (\top \oslash \alpha) \parr (\top \oslash \gamma) \oslash (\bot \lhd (\top \oslash \beta) \parr (\top \oslash \delta))$}
\UnaryInfC{$\vdash \bot , (\top \oslash \alpha) \parr (\top \oslash \gamma) , (\bot \lhd (\top \oslash \beta) \parr (\top \oslash \delta)) , \top \oslash ((\bot \lhd \overline{\alpha} \parr \overline{\beta}) \otimes (\bot \lhd \overline{\gamma} \parr \overline{\delta}))$}
\UnaryInfC{$\vdash (\bot \lhd (\top \oslash \alpha) \parr (\top \oslash \gamma)) , (\bot \lhd (\top \oslash \beta) \parr (\top \oslash \delta)) , \top \oslash ((\bot \lhd \overline{\alpha} \parr \overline{\beta}) \otimes (\bot \lhd \overline{\gamma} \parr \overline{\delta}))$}
\AxiomC{$\vdots$}
\LeftLabel{$\p_\otimes$}
\BinaryInfC{$\vdash (\bot \lhd (\top \oslash \alpha) \parr (\top \oslash \gamma)) \otimes (\bot \lhd (\top \oslash \beta) \parr (\top \oslash \delta)) , \top \oslash ((\bot \lhd \overline{\alpha} \parr \overline{\beta}) \otimes (\bot \lhd \overline{\gamma} \parr \overline{\delta}))$}
\end{prooftree}

\hrule
\hrule
\end{footnotesize}
\end{figure*}

Similarly, the following theorems of \textsf{WS1} are not provable in \textsf{ILL} but are provable in \textsf{WS1}:
\begin{itemize}
\item $[A \otimes (C \& D)] \& [B \otimes (C \& D)] \& [(A \& B) \otimes C] \& [(A \& B) \otimes D] \multimap \\ (A \& B) \otimes (C \& D)$, also discussed in \cite{Bla_LL}
\item $\phi_\mathsf{ex} \multimap \phi_\mathsf{ex} \otimes \phi_\mathsf{ex}$ where $\phi_\mathsf{ex} = (\phi \& (\phi \multimap \bot)) \multimap \bot$.
\end{itemize}


\subsection{Embedding Polarized Linear Logic in \textsf{WS1}}

Polarized Linear Logic (\textsf{LLP}) \cite{Lau_PG} is a proof system for a  polarisation of linear logic
into negative and positive formulas. As we have noted, this is entirely  different from the polarisation of \textsf{WS1} formulas employed here: each makes sense within the proof system within which it is defined. Here, we show how proofs of LLP may be represented inside \textsf{WS1} by translation, with two objectives:
\begin{itemize}
\item To clarify the relationship between the two logical systems, and their notions of polarisation.
\item  To capture both call-by-name and call-by-value $\lambda$-calculi via known (and elegant) translations into \textsf{LLP}, which may be composed with our embedding of  \textsf{LLP} into \textsf{WS1}. In the call-by-name case, this corresponds with interpretation via intuitionistic linear logic, whereas for call-by-value it is new. 
\end{itemize}
The formulas of \textsf{LLP} (over the units) are as follows:

\[
\begin{array}{ccccccccccccc}
  P & ::= & \mathbf{1} & \mid & \mathbf{0} & \mid & P \otimes Q & \mid &
 P \varoplus Q & \mid & {\downarrow N} & \mid & {!N} \\
N &  ::= & \bot & \mid & \top & \mid & M \parr N & \mid & M \& N &
\mid & {\uparrow P} & \mid & {?P}
\end{array}
\]
There is an operation $(-)^\perp$ exchanging polarity, swapping
$\mathbf{1}$ for $\bot$, $\mathbf{0}$ for $\top$, $\otimes$ for
$\parr$, and so on.
  The presentation of \textsf{LLP} given in \cite{Lau_PG} omits the
  linear lifts $\uparrow$ and $\downarrow$ of \textsf{MALLP}. We will
  include them in our presentation of \textsf{LLP} and its embedding.

A sequent of \textsf{LLP} is a list of \textsf{LLP}
formulas. The proof rules for Polarized Linear Logic are given in
Figure \ref{LLP-rules}. $\Gamma^-$ ranges over lists of negative
formulas, and $\Gamma'$ over lists where at most one formula is
positive. We say a negative \textsf{LLP} formula $N$ is \emph{reusable} (and
write $\mathsf{reuse}(N)$) if every occurrence of $\uparrow$ occurs
under a $?$. If we exclude the linear lifts $\uparrow$ and
$\downarrow$, all negative formulas are
reusable. $\mathsf{reuse}(\Gamma^-)$ holds if all formulas in
$\Gamma^-$ are reusable.

Each provable sequent has at most one positive formula, so we can
restrict our attention to sequents of this form. It is possible to
give semantics to \textsf{LLP} proofs as \emph{innocent} strategies
\cite{Lau_PG}, which do not have access to the entire history of play.

\begin{figure*}
\caption{Proof rules for \textsf{LLP}}
\centering
\label{LLP-rules}

\vspace{1ex}
\hrule
\hrule
\[
\AxiomC{\strut}
\LeftLabel{$ax$}
\UnaryInfC{$\vdash N , N^\perp$}
\DisplayProof
\quad\quad
\AxiomC{$\vdash \Gamma , N$}
\AxiomC{$\vdash \Delta , N^\perp$}
\LeftLabel{$cut$}
\BinaryInfC{$\vdash \Gamma , \Delta$}
\DisplayProof
\quad\quad 
\AxiomC{$\vdash \Gamma , A , B , \Delta$}
\LeftLabel{$ex$}
\UnaryInfC{$\vdash \Gamma , B , A , \Delta$}
\DisplayProof
\]
\[
\AxiomC{$\vdash \Gamma , P$}
\AxiomC{$\vdash \Delta , Q$}
\LeftLabel{$\otimes$}
\BinaryInfC{$\vdash \Gamma , \Delta, P \otimes Q$}
\DisplayProof
\quad\quad 
\AxiomC{$\vdash \Gamma , M , N$}
\LeftLabel{$\parr$}
\UnaryInfC{$\vdash \Gamma , M \parr N$}
\DisplayProof
\]
\[
\AxiomC{$\vdash \Gamma , P$}
\LeftLabel{$\oplus_1$}
\UnaryInfC{$\vdash \Gamma , P \oplus Q$}
\DisplayProof
\quad\quad
\AxiomC{$\vdash \Gamma , Q$}
\LeftLabel{$\oplus_2$}
\UnaryInfC{$\vdash \Gamma , P \oplus Q$}
\DisplayProof
\quad\quad
\AxiomC{$\vdash \Gamma , M$}
\AxiomC{$\vdash \Gamma , N$}
\LeftLabel{$\&$}
\BinaryInfC{$\vdash \Gamma , M \& N$}
\DisplayProof
\]
\[
\AxiomC{\strut}
\LeftLabel{$\mathbf{1}$}
\UnaryInfC{$\vdash \mathbf{1}$}
\DisplayProof
\quad\quad
\AxiomC{$\vdash \Gamma$}
\LeftLabel{$\bot$}
\UnaryInfC{$\vdash \Gamma , \bot$}
\DisplayProof
\quad\quad
\AxiomC{\strut}
\LeftLabel{$\top$}
\UnaryInfC{$\vdash \Gamma' , \top$}
\DisplayProof
\]
\[
\AxiomC{$\vdash \Gamma^- , N$}
\LeftLabel{$\downarrow$}
\UnaryInfC{$\vdash \Gamma^- , \downarrow N$}
\DisplayProof
\quad\quad
\AxiomC{$\vdash \Gamma , P$}
\LeftLabel{$\uparrow$}
\UnaryInfC{$\vdash \Gamma , \uparrow P$}
\DisplayProof
\]
\[
\AxiomC{$\vdash \Gamma^- , N$}
\LeftLabel{$!$}
\RightLabel{$\mathsf{reuse}(\Gamma^-)$}
\UnaryInfC{$\vdash \Gamma^- , !N$}
\DisplayProof
\quad\quad
\AxiomC{$\vdash \Gamma , P$}
\LeftLabel{$?d$}
\UnaryInfC{$\vdash \Gamma , ?P$}
\DisplayProof
\]
\[
\AxiomC{$\vdash \Gamma , N , N$}
\LeftLabel{$?c$}
\RightLabel{$\mathsf{reuse}(N)$}
\UnaryInfC{$\vdash \Gamma , N$}
\DisplayProof
\quad\quad
\AxiomC{$\vdash \Gamma$}
\LeftLabel{$?w$}
\RightLabel{$\mathsf{reuse}(N)$}
\UnaryInfC{$\vdash \Gamma , N$}
\DisplayProof
\]

\hrule
\hrule

\end{figure*}






\label{LLP-WS}
We next describe an embedding of \textsf{LLP} inside
\textsf{WS1}. Apart from some renaming of units, connectives in
\textsf{LLP} will be interpreted by the same connective in
\textsf{WS1}. 
Broadly speaking, positive formulas of \textsf{LLP} will be mapped to
negative formulas of \textsf{WS1}, and negative formulas of
\textsf{LLP} to positive formulas of \textsf{WS1}. However, under this
scheme there is a mismatch for the additives: we will therefore need
to map formulas of \textsf{LLP} to \emph{families} of \textsf{WS1}
formulas. The formulas that have 
a lift as their outermost connective will be mapped to singleton
families.


Let $\mathsf{WS1}^-$ denote the set of negative \textsf{WS1} formulas,
and $\mathsf{WS1}^+$ the set of positive \textsf{WS1} formulas.

\begin{definition}
  A \emph{finite family of negative (resp. positive) \textsf{WS1}
    formulas} is a pair $(I,f)$ where $I$ is a finite set and $f : I
  \rightarrow \mathsf{WS1}^-$ (resp. $I \rightarrow
  \mathsf{WS1}^+$). 
\end{definition}

For brevity, given such a family $F = (I,f)$ we will write $|F|$ for
$I$ and $F_x$ for $f(x)$. 
We will interpret a negative formula of \textsf{LLP} as a finite
family of positive $\mathsf{WS1}$ formulas, and a positive formula of
\textsf{LLP} as a finite family of negative $\mathsf{WS1}$
formulas. We describe this mapping in Figure \ref{MALLP-to-WSN}. 
Like \cite{MT_RM}, we decompose the polarity-reversing exponentials of
\textsf{LLP} into polarity-preserving exponentials of and
polarity-switching linear lifts.

\begin{figure*}
\caption{\textsf{LLP} formulas as families of \textsf{WS1} formulas}
\centering
\label{MALLP-to-WSN}

\vspace{1ex}
\hrule
\hrule

\begin{tabular}{|l|l|}
  \hline
  $A \in $ \textsf{LLP} & $i(A) \in $ \textsf{Fam WS1} \\
  \hline
  $\mathbf{1}$ & $ (\{ \ast \} , \ast \mapsto \mathbf{1})$ \\
  $\mathbf{0}$ & $ (\{ \ast \} , \ast \mapsto \bot)$ \\
  $P \otimes Q$ & $ (|i(P)| \times |i(Q)| , \langle x , y \rangle \mapsto i(P)_x \otimes i(Q)_y)$ \\
  $P \oplus Q$ & $ (|i(P)| \uplus |i(Q)| , [ \inn_1(x) \mapsto i(P)_x, \inn_2(y) \mapsto i(Q)_y ])$ \\
  $!N$ & $ (\{ \ast \} , \ast \mapsto ! \&_{j \in |i(N)|}(\bot \lhd i(N)_j))$ \\
  $\downarrow N$ & $ (\{ \ast \} , \ast \mapsto \&_{j \in |i(N)|}(\bot \lhd i(N)_j))$ \\
  \hline
  $\bot$ & $ (\{ \ast \} , \ast \mapsto \mathbf{0})$ \\
  $\top$ & $ (\{ \ast \} , \ast \mapsto \top)$ \\
  $M \parr N$ & $ (|i(M)| \times |i(N)| , \langle x , y \rangle \mapsto i(M)_x \parr i(N)_y)$ \\
  $M \& N$ & $ (|i(M)| \uplus |i(N)| , [ \inn_1(x) \mapsto i(M)_x, \inn_2(y) \mapsto i(N)_y ])$ \\
  $\uparrow P$ & $ (\{ \ast \} , \ast \mapsto \bigoplus_{j \in |i(P)|}(\top \oslash i(P)_j))$ \\
  $?P$ & $ (\{ \ast \} , \ast \mapsto ?\bigoplus_{j \in |i(P)|}(\top \oslash i(P)_j))$ \\
  \hline
\end{tabular}

\hrule
\hrule

\end{figure*}

Note that $|i(A^\perp)| = |i(A)|$ and $i(A^\perp)_y =
i(A)_y^\perp$. We translate proofs of \textsf{LLP} to families of
proofs of \textsf{WS1} in the following manner:

\begin{itemize}
\item Given an \textsf{LLP} proof $p$ of $\vdash N_1 , \ldots , N_n$
  and $x_i \in |i(N_i)|$ for each $i$, we construct a proof
  $i(p,\overrightarrow{x_i})$ of $\vdash \bot , i(N_1)_{x_1} , \ldots
  , i(N_n)_{x_n}$
\item Given an \textsf{LLP} proof $p$ of $\vdash N_1 , \ldots , N_i ,
  Q , N_{i+1} , \ldots , N_n$ and $x_i \in |i(N_i)|$ for each $i$, we
  construct a pair $i(p,\overrightarrow{x_i}) = (y,q)$ where $y \in
  |i(Q)|$ and $q$ is a proof of $\vdash i(Q)_y , i(N_1)_{x_1} , \ldots
  , i(N_n)_{x_n}$.
\end{itemize}


\begin{proposition}
  Suppose $N$ is reusable. Then for any $x$ in $|i(N)|$, there is a
  formula $Q$ and proofs $p \vdash !Q^\perp , i(N)_x$ and $p' \vdash
  i(N)_x^\perp , ?Q$ such that $\llbracket p \rrbracket$ and $\llbracket
  p' \rrbracket$ are inverses.
\label{llpexp}
\end{proposition}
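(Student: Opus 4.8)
The plan is to prove the statement by induction on the reusable negative \textsf{LLP} formula $N$, showing that each component $i(N)_x$ is isomorphic, in the games model, to a formula of the shape $?Q$ by an isomorphism realised by a pair of \textsf{WS1} proofs with mutually inverse denotations; the required $p$ and $p'$ are then the two halves of that isomorphism, read as usual as a proof of $\vdash !Q^\perp , i(N)_x$ and a proof of $\vdash i(N)_x^\perp , ?Q$, since $(?Q)^\perp = !Q^\perp$.

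First I would record that the outermost connective of a reusable $N$ is one of $\bot$, $\top$, $\parr$, $\&$ or $?$: it cannot be $\uparrow$, because in a reusable formula every occurrence of $\uparrow$ lies under a $?$. This is exactly where reusability is used --- $i(\uparrow P)_\ast = \bigoplus_j(\top \oslash i(P)_j)$ is not in general isomorphic to any $?Q$ --- and it is what makes the induction terminate without ever descending through a lift, since the immediate subformulas of a reusable $\parr$ or $\&$ are again reusable while at a $?$-node the recursion stops. The cases are then routine. For $N = {?P}$, the family $i({?P})$ is the singleton $\ast \mapsto {?}Q$ with $Q = \bigoplus_{j}(\top \oslash i(P)_j)$ (Figure \ref{MALLP-to-WSN}), so we may take $p = p' = \p_\id$ at the formula $!Q^\perp$, whose denotation is a self-inverse copycat. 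For $N = M \& L$, each component of $i(M \& L)$ is literally a component of $i(M)$ or of $i(L)$, and the inductive hypothesis transfers verbatim. For the base cases $N = \bot$ and $N = \top$ we have $i(\bot)_\ast = \mathbf{0}$ and $i(\top)_\ast = \top$, and we use $\mathbf{0} \cong {?}\mathbf{0}$ and $\top \cong {?}\top$; each of these is a composite of two isomorphisms from Figure \ref{game-isos} (namely ${?}P \cong P \lhd {?}P$ followed by $\mathbf{0} \lhd P \cong \mathbf{0}$, respectively $\top \lhd P \cong \top$), all realised by \textsf{WS1} proofs with mutually inverse denotations, and a composite of isomorphisms is again one. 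For $N = M \parr L$, the inductive hypothesis gives $i(M)_x \cong {?}Q_1$ and $i(L)_y \cong {?}Q_2$; we take $Q = Q_1 \oplus Q_2$, apply functoriality of $\parr$ (equivalently, of $\otimes$ on the dual side) to these two isomorphisms to obtain $i(M)_x \parr i(L)_y \cong {?}Q_1 \parr {?}Q_2$, and compose with the isomorphism ${?}Q_1 \parr {?}Q_2 \cong {?}(Q_1 \oplus Q_2)$ of Figure \ref{game-isos}.

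The load-bearing and most delicate step is the $\parr$ case: one must realise the functorial action of $\parr$ on isomorphisms as an actual \textsf{WS1} derivation --- assembled from the multiplicative rules $\p_\mul$, $\p_\otimes^\mathsf{T}$, $\p_\parr^\mathsf{T}$ and $\p_\cut$ --- and then track cut-composition carefully enough to confirm that the constructed $p$ and $p'$ really do compose, via $\p_\cut$, to the identity proofs on $!Q^\perp$ and on $i(N)_x^\perp$. This reduces to the functoriality and coherence of composition in the categorical model, together with the fact (which I would cite or establish alongside) that each isomorphism of Figure \ref{game-isos} used here is denoted by a mutually inverse pair of \textsf{WS1}-provable strategies; granting those, everything else is bookkeeping.
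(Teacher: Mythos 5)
Your proposal is correct and follows essentially the same route as the paper, whose proof is just the one-line remark ``simple induction, making use of the isomorphisms $!(M \& N) \cong !M \otimes !N$'' (equivalently $?P \parr ?Q \cong ?(P \oplus Q)$, your $\parr$ case); your version merely spells out the case analysis, the role of reusability in excluding an outermost $\uparrow$, and the bookkeeping for realising the isomorphisms as \textsf{WS1} derivations.
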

\begin{proof}
Simple induction, making use of isomorphisms $!(M \& N) \cong !M \otimes !N$. \qed
\end{proof}

\begin{proposition}
  For each \textsf{LLP} formula $P$, $y \in |i(P)|$ and sequence of
  negative \textsf{WS1} formulas $\Delta^-$ there is a \textsf{WS1}
  proof $\p^\top_{P,y} \vdash i(P)_y, \Delta^- , \top$.
\label{toprule}
\end{proposition}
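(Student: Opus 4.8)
The plan is to argue by induction on the positive \textsf{LLP} formula $P$, proving the apparently stronger statement in which $\Delta^-$ is an arbitrary (possibly empty) list of negative \textsf{WS1} formulas --- this extra slack is exactly what the $\otimes$ case will consume. The sequent context $\Phi = X;\Theta$ plays no role, since \textsf{LLP} formulas are closed and atom-free, so I will suppress it. The single fact doing the real work is a \emph{workhorse lemma}: for every list $\Delta^-$ of negative \textsf{WS1} formulas, $\vdash \bot, \Delta^-, \top$ is provable. Indeed, $\p_\top$ gives $\vdash \top$, then $\p_\bot^+$ gives $\vdash \bot, \top$, and $\p_\bot^-$ applied once per formula of $\Delta^-$, in reverse order (since $\p_\bot^-$ inserts immediately after the head $\bot$), builds $\vdash \bot, \Delta^-, \top$.

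For the induction, the cases $\mathbf{1}$ and $\mathbf{0}$ are immediate: $i(P)_y = \mathbf{1}$ and $\p_\mathbf{1}$ concludes $\vdash \mathbf{1}, \Delta^-, \top$ outright, while for $\mathbf{0}$ we have $i(P)_y = \bot$ and invoke the workhorse lemma. For $P = P_1 \oplus P_2$ the index $y$ selects a summand, say $y = \inn_1(x)$, so $i(P)_y = i(P_1)_x$ and the induction hypothesis for $(P_1, x, \Delta^-)$ gives the claim verbatim. For $P = P_1 \otimes P_2$ with $y = \langle x_1, x_2\rangle$, apply $\p_\otimes$; its two premises are $\vdash i(P_1)_{x_1}, i(P_2)_{x_2}, \Delta^-, \top$ and $\vdash i(P_2)_{x_2}, i(P_1)_{x_1}, \Delta^-, \top$, and each follows from the induction hypothesis for $P_1$ (resp.\ $P_2$) by folding the remaining --- negative --- tensor factor into the new $\Delta^-$.

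The two lift/exponential cases both revolve around the formula $\&_{j \in |i(N)|}(\bot \lhd i(N)_j)$; here $|i(N)|$ is finite and nonempty (a trivial side induction on $N$), so this is a genuine finite iterated $\&$ that finitely many applications of $\p_\&$ take apart. For $P = {\downarrow}N$ we decompose this head $\&$ with $\p_\&$, reducing to $\vdash \bot \lhd i(N)_j, \Delta^-, \top$ for each $j$; $\p_\lhd$ then yields $\vdash \bot, i(N)_j, \Delta^-, \top$, and since $i(N)_j$ is a \emph{positive} \textsf{WS1} formula in a non-head slot we delete it with $\p_\wk^+$, arriving at $\vdash \bot, \Delta^-, \top$, which the workhorse lemma handles. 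For $P = {!}N$, write $M' = \&_{j}(\bot \lhd i(N)_j)$; first apply $\p_!$ so that $\vdash {!}M', \Delta^-, \top$ reduces to $\vdash M', {!}M', \Delta^-, \top$, then decompose the head $M'$ with $\p_\&$, apply $\p_\lhd$, and strip $i(N)_j$ with $\p_\wk^+$ exactly as before, reaching $\vdash \bot, {!}M', \Delta^-, \top$; finally remove the leftover negative formula ${!}M'$ with $\p_\bot^-$ --- legitimate precisely because the head has become $\bot$ --- leaving $\vdash \bot, \Delta^-, \top$.

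The step I would flag as the genuine obstacle is the ${!}N$ case: one is tempted to feed $\vdash {!}M', \Delta^-, \top$ directly into $\p_\ana$, but that rule delivers a two-formula conclusion $\vdash {!}M_0, R_0$ with $R_0$ positive, and $\Delta^-, \top$ cannot be amalgamated into a single positive formula --- when $\Delta^-$ is nonempty the game $\vdash \Delta^-, \top$ has a negative head, so the polarities do not line up. The route above avoids $\p_\ana$ entirely: $\p_!$ exposes a copy of ${!}M'$ away from the head, the $\&$/$\lhd$ steps bring the head down to the bare $\bot$, and then the residual garbage (the positive $i(N)_j$ and the negative ${!}M'$) is cleared away by $\p_\wk^+$ and $\p_\bot^-$, after which only the $\bot,\dots,\top$ skeleton of the workhorse lemma remains.
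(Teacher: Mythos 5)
Your proof is correct and takes essentially the same route as the paper, whose entire argument is ``simple induction on $P$'': your induction, together with the auxiliary observation that $\vdash \bot, \Delta^-, \top$ is always derivable via $\p_\top$, $\p_\bot^+$ and repeated $\p_\bot^-$, fills in exactly the details the paper leaves implicit, and every rule instance you invoke ($\p_\mathbf{1}$, $\p_\otimes$, $\p_\&$, $\p_\lhd$, $\p_!$, $\p_\wk^+$, $\p_\bot^-$) is legitimate, including the slightly delicate ${!}N$ case. The only point worth noting is cosmetic: you use the non-core weakening $\p_\wk^+$ to discard the positive $i(N)_j$, which is perfectly acceptable here since the proposition asserts only the existence of a \textsf{WS1} proof, not an analytic one.
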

\begin{proof}
  Simple induction on $P$. \qed 
\end{proof}

\noindent We next show how each of the \textsf{LLP} proof rules is
translated. The translation is simple; we demonstrate some
representative cases.

\begin{itemize}
\item The $cut$ rule, with $p = cut(q,r)$: Suppose $\Gamma = N_1 ,
  \ldots , N_i , P , N_{i+1} , \ldots N_n$ and $\Delta = M_1 , \ldots
  , M_m$. Let $x_i \in |i(N_i)|$ and $y_i \in |i(M_i)|$. Then
  $i(r,\overrightarrow{y_i}) = (y,t)$ with $y \in |i(N^\perp)|$ and $t
  \vdash i(N^\perp)_y , i(M_1)_{y_1} , \ldots , i(M_n)_{y_n}$. Then
  $i(q,\overrightarrow{x_i},y) = (y',q')$ where $y' \in |i(P)|$ and
  $$q' \vdash i(P)_{y'} , i(N_1)_{x_1} , \ldots , i(N_n)_{x_n},
  i(N)_y.$$ Applying $\p_\cut$ to this proof and $t$ results in a proof
  $g$ of $$\vdash i(P)_{y'} , i(N_1)_{x_1} , \ldots , i(N_n)_{x_n} ,
  i(M_1)_{y_1} , \ldots , i(M_m)_{y_m}$$ and we set
  $i(p,\overrightarrow{x_i},\overrightarrow{y_i}) = (y',g)$.

  The case where $\Gamma = N_1 , \ldots , N_n$ and $\Delta = M_1 , \ldots ,
  M_m$ is similar.





\item The $\uparrow$ rule, with $p = \uparrow(q)$: Let $\Gamma = N_1 , \ldots ,
  N_n$ and $x_i \in |i(N_i)|$. Then $i(q,\overrightarrow{x_i}) =
  (y,q)$ where $q \vdash i(P)_y , i(N_1)_{x_1} , \ldots ,
  i(N_n)_{x_n}$. We set $i(p,\overrightarrow{x_i})$ to be the
  following proof:

\begin{prooftree}
\AxiomC{$q \vdash i(P)_y , i(N_1)_{x_1} , \ldots , i(N_n)_{x_n}$}
\UnaryInfC{$\vdash \top , i(P)_y , i(N_1)_{x_1} , \ldots , i(N_n)_{x_n}$}
\UnaryInfC{$\vdash \top \oslash i(P)_y , i(N_1)_{x_1} , \ldots , i(N_n)_{x_n}$}
\LeftLabel{${\p_\oplus}_y$}
\UnaryInfC{$\vdash \bigoplus_{j \in |i(P)|} \top \oslash i(P)_j , i(N_1)_{x_1} , \ldots , i(N_n)_{x_n}$}
\UnaryInfC{$\vdash \bot , i(N_1)_{x_1} \parr \ldots \parr i(N_n)_{x_n} \parr (\bigoplus_{j \in |i(P)|} \top \oslash i(P)_j)$}
\UnaryInfC{$\vdash \bot , i(N_1)_{x_1} , \ldots , i(N_n)_{x_n} , \bigoplus_{j \in |i(P)|} \top \oslash i(P)_j$}
\end{prooftree}


Note that in the semantics of this rule two moves are played:
the opening lift overall (O-move) and the opening lift
in the derelicted component (P-move), which corresponds to ``focusing'' on
that component.
\item The $?c$ rule, with $p = ?c(q)$: 

  If $\Gamma = N_1 , \ldots , N_n$ and $x_i \in |i(N_i)|$ and $x \in
  |i(N)|$ then $i(q,\overrightarrow{x_i},x,x)$ is a proof of $\vdash \bot ,
  i(N_1)_{x_1} , \ldots , i(N_n)_{x_n} , i(N)_x , i(N)_x$. We can
  apply Proposition \ref{llpexp} and use $?$-contraction in
  \textsf{WS1} to yield a proof $q'$ of \\ $\vdash \bot , i(N_1)_{x_1} ,
  \ldots , i(N_n)_{x_n} , i(N)_x$ and we set
  $i(p,\overrightarrow{x_i},x) = q'$.

  If $\Gamma = N_1 , \ldots , N_i , P , N_{i+1} , \ldots , N_n$ and
  $x_i \in |i(N_i)|$ and $x \in |i(N)|$ then
  $i(q,\overrightarrow{x_i},x,x) = (y,q')$ where $q' \vdash i(P)_y ,
  i(N_1)_{x_1} , \ldots , i(N_n)_{x_n} , i(N)_x , i(N)_x$. We can
  apply Proposition \ref{llpexp} and use $?$-contraction in
  \textsf{WS1} to yield a proof $q''$ of $$\vdash i(P)_y , i(N_1)_{x_1}
  , \ldots , i(N_n)_{x_n} , i(N)_x$$ and we set
  $i(p,\overrightarrow{x_i},x) = (y,q'')$.
\end{itemize}

\noindent We can hence interpret proofs in \textsf{LLP} as (families of) proofs
in \textsf{WS1}.

\section{Representing Imperative Programs and their Properties}

\subsection{Imperative Cell}

As an example of a proof of \textsf{WS1} capturing imperative
behaviour (and which does not correspond to a proof of intuitionistic
or polarized linear logic), we give a proof which denotes the Boolean
reference cell strategy described in Section \ref{impobjstrat}, the
\emph{cell} strategy of \cite{AMc_LSS}.

Recall that this is a strategy for the game $!(\mathbf{B} \&
\mathbf{Bi})$, where $\mathbf{B} = \bot \lhd \top \oplus \top$ and
$\mathbf{Bi} = (\bot \& \bot) \lhd \top$. We can parametrise the cell
by a starting value, yielding a strategy on $\mathbf{B} \multimap
{!}(\mathbf{B} \& \mathbf{Bi})$. We may obtain this strategy using a
finite strategy $p : \mathbf{B} \multimap (\mathbf{B} \& \mathbf{Bi})
\oslash \mathbf{B}$. The strategy $p$ is defined as follows, using the
naming conventions from Section \ref{impobjstrat}:

\[
\begin{array}{ccccccccccccccccccccccccccccccccccl}
\mathbf{B} & \multimap   & (\mathbf{B}& \&   &\mathbf{Bi}) &  \oslash & \mathbf{B} &  & &    & &  & &    & &  &  \\
          &        &            \mathtt{q}      &      &      &                        \\
   \mathtt{q}       &        &                &      &      &                        \\
   b       &        &                &      &      &                        \\
          &        &            b      &      &      &                        \\
           &        &            &          &            &      &                    \mathtt{q} &       \\
           &        &            &          &            &      &                    b &       \\
\hline
          &        &            &          &     \mathtt{in}(b)      &      &      &                        \\
           &        &            &          &     \mathtt{ok}      &      &      &                        \\

           &        &            &          &            &      &                    \mathtt{q} &       \\
           &        &            &          &            &      &                    b &       \\

  \end{array}
\]

\noindent To obtain the $\mathsf{cell}$ strategy, we consider an
infinite unwrapping $\leftmoon p \rightmoon : \mathbf{B} \multimap
{!}(\mathbf{B} \& \mathbf{Bi})$, as performed by the semantics of the
$\p_\mathsf{ana}$ rule.

\[
\begin{array}{ccccccccccccccccccccccccccccccccccl}
\mathbf{B} & \multimap   & (\mathbf{B}& \&   &\mathbf{Bi}) &  \oslash & \mathbf{B} & \multimap & (\mathbf{B}& \&   &\mathbf{Bi}) &  \oslash & ((\mathbf{B}& \&   &\mathbf{Bi}) &  \oslash & \ldots) \\
   & & & & & & & & & & \mathtt{in}(b) \\
          &        &            &          &    \mathtt{in}(b)      &      &      &                        \\
           &        &            &          &     \mathtt{ok}      &      &      &                        \\
   & & & & & & & & & & \mathtt{ok} \\

   & & & & & & & & & & & & \mathtt{q} \\
           &        &            &          &            &      &                    \mathtt{q} &       \\
           &        &            &          &            &      &                    b &       \\
   & & & & & & & & & & & & b \\
   & & & & & & & & & & & & \vdots \\
  \end{array}
\]


We can represent this strategy in our system using the anamorphism rule  $\p_\mathsf{ana}$: we may prove  $!(\mathbf{B} \& \mathbf{Bi}), \mathbf{B}^\perp$  by applying this rule to a proof of $(\mathbf{B} \& \mathbf{Bi}), \mathbf{B}, \mathbf{B}^\perp$. To obtain this, we apply the  product rule to a pair of proofs:
\begin{itemize}
\item $p_\mathtt{read}$, of $\mathbf{B},\mathbf{B}, \mathbf{B}^\perp$, corresponding to a function which reads its argument, returns it \emph{and} propagates it to the next call, and
\item $p_\mathtt{write}$, of  $\mathbf{Bi}, \mathbf{B}, \mathbf{B}^\perp$, corresponding to a function  which ignores its argument, accepts a Boolean input value and propagates it to the next call.
\end{itemize}
In this proof, if a rule is not labelled it is the unique applicable core rule, and some steps are omitted for brevity.

\begin{scriptsize}
\begin{prooftree}
\AxiomC{$p_\mathtt{write} : \vdash (\bot \& \bot) \lhd \top , \bot \lhd (\top \oplus \top) , \top \oslash (\bot \& \bot)$}
\AxiomC{$p_\mathtt{read} : \vdash \bot \lhd (\top \oplus \top) , \bot \lhd (\top \oplus \top) , \top \oslash (\bot \& \bot)$}
\BinaryInfC{$\vdash ((\bot \& \bot) \lhd \top) \& (\bot \lhd (\top \oplus \top)), \bot \lhd (\top \oplus \top) , \top \oslash (\bot \& \bot)$}
\LeftLabel{$\p_{\mathsf{ana}}$}
\UnaryInfC{$\vdash ! (((\bot \& \bot	) \lhd \top) \& (\bot \lhd (\top \oplus \top))) , \top \oslash (\bot \& \bot)$}
\end{prooftree}
\end{scriptsize}

where $p_\mathtt{write}$ is

\begin{footnotesize}
\begin{prooftree}
\AxiomC{}
\UnaryInfC{$\vdash \top$}
\UnaryInfC{$\vdash \top , (\top \oslash (\bot \& \bot))$}
\LeftLabel{${\p_\oplus}_1$}
\UnaryInfC{$\vdash \top \oplus \top , (\top \oslash (\bot \& \bot))$}
\LeftLabel{${\p_\parr}_1$}
\UnaryInfC{$\vdash (\top \oplus \top) \parr (\top \oslash (\bot \& \bot))$}
\UnaryInfC{$\vdash \bot , (\top \oplus \top) \parr (\top \oslash (\bot \& \bot))$}
\UnaryInfC{$\vdash (\bot \lhd (\top \oplus \top)) \lhd (\top \oslash (\bot \& \bot))$}
\UnaryInfC{$\vdash \top , (\bot \lhd (\top \oplus \top)) \lhd (\top \oslash (\bot \& \bot))$}
\UnaryInfC{$\vdash (\top \oslash (\bot \lhd (\top \oplus \top))) , \top \oslash (\bot \& 
\bot)$}
\LeftLabel{${\p_\parr}_1$}
\UnaryInfC{$\vdash (\top \oslash (\bot \lhd (\top \oplus \top))) \parr (\top \oslash (\bot \& \bot))$}
\UnaryInfC{$\vdash \bot , (\top \oslash (\bot \lhd (\top \oplus \top))) \parr (\top \oslash (\bot \& \bot))$}
\UnaryInfC{$\vdash \bot , \top , \bot \lhd (\top \oplus \top) , \top \oslash (\bot \& \bot)$}

\AxiomC{}
\UnaryInfC{$\vdash \top$}
\UnaryInfC{$\vdash \top , (\top \oslash (\bot \& \bot))$}
\LeftLabel{${\p_\oplus}_2$}
\UnaryInfC{$\vdash \top \oplus \top , (\top \oslash (\bot \& \bot))$}
\LeftLabel{${\p_\parr}_1$}
\UnaryInfC{$\vdash (\top \oplus \top) \parr (\top \oslash (\bot \& \bot))$}
\UnaryInfC{$\vdash \bot , (\top \oplus \top) \parr (\top \oslash (\bot \& \bot))$}
\UnaryInfC{$\vdash (\bot \lhd (\top \oplus \top)) \lhd (\top \oslash (\bot \& \bot))$}
\UnaryInfC{$\vdash \top , (\bot \lhd (\top \oplus \top)) \lhd (\top \oslash (\bot \& \bot))$}
\UnaryInfC{$\vdash (\top \oslash (\bot \lhd (\top \oplus \top))) , \top \oslash (\bot \& \bot)$}
\LeftLabel{${\p_\parr}_1$}
\UnaryInfC{$\vdash (\top \oslash (\bot \lhd (\top \oplus \top))) \parr (\top \oslash (\bot \& \bot))$}
\UnaryInfC{$\vdash \bot , (\top \oslash (\bot \lhd (\top \oplus \top))) \parr (\top \oslash (\bot \& \bot))$}
\UnaryInfC{$\vdash \bot , \top , \bot \lhd (\top \oplus \top) , \top \oslash (\bot \& \bot)$}

\BinaryInfC{$\vdash \bot \& \bot , \top , \bot \lhd (\top \oplus \top) , \top \oslash (\bot \& \bot)$}
\UnaryInfC{$\vdash (\bot \& \bot) \lhd \top , \bot \lhd (\top \oplus \top) , \top \oslash (\bot \& \bot)$}
\end{prooftree}
\end{footnotesize}

and $p_\mathtt{read}$ is

\begin{footnotesize}
\begin{prooftree}
\AxiomC{}
\UnaryInfC{$\vdash \top$}
\LeftLabel{${\p_\oplus}_1$}
\UnaryInfC{$\vdash \top \oplus \top$}
\UnaryInfC{$\vdash \bot , (\top \oplus \top)$}
\UnaryInfC{$\vdash \top , \bot \lhd (\top \oplus \top)$}
\LeftLabel{${\p_\oplus}_1$}
\UnaryInfC{$\vdash  \top \oplus \top , \bot \lhd (\top \oplus \top)$}
\UnaryInfC{$\vdash  \bot , (\top \oplus \top) \oslash (\bot \lhd (\top \oplus \top))$}

\AxiomC{}
\UnaryInfC{$\vdash \top$}
\LeftLabel{${\p_\oplus}_2$}
\UnaryInfC{$\vdash \top \oplus \top$}
\UnaryInfC{$\vdash \bot , (\top \oplus \top)$}
\UnaryInfC{$\vdash \top , \bot \lhd (\top \oplus \top)$}
\LeftLabel{${\p_\oplus}_2$}
\UnaryInfC{$\vdash  \top \oplus \top , \bot \lhd (\top \oplus \top)$}
\UnaryInfC{$\vdash  \bot , (\top \oplus \top) \oslash (\bot \lhd (\top \oplus \top))$}

\BinaryInfC{$\vdash  \bot \& \bot , (\top \oplus \top) \oslash (\bot \lhd (\top \oplus \top))$}
\UnaryInfC{$\vdash  \top , (\bot \& \bot) \lhd ((\top \oplus \top) \oslash (\bot \lhd (\top \oplus \top)))$}
\UnaryInfC{$\vdash  \top \oslash (\bot \& \bot) , (\top \oplus \top) \oslash (\bot \lhd (\top \oplus \top))$}
\LeftLabel{${\p_\parr}_2$}
\UnaryInfC{$\vdash ((\top \oplus \top) \oslash (\bot \lhd (\top \oplus \top))) \parr (\top \oslash (\bot \& \bot))$}
\UnaryInfC{$\vdash \bot , ((\top \oplus \top) \oslash (\bot \lhd (\top \oplus \top))) \parr (\top \oslash (\bot \& \bot))$}
\UnaryInfC{$\vdash \bot \lhd (\top \oplus \top) , \bot \lhd (\top \oplus \top) , \top \oslash (\bot \& \bot)$}
\end{prooftree}
\end{footnotesize}

We will later give categorical semantics to \textsf{WS1}, and so the
above proof provides a categorical account of this Boolean reference
cell, using a final coalgebraic property of the exponential.

We may use this proof to interpret declaration of a Boolean reference
in either call-by-name or call-by-value settings, by composition (cut)
with (the translation of) a term-in-context of the form $\Gamma,x:\var
\vdash M:T$.  Thus we may translate the recursion-free fragments of
\emph{Idealized Algol} \cite{Rey_ALGOL} and \emph{Reduced ML} over
finite datatypes into \textsf{WS1}, for example.

\subsection{State Encapsulation}
\textsf{WS1} is more expressive than total, finitary Idealized Algol:
for instance, we may use the anamorphism rule to capture structures
such as stacks, capable of storing an arbitrarily large amount of
data. A generalised programming construct which corresponds to this
capability is the \emph{encapsulation} operation which appears as the
$\mathsf{thread}$ operator in \cite{Wol_OO}, and as the
$\mathsf{encaps}$ strategy in \cite{Long_PLGM} where it is used for
constructing imperative objects in a model based on the same
underlying notion of game as used here. The operator has type $$(s
\rightarrow (o \times s)) \rightarrow s \rightarrow (1 \rightarrow
o).$$ Here $s$ is the type of the object's internal state. The first
argument represents an object which takes an explicit state of type
$s$, and returns a value of type $o$, together with an updated
state. The second argument represents an initial state. Encapsulation
returns an object of type $1 \rightarrow o$ (a ``thunk'' of type $o$)
in which the state $s$ is encapsulated --- i.e. hidden from the
environment, but shared between separate invocations of the object. On
first invocation (unthunking) the initial state is used as the input
state, and thereafter, each fresh call receives the output state from
the previous invocation as its input.


We can represent this operation in \textsf{WS1} using the
$\p_\mathsf{ana}$ rule. To do this, we consider a call-by-value
interpretation of types. We may translate call-by-value types as
positive formulas of \textsf{LLP}: $\phi^+(1) = \mathbf{1}$, $\phi^+(A
\times B) = \phi^+(A) \otimes \phi^+(B)$ and $\phi^+(A \rightarrow B)
= {!}(\phi^+(A)^\perp \parr \uparrow \phi^+(B))$\footnote{This is
  slightly different to the original embedding presented in
  \cite{Lau_PG}, which uses $?$ rather than $\uparrow$ in the
  translation of $\rightarrow$, allowing first-class continuations to
  be interpreted (the $\lambda\mu$-calculus). The translation adopted
  here is a form of \emph{linear CPS interpretation}.}. Thus by
composition with the embedding of \textsf{LLP} in \textsf{WS1}, we may
translate the types $o$ and $s$ as the families of WS-formulas $i
\circ \phi^+(s)$ and $i \circ \phi^+(o)$. Let us assume for
simplicity, that these are singleton families $\{S\}$ and $\{O\}$
respectively (i.e. $s$ represent products of function types). Then
$\mathsf{encaps}$ may be translated as a proof of $\vdash \bot, \top
\oslash i \circ \phi^+(1 \rightarrow o), S^\perp, i \circ \phi^+(s
\rightarrow (o \times s))^\perp$ --- i.e. $\vdash \bot, \top \oslash
! \uparrow (\mathbf{0} \parr \downarrow O) , S^\perp , ?\downarrow(S \otimes
\uparrow(O^\perp \parr S^\perp))$ --- as follows:



\begin{scriptsize}
\begin{prooftree}

\AxiomC{$a$}

\AxiomC{}
\LeftLabel{$\p_\id$}
\UnaryInfC{$\vdash !\uparrow(S^\perp \parr \downarrow(O \otimes S)) , ?\downarrow(S \otimes \uparrow(O^\perp \parr S^\perp))$}

\AxiomC{$b$}

\LeftLabel{$\p_\mul$}
\BinaryInfC{$\vdash \uparrow(\downarrow O) , !\uparrow(S^\perp \parr \downarrow(O \otimes S)) , S , ?\downarrow(S \otimes \uparrow(O^\perp \parr S^\perp)) , ?\downarrow(S \otimes \uparrow(O^\perp \parr S^\perp)) , S^\perp$}
\LeftLabel{$\p_\con^?$}
\UnaryInfC{$\vdash \uparrow(\downarrow O) , !\uparrow(S^\perp \parr \downarrow(O \otimes S)) , S , ?\downarrow(S \otimes \uparrow(O^\perp \parr S^\perp)) , S^\perp$}
\UnaryInfC{$\vdash \uparrow(\downarrow O) , !\uparrow(S^\perp \parr \downarrow(O \otimes S)) \otimes S , ?\downarrow(S \otimes \uparrow(O^\perp \parr S^\perp)) \parr S^\perp$}
\LeftLabel{$\p_\mathsf{ana}$}
\UnaryInfC{$\vdash !\uparrow(\downarrow O) , ?\downarrow(S \otimes \uparrow(O^\perp \parr S^\perp)) \parr S^\perp$}
\LeftLabel{$\p_\cut$}
\BinaryInfC{$\vdash !\uparrow(\mathbf{0} \parr \downarrow O) , ?\downarrow(S \otimes
\uparrow(O^\perp \parr S^\perp)) \parr S^\perp$}

\LeftLabel{$\p_\parr^\mathsf{T}$}
\UnaryInfC{$\vdash !\uparrow(\mathbf{0} \parr
  \downarrow O) , ?\downarrow(S \otimes \uparrow(O^\perp \parr
  S^\perp)) , S^\perp$}
\LeftLabel{$\p_\sym^+$}
\UnaryInfC{$\vdash !\uparrow(\mathbf{0} \parr
  \downarrow O) , S^\perp , ?\downarrow(S \otimes \uparrow(O^\perp \parr
  S^\perp))$}
\UnaryInfC{$\vdash \top , !\uparrow(\mathbf{0} \parr
  \downarrow O) , S^\perp , ?\downarrow(S \otimes \uparrow(O^\perp \parr
  S^\perp))$}
\UnaryInfC{$\vdash \bot , \top \oslash !\uparrow(\mathbf{0} \parr
  \downarrow O) , S^\perp , ?\downarrow(S \otimes \uparrow(O^\perp \parr
  S^\perp))$}

\end{prooftree}
\end{scriptsize}

\noindent where $a$ is the evident isomorphism $\vdash ! \uparrow
(\mathbf{0} \parr \downarrow O) , ? \downarrow \uparrow O^\perp$
and $b$ is: 

\begin{footnotesize}
\begin{prooftree}
\AxiomC{}
\LeftLabel{$\p_\id$}
\UnaryInfC{$\vdash S , S^\perp$}

\AxiomC{}
\LeftLabel{$\p_\id$}
\UnaryInfC{$\vdash O , O^\perp$}

\AxiomC{}
\LeftLabel{$\p_\id$}
\UnaryInfC{$\vdash S , S^\perp$}
\LeftLabel{$\p_\mul$}
\BinaryInfC{$\vdash O , S , O^\perp , S^\perp$}
\UnaryInfC{$\vdash O \oslash S , O^\perp , S^\perp$}
\LeftLabel{$\p_\parr^\mathsf{T}$}
\UnaryInfC{$\vdash O \oslash S , O^\perp \parr S^\perp$}
\UnaryInfC{$\vdash \top , O \oslash S , O^\perp \parr S^\perp$}
\UnaryInfC{$\vdash O^\perp \parr S^\perp \parr (\downarrow O \oslash S)$}
\UnaryInfC{$\vdash \bot , O^\perp \parr S^\perp , \downarrow O \oslash S$}
\LeftLabel{${\p_\mul}_\otimes$}
\BinaryInfC{$\vdash S \otimes \uparrow(O^\perp \parr S^\perp) , \downarrow O \oslash S, S^\perp$}
\LeftLabel{$\p_\sym^+$}
\UnaryInfC{$\vdash S \otimes \uparrow(O^\perp \parr S^\perp) , S^\perp , \downarrow O \oslash S$}
\UnaryInfC{$\vdash \top , S \otimes \uparrow(O^\perp \parr S^\perp) , \downarrow O \oslash S, S^\perp$}
\UnaryInfC{$\vdash \downarrow(S \otimes \uparrow(O^\perp \parr S^\perp)) , \downarrow O \oslash S, S^\perp$}
\UnaryInfC{$\vdash \bot , (\downarrow O \oslash S) \parr \downarrow(S \otimes \uparrow(O^\perp \parr S^\perp)) \parr S^\perp$}
\UnaryInfC{$\vdash \uparrow(\downarrow O) , S , \downarrow(S \otimes \uparrow(O^\perp \parr S^\perp)) , S^\perp$}
\LeftLabel{$\p_\der^?$}
\UnaryInfC{$\vdash \uparrow(\downarrow O) , S , ?\downarrow(S \otimes \uparrow(O^\perp \parr S^\perp)) , S^\perp$}
\end{prooftree}
\end{footnotesize}

\subsection{Coroutines}
We may also give a proof denoting  a \emph{coroutining} operation,
permitting a form of deterministic multithreading, defined as a
strategy in \cite{Lai_COC,Lai_FPC}. 
 In a call-by-name setting, this corresponds to an operation taking
 two terms $s$, $t$ of type $\com \rightarrow \com$, and returning a
 command which runs $s$: when (and if) $s$ calls its argument, control
 passes to $t$.  When $t$ calls its 
  argument, control is passed back to $s$, and so on, until either
  $s$ or $t$ terminates. 
We can define a coroutining operator $\mathsf{cocomp} \vdash \Sigma ,
?(\Sigma^\perp \oslash !\Sigma) , ?(\Sigma^\perp \oslash !\Sigma)$, where $\Sigma = \top \lhd \bot$.
We first give a
proof $o$ of $(!\Sigma \multimap \bot) \multimap !\Sigma$.

\begin{footnotesize}
\begin{prooftree}
\AxiomC{}
\LeftLabel{$\p_\id$}
\UnaryInfC{$\vdash \bot \lhd ?(\top \oslash \bot) , \top \oslash !(\bot \lhd \top)$}
\UnaryInfC{$\vdash \top , \bot \lhd ?(\top \oslash \bot) , \top \oslash !(\bot \lhd \top)$}
\UnaryInfC{$\vdash \top \oslash (\bot \lhd ?(\top \oslash \bot)) , \top \oslash !(\bot \lhd \top)$}
\UnaryInfC{$\vdash \bot , \top \oslash !(\bot \lhd \top) , \top \oslash (\bot \lhd ?(\top \oslash \bot))$}
\UnaryInfC{$\vdash \bot \lhd \top , !(\bot \lhd \top) , \top \oslash (\bot \lhd ?(\top \oslash \bot))$}
\UnaryInfC{$\vdash !(\bot \lhd \top) , \top \oslash (\bot \lhd ?(\top \oslash \bot))$}
\UnaryInfC{$\vdash \top , !(\bot \lhd \top) , \top \oslash (\bot \lhd ?(\top \oslash \bot))$}
\UnaryInfC{$\vdash \top \oslash !(\bot \lhd \top) , \top \oslash (\bot \lhd ?(\top \oslash \bot))$}
\UnaryInfC{$\vdash \bot , \top \oslash (\bot \lhd ?(\top \oslash \bot)) , \top \oslash !(\bot \lhd \top)$}
\UnaryInfC{$\vdash \bot , \top , \bot \lhd ?(\top \oslash \bot) , \top \oslash !(\bot \lhd \top)$}
\UnaryInfC{$\vdash \bot \lhd \top , \bot \lhd ?(\top \oslash \bot) , \top \oslash !(\bot \lhd \top)$}
\LeftLabel{$\p_\ana$}
\UnaryInfC{$\vdash ! (\bot \lhd \top) , \top \oslash !(\bot \lhd \top)$}
\UnaryInfC{$\vdash ! (\bot \lhd \top) \lhd (\top \oslash !(\bot \lhd \top))$}
\end{prooftree}
\end{footnotesize}


\noindent We next define a proof $o' \vdash (!\Sigma \multimap \Sigma)
\multimap \bot \multimap !\Sigma$, which connects the output move of the
first argument to the Player-move in the second argument.

\begin{footnotesize}
\begin{prooftree}

\AxiomC{$o \vdash !\Sigma , \top \oslash !\Sigma$}


\AxiomC{}
\UnaryInfC{$\vdash \bot , \top$}

\AxiomC{}
\UnaryInfC{$\vdash !\Sigma , ?\Sigma^\perp$}

\LeftLabel{${\p_\mul}_\otimes$}
\BinaryInfC{$\vdash \bot \otimes !\Sigma , ?\Sigma^\perp , \top$}
\UnaryInfC{$\vdash \top , \bot , !\Sigma , ?\Sigma^\perp , \top$}
\UnaryInfC{$\vdash (\top \oslash \bot) \oslash !\Sigma , ?\Sigma^\perp , \top$}
\UnaryInfC{$\vdash \bot , ?\Sigma^\perp , \top , (\top \oslash \bot) \oslash !\Sigma$}
\UnaryInfC{$\vdash \bot \lhd ?\Sigma^\perp , \top , (\top \oslash \bot) \oslash !\Sigma$}

\LeftLabel{$\p_\cut$}
\BinaryInfC{$\vdash !\Sigma , \top , (\top \oslash \bot) \oslash !\Sigma$}
\UnaryInfC{$\vdash !\Sigma \lhd \top , \Sigma^\perp \oslash !\Sigma$}
\end{prooftree}
\end{footnotesize}

\noindent We can then define $\mathsf{cocomp}$.

\begin{footnotesize}
\begin{prooftree}

\AxiomC{}
\UnaryInfC{$\vdash \top$}
\UnaryInfC{$\vdash \bot , \top$}

\AxiomC{}
\UnaryInfC{$\vdash \top$}
\UnaryInfC{$\vdash \bot , \top$}

\BinaryInfC{$\vdash \bot \otimes \bot , \top$}
\UnaryInfC{$\vdash \top , \bot , \bot , \top$}
\UnaryInfC{$\vdash (\top \oslash \bot) \oslash \bot , \top$}
\UnaryInfC{$\vdash \bot , \top , (\top \oslash \bot) \oslash \bot$}
\UnaryInfC{$\vdash \Sigma , \Sigma^\perp \oslash \bot$}

\AxiomC{}
\LeftLabel{$\p_\id$}
\UnaryInfC{$\vdash \Sigma , \Sigma^\perp$}

\AxiomC{}
\LeftLabel{$\p_\id$}
\UnaryInfC{$\vdash !\Sigma , ?\Sigma^\perp$}
\AxiomC{}
\LeftLabel{$\p_\id$}
\UnaryInfC{$\vdash \bot , \top$}
\LeftLabel{$\p_\multimap$}
\BinaryInfC{$\vdash !\Sigma , ?\Sigma^\perp \oslash \bot, \top$}
\LeftLabel{$\p_\sym^+$}
\UnaryInfC{$\vdash !\Sigma , \top , ?\Sigma^\perp \oslash \bot$}

\LeftLabel{$\p_\multimap$}
\BinaryInfC{$\vdash \Sigma , \Sigma^\perp \oslash !\Sigma , \top , ?\Sigma^\perp \oslash \bot$}
\LeftLabel{$\p_\sym^+$}
\UnaryInfC{$\vdash \Sigma , \top , \Sigma^\perp \oslash !\Sigma , ?\Sigma^\perp \oslash \bot$}
\UnaryInfC{$\vdash \Sigma \lhd \top , \Sigma^\perp \oslash !\Sigma , ?\Sigma^\perp \oslash \bot$}

\LeftLabel{$\p_\cut$}
\BinaryInfC{$\vdash \Sigma , \Sigma^\perp \oslash !\Sigma , ?
\Sigma^\perp \oslash \bot$}

\AxiomC{$o'$}

\LeftLabel{$\p_\cut$}
\BinaryInfC{$\vdash \Sigma , \Sigma^\perp \oslash !\Sigma , \Sigma^\perp \oslash !\Sigma$}
\end{prooftree}
\end{footnotesize}


\label{cocompdef}




\label{ialgemb}
\subsection{Specifying Properties of Programs} 
The formulas of \textsf{WS1} are more expressive than the types of languages such as Idealized Algol, and hence they enable the behaviour of history sensitive strategies to be specified both more abstractly and more precisely. For example, formulas can specify the order in which arguments are interrogated, how many times they are interrogated, and relationships between inputs and outputs of ground type (using the first-order structure).

\subsubsection{Data-Independent Programming}
\label{aqdi}
We can use quantifiers to represent \emph{data-independent} structures
such as cells and stacks, where the underlying ground type at a given
$\mathcal{L}$-structure $L$ is $|L|$. As a formula/game, this ground
type is represented by $\mathbf{V} = \bot \lhd \exists x . \top$ --- a
dialogue in this game consists of Opponent playing a question move $q$
and Player responding with an element of $|L|$. We can represent a
stream of such values using the formula $!\mathbf{V}$.

Let $\mathbf{Vi} = \forall x . \bot \lhd \top$ represent an `input
version' of $\mathbf{V}$, where Opponent plays an $|L|$ value and
Player then accepts it, analogous to $\mathbf{Bi}$ above. The type of
a stack object can then be given by the formula ${!}(\mathbf{V} \&
\mathbf{Vi})$, with a ``pop'' and a ``push'' method. We give a proof
denoting the behaviour of such a stack, parametrised by a starting
stack, of type $!\mathbf{V} \multimap {!}(\mathbf{V} \&
\mathbf{Vi})$.

\begin{footnotesize}
\begin{prooftree}
\AxiomC{}
\LeftLabel{$\p_\id$}
\UnaryInfC{$\vdash !(\bot \lhd \exists x . \top) , ?(\top \oslash \forall x . \bot)$}
\LeftLabel{$\p_\con^!$}
\UnaryInfC{$\vdash !(\bot \lhd \exists x . \top), !(\bot \lhd \exists x . \top) , ?(\top \oslash \forall x . \bot)$}
\LeftLabel{$\p_\der^!$}
\UnaryInfC{$\vdash \bot \lhd \exists x . \top, !(\bot \lhd \exists x . \top) , ?(\top \oslash \forall x . \bot)$}

\AxiomC{}
\LeftLabel{$\p_\id$}
\UnaryInfC{$\{ x \} ; \vdash !(\bot \lhd \exists x . \top) , ?(\top \oslash \forall x . \bot)$}
\UnaryInfC{$\{ x \} ; \vdash \top , !(\bot \lhd \exists x . \top) , ?(\top \oslash \forall x . \bot)$}
\LeftLabel{$\p_\exists^x$}
\UnaryInfC{$\{ x \} ; \vdash \exists x . \top , !(\bot \lhd \exists x . \top) , ?(\top \oslash \forall x . \bot)$}
\UnaryInfC{$\{ x \} ; \vdash \bot , \exists x . \top , !(\bot \lhd \exists x . \top) , ?(\top \oslash \forall x . \bot)$}
\UnaryInfC{$\{ x \} ; \vdash \bot \lhd \exists x . \top , !(\bot \lhd \exists x . \top) , ?(\top \oslash \forall x . \bot)$}
\UnaryInfC{$\{ x \} ; \vdash !(\bot \lhd \exists x . \top) , ?(\top \oslash \forall x . \bot)$}
\UnaryInfC{$\{ x \} ; \vdash \top , !(\bot \lhd \exists x . \top) , ?(\top \oslash \forall x . \bot)$}
\UnaryInfC{$\{ x \} ; \vdash \bot , \top, !(\bot \lhd \exists x . \top) , ?(\top \oslash \forall x . \bot)$}
\UnaryInfC{$\vdash \forall x . \bot , \top, !(\bot \lhd \exists x . \top) , ?(\top \oslash \forall x . \bot)$}
\UnaryInfC{$\vdash \forall x . \bot \lhd \top, !(\bot \lhd \exists x . \top) , ?(\top \oslash \forall x . \bot)$}

\BinaryInfC{$\vdash (\bot \lhd \exists x . \top) \& (\forall x . \bot \lhd \top), !(\bot \lhd \exists x . \top) , ?(\top \oslash \forall x . \bot)$}
\LeftLabel{$\p_\ana$}
\UnaryInfC{$\vdash !((\bot \lhd \exists x . \top) \& (\forall x . \bot \lhd \top)), ?(\top \oslash \forall x . \bot)$}
\end{prooftree}
\end{footnotesize}

Once again, we use $\p_\mathsf{ana}$ to obtain the infinite
behaviour, applied to a proof $q$ of $!\mathbf{V} \multimap
(\mathbf{V} \& \mathbf{Vi}) \oslash !\mathbf{V}$. The strategy denoted
by $q$ performs as `copycat' in the $!\mathbf{V} \multimap \mathbf{V}
\oslash !\mathbf{V}$ component, and in the $!\mathbf{V} \multimap \mathbf{Vi}
\oslash !\mathbf{V}$ component behaves as follows:

\[
\begin{array}{ccccccccccccccccccccccccccccccccccl}

!\mathbf{V} & \multimap   & \mathbf{Vi} & \oslash & !\mathbf{V} \\
        &        &            \mathtt{in}(v)      &      &      &                        \\
        &        &            \mathtt{ok}      &      &      &                        \\
        &        &                  &      &  \mathtt{q}    &                        \\
        &        &                  &      &    v  &                        \\
  \end{array}
\]

\noindent and then enters copycat.


\subsubsection{Good Variables}

One respect in which  the game semantics of  Idealized Algol (and other imperative languages) fails to reflect its syntax fully is in the existence in the model of \emph{bad variables} which do not return the last value assigned to them \cite{AMc_LSS}. In \textsf{WS1} we may define formulas for which the only  proof denotes a good variable.

The formula $\mathbf{worm} = \mathbf{Bi} \oslash {!}\mathbf{B}$
represents a Boolean variable which can be written once, then read
many times. One proof/strategy of this formula will indeed be a valid
Boolean cell: if Opponent plays \texttt{inputX} then Player responds
with \texttt{ok}, if Opponent then tries to read the cell \texttt{q},
then Player responds with \texttt{X}. But there are also bad
variables: for example, the read method may always return
\texttt{True} regardless of what was written.

To exclude such behaviour, we can replace the input/output moves with
atoms. Define $\mathbf{B}^{\phi,\psi} = \bot \lhd (\overline{\phi} \oplus
\overline{\psi})$ and $\mathbf{Bi}^{\phi,\psi} = (\phi \& \psi) \lhd \top$, with
$\mathbf{worm}^{\phi,\psi} = \mathbf{Bi}^{\phi,\psi} \oslash {!}\mathbf{B}^{\phi,\psi}$. If $\phi$ and
$\psi$ are assigned $\tru$, then this denotes the same dialogue as
$\mathbf{worm}$. However, the denotation of any proof of
$\mathbf{worm}^{\phi,\psi}$ at such a model must be the good variable
strategy. The rule for $\phi$ (and semantically, uniformity of
strategies) ensures that $\phi$ must be played before
$\overline{\phi}$, and $\psi$ before $\overline{\psi}$. Consequently,
Player can only respond with a particular Boolean value in the
\texttt{read} component if that same value has previously been given
as an input in the \texttt{write} component, so good-variable
behaviour is assured. The following proof of this formula uses only
the core rules and the promotion rule.


\begin{footnotesize}
\begin{prooftree}
\AxiomC{}
\UnaryInfC{$\overline{\phi} \vdash \top$}
\UnaryInfC{$\overline{\phi} \vdash \overline{\phi}$}
\LeftLabel{${\p_\oplus}_1$}
\UnaryInfC{$\overline{\phi} \vdash \overline{\phi} \oplus \overline{\psi}$}
\UnaryInfC{$\overline{\phi} \vdash \bot , \overline{\phi} \oplus \overline{\psi}$}
\UnaryInfC{$\overline{\phi} \vdash \bot \lhd \overline{\phi} \oplus \overline{\psi}$}
\LeftLabel{$\mathsf{prom}$}
\UnaryInfC{$\overline{\phi} \vdash !(\bot \lhd \overline{\phi} \oplus \overline{\psi})$}
\UnaryInfC{$\overline{\phi} \vdash \top , !(\bot \lhd \overline{\phi} \oplus \overline{\psi})$}
\UnaryInfC{$\overline{\phi} \vdash \top \oslash !(\bot \lhd \overline{\phi} \oplus \overline{\psi})$}
\UnaryInfC{$\overline{\phi} \vdash \bot , \top \oslash !(\bot \lhd \overline{\phi} \oplus \overline{\psi})$}
\UnaryInfC{$\overline{\phi} \vdash \bot , \top , !(\bot \lhd \overline{\phi} \oplus \overline{\psi})$}
\UnaryInfC{$\vdash \phi , \top , !(\bot \lhd \overline{\phi} \oplus \overline{\psi})$}

\AxiomC{}
\UnaryInfC{$\overline{\psi} \vdash \top$}
\UnaryInfC{$\overline{\psi} \vdash \overline{\psi}$}
\LeftLabel{${\p_\oplus}_2$}
\UnaryInfC{$\overline{\psi} \vdash \overline{\phi} \oplus \overline{\psi}$}
\UnaryInfC{$\overline{\psi} \vdash \bot , \overline{\phi} \oplus \overline{\psi}$}
\UnaryInfC{$\overline{\psi} \vdash \bot \lhd \overline{\phi} \oplus \overline{\psi}$}
\LeftLabel{$\mathsf{prom}$}
\UnaryInfC{$\overline{\psi} \vdash !(\bot \lhd \overline{\phi} \oplus \overline{\psi})$}
\UnaryInfC{$\overline{\psi} \vdash \top , !(\bot \lhd \overline{\phi} \oplus \overline{\psi})$}
\UnaryInfC{$\overline{\psi} \vdash \top \oslash !(\bot \lhd \overline{\phi} \oplus \overline{\psi})$}
\UnaryInfC{$\overline{\psi} \vdash \bot , \top \oslash !(\bot \lhd \overline{\phi} \oplus \overline{\psi})$}
\UnaryInfC{$\overline{\psi} \vdash \bot , \top , !(\bot \lhd \overline{\phi} \oplus \overline{\psi})$}
\UnaryInfC{$\vdash \psi , \top , !(\bot \lhd \overline{\phi} \oplus \overline{\psi})$}

\BinaryInfC{$\vdash (\phi \& \psi) , \top , !(\bot \lhd \overline{\phi} \oplus \overline{\psi})$}
\UnaryInfC{$\vdash (\phi \& \psi) \lhd \top , !(\bot \lhd \overline{\phi} \oplus \overline{\psi})$}
\UnaryInfC{$\vdash ((\phi \& \psi) \lhd \top) \oslash !(\bot \lhd \overline{\phi} \oplus \overline{\psi})$}
\end{prooftree}
\end{footnotesize}

\noindent We cannot use $!$ to obtain a formula which admits only an
arbitrarily reusable `good variable', but we can obtain finite
approximations. For example, the formula $$\mathbf{Bi}^{\alpha,\beta}
\oslash (\mathbf{worm}^{\phi,\psi} ~ \& ~ (\mathbf{B}^{\alpha,\beta}
\oslash \mathbf{worm}^{\phi,\psi}) ~ \& ~ (\mathbf{B}^{\alpha,\beta}
\oslash (\mathbf{B}^{\alpha,\beta} \oslash
\mathbf{worm}^{\phi,\psi})))$$ models a good variable that can be
written to twice, and can be read at most twice before the second
write. Strategies on such formulas then approximate our reusable cell
strategy above on $!(\mathbf{B} \& \mathbf{Bi})$.


\section{Categorical Semantics for \textsf{WS1}}
\label{sec:catsemantics}

To give a  formal semantics for our logic, we first introduce a notion of  categorical model which captures everything except the first-order  structure  (quantifiers and atoms). We shall use notation $\eta : F \Rightarrow G : \mathcal{C} \rightarrow
\mathcal{D}$ to mean $\eta$ is a natural transformation from $F$ to
$G$ with $F,G : \mathcal{C} \rightarrow \mathcal{D}$.

First, we define some categories of games that will form the intended
instance of our categorical model. Objects in these categories will be
negative games, and an arrow $A \rightarrow B$ will be a strategy on
$A \multimap B$. We can compose strategies using ``parallel
composition plus hiding''. Suppose $\sigma : A \multimap B$ and $\tau
: B \multimap C$, define $$\sigma \| \tau = \{ s \in (M_A + M_B +
M_C)^\ast : s|_1 \in P_A \wedge s|_2 \in P_B \wedge s|_3 \in P_C \}$$
and set $$\tau \circ \sigma = \{ s|_{1,3} : s \in \sigma \| \tau \}.$$
It is well-known that $\tau \circ \sigma$ is a well-formed strategy on
$A \multimap C$ (see e.g. \cite{AJ_MLL}).

\begin{proposition}
  Composition is associative, and there is an identity $A \rightarrow
  A$ given by the copycat strategy: $\{ s \in P_{A \multimap A} :
  \gamma(s) \}$ where $\gamma(s)$ holds if and only if $t|_1 = t|_2$
  for all even-length prefixes $t$ of $s$.
\end{proposition}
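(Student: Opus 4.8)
The plan is to prove both halves of the statement by the standard \emph{interaction-sequence} argument of game semantics, transported to the present games and to the definition $A \multimap B = B \lhd A^\perp$. Since the excerpt already cites \cite{AJ_MLL} for the fact that $\tau \circ \sigma$ is a well-formed strategy on $A \multimap C$, the two remaining tasks are (i) the bookkeeping identity that establishes associativity, and (ii) verifying that the copycat set is a strategy and satisfies the two unit laws.

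For associativity, fix $\sigma : A \multimap B$, $\tau : B \multimap C$, $\rho : C \multimap D$. Writing $s|_{X,Y}$ for the restriction-and-retagging of a sequence on $M_A + M_B + M_C + M_D$ to the two named components, I would define the set of interaction sequences
$$\sigma \,\|\, \tau \,\|\, \rho \;=\; \{\, s \in (M_A + M_B + M_C + M_D)^\ast : s|_{A,B} \in \sigma,\; s|_{B,C} \in \tau,\; s|_{C,D} \in \rho \,\}.$$
The core of the proof is a \emph{zipping/unzipping lemma}: for a sequence on three adjacent components, being a legal interaction of two strategies is equivalent to being uniquely reconstructible from its two pairwise restrictions, and, crucially, the result of hiding the two internal components of $\sigma\|\tau\|\rho$ is independent of the order in which they are hidden. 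Granting this, one shows that both $(\rho\circ\tau)\circ\sigma$ and $\rho\circ(\tau\circ\sigma)$ coincide with $\{\, s|_{A,D} : s \in \sigma\|\tau\|\rho \,\}$; well-formedness of this set as a strategy on $A \multimap D$ is then the already-cited fact from \cite{AJ_MLL}. Thus associativity reduces to the zipping lemma plus routine checks that restriction and retagging commute with composition.

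For the identity, let $\id_A = \{\, s \in P_{A\multimap A} : \gamma(s) \,\}$ be the copycat strategy. First I would check that it satisfies the strategy axioms: prefix-closure and determinacy are immediate because $\gamma$ is cut out by a condition on the even-length prefixes of $s$, and the polarity and emptiness conditions follow from $A \multimap A = A \lhd A^\perp$ being negative, together with the fact that after an even copycat play the unique Player response copies the last Opponent move across to the other copy of $A$ (where its $\lambda$-label is reversed). For the unit laws $\id_B \circ \sigma = \sigma = \sigma \circ \id_A$, I would set up, for $\sigma : A \multimap B$, an explicit bijection between the plays of $\sigma$ and the interaction sequences in $\sigma \| \id_B$ (on $M_A + M_B + M_{B'}$) under which every move in the copy $B'$ is the immediate duplicate of the preceding move in the shared $B$, and conversely; hiding the shared $B$ then relabels $B'$ back to $B$ and returns exactly $\sigma$. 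The law $\sigma \circ \id_A = \sigma$ is symmetric.

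The main obstacle is the zipping lemma, and specifically the claim that an interaction sequence is \emph{determined} by its two pairwise views and that the two orders of hiding agree. This is the classical subtlety of the composition argument, and it is precisely where the switching discipline built into $\oslash$ and $\lhd$ — that only Opponent may move between components — is used: in the composite game, control is at every stage unambiguously located within a single adjacent pair of components, so the zig-zag of the interaction through an internal component is forced rather than chosen. History-sensitivity of the strategies is irrelevant here, since it is a purely combinatorial property of the underlying sequences of moves; once the lemma is in place, the rest is the (routine but somewhat fiddly) verification that the various restrictions and retaggings behave as expected, exactly as in \cite{AJ_MLL}.
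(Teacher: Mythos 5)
Your outline is the standard interaction-sequence argument (a triple interaction set $\sigma\|\tau\|\rho$ with a zipping/unzipping lemma justified by the switching discipline, plus the copycat bijection for the unit laws), and this is exactly what the paper relies on: it states the proposition without proof, treating it as well known and citing \cite{AJ_MLL} for the underlying facts about parallel composition plus hiding. Your sketch is correct in outline and correctly identifies the zipping lemma as the only real content, so it matches the (omitted) proof the paper has in mind.
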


\begin{definition}
  The category $\mathcal{G}$ has negative games as objects, and a map
  $\sigma : A \rightarrow B$ is a strategy on $A \multimap B$ with
  composition and identity as above.
\end{definition}

\noindent This category has been studied extensively in e.g.
\cite{Lam_SGL,Cur_SS,Long_PLGM}, and has equivalent presentations
using graph games \cite{HS_GG} and locally Boolean domains
\cite{Lai_LBD}.

If $A$, $B$ and $C$ are bounded, $\sigma : A \multimap B$ and $\tau :
B \multimap C$ are total then $\tau \circ \sigma$ is also total. Total
strategies do not compose for unbounded games, however. Winning
strategies on unbounded games do compose \cite{Hy_GS}, and the
identity strategy is winning.



\noindent 
\begin{definition}
  The category $\mathcal{W}$ has negative games as objects and winning
  strategies as maps.
\end{definition}

\noindent A map $\sigma : A \rightarrow B$ is \emph{strict} if it
responds to Opponent's first move with a move in $A$, if it responds at
all. Strict strategies are closed under composition and the identity
is strict.

\begin{definition}
  The category $\mathcal{G}_s$ has negative games as objects and
  strict strategies as maps.
  The category $\mathcal{W}_s$ has negative games as objects and
  strict winning strategies as maps.
\end{definition}

\noindent Isomorphisms in $\mathcal{W}$ correspond to forest
isomorphisms and all isomorphisms are total and strict
\cite{Lau_CIT}. 

Each of the above categories can be endowed with  symmetric monoidal structure, given by $(I, \otimes)$ where $I$ is the
empty game $\mathbf{1}$ and the action of $\otimes$ on objects is as defined in Section \ref{gameconnectives}.


\subsection{Sequoidal Closed Structure}

The notions of \emph{sequoidal category} and \emph{sequoidal closed category} were first introduced in \cite{Lai_HOS}.

\begin{definition}
A \emph{sequoidal category} consists of:
\begin{itemize}
\item A symmetric monoidal category $(\mathcal{C}, I, \otimes)$ (we will call the relevant isomorphisms $\assoc : (A \otimes B) \otimes C \cong A \otimes (B \otimes C)$, $\li : I \otimes A \cong A$, $\ri : A \otimes I \cong I$ and $\sym : A \otimes B \cong B \otimes A$)
\item A category $\mathcal{C}_s$
\item A right-action $\varoslash$ of $\mathcal{C}$ on $\mathcal{C}_s$. That is, a functor $\_\varoslash\_ : \mathcal{C}_s \times \mathcal{C} \rightarrow \mathcal{C}_s$ with natural isomorphisms $\unit_\oslash : A \varoslash I \cong A$ and $\passoc : A \varoslash (B \otimes C) \cong (A \varoslash B) \varoslash C$ satisfying the following coherence conditions \cite{JK_Act}:
\begin{diagram}
A \oslash (B \otimes (C \otimes D)) & \rTo^{\passoc} & (A \oslash B) \oslash (C \otimes D) & \rTo^{\passoc} & ((A \oslash B) \oslash C) \oslash D \\
\dTo^{\id \oslash \assoc}            &        &  & \ruTo^{\passoc \oslash \id} \\
A \oslash ((B \otimes C) \otimes D) & \rTo^{\passoc} & (A \oslash (B \otimes C)) \oslash D \\
\end{diagram}
\begin{diagram}
A \oslash (I \otimes B)   & \rTo^{\passoc} &  (A \oslash I) \oslash B & & & A \oslash (B \otimes I) & \rTo^\passoc & (A \oslash B) \oslash I \\
\dTo^{\id \oslash \li} & \ldTo^{\unit_\oslash \oslash \id} & & \hspace{5pt} & & \dTo^{\id \oslash \ri} & \ldTo^{\unit_\oslash} \\
A \oslash B & & & & & A \oslash B \\
\end{diagram}
\item A functor $J : \mathcal{C}_s \rightarrow \mathcal{C}$
\item A natural transformation $\wk : J(\_) \otimes \_ \Rightarrow J(\_ \varoslash \_)$ satisfying further coherence conditions \cite{Lai_HOS}:
\begin{diagram}
JA \otimes I & \rTo^{\ri} & JA & & (JA \otimes B) \otimes C & \rTo^{\wk \otimes \id} & J(A \oslash B) \otimes C & \rTo^{\wk} & J((A \oslash B) \oslash C) \\
\dTo^{\wk} & \ruTo^{J(\unit_\oslash)} & & \hspace{2pt} & \dTo^{\assoc}  & & & \ruTo^{J(\passoc)} \\
J(A \oslash I) & & & & JA \otimes (B \otimes C) & \rTo^{\wk} & J(A \oslash (B \otimes C)) \\
\end{diagram}
\end{itemize}
\end{definition}

\begin{definition}
  An \emph{inclusive sequoidal category} is a sequoidal category in
  which $\mathcal{C}_s$ is a full-on-objects subcategory of
  $\mathcal{C}$ containing $\wk$ and the monoidal isomorphisms; $J$ is
  the inclusion functor; and $J$ reflects isomorphisms.
\end{definition}

\noindent We can identify this structure in our categories of games:
we can extend the left-merge operator $\oslash$ to an action
$\mathcal{G}_s \times \mathcal{G} \rightarrow \mathcal{G}_s$. If
$\sigma : A \rightarrow B$ and $\tau : C \rightarrow D$ then $\sigma
\oslash \tau : A \oslash C \rightarrow B \oslash D$ plays as $\sigma$
between $A$ and $B$ and as $\tau$ between $C$ and $D$. The strictness
of $\sigma$ guarantees that this
yields a valid strategy on $(A \oslash C) \multimap (B \oslash
D)$. The isomorphisms $\passoc$ and
$\unit_\oslash$ exist, and there is a natural copycat strategy $\wk :
M \otimes N \rightarrow M \oslash N$ in $\mathcal{G}_s$, all
satisfying the required axioms \cite{Lai_FPC}. The functor $J$
reflects isomorphisms as the inverse of strict isomorphisms are
strict. Thus $(\mathcal{G},\mathcal{G}_s)$ forms an inclusive
sequoidal category; as does $(\mathcal{W},\mathcal{W}_s)$. 

\begin{definition}
  An inclusive sequoidal category is \emph{Cartesian} if
  $\mathcal{C}_s$ has finite products preserved by $J$ (we will write
  $t_A$ for the unique map $A \rightarrow 1$). It is
  \emph{decomposable} if the natural transformations $\dec = \langle
  \wk , \wk \circ \sym \rangle : A \otimes B \Rightarrow (A \varoslash
  B) \times (B \varoslash A) : \mathcal{C}_s \times \mathcal{C}_s
  \rightarrow \mathcal{C}_s$ and $\dec^0 = \mathsf{t}_I : I
  \Rightarrow 1 : \mathcal{C}_s$ are isomorphisms (so, in particular,
  $(\mathcal{C},\otimes,I)$ is an affine SMC).

  A Cartesian sequoidal category is \emph{distributive} if the natural
  transformations $\dist = \langle \pi_1 \varoslash \id_C , \pi_2
  \varoslash \id_C \rangle : (A \times B) \varoslash C \Rightarrow (A
  \varoslash C) \times (B \varoslash C) : \mathcal{C}_s \times
  \mathcal{C}_s \times \mathcal{C} \rightarrow \mathcal{C}_s$ and
  $\dist_0 = \mathsf{t}_{1 \varoslash C} : 1 \varoslash C \Rightarrow
  1 : \mathcal{C} \rightarrow \mathcal{C}_s$ are isomorphisms.
\end{definition}

\noindent We write $\dist^0 : I \oslash C \cong I$ for the isomorphism
$(\dec^0)^{-1} \circ \dist_0 \circ (\dec^0 \oslash \id)$.

In the game categories defined above, $M \& N$ is a product of $M$ and
$N$, and the empty game $I$ is a terminal object as well as the
monoidal unit. The decomposability and distributivity isomorphisms
above exist as natural copycat morphisms \cite{Lai_FPC}. In fact,
$\mathcal{W}$ and $\mathcal{G}$ have all small products, following the
construction in Section \ref{gameconnectives}, with the corresponding
distributivity isomorphism with respect to $\oslash$.

\begin{definition}
A \emph{sequoidal closed category} is an inclusive sequoidal category where $\mathcal{C}$ is symmetric monoidal closed and the map $f \mapsto \Lambda(f \circ \wk)$ defines a natural isomorphism $\Lambda_s : \mathcal{C}_s(B \varoslash A, C) \Rightarrow \mathcal{C}_s(B, A \multimap C)$.
\end{definition}

\noindent We can show that $\mathcal{G}$ and $\mathcal{W}$ are
sequoidal closed, with the internal hom given by $\multimap$
\cite{Lai_FPC}.

In any sequoidal closed category, define $\app_s : (A \multimap B)
\varoslash A \rightarrow B$ as $\Lambda_s^{-1}(\id)$, and $\app : (A
\multimap B) \otimes A \rightarrow B = \Lambda^{-1}(\id)$, noting that
$\app = \app_s \circ \wk$. If $f : A \rightarrow B$ let $\Lambda_I(f)
: I \rightarrow A \multimap B$ denote the name of $f$, i.e. $\Lambda(f
\circ \ri)$. Write  $\Lambda_I^{-1}$ for the inverse operation.

\begin{proposition}
  In any sequoidal closed category, $\multimap$ restricts to a functor $\mathcal{C}^\op \times \mathcal{C}_s \rightarrow \mathcal{C}_s$ with natural isomorphisms $\unit_\multimap : I \multimap A \cong A$ and $\passoc_\multimap : A \otimes B \multimap C \cong A \multimap (B \multimap C)$ in $\mathcal{C}_s$.
\end{proposition}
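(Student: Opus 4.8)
The plan is to obtain all three pieces of structure — functoriality of $\multimap$ on $\mathcal{C}^\op \times \mathcal{C}_s$, the left unit iso $\unit_\multimap$, and the "Pasch" iso $\passoc_\multimap$ — by transporting the corresponding structure on $\otimes$ and on $\oslash$ across the natural isomorphism $\Lambda_s$ provided by the definition of sequoidal closed category. The key observation is that $\Lambda_s : \mathcal{C}_s(B \varoslash A, C) \cong \mathcal{C}_s(B, A \multimap C)$ is natural in $B$ and $C$ (and, one checks, dinatural in $A$), so $A \multimap C$ represents the functor $B \mapsto \mathcal{C}_s(B \varoslash A, C)$. Since $\varoslash$ is a functor $\mathcal{C}_s \times \mathcal{C} \to \mathcal{C}_s$, this representability makes $A \multimap (-)$ automatically functorial in $C$, and contravariantly functorial in $A$ by reindexing along $\varoslash$ in its second argument; naturality of $\Lambda_s$ then forces the evident bifunctor structure. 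In practice I would define the action of $\multimap$ on morphisms by $f \multimap g = \Lambda_s\big( g \circ \app_s \circ (\id \varoslash f)\big)$ (suitably typed) and check the functor laws by a short Yoneda-style argument using that $\Lambda_s$ is a bijection natural in the appropriate variables.

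For $\unit_\multimap : I \multimap A \cong A$, the idea is that $\mathcal{C}_s(B, I \multimap A) \cong \mathcal{C}_s(B \varoslash I, A)$ by $\Lambda_s$, and $B \varoslash I \cong B$ by the sequoidal unit isomorphism $\unit_\oslash$; composing gives $\mathcal{C}_s(B, I \multimap A) \cong \mathcal{C}_s(B, A)$ naturally in $B$, so by Yoneda $I \multimap A \cong A$ in $\mathcal{C}_s$. Concretely the iso is $\Lambda_s(\unit_\oslash) : I \multimap A \to$ ... — more precisely one takes $\Lambda_s^{-1}(\id_{I \multimap A})$ which lives in $\mathcal{C}_s((I \multimap A)\varoslash I, A)$, precomposes with $\unit_\oslash^{-1}$, giving the candidate $I \multimap A \to A$, and checks it is iso by exhibiting $\Lambda_s(\unit_\oslash : A \varoslash I \to A)$ as inverse; this is a routine diagram chase using that $\Lambda_s$ is a bijection and the coherence of $\unit_\oslash$.

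For $\passoc_\multimap : (A \otimes B) \multimap C \cong A \multimap (B \multimap C)$, the computation is $\mathcal{C}_s(D, (A \otimes B) \multimap C) \cong \mathcal{C}_s(D \varoslash (A \otimes B), C) \cong \mathcal{C}_s((D \varoslash A) \varoslash B, C) \cong \mathcal{C}_s(D \varoslash A, B \multimap C) \cong \mathcal{C}_s(D, A \multimap (B \multimap C))$, using $\Lambda_s$ twice and the sequoidal associativity iso $\passoc : D \varoslash (A \otimes B) \cong (D \varoslash A) \varoslash B$ once; naturality in $D$ and Yoneda deliver the iso in $\mathcal{C}_s$. I would then verify that this $\passoc_\multimap$ is natural in $A$, $B$, $C$ by noting each step in the chain is, and pin down its explicit form (it should be $\Lambda_s(\Lambda_s(\app_s \circ \passoc^{-1} \circ \text{(rebracketing)}))$ or similar) only as far as needed for later use.

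The main obstacle I anticipate is bookkeeping rather than conceptual: keeping straight which variables $\Lambda_s$ is natural in (it is stated as natural in $B$ and $C$ but I will need to know it behaves well — at least dinaturally — in the exponent argument $A$ as well, which requires unfolding the definition $\Lambda_s(f) = \Lambda(f \circ \wk)$ and using naturality of $\wk$ and of the monoidal closed $\Lambda$), and ensuring the candidate isomorphisms I write down are genuinely two-sided inverses rather than merely natural bijections at the hom-set level. The cleanest route is to prove everything at the level of representable functors — establish the natural bijections of hom-sets first, invoke Yoneda to get the objects-level isomorphisms and their naturality for free, and only afterwards extract explicit formulas — so that no large coherence diagram has to be chased by hand; the sequoidal coherence axioms for $\passoc$, $\unit_\oslash$ and $\wk$ are exactly what guarantee the hom-set bijections are compatible, so those feed in at the very end.
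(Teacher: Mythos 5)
Your proposal is correct in substance but takes a genuinely different route from the paper. The paper exploits the fact that $\mathcal{C}$ is already symmetric monoidal closed, so the functor $\multimap$ and the isomorphisms $\unit_\multimap = \app \circ \ri^{-1}$ and $\passoc_\multimap = \Lambda(\Lambda(\app \circ \assoc))$ already exist in $\mathcal{C}$; the entire content of its proof is that these canonical maps are \emph{strict}, which it shows by rewriting $\app = \app_s \circ \wk$ and pushing $\wk$ through with its naturality and the sequoidal coherence axioms, arriving at $f \multimap g = \Lambda_s(g \circ \app_s \circ (\id \oslash f))$, $\unit_\multimap = \app_s \circ \unit_\oslash^{-1}$ and $\passoc_\multimap = \Lambda_s(\Lambda_s(\app_s \circ \passoc^{-1}))$; strictness of the inverses is then deduced from the axiom that $J$ reflects isomorphisms. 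You instead build the isomorphisms inside $\mathcal{C}_s$ by representability: $\Lambda_s$ together with the natural isomorphisms $\unit_\oslash$ and $\passoc$ gives natural bijections of $\mathcal{C}_s$-hom-sets, and Yoneda in $\mathcal{C}_s$ yields two-sided inverses and naturality without ever invoking iso-reflection; tracing the identity through your hom-set chains produces exactly the same explicit formulas as the paper. What your route still owes, if the proposition is read (as the paper later uses it) as saying that the \emph{canonical} monoidal-closed structure maps of $\mathcal{C}$ restrict to $\mathcal{C}_s$, is the identification of your Yoneda-built maps with $\app \circ \ri^{-1}$ and $\Lambda\Lambda(\app \circ \assoc)$ — but that identification is precisely the paper's short computation, and you already list the needed ingredients ($\Lambda_s(f) = \Lambda(f \circ \wk)$, naturality of $\wk$, the coherence triangles for $\unit_\oslash$ and $\passoc$), along with the dinaturality bookkeeping for the exponent variable that your functoriality claim requires. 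In short: your approach trades the paper's reliance on $J$ reflecting isomorphisms for a Yoneda argument plus an extra identification step; both are sound and end at the same morphisms.
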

\begin{proof}
 We need to show that if $g$ is in $\mathcal{C}_s$ then $f \multimap g$ is in $\mathcal{C}_s$. But $f \multimap g = \Lambda(g \circ \app \circ (\id \otimes f)) = \Lambda(g \circ \app_s \circ \wk \circ (\id \otimes f)) = \Lambda(g \circ \app_s \circ (\id \oslash f) \circ \wk) = \Lambda_s (g \circ \app_s \circ (\id \oslash f))$ which is in $\mathcal{C}_s$.

In any symmetric monoidal category the isomorphisms $\unit_\multimap$ and $\passoc_\multimap$ exist, but we must show that they are strict.
\begin{itemize}
\item $\unit_\multimap : I \multimap A \rightarrow A$ is given by $\app \circ \ri^{-1}$. This $\app_s \circ \wk \circ \ri^{-1} = \app_s \circ \unit_\oslash^{-1}$ which is a map in $\mathcal{C}_s$.
\item $\passoc_\multimap : A \otimes B \multimap C \cong A \multimap (B \multimap C)$ is given by $\Lambda(\Lambda(\app \circ \assoc)) = \Lambda(\Lambda(\app_s \circ \wk \circ \assoc))
= \Lambda(\Lambda(\app_s \circ \passoc^{-1} \circ \wk \circ (\wk \otimes \id)))
= \Lambda(\Lambda(\app_s \circ \passoc^{-1} \circ \wk) \circ \wk)
= \Lambda_s(\Lambda_s (\app_s \circ \passoc^{-1}))$ which is in $\mathcal{C}_s$.
\end{itemize}
The inverses of the above maps are strict as $J$ reflects isomorphisms.
\qed
\end{proof}

\label{WScatdef}

\noindent In distributive, decomposable sequoidal closed categories we can also define the following natural transformations:
\begin{itemize}
\item The isomorphism $\psym : (A \oslash B) \oslash C \cong (A \oslash C) \oslash B$ given by $\passoc \circ (\id \oslash \sym) \circ \passoc^{-1}$.
\item The isomorphism $\psym_\multimap : C \multimap (B \multimap A) \cong B \multimap (C \multimap A)$ given by $\passoc_\multimap \circ (\sym \multimap \id) \circ \passoc_\multimap^{-1}$
\item The isomorphism $\dist_\multimap : A \multimap (B \times C) \rightarrow (A \multimap B) \times (A \multimap C)$ given by $\langle \id \multimap \pi_1 , \id \multimap \pi_2 \rangle$, whose inverse is $\Lambda\langle \app \circ (\pi_1 \otimes \id) , \app \circ (\pi_2 \otimes \id) \rangle$. This isomorphism exists in any monoidal closed category with products.
\item The map $\af : A \Rightarrow I$ given by $(\dec^0)^{-1} \circ \mathsf{t}_A$.
\item The isomorphism $\dist_\multimap^0 : A \multimap I \rightarrow I$ given by $\af$ whose inverse is $\Lambda(\ri \circ (\id \otimes \af))$. We must check that these are inverses: $\af \circ \Lambda(\ri \circ (\id \otimes \af)) = \id$ as both are maps into the terminal object, and $\Lambda(\ri \circ (\id \otimes \af)) \circ \af = \Lambda(\ri \circ (\af \otimes \id) \circ (\af \otimes \id)) = \Lambda(\app) = \id$ as required. We know that $\ri \circ (\af \otimes \id) \circ (\af \otimes \id) = \app$ as both are maps into the terminal object.
\end{itemize}

We can use the structure described above to model the negative
connectives of \textsf{WS1}. We will represent positive connectives
indirectly, inspired by the fact that strategies on the positive game
$P$ correspond to strategies on the negative game $\uparrow P =
P^\perp \multimap o$ where $o$ is the one-move game $\bot$. The object
$o$ satisfies a special property: an internalised version of
\emph{linear functional extensionality} \cite{A_ADFC}.

\begin{definition}
An object $o$ in a sequoidal closed category satisfies \emph{linear functional extensionality} if the natural transformation $\lfe : (B \multimap o) \oslash A \Rightarrow (A \multimap B) \multimap o : \mathcal{C} \times \mathcal{C}^\op \rightarrow \mathcal{C}_s$ given by $\Lambda_s (\app_s \circ (\id \oslash \app) \circ (\id \oslash \sym) \circ \passoc^{-1})$ is an isomorphism. 
\end{definition}

\noindent The linear functional extensionality property is characteristic of our \emph{history sensitive, locally alternating}  
games model \cite{Lai_FPC}: it does not hold  in other sequoidal closed
categories (e.g. Conway games \cite{Lai_HOS}). 

Using linear functional extensionality we can give a natural
isomorphism $\mathsf{abs} : o \oslash A \cong o$ by noticing that $o
\oslash A \cong (I \multimap o) \oslash A \cong (A \multimap I)
\multimap o \cong I \multimap o \cong o$, and thus setting
$\mathsf{abs} = \unit_\multimap \circ ((\dist_\multimap^0)^{-1}
\multimap \id) \circ \lfe \circ (\unit_\multimap^{-1} \oslash \id)$.

\subsection{Coalgebraic Exponential Comonoid}

We next consider the categorical status of the exponential operator
$!$. We interpret the core introduction rules for the exponentials, and the key anamorphism rule, by requiring that it is the carrier for a \emph{final coalgebra} of the functor $X \mapsto N \oslash X$.

Recall that a coalgebra for a functor $F : \mathcal{C} \rightarrow
\mathcal{C}$ is an object $A$ and a map $A \rightarrow F(A)$. A
\emph{final coalgebra} is a terminal object in the category of
coalgebras, that is a coalgebra $\alpha : Z \rightarrow F(Z)$ such
that for any $f : A \rightarrow F(A)$ there is a unique $\leftmoon f
\rightmoon : A \rightarrow Z$ such that $\alpha \circ \leftmoon f
\rightmoon = F(\leftmoon f \rightmoon) \circ f$. 
\begin{diagram}
A & \rTo^{f} & F(A) \\
\dTo^{\leftmoon f \rightmoon} & & \dTo_{F(\leftmoon f \rightmoon)} \\
Z & \rTo_{\alpha} & F(Z) \\
\end{diagram}
We call $\leftmoon f \rightmoon$ the \emph{anamorphism} of 
$f$. Note in particular that if $(Z,\alpha)$ is a final coalgebra for $F$, then $\alpha$ is an isomorphism, with inverse $\alpha^{-1} = \leftmoon F(\alpha) \rightmoon$.

 In $\cal{W}$ we define a coalgebra $(\mathop{!N},\alpha)$ by taking   $\alpha : \mathop{!}N \rightarrow N \oslash \mathop{!}N$ to be the evident copycat strategy which relabels $\inn_1(a)$ on the right to $(a,1)$ on
the left and $\inn_2(a,n)$ on the right to $(a,n+1)$ on the left.

\begin{proposition}
  $(\mathop{!}N, \alpha)$ is the final coalgebra of the functor $N \oslash \_$
  in the category $\mathcal{G}$.
  \label{bangcoalgG}
\end{proposition}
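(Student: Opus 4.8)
The plan is to exploit the fact that $\alpha$ is already an isomorphism --- it is the copycat strategy witnessing the forest isomorphism $!N \cong N \oslash !N$ recorded in Figure~\ref{game-isos} --- so that the universal property reduces to a fixed-point problem. For a strategy $h : A \to {!}N$ in $\G$, the coalgebra-morphism equation $\alpha \circ h = (\id_N \oslash h) \circ f$ is equivalent, after post-composing with $\alpha^{-1}$, to $h = \Psi(h)$, where $\Psi(h) := \alpha^{-1} \circ (\id_N \oslash h) \circ f$; here $\id_N \oslash h : N \oslash A \to N \oslash {!}N$ is well typed because $\oslash$ is an action of $\G$ on $\G_s$, and $\alpha^{-1} : N \oslash {!}N \to {!}N$. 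So it suffices to show that $\Psi : \G(A,{!}N) \to \G(A,{!}N)$ has a unique fixed point, which is then $\leftmoon f \rightmoon$.

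For existence I would run the standard Kleene iteration. The set $\G(A,{!}N)$ is partially ordered by inclusion of strategies; unions of $\subseteq$-chains are again strategies (determinism and prefix-closure are inherited); and since both parallel-composition-plus-hiding and the sequoid action are monotone in each argument, $\Psi$ is monotone. Because $A \multimap {!}N = {!}N \lhd A^\perp$ is a negative game, every strategy on it contains $\epsilon$, so $\{\epsilon\} \subseteq \Psi(\{\epsilon\})$ and the chain $\{\epsilon\} \subseteq \Psi(\{\epsilon\}) \subseteq \Psi^2(\{\epsilon\}) \subseteq \cdots$ is increasing. I would set $\leftmoon f \rightmoon := \bigcup_n \Psi^n(\{\epsilon\})$ and verify it is a fixed point. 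The only nonroutine point is continuity of $\Psi$ with respect to such unions: composition by ``parallel composition plus hiding'' is finitary, since any play of a composite strategy is the visible restriction of a \emph{finite} interaction sequence and hence mentions only finitely many moves of each factor, and similarly $\id_N \oslash (\bigcup_n g_n) = \bigcup_n(\id_N \oslash g_n)$ because the action just interleaves a copycat on $N$ with a play of some $g_n$. Thus $\Psi(\leftmoon f \rightmoon) = \bigcup_n \Psi^{n+1}(\{\epsilon\}) = \leftmoon f \rightmoon$, giving a coalgebra morphism $(A,f) \to ({!}N,\alpha)$.

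For uniqueness I would argue that $\Psi$ is \emph{guarded}: the response that $\Psi(g)$ prescribes to any play $u$ of length $2k$ depends only on the restriction of $g$ to plays of length strictly less than $2k$. The reason is structural: in the interaction defining $\Psi(g)$, the occurrence of $g$ is consulted inside copies of $N$ of the hidden game $N \oslash {!}N$ which, after relabelling by $\alpha^{-1}$, become copies $N_1, N_2, \dots$ of the visible $!N$ --- never $N_0$ --- and by the switching condition for $!N$ a move in copy $N_{i+1}$ can occur only after a move in copy $N_i$, so any play of $g$ that is actually consulted is a strict prefix of $u$. Consequently, if $h_1$ and $h_2$ are both fixed points of $\Psi$, a straightforward induction on play length shows they agree on every play, so $h_1 = h_2$; in particular $\leftmoon f \rightmoon$ is the unique coalgebra morphism, and $({!}N,\alpha)$ is final.

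I expect the main obstacle to be making the guardedness and finitariness claims precise at the level of plays --- i.e.\ checking carefully that each round-trip through the recursion in $\Psi$ strictly descends the copy index of $!N$, so that the fixed point is both reached by the approximants $\Psi^n(\{\epsilon\})$ and pinned down uniquely --- the remainder being routine manipulation of copycat strategies and the sequoidal isomorphisms already collected in Figure~\ref{game-isos} and Section~\ref{gameconnectives}. An essentially equivalent alternative would be to present $!N$ as the limit of the tower $\mathbf{1} \leftarrow N \leftarrow N \oslash N \leftarrow \cdots$ obtained from the $F^n(\mathbf{1})$ via the canonical isomorphisms and to check that $N \oslash \_$ preserves this $\omega^{\mathrm{op}}$-limit, invoking the dual of Adámek's theorem; the same switching-condition analysis is exactly what makes that argument go through.
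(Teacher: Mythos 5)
Your existence half is essentially the paper's own construction in disguise: your approximants $\Psi^n(\{\epsilon\})$ are exactly the chain $s_k=\alpha_k^{-1}\circ(\id\oslash\_)^k(\epsilon)\circ\leftmoon f\rightmoon_k$ that the paper builds, and taking the union uses the same cpo-enrichment of $\mathcal{G}$ with continuity of composition and of $\id_N\oslash\_$. That part is fine.

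The uniqueness half has a genuine gap: the guardedness claim as you state it is false. The plays of $g$ consulted inside $\Psi(g)=\alpha^{-1}\circ(\id_N\oslash g)\circ f$ are plays on the \emph{hidden} game $A\multimap {!}N$, consisting of moves in the hidden copy of $A$ (the domain of $g$, i.e.\ the $A$-component of the middle game $N\oslash A$) together with moves in the hidden ${!}N$; they are not prefixes of the visible play $u$, and their length is not bounded by $|u|$. Indeed, after $g$ switches into the hidden $A$-component, $f$ and $g$ can exchange arbitrarily many moves there without any visible move being produced (this state-passing channel is precisely the point of the construction — it is what makes history-sensitive behaviour like the cell possible), so the response of $\Psi(g)$ to a play of length $2k$ can depend on plays of $g$ much longer than $2k$. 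Consequently the induction on play length showing that two fixed points of $\Psi$ coincide does not go through. What is true, and what you gesture at in your last paragraph, is guardedness measured by the \emph{copy index} of ${!}N$: since $\alpha^{-1}$ sends copy $i$ of the hidden ${!}N$ to copy $i+1$ of the visible one, membership of plays opening copies up to index $m$ in $\Psi(g)$ depends only on plays of $g$ opening copies up to index $m-1$ (of arbitrary length), and $g$ is not consulted at all when only copy $0$ is open. Uniqueness then follows by induction on the maximal opened copy index, using that every finite play opens only finitely many copies — which is exactly how the paper argues, via the truncations $(\id\oslash\_)^k(\epsilon)$, the iterated isomorphisms $\alpha_k$, and an induction on $k$. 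So the architecture of your proof is sound, but the length-based formulation of the key uniqueness step must be replaced by the copy-index argument.
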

\begin{proof}
  Let $\sigma : M \rightarrow N \oslash M$. Define $\leftmoon \sigma
  \rightmoon_n : M \rightarrow (N \oslash \_)^n(M)$ by $\leftmoon
  \sigma \rightmoon_0 = \id$ and $\leftmoon \sigma \rightmoon_{n+1} =
  (\id \oslash \_)^n(\sigma) \circ \leftmoon \sigma \rightmoon_n$.
  \begin{diagram}
  M & \rTo^{\leftmoon \sigma \rightmoon_n} & (N \oslash \_)^n(M) & \rTo^{(\id \oslash \_)^n(\sigma)} & (N \oslash \_)^n(N \oslash M) = (N \oslash \_)^{n+1}(M) \\
  \end{diagram}

  The strategy $\leftmoon \sigma \rightmoon_n$ is a partial
  approximant to $\leftmoon \sigma \rightmoon : M \rightarrow
  \mathop{!} N$. 
  We can show by induction on $n$ that $\leftmoon \sigma
  \rightmoon_{n+1} = (\id \oslash \leftmoon \sigma \rightmoon_n) \circ
  \sigma$.
  Similarly, we can define $\alpha_k : \mathop{!}N \cong (N \oslash
  \_)^k(\mathop{!}N) : \alpha_k^{-1}$ by performing the above
  construction on $\alpha$. Consider the sequence of maps $M
  \rightarrow \mathop{!}N$ defined by $s_k= \alpha_k^{-1} \circ (\id
  \oslash \_)^k(\epsilon) \circ \leftmoon \sigma \rightmoon_k$ for $k
  \in \omega$. We can show that $s_{k + 1} \sqsupseteq s_k$ by
  induction on $k$, and so $(s_k)$ is a chain. Set $\leftmoon \sigma
  \rightmoon = \bigsqcup \alpha_k^{-1} \circ (\id \oslash
  \_)^k(\epsilon) \circ \leftmoon \sigma \rightmoon_k$, where
  $\epsilon$ is the empty strategy. It is well-known that
  $\mathcal{G}$ is cpo-enriched with bottom element $\epsilon$
  \cite{Lai_FPC}.

  We wish to show that $\leftmoon \sigma \rightmoon$ is the unique
  strategy such that $\alpha \circ \leftmoon \sigma \rightmoon = (\id
  \oslash \leftmoon \sigma \rightmoon) \circ \sigma$. To show that the
  equation holds, note that $\alpha \circ \leftmoon \sigma \rightmoon
  = \alpha \circ \bigsqcup \alpha_k^{-1} \circ (\id \oslash
  \_)^k(\epsilon) \circ \leftmoon \sigma \rightmoon_k = \alpha \circ
  \bigsqcup \alpha_{k+1}^{-1} \circ (\id \oslash \_)^{k + 1}(\epsilon)
  \circ \leftmoon \sigma \rightmoon_{k + 1} = \bigsqcup \alpha \circ
  \alpha_{k + 1}^{-1} \circ (\id \oslash \_)^{k + 1}(\epsilon) \circ
  \leftmoon \sigma \rightmoon_{k + 1} = \bigsqcup (\id \oslash
  \alpha_k^{-1}) \circ (\id \oslash (\id \oslash \_)^k(\epsilon))
  \circ (\id \oslash \leftmoon \sigma \rightmoon_k) \circ \sigma =
  (\id \oslash \bigsqcup(\alpha_k^{-1} \circ (\id \oslash
  \_)^k(\epsilon) \circ \leftmoon \sigma \rightmoon_k)) \circ \sigma =
  (\id \oslash \leftmoon \sigma \rightmoon) \circ \sigma$.

  For uniqueness, suppose that $\gamma : M \rightarrow \mathop{!} N$
  is such that $\alpha \circ \gamma = (\id \oslash \gamma) \circ
  \sigma$. We wish to show that $\gamma = \leftmoon \sigma \rightmoon
  = \bigsqcup \alpha_k^{-1} \circ (\id \oslash \_)^k(\epsilon) \circ
  \leftmoon \sigma \rightmoon_k$. To see that $\gamma \sqsupseteq
  \leftmoon \sigma \rightmoon$, it suffices to show that $\gamma$ is
  an upper bound of the chain, i.e. $\gamma \sqsupseteq \alpha_k^{-1}
  \circ (\id \oslash \_)^k(\epsilon) \circ \leftmoon \sigma
  \rightmoon_k$ for each $k$. This can be shown using a simple induction on $k$.
  To see that $\gamma \sqsubseteq \leftmoon \sigma \rightmoon$, we
  show that each play in $\gamma$ is also in $\leftmoon \sigma
  \rightmoon$. Consider a play $s \in \gamma : M \rightarrow
  \mathop{!}N$. Since $s$ is finite, it must visit only a finite
  number of copies of $N$ --- say, $k$ copies. Then $s$ is also a play
  in $\alpha_{k}^{-1} \circ (\id \oslash \_)^k(\epsilon) \circ
  \alpha_k \circ \gamma$.

  It is thus sufficient to show that $(\id \oslash \_)^k(\epsilon)
  \circ \alpha_k \circ \gamma \sqsubseteq (\id \oslash \_)^k(\epsilon)
  \circ \leftmoon \sigma \rightmoon_k$. This is achieved by a simple
  induction on $k$. \qed

\end{proof}

\begin{proposition}
  $(\mathop{!}N, \alpha)$ is the final coalgebra of $N \oslash \_$
  in the category $\mathcal{W}$.
  \label{bangcoalgW}
\end{proposition}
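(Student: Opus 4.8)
The plan is to bootstrap from Proposition~\ref{bangcoalgG}. Since $\mathcal{W}$ has the same objects as $\mathcal{G}$, is a (non-full) subcategory of it, and $N \oslash \_$ restricts to an endofunctor of $\mathcal{W}$ — the functorial action $\id_N \oslash (-)$ takes winning strategies to winning strategies, and $N \oslash B$ is a negative game — it suffices to verify three things: (i) the copycat strategy $\alpha : \mathop{!}N \to N \oslash \mathop{!}N$ is winning, so that $(\mathop{!}N,\alpha)$ is a coalgebra \emph{in} $\mathcal{W}$; (ii) for every \emph{winning} coalgebra $\sigma : M \to N \oslash M$, the mediating strategy $\leftmoon \sigma \rightmoon : M \to \mathop{!}N$ built in Proposition~\ref{bangcoalgG} is itself winning; and (iii) uniqueness in $\mathcal{W}$. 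Point (iii) is free: any winning $\gamma$ satisfying the coalgebra square is in particular a strategy satisfying it, hence $\gamma = \leftmoon \sigma \rightmoon$ by Proposition~\ref{bangcoalgG}. Point (i) is routine, since $\alpha$ is (up to relabelling) copycat: it is total, and it wins every infinite play because the two sides of a copycat play coincide, and an infinite copycat play is $P$-winning on $\mathop{!}N \multimap (N\oslash\mathop{!}N)$ iff it is $P$-winning on one restriction, which is forced by the other.

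Everything thus reduces to (ii). I would use the description of $\leftmoon \sigma \rightmoon$ as the infinite unfolding of $\sigma$: from $\leftmoon \sigma \rightmoon = \alpha^{-1} \circ (\id_N \oslash \leftmoon \sigma \rightmoon) \circ \sigma$ one sees $\leftmoon \sigma \rightmoon$ behaves as an $\omega$-chain of copies $\sigma^{(1)},\sigma^{(2)},\dots$ of $\sigma$, with $\sigma^{(i)}$ carrying out $\sigma$'s interaction in the $i$-th copy $N_i$ of $\mathop{!}N$, receiving its $M$-argument from $\sigma^{(i-1)}$'s $M$-output (from the left-hand $M^\perp$ when $i=1$), and passing its own $M$-output to $\sigma^{(i+1)}$. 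The switching disciplines cohere: only Opponent opens $N_{i+1}$ and only after $N_i$, and only Opponent switches into the right summand of each $N_i \oslash M_{out,i}$, so $\sigma^{(i+1)}$ is dormant until $N_{i+1}$ is opened and its first interaction with $\sigma^{(i)}$ is to query its argument; in particular, every nonempty local play of $\sigma^{(i)}$ begins in $N_i$.

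For totality, suppose $s \in \leftmoon \sigma \rightmoon$ and $sa \in P_{M \multimap \mathop{!}N}$. The Opponent move $a$ is delivered to a unique copy $\sigma^{(i)}$, which by totality of $\sigma$ has a response; either the response is a visible move (in $N_i$, or in the left $M^\perp$ for $i=1$) and we are done, or it is an internal move passed up or down the chain, prompting a response from a neighbour, and so on. The main obstacle is to show this internal dialogue must terminate. Downward cascades $\sigma^{(i)} \to \sigma^{(i-1)} \to \cdots$ have depth $\le i$, hence terminate (bottoming out at the left $M^\perp$). The delicate point is to rule out an infinite ``deferral'' in which successive copies respond to their caller by an unending internal interaction rather than by a move in $N_j$: here $\sigma$ being winning is essential. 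Since every nonempty play of $\sigma$ starts in $N$, a copy $\sigma^{(j)}$ that produces no further move in $N_j$ and whose argument-restriction does not rescue it would, on its game $(N_j \oslash M_{out,j}) \lhd M_{arg,j}^\perp$, be $O$-winning — contradicting that $\sigma$ is winning — unless it forces $\sigma^{(j+1)}$ into the same situation with a $P$-winning $M$-play as argument; following this forced chain of component-restrictions shows that some copy in it violates the winning condition of $\sigma$, a contradiction. So the dialogue terminates, $\leftmoon \sigma \rightmoon$ makes a visible move, and totality holds.

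For the winning condition, let $t$ be an infinite play all of whose $P$-ending prefixes lie in $\leftmoon \sigma \rightmoon$; we must show $t|_{\mathop{!}N}$ is $P$-winning or $t|_M$ is $O$-winning. Assume $t|_M$ is $P$-winning. Viewed as the argument of $\sigma^{(1)}$, its flip is an $O$-winning $M^\perp$-play, so the winning condition of $\sigma$ forces the $(N_1 \oslash M_{out,1})$-restriction of $\sigma^{(1)}$'s play to be $P$-winning, i.e. $t|_{N_1} \in W_N^\ast$ and the $M_{out,1}$-restriction in $W_M^\ast$; when the latter is infinite it is $P$-winning in $M$ and we repeat the argument at $\sigma^{(2)}$, and inductively at every copy, giving $t|_{N_i} \in W_N^\ast$ for all $i$ — that is, $t|_{\mathop{!}N}$ is $P$-winning. (When some $M_{out,i}$-restriction is finite the later copies are untouched and the same conclusion is reached, using again that $\sigma$ is total and winning.) Hence $\leftmoon \sigma \rightmoon$ is winning, establishing (ii), and with (i) and (iii) this completes the proof. \qed
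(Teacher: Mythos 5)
Your overall architecture coincides with the paper's: existence and uniqueness of the mediating map are inherited from Proposition~\ref{bangcoalgG}, and everything reduces to showing that $\leftmoon \sigma \rightmoon$ is winning whenever $\sigma$ is (the paper states exactly this reduction and also leaves winningness of $\alpha$ implicit). Your treatment of the infinite winning condition is close in spirit to the paper's, which restricts the infinite interaction sequence to the first and $i$-th columns and invokes winningness of the finite approximant $\leftmoon \sigma \rightmoon_i$; note though that your parenthetical claim that ``when some $M_{out,i}$-restriction is finite the later copies are untouched'' is false --- later copies are activated by external Opponent moves in $N_j$ regardless of how much traffic interface $i$ carries --- so the finite-interface case still requires the analysis you skip (it can be repaired, but it is not automatic).

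The genuine gap is in the totality step. The dangerous scenario is not the one you analyse. As you yourself observe, a copy $\sigma^{(j+1)}$ can only be activated by an external Opponent move in $N_{j+1}$; internal interaction cannot open new copies. Hence after a single Opponent move the internal dialogue is confined to the finitely many copies already opened (plus the external $M$), and the threat is not an unending upward chain of fresh copies (that scenario is vacuous) but infinite bouncing among these finitely many strategies across their shared $M$-interfaces --- exactly the infinite-chattering phenomenon behind the paper's remark that total strategies do not compose on unbounded games. Your observation that ``downward cascades have depth $\le i$, hence terminate'' does not address this, since the dialogue may return upward and descend again arbitrarily often; and the one place where you invoke winningness, ``following this forced chain of component-restrictions shows that some copy in it violates the winning condition of $\sigma$,'' is an assertion rather than an argument --- it is precisely the clash-of-winning-conditions lemma (winning strategies compose to winning, hence total, strategies) that needs to be proved or cited. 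The paper sidesteps the hand analysis: a finite position $sa$ meets only finitely many copies, hence lives up to retagging in the finite approximant $\leftmoon \sigma \rightmoon_k$, which is a finite composite of winning strategies and therefore winning and total by the cited composition result, and its response retags to a response of $\leftmoon \sigma \rightmoon$. Either import that composition lemma explicitly or carry out its winning-condition clash argument in full; as written, your totality argument does not go through.
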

\begin{proof}
 It suffices to show that if  $\sigma : M \rightarrow N \oslash M$ is a winning strategy, then $\leftmoon \sigma \rightmoon$ is winning.

  To see that $\leftmoon \sigma \rightmoon$ is total, let $s \in
  \leftmoon \sigma \rightmoon$ and $so \in P_{\mathop{!}N}$. Then $so$
  visits only finite $k$ many copies of $N$, and so up to retagging it
  is a play in $M \rightarrow (N \oslash \_)^k(M)$, and $s$ a play in
  $\leftmoon \sigma \rightmoon_k$. By totality of $\leftmoon \sigma
  \rightmoon_k$, there is a move $p$ with $sop \in \leftmoon \sigma
  \rightmoon_k$. Then, up to retagging, $sop$ is also a play in
  $\leftmoon \sigma \rightmoon$.

  We next need to check that each infinite play with all even prefixes
  in $\leftmoon \sigma \rightmoon$ is winning. Let $s$ be such an
  infinite play, with $s|_M$ winning. We must show that
  $s|_{\mathop{!}N}$ is winning, i.e. $s|_{(N,i)}$ is winning for each
  $i$. The infinite play $s$ corresponds to an infinite interaction
  sequence:

  \begin{diagram}
    M & \rTo^\sigma & N \oslash M & \rTo^{\id \oslash \sigma} & N \oslash (N \oslash M) & \rTo^{\id \oslash (\id \oslash \sigma)} & \ldots \\
    \vdots \\
  \end{diagram}

  Then $s|_{(N,i)}$ can also be found in the $i$th column of the above
  interaction sequence. By hiding all columns other than the first and
  the $i$th, we see a play in $M \rightarrow (N \oslash \_)^i(M)$
  in $\leftmoon \sigma \rightmoon_i$. The first column is $s|_M$
  (which is winning), and the $i$th component of the second is
  $s|_{(N,i)}$. Since $\leftmoon \sigma \rightmoon_i$ is a
  winning strategy, this play is winning, by the winning condition for
  $\oslash$. \qed
\end{proof}

Recall that the monoidal unit of a distributive sequoidal category is a terminal object. Thus we may define operations corresponding to dereliction and promotion:
\begin{itemize} 
\item $\der_N:\mathop{!}N \rightarrow N  = \unit_\oslash \circ (\id \oslash \term) \circ \alpha$.
\item Given any symmetric comonoid $(B,\eta,\delta)$, and morphism $f : B \rightarrow N$, let $f^\dagger: B \rightarrow !N$ be the (comonoid morphism) $\leftmoon \wk \circ (f \otimes \id) \circ \delta \rightmoon$.   
\end{itemize}

To interpret  the contraction rule, we  require a further coalgebraic property.
\begin{definition}A decomposable, distributive  sequoidal category $\mathcal{C}$ has \emph{coalgebraic monoidal exponentials} if:
\begin{itemize}
\item For any object $A$, the  endofunctor $A \oslash \_$ has a specified  final coalgebra   $(\mathop{!}A,\alpha_A)$.
\item For any objects $A,B$, $(!A \otimes !B,\alpha_{A,B})$ is a final coalgebra for the endofunctor   $(A \times B) \oslash \_$, where $\alpha_{A,B}: !A \otimes !B \rightarrow (A \times B) \oslash (!A \otimes !B)$ is the isomorphism:
  \begin{eqnarray*}
& & !A \otimes !B \cong (!A \oslash !B) \times (!B \oslash !A) \cong ((A
\oslash !A) \oslash !B) \times ((B \oslash !B) \oslash !A) \\
&  \cong & (A \oslash (!A \otimes !B)) \times (B \oslash (!A \otimes !B)) \cong (A
\times B) \oslash (!A \otimes !B)
  \end{eqnarray*}
\end{itemize}
\end{definition}
The second requirement is equivalent to requiring that the morphism $\langle \der_A \otimes \term,\term \otimes \der_B\rangle^\dagger$ from $!A \otimes !B$ to $!(A \times B)$ is an isomorphism. Thus we may define a comonoid $(!A,\delta:!A \rightarrow !A \otimes !A,\term: A \rightarrow I)$, where $\delta$ is the anamorphism of the map $\dist_{A,A,!A} \circ \langle \alpha_A, \alpha_A\rangle$.     
\begin{proposition}If $\C$ has coalgebraic monoidal exponentials then  $(!A,\delta,\term)$ is the cofree commutative comonoid on $A$.  

\end{proposition}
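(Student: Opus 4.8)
The plan is to verify directly the universal property of the cofree commutative comonoid, with candidate structure map $\der_A : !A \to A$. Given a commutative comonoid $(B,\delta_B,\eta_B)$ in $(\mathcal{C},\otimes,I)$ and a map $f : B \to A$, put $\gamma_f := \wk \circ (f \otimes \id) \circ \delta_B : B \to A \oslash B$, so $(B,\gamma_f)$ is an $(A\oslash\_)$-coalgebra, and let $f^\dagger := \leftmoon \gamma_f \rightmoon : B \to !A$ be its anamorphism into the final coalgebra $(!A,\alpha_A)$ (this is the map already named just before the proposition). I would show: (i) $\der_A \circ f^\dagger = f$; (ii) $f^\dagger$ is a morphism of commutative comonoids; (iii) $f^\dagger$ is the \emph{only} comonoid morphism $g$ with $\der_A \circ g = f$. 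For (i) one expands $\der_A = \unit_\oslash \circ (\id \oslash \term) \circ \alpha_A$ and uses the defining equation $\alpha_A \circ f^\dagger = (\id \oslash f^\dagger) \circ \gamma_f$; since the category is decomposable, $(\mathcal{C},\otimes,I)$ is affine, so $\term \circ f^\dagger$ and $\eta_B$ are both the unique map $B \to I$, and feeding this in, with naturality of $\wk$ in its second argument, the $\wk$--$\ri$ coherence, and naturality of $\ri$, the computation collapses to the right counit law of $B$, giving $\der_A \circ f^\dagger = f$.

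For (iii), the key ingredient is a \emph{recovery lemma}: $\alpha_A = \wk \circ (\der_A \otimes \id_{!A}) \circ \delta$, i.e. the coalgebra structure is reconstructible from the comonoid structure and dereliction. I would prove it by expanding $\der_A$, simplifying $\dec \circ \delta$ via cocommutativity of $\delta$, invoking the definition of $\delta$ as the anamorphism of $\dist^{-1} \circ \langle \alpha_A, \alpha_A \rangle$ together with the explicit form of $\alpha_{A,A}$, and then using naturality of $\passoc$, the $\passoc$--$\li$--$\unit_\oslash$ coherence, and the left counit law of $(!A,\delta,\term)$. Granting the lemma, for any comonoid morphism $g : B \to !A$ with $\der_A \circ g = f$ one computes $\alpha_A \circ g = \wk \circ (\der_A \otimes \id) \circ \delta \circ g = \wk \circ (\der_A \otimes \id) \circ (g \otimes g) \circ \delta_B = \wk \circ (f \otimes g) \circ \delta_B = (\id \oslash g) \circ \gamma_f$, using that $g$ is a comonoid morphism, $\der_A \circ g = f$, and naturality of $\wk$; so $g$ is an $(A\oslash\_)$-coalgebra morphism $(B,\gamma_f) \to (!A,\alpha_A)$, whence $g = \leftmoon \gamma_f \rightmoon = f^\dagger$ by finality.

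For (ii), compatibility of $f^\dagger$ with counits is automatic since $I$ is terminal, so only $\delta \circ f^\dagger = (f^\dagger \otimes f^\dagger) \circ \delta_B$ remains. I would prove this using the second final coalgebra supplied by the hypothesis, $(!A \otimes !A, \alpha_{A,A})$ for the functor $(A \times A) \oslash \_$. One checks, from the defining equation of $f^\dagger$, naturality of $\dist$, and the identity $b_B := \dist^{-1} \circ \langle \gamma_f, \gamma_f \rangle = \wk \circ (\langle f,f \rangle \otimes \id) \circ \delta_B$, that $f^\dagger$ is a $((A\times A)\oslash\_)$-coalgebra morphism from $(B,b_B)$ to $(!A, \dist^{-1} \circ \langle \alpha_A, \alpha_A \rangle)$; since $\delta$ is by construction the anamorphism of $\dist^{-1} \circ \langle \alpha_A, \alpha_A \rangle$, the composite $\delta \circ f^\dagger$ is the anamorphism of $(B,b_B)$ into $(!A \otimes !A, \alpha_{A,A})$. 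It then suffices to verify that $(f^\dagger \otimes f^\dagger) \circ \delta_B$ is \emph{also} a coalgebra morphism out of $(B,b_B)$; this is done by unwinding the explicit chain of isomorphisms defining $\alpha_{A,A}$ (built from $\dec$, $\alpha_A \oslash \id$, $\passoc$, $\dist$) and reducing with the coherence square relating $\wk$, $\passoc$ and $\assoc$ together with cocommutativity and coassociativity of $\delta_B$. Uniqueness of anamorphisms then gives $\delta \circ f^\dagger = (f^\dagger \otimes f^\dagger) \circ \delta_B$.

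The main obstacle is this last verification in (ii): besides being a lengthy diagram chase through the explicit description of $\alpha_{A,A}$, it applies coherences of the sequoidal structure that are natural only in their $\mathcal{C}_s$-arguments to the maps $\gamma_f$ and $f^\dagger$, so one needs these to be strict — which holds because the comultiplication of a commutative comonoid in a decomposable sequoidal category factors through $\mathcal{C}_s$ (equivalently, anamorphisms of strict coalgebras into $!A$ are strict), and this fact should be isolated as a small lemma. By contrast, parts (i) and (iii), including the recovery lemma, are short, using only the coherences of the sequoidal closed structure, affineness, and the comonoid axioms.
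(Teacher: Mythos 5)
Your overall plan is essentially a fleshed-out version of the paper's two-line argument: the paper defines $f^\dagger$ by the same anamorphism, asserts parenthetically that it is a comonoid morphism, and disposes of uniqueness with ``follows from finality of $!A$''. Your parts (i) and (iii) supply exactly what that appeal needs: the recovery identity $\alpha_A = \wk \circ (\der_A \otimes \id_{!A}) \circ \delta$ is correct and provable along the lines you sketch (it uses the defining equation of $\delta$, naturality of $\dist$, naturality of $\wk$ and $\passoc$ only at maps that really are strict such as $\der_A$ --- granted $\alpha_A$ strict, which the definition of $\alpha_{A,B}$ already presupposes --- the $\passoc$--$\li$--$\unit_\oslash$ coherence and the counit law of $(!A,\delta,\term)$), and then any comonoid morphism $g$ with $\der_A\circ g = f$ is an $(A\oslash\_)$-coalgebra morphism, so finality gives $g=f^\dagger$. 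Likewise (i) is fine, and using finality of $(!A\otimes !A,\alpha_{A,A})$ for (ii) is the right idea and is what the second clause of the hypothesis is for.

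The genuine gap is the strictness claim you defer to a ``small lemma'' in (ii): it is false. In the intended model $\mathcal{G}$, take $B=I$ (the trivial commutative comonoid) and any nonempty $f : I \rightarrow A$; then $f^\dagger : I \rightarrow {!}A$ must answer Opponent's opening move of $!A$ with a move in $!A$ (the domain $I$ has no moves), so it is not strict, and more generally $f^\dagger$ and $\gamma_f = \wk\circ(f\otimes\id)\circ\delta_B$ are strict only when $f$ is --- but the universal property quantifies over arbitrary $f$. Strictness of $\delta_B$, even granted, does not help, since $f\otimes\id_B$ need not be strict and $\C_s$ is not closed under composition with arbitrary $\C$-maps; and ``anamorphisms of strict coalgebras into $!A$ are strict'' is a different statement and inapplicable because $\gamma_f$ itself is not strict. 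Consequently the diagram chase you describe for showing that $(f^\dagger\otimes f^\dagger)\circ\delta_B$ is an $((A\times A)\oslash\_)$-coalgebra morphism does not go through: $f^\dagger\oslash f^\dagger$ and $\gamma_f\oslash\id_B$ are not even well-formed, and $\wk$, $\passoc$, $\dec$ are natural in their first arguments only at $\C_s$-maps. This is precisely the step the paper itself leaves unproved, so (ii) needs a genuinely different argument (one that never transports $f^\dagger$ or $\gamma_f$ across $\wk$/$\passoc$ in the sequoidal position, or additional hypotheses on the exponential), not the strictness lemma as stated.
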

\begin{proof}In other words, the forgetful functor from the category of comonoids on $\C$ into the category $\C$ has a left adjoint which sends $A$ to $(!A,\delta,\term)$. The unit  of this adjunction is the dereliction $\der_A:{!}A \rightarrow A$: for any 
$f:B \rightarrow A$, $f^\dagger:B \rightarrow !A$ is the unique comonoid morphism such that $\der \circ f^\dagger$. (Uniqueness follows from finality of $!A$.) \qed
\end{proof}

This cofree commutative comonoid can also be constructed using the
technique described in \cite{MTT_EFE}. This approach builds the
exponential as a limit of finitary \emph{symmetric tensor powers},
that is, finite tensor products subject to a quotient so that the
order that the components are played in is irrelevant. Our use of the
asymmetric $\oslash$ enforces a strict left-to-right order, providing
a concrete (albeit less generally applicable) alternative to such quotienting.

\begin{proposition}
The sequoidal closed categories $\mathcal{W}$ and $\mathcal{G}$ are both equipped with coalgebraic monoidal exponentials.
\label{coexpcomWG}
\end{proposition}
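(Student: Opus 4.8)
The definition of coalgebraic monoidal exponentials has two clauses. The first --- that each endofunctor $A \oslash \_$ carries a specified final coalgebra $(\mathop{!}A, \alpha_A)$ --- is exactly Proposition \ref{bangcoalgG} for $\mathcal{G}$ and Proposition \ref{bangcoalgW} for $\mathcal{W}$, taking $\alpha_A$ to be the copycat coalgebra described there. So the real work is the second clause, and we argue uniformly in $\mathcal{G}$ and $\mathcal{W}$ (in the latter every strategy used below is winning, and winning strategies compose and are closed under the constructions involved). Recall that in these categories $A \times B$ is the product game $A \& B$.

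The plan is to transport finality along the forest isomorphism $\mathop{!}(A \& B) \cong \mathop{!}A \otimes \mathop{!}B$ of Figure \ref{game-isos}. Concretely, a play of $\mathop{!}(A \& B)$ is an interleaving of copies of $A \& B$ opened in order, each of which is fixed by Opponent's first move in it to be an $A$-copy or a $B$-copy; sending the $k$-th $A$-copy to copy $k$ of $\mathop{!}A$ and the $k$-th $B$-copy to copy $k$ of $\mathop{!}B$ is a prefix-monotone bijection of plays which respects the winning conditions, the opening and switching disciplines on the three games guaranteeing that the relabelled play is again valid. This yields a copycat isomorphism $h : \mathop{!}A \otimes \mathop{!}B \to \mathop{!}(A \& B)$ in $\mathcal{G}_s$ and $\mathcal{W}_s$. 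I would then verify that $h$ is a morphism of coalgebras from $(\mathop{!}A \otimes \mathop{!}B, \alpha_{A,B})$ to $(\mathop{!}(A \& B), \alpha_{A \& B})$, i.e. that $\alpha_{A \& B} \circ h = (\id \oslash h) \circ \alpha_{A,B}$; this follows by expanding $\alpha_{A,B}$ into its defining composite of $\dec$, the coalgebra isomorphisms $\alpha_A$ and $\alpha_B$, $\passoc$, $\sym$ and $\dist$ --- each of which is a copycat strategy --- and comparing the induced move correspondences. Since $(\mathop{!}(A \& B), \alpha_{A \& B})$ is a final coalgebra for $(A \& B) \oslash \_ = (A \times B) \oslash \_$ by Propositions \ref{bangcoalgG} and \ref{bangcoalgW}, and $h$ is an isomorphism of coalgebras, it follows that $(\mathop{!}A \otimes \mathop{!}B, \alpha_{A,B})$ is itself a final coalgebra for $(A \times B) \oslash \_$, which is precisely the second clause. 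Equivalently --- and this is the reformulation noted after the definition --- one checks in passing that $h$ is the comonoid morphism $\langle \der_A \otimes \term, \term \otimes \der_B \rangle^\dagger$, so that this map is an isomorphism.

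The main obstacle is this last bundle of compatibility checks: fitting together the concrete reindexing isomorphism $h$, the abstractly assembled structure map $\alpha_{A,B}$ (a composite of $\dec$, $\alpha_A$, $\alpha_B$, $\passoc$ and $\sym$), and --- for the reformulation --- the map $\langle \der_A \otimes \term, \term \otimes \der_B \rangle^\dagger$, which is built from dereliction, the comonoid comultiplication $\delta$ (itself an anamorphism) and $\wk$. In the abstract sequoidal setting this amounts to a somewhat lengthy coherence diagram chase; in $\mathcal{G}$ and $\mathcal{W}$ it can be done directly, since every strategy involved is a copycat on the game $(\mathop{!}A \otimes \mathop{!}B) \multimap \mathop{!}(A \& B)$ and one only needs to check that the resulting move-copying behaviours coincide. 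The winning conditions cause no difficulty, as the forest isomorphism of Figure \ref{game-isos} already preserves them. \qed
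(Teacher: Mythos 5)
Your proposal is correct in substance, but it takes a genuinely different route from the paper for the second clause of the definition. The paper's proof is a one-liner: clause one is Propositions \ref{bangcoalgG} and \ref{bangcoalgW} (as in your proposal), and clause two is obtained by citing the fact that $\mathop{!}$ is the cofree commutative comonoid in $\mathcal{G}$ and $\mathcal{W}$ (from \cite{Lai_FPC}); combined with the remark following the definition --- that clause two is equivalent to $\langle \der_A \otimes \term, \term \otimes \der_B \rangle^\dagger : \mathop{!}A \otimes \mathop{!}B \to \mathop{!}(A \times B)$ being an isomorphism --- this follows abstractly, since the tensor of cofree cocommutative comonoids is the cofree comonoid on the product, making that canonical comonoid map invertible. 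You instead work concretely: you build the Seely-type isomorphism $\mathop{!}A \otimes \mathop{!}B \cong \mathop{!}(A \& B)$ as the explicit relabelling copycat of Figure \ref{game-isos}, check it is a coalgebra morphism for the structure maps $\alpha_{A,B}$ and $\alpha_{A \& B}$, and transport finality along it. That transport argument is sound, and your observation that $A \times B = A \& B$ in these categories is the right bridge to Propositions \ref{bangcoalgG}/\ref{bangcoalgW}. Two small caveats: the coalgebra-morphism verification and the identification of $h$ with $\langle \der_A \otimes \term, \term \otimes \der_B \rangle^\dagger$ are only sketched (though this is comparable in rigor to the paper's appeal to an external citation), and the relabelling underlying $h$ is history-sensitive --- which copy of $\mathop{!}(A \& B)$ an $\mathop{!}A$-copy is sent to depends on how many $B$-copies were opened earlier --- so it is a copycat only in the loose sense admitted by these categories; this does not affect correctness but is worth stating when claiming the move-correspondence check is routine. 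What the two approaches buy: yours is self-contained and game-theoretic, avoiding reliance on \cite{Lai_FPC}; the paper's is shorter and reuses established comonoid structure, which is also what its remark after the definition is set up to exploit.
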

\begin{proof}
  Follows from Propositions \ref{bangcoalgG}, \ref{bangcoalgW} and the
  fact $\mathop{!}$ is the cofree commutative comonoid in $\mathcal{G}$
  and $\mathcal{W}$ \cite{Lai_FPC}. \qed
\end{proof}

\begin{definition}
  A \emph{WS!-category} is a distributive, decomposable sequoidal
  closed category with an object $o$ satisfying linear functional
  extensionality and  coalgebraic monoidal exponentials.
\end{definition}

\begin{proposition}
  The categories $(\mathcal{G},\mathcal{G}_s)$ and
  $(\mathcal{W},\mathcal{W}_s)$ enjoy the structure of an WS!-category.
\label{WScats}
\end{proposition}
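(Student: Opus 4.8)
The plan is to verify, clause by clause, that $(\mathcal{G},\mathcal{G}_s)$ and $(\mathcal{W},\mathcal{W}_s)$ meet each requirement in the definition of a \emph{WS!-category}, since nearly all of the structure has already been identified in the preceding discussion. First, both pairs have been shown to be inclusive sequoidal categories: the symmetric monoidal structure $(\mathbf{1},\otimes)$ on $\mathcal{G}$ (resp.\ $\mathcal{W}$), the action $\oslash$ on the strict subcategory, the copycat transformation $\wk : M \otimes N \to M \oslash N$, and the inclusion $J$ reflecting isomorphisms (inverses of strict strategies are strict). The coherence diagrams for $\passoc$, $\unit_\oslash$ and $\wk$ commute because every arrow in them is a copycat strategy, and equality of copycat strategies reduces to equality of the underlying forest isomorphisms (cf.\ \cite{Lai_FPC}).

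Next, for the Cartesian, decomposable, distributive and closed structure: $M \& N$ is a product of $M$ and $N$ in $\mathcal{G}_s$ and $\mathcal{W}_s$, with the evident copycat projections, preserved by $J$; the empty game $\mathbf{1}$ is terminal; and more generally $\mathcal{G}$ and $\mathcal{W}$ carry all small products via the $\prod_{x \in X}$ construction of Section \ref{gameconnectives}. The decomposability maps $\dec = \langle \wk, \wk \circ \sym\rangle$ and $\dec^0 = \mathsf{t}_I$, and the distributivity maps $\dist$ and $\dist_0$, are copycat morphisms; they are isomorphisms because every play of $M \otimes N$ begins either in $M$ (hence lies in $M \oslash N$) or in $N$ (hence in $N \oslash M$), and no play of $\mathbf{1} \otimes C$ can begin in $C$. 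For the closed structure we take the internal hom to be $\multimap$, and the fact that $\Lambda_s : \mathcal{C}_s(B \oslash A, C) \to \mathcal{C}_s(B, A \multimap C)$ is a natural isomorphism is established in \cite{Lai_FPC}. Finally, Proposition \ref{coexpcomWG} supplies the coalgebraic monoidal exponentials.

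It remains to exhibit an object $o$ satisfying linear functional extensionality; we take $o$ to be the one-move game $\bot$. Unwinding the definition, $\lfe : (B \multimap o) \oslash A \Rightarrow (A \multimap B) \multimap o$ is assembled from $\app_s$, $\sym$ and $\passoc$, and is therefore a copycat strategy. To show it is an isomorphism one exhibits its inverse as the copycat strategy that relays moves between the two sides: the single $o$-move is identified on both sides, the interrogation of the inner arrow $A \multimap B$ is matched against the $B \multimap o$ component, and thereafter the $A$-components are played off against one another. This candidate is strict, and in $\mathcal{W}$ winning, and both composites with $\lfe$ reduce to identities by the usual zig-zag argument for copycat strategies. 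The essential use of alternation — and the reason the property fails in, e.g., Conway games \cite{Lai_HOS} — is that the locally-alternating discipline forces these moves into a fixed order, so that the induced correspondence on plays is a well-defined bijection.

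The main obstacle is precisely this last step: confirming that $\lfe$ is an isomorphism at $o = \bot$, i.e.\ that the copycat candidate above is a legal (strict, winning) strategy and that the two composites collapse to identities. Everything else is bookkeeping — the remaining structure maps are copycat strategies whose coherence can be read off from the underlying forest isomorphisms \cite{Lai_FPC}, or has already been supplied (\ref{coexpcomWG}).
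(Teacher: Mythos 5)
Your clause-by-clause collation is essentially the paper's own route: the paper offers no separate argument for this proposition, treating it as a summary of the structure already established in the section — inclusive sequoidal structure, products and decomposability/distributivity, sequoidal closure, and linear functional extensionality at $o=\bot$ (largely by citation to \cite{Lai_FPC}), together with the coalgebraic monoidal exponentials of Proposition \ref{coexpcomWG}. Your added sketch of the copycat inverse witnessing $\lfe$ merely fills in a detail the paper delegates to that citation, so the approach is the same and correct.
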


\subsection{Semantics of Rules}
We may now describe the interpretation of the rules of our logic
(other than those for atoms, quantifiers and equality) in a
\textsf{WS!}-category $\C$. Suppose that, for a given context of
variables and atoms $\Phi$, we have an interpretation of formulas and
sequents over $\Phi$ as objects of $\C$, satisfying the following: 


\[
\begin{array}{lcllcl}
  \llbracket \Phi \vdash \mathbf{1} \rrbracket & = & I & \llbracket \Phi \vdash \mathbf{0} \rrbracket & = & I \\
  \llbracket \Phi \vdash \bot \rrbracket & = & o &  \llbracket \Phi \vdash \top \rrbracket & = & o \\
  \llbracket \Phi \vdash M \otimes N
  \rrbracket & = & \llbracket \Phi \vdash M \rrbracket \otimes \llbracket \Phi \vdash N \rrbracket & \llbracket \Phi \vdash P \parr Q \rrbracket & = & \llbracket \Phi \vdash P \rrbracket
  \otimes \llbracket \Phi \vdash Q \rrbracket \\
  \llbracket \Phi \vdash M \& N \rrbracket & = & \llbracket \Phi \vdash M \rrbracket \times \llbracket \Phi \vdash N \rrbracket & \llbracket \Phi \vdash P \oplus Q \rrbracket & = & \llbracket \Phi \vdash P \rrbracket \times
  \llbracket\Phi \vdash  Q \rrbracket \\
  \llbracket \Phi \vdash M \oslash N \rrbracket & = & \llbracket \Phi \vdash M \rrbracket \oslash
  \llbracket \Phi \vdash N \rrbracket &
  \llbracket \Phi \vdash P \lhd Q \rrbracket & = & \llbracket \Phi \vdash P \rrbracket \oslash
  \llbracket \Phi \vdash Q \rrbracket \\
  \llbracket \Phi \vdash M \lhd Q \rrbracket & = & \llbracket \Phi \vdash Q \rrbracket \multimap \llbracket \Phi \vdash M \rrbracket &
  \llbracket \Phi \vdash P \oslash N \rrbracket & = & \llbracket \Phi \vdash N \rrbracket \multimap 
  \llbracket \Phi \vdash P \rrbracket \\ 
  \llbracket \Phi \vdash !N \rrbracket & = & !\llbracket \Phi \vdash N \rrbracket & \llbracket \Phi \vdash ?P \rrbracket & = & ! \llbracket \Phi \vdash P \rrbracket \\
\end{array}
\]
\[
\begin{array}{lcl}


\llbracket \Phi \vdash M , \Gamma , N \rrbracket & = & \llbracket \Phi \vdash M , \Gamma \rrbracket \oslash \llbracket \Phi \vdash N \rrbracket \\
\llbracket \Phi \vdash M , \Gamma , P \rrbracket & = & \llbracket \Phi \vdash P \rrbracket \multimap \llbracket \Phi \vdash M , \Gamma \rrbracket \\

\llbracket \Phi \vdash P , \Gamma , N \rrbracket & = & \llbracket \Phi \vdash N \rrbracket \multimap \llbracket \Phi \vdash P , \Gamma \rrbracket \\
\llbracket \Phi \vdash P , \Gamma , Q \rrbracket & = & \llbracket \Phi \vdash P , \Gamma \rrbracket \oslash \llbracket \Phi \vdash Q \rrbracket \\

\end{array}
\]

(For atom and quantifier-free formulas, these equations \emph{define} an interpretation of formulas and sequents in $\C$.) Then we may give an interpretation of each proof rule except those for atoms, quantifiers, and equality  as an operation on morphisms in $\C$. These typically involve an operation on the head formula of the sequence ``under'' a context consisting of its tail, and so we define distributivity maps to allow this:

We define endofunctors  $\llbracket \Gamma \rrbracket^b$ on $\mathcal{\C}_s$ for each context (possibly empty list of formulas) $\Gamma$ and $b \in
\{ +, - \}$.
below. \\ \vspace{1pt}
\[
\begin{array}{lclclcl}

\llbracket \epsilon \rrbracket^+ & = & \id & &
\llbracket \epsilon \rrbracket^- & = & \id \\
\llbracket \Gamma, M \rrbracket^+ & = & \llbracket M \rrbracket \multimap \llbracket \Gamma \rrbracket^+ & &
\llbracket
\Gamma, P \rrbracket^- & = & \llbracket P \rrbracket \multimap \llbracket
\Gamma \rrbracket^- \\
\llbracket \Gamma, P \rrbracket^+ & = & \llbracket \Gamma \rrbracket^+ \oslash \llbracket P \rrbracket & &
\llbracket \Gamma, M \rrbracket^- & = & \llbracket
\Gamma \rrbracket^- \oslash \llbracket M \rrbracket \\ \vspace{1pt}

\end{array}
\]
\begin{proposition}
  For any sequent $A , \Gamma$ we have $\llbracket A, \Gamma
  \rrbracket = \llbracket \Gamma \rrbracket^b(\llbracket A
  \rrbracket)$ where $b$ is the polarity of $A$.
\end{proposition}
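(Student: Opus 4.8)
The plan is to proceed by induction on the length of the tail context $\Gamma$, peeling formulas off the right-hand end, since this is exactly the recursion pattern shared by the defining clauses for $\llbracket - \rrbracket$ on sequents and for the endofunctors $\llbracket - \rrbracket^b$. Throughout we work at a fixed $\Phi$, suppressed from the notation, and write $b$ for the polarity of the head formula $A$; a key point is that $b$ never changes during the recursion.

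For the base case $\Gamma = \epsilon$ both sides reduce to $\llbracket A \rrbracket$: on the left $\llbracket A, \epsilon \rrbracket = \llbracket A \rrbracket$ by definition, and on the right $\llbracket \epsilon \rrbracket^b = \id$, so $\llbracket \epsilon \rrbracket^b(\llbracket A \rrbracket) = \llbracket A \rrbracket$.

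For the inductive step I would write $\Gamma = \Gamma', C$ and split into four cases according to the polarity $b$ of $A$ and the polarity $b'$ of $C$. Each of the four sequent clauses peels the last formula $C$ off and leaves the first formula (hence $b$) untouched, while each of the four functor clauses builds $\llbracket \Gamma', C \rrbracket^b$ from $\llbracket \Gamma' \rrbracket^b$ keeping the superscript fixed, so the cases line up in pairs. For instance, when $b = -$ and $C = N$ is negative, $\llbracket A, \Gamma', N \rrbracket = \llbracket A, \Gamma' \rrbracket \oslash \llbracket N \rrbracket = \llbracket \Gamma' \rrbracket^-(\llbracket A \rrbracket) \oslash \llbracket N \rrbracket = \llbracket \Gamma', N \rrbracket^-(\llbracket A \rrbracket)$, chaining the sequent clause, the induction hypothesis, and the functor clause. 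The remaining three cases ($b = -$ with $C$ positive; $b = +$ with $C$ negative; $b = +$ with $C$ positive) have the same shape, each time pairing the sequent clause indexed by $(b,b')$ with the functor clause of superscript $b$ that appends $C$, and observing that in both clauses the $\multimap$ versus $\oslash$ choice, and the placement of domain versus codomain arguments, are determined identically by $(b,b')$.

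I do not anticipate any real obstacle: the statement is essentially a bookkeeping lemma relating two parallel recursive definitions. The only points requiring care are checking that the superscript $b$ is genuinely invariant under the functor recursion (it is, by inspection of the four clauses) and that, in each case, the assignment of the positive formula of the pair to the domain position of $\multimap$ agrees between the sequent clause and the functor clause — which is exactly the content of the four-way case analysis.
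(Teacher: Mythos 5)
Your proof is correct and matches the paper's argument, which is simply stated as ``a simple induction on $\Gamma$'': inducting on the tail by peeling the last formula and pairing each sequent clause with the corresponding functor clause of fixed superscript $b$ is exactly what is intended. Nothing is missing.
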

\begin{proof}
A simple induction on $\Gamma$. \qed
\end{proof}

\begin{proposition}
  For any context $\Gamma$, $\llbracket \Gamma \rrbracket^b$ preserves
  products. \label{contprod}
\end{proposition}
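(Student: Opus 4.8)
The plan is a straightforward induction on the length of $\Gamma$, exploiting the fact that $\llbracket \Gamma \rrbracket^b$ is by construction an iterated composite of the two elementary endofunctors $(-) \oslash A$ and $A \multimap (-)$ on $\C_s$, together with the standard fact that a composite of finite-product-preserving functors preserves finite products.

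For the base case $\Gamma = \epsilon$ the functor is $\id_{\C_s}$, which trivially preserves finite products; note here that $\C_s$ does have finite products, since a distributive sequoidal category is in particular Cartesian. For the inductive step, write $\Gamma = \Gamma', A$ and inspect the four defining clauses of $\llbracket - \rrbracket^b$: in every case one reads off $\llbracket \Gamma', A \rrbracket^b = H \circ \llbracket \Gamma' \rrbracket^b$, where $H = \llbracket A\rrbracket \multimap (-)$ when $A$ occupies a ``multiplicative'' slot ($A$ negative with $b = +$, or $A$ positive with $b = -$) and $H = (-) \oslash \llbracket A\rrbracket$ otherwise. By the induction hypothesis $\llbracket \Gamma' \rrbracket^b$ preserves products, so it suffices to check that each such $H$ does. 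For $H = (-) \oslash \llbracket A\rrbracket$ this is exactly what the distributivity isomorphisms of a distributive sequoidal category provide: $\dist : (B \times C) \oslash \llbracket A\rrbracket \cong (B \oslash \llbracket A\rrbracket) \times (C \oslash \llbracket A\rrbracket)$ handles binary products and $\dist_0 : 1 \oslash \llbracket A\rrbracket \cong 1$ handles the terminal object. For $H = \llbracket A\rrbracket \multimap (-)$ one invokes $\dist_\multimap : \llbracket A\rrbracket \multimap (B \times C) \cong (\llbracket A\rrbracket \multimap B) \times (\llbracket A\rrbracket \multimap C)$ and $\dist_\multimap^0 : \llbracket A\rrbracket \multimap I \cong I$, which exist in any symmetric monoidal closed category with products and which, by the proposition exhibiting $\multimap$ as a functor into $\C_s$, take values in $\C_s$; hence $\llbracket A\rrbracket \multimap (-)$ restricts to a product-preserving endofunctor of $\C_s$.

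There is no genuine obstacle here; the argument is bookkeeping. The only points worth making explicit are (i) that ``preserves products'' must be read as preserving the terminal object as well as binary products, which is why $\dist_0$ and $\dist_\multimap^0$ are needed alongside $\dist$ and $\dist_\multimap$, and (ii) that these comparison isomorphisms are natural, so that the canonical comparison maps $F(B \times C) \to FB \times FC$ and $F1 \to 1$ of the composite functor $F = \llbracket \Gamma', A\rrbracket^b$ are again invertible; both are immediate from the definitions, and the induction closes.
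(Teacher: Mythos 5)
Your proof is correct and follows essentially the same route as the paper: an induction on $\Gamma$ using the four defining clauses of $\llbracket - \rrbracket^b$, with the distributivity isomorphisms $\dist,\dist_0$ for $\oslash$ and $\dist_\multimap,\dist_\multimap^0$ for $\multimap$ supplying the inductive step. The paper states this in a single line; your version merely spells out the same bookkeeping.
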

\begin{proof}
 Using the distributivity of $\times$ over $\oslash$ and $\multimap$, we can construct isomorphisms $\mathsf{dist_{b,\Gamma}} : \llbracket
  \Gamma \rrbracket^b(A \times B) \cong \llbracket \Gamma
  \rrbracket^b(A) \times \llbracket \Gamma \rrbracket^b (B)$ and
  $\mathsf{dist_{b,\Gamma}^{0}} : \llbracket \Gamma \rrbracket^b(I)
  \cong I$ by induction on
  $\Gamma$. \\

\end{proof}

\subsection{Semantics of Proof Rules}
Define $\sigma:\llbracket\vdash \Gamma\rrbracket$ if:
\begin{itemize}
\item $\Gamma = N,\Gamma'$, and $\sigma:I \rightarrow   \llbracket \Gamma \rrbracket$ in $\C$.
\item $\Gamma = P,\Gamma'$, and $\sigma:\llbracket \Gamma \rrbracket \rightarrow o$ in $\C$.
\end{itemize} 
Semantics of the core rules as operations on morphisms are given in Figure \ref{WS-sem} and the
other rules in Figures \ref{WS-sem2} and \ref{WS-sem3}. The rules
involving the exponential are treated separately in Figure
\ref{WS!-sem}. Note that in each case, the interpretation in the WS!-category $\W$ agrees with the informal exposition in Section \ref{proofinterp}.

In the semantics of $\p_\cut$ we use an additional construction. If $\tau : I \rightarrow \llbracket N, \Delta \rrbracket$ define (strict) $\tau^{\circ-}_{M,\Gamma} : \llbracket M , \Gamma , N^\perp \rrbracket
\rightarrow \llbracket M , \Gamma , \Delta \rrbracket$ to be $\unit_\multimap \circ (\tau \multimap \id_{\llbracket M,\Gamma \rrbracket})$ if $|\Delta| = 0$ and $\passoc_\multimap^{n} \circ (\Lambda^{-n}\Lambda_I^{-1} \tau \multimap \id_{\llbracket M,\Gamma \rrbracket})$ if $|\Delta| = n + 1$.  Define (strict) $\tau^{\circ+}_{P,\Gamma} : \llbracket P , \Gamma , \Delta \rrbracket \rightarrow \llbracket P , \Gamma , N^\perp \rrbracket$ to be $(\id_{\llbracket P,\Gamma \rrbracket} \oslash \tau) \circ \unit_\oslash^{-1}$ if $|\Delta| = 0$ and $(\id \oslash \Lambda^{-n}\Lambda_I^{-1} \tau) \circ ((\id_{\llbracket P,\Gamma  \rrbracket} \oslash \sym) \circ \passoc^{-1})^{n}$ if $|\Delta| = n + 1$. In some of the rules in Figure \ref{WS-sem3} we omit some $\passoc$ isomorphisms for clarity.

\begin{figure*}
\caption{Categorical Semantics for \textsf{WS1} (core rules)}
\begin{small}
\centering
\label{WS-sem}

\vspace{1ex}
\hrule
\hrule

\begin{tabular}{cc}
\\[0.2ex]

\AxiomC{}
\LeftLabel{$\p_\mathbf{1}$}
\UnaryInfC{$(\dist_{-,\Gamma}^0)^{-1} : \llbracket \vdash \mathbf{1}, \Gamma \rrbracket$}
\DisplayProof

&

\AxiomC{}
\LeftLabel{$\p_\top$}
\UnaryInfC{$\id_o : \llbracket \vdash \top \rrbracket$}
\DisplayProof

\\[3ex]

\AxiomC{$\sigma : \llbracket \vdash M,N,\Gamma \rrbracket$}
\AxiomC{$\tau : \llbracket \vdash N,M,\Gamma \rrbracket$}
\LeftLabel{$\p_\otimes$}
\BinaryInfC{$\llbracket \Gamma \rrbracket^-(\dec^{-1}) \circ \dist_{-,\Gamma}^{-1} \circ \langle \sigma, \tau \rangle : \llbracket \vdash M \otimes N, \Gamma \rrbracket$}
\DisplayProof

&

\AxiomC{$\sigma : \llbracket \vdash M, \Gamma \rrbracket$}
\AxiomC{$\tau : \llbracket \vdash N, \Gamma \rrbracket$}
\LeftLabel{$\p_\&$}
\BinaryInfC{$ \dist_{-,\Gamma}^{-1} \circ \langle \sigma, \tau \rangle : \llbracket \vdash M \& N, \Gamma \rrbracket$}
\DisplayProof

\\[3ex]

\AxiomC{$ \sigma : \llbracket \vdash Q,P,\Gamma \rrbracket$}
\LeftLabel{${\p_\parr}_2$}
\UnaryInfC{$ \sigma \circ \llbracket \Gamma \rrbracket^+(\wk \circ \sym): \llbracket \vdash P \parr Q, \Gamma \rrbracket$}
\DisplayProof

&

\AxiomC{$ \sigma : \llbracket \vdash P,Q,\Gamma \rrbracket$}
\LeftLabel{${\p_\parr}_1$}
\UnaryInfC{$ \sigma \circ \llbracket \Gamma \rrbracket^+(\wk): \llbracket \vdash P \parr Q, \Gamma \rrbracket$}
\DisplayProof

\\[3ex]

\AxiomC{$ \sigma : \llbracket \vdash P, \Delta \rrbracket$}
\LeftLabel{${\p_\oplus}_1$}
\UnaryInfC{$\sigma \circ \llbracket \Delta \rrbracket^+(\pi_1): \llbracket \vdash P \oplus Q, \Delta \rrbracket $}
\DisplayProof

&

\AxiomC{$ \sigma : \llbracket \vdash Q, \Delta \rrbracket$}
\LeftLabel{${\p_\oplus}_2$}
\UnaryInfC{$\sigma \circ \llbracket \Delta \rrbracket^+(\pi_2): \llbracket \vdash P \oplus Q, \Delta \rrbracket $}
\DisplayProof

\\[3ex]

\AxiomC{$ \sigma : \llbracket \vdash \bot , P \parr Q, \Gamma \rrbracket$}
\LeftLabel{$\p_\bot^\parr$}
\UnaryInfC{$\llbracket \Gamma \rrbracket^- (\passoc_\multimap \circ (\sym \multimap \id)) \circ \sigma : \llbracket \vdash \bot , P, Q, \Gamma \rrbracket$}
\DisplayProof

&

\AxiomC{$ \sigma : \llbracket \vdash P \rrbracket $}
\LeftLabel{$\p_\bot^+$}
\UnaryInfC{$ \Lambda_I(\sigma) : \llbracket \vdash \bot , P \rrbracket$}
\DisplayProof

\\[3ex]

\AxiomC{$ \sigma : \llbracket \vdash \bot , P \oslash N, \Gamma \rrbracket$}
\LeftLabel{$\p_\bot^\oslash$}
\UnaryInfC{$\llbracket \Gamma \rrbracket^-(\lfe^{-1}) \circ \sigma : \llbracket \vdash \bot, P, N, \Gamma \rrbracket$}
\DisplayProof

&

\AxiomC{$ \sigma : \llbracket \vdash N \rrbracket$}
\LeftLabel{$\p_\top^-$}
\UnaryInfC{$ \unit_\multimap \circ (\sigma \multimap \id) : \llbracket \vdash \top , N \rrbracket$}
\DisplayProof

\\[3ex]

\AxiomC{$ \sigma : \llbracket \vdash \top , M \otimes N , \Gamma \rrbracket$}
\LeftLabel{$\p_\top^\otimes$}
\UnaryInfC{$ \sigma \circ \llbracket \Gamma \rrbracket^+((\sym \multimap \id) \circ \passoc_\multimap^{-1} ) : \llbracket \vdash \top, M, N, \Gamma \rrbracket$}
\DisplayProof

& 

\AxiomC{$ \sigma : \llbracket \vdash \top , N \lhd P , \Gamma \rrbracket $}
\LeftLabel{$\p_\top^\lhd$}
\UnaryInfC{$ \sigma \circ \llbracket \Gamma \rrbracket^+(\lfe) : \llbracket \vdash \top, N, P, \Gamma \rrbracket$}
\DisplayProof

\\[3ex]

\AxiomC{$\sigma : \llbracket \vdash \bot , \Gamma \rrbracket$}
\LeftLabel{$\p_\bot^-$}
\UnaryInfC{$\llbracket \Gamma \rrbracket^-(\mathsf{abs}^{-1}) \circ \sigma : \llbracket \vdash \bot , N , \Gamma \rrbracket$}
\DisplayProof

&

\AxiomC{$\sigma : \llbracket \vdash A,P,\Gamma \rrbracket$}
\LeftLabel{$\p_\lhd$}
\UnaryInfC{$\sigma : \llbracket \vdash A \lhd P, \Gamma \rrbracket$}
\DisplayProof

\\[3ex]

\AxiomC{$\sigma : \llbracket \top , \Gamma \rrbracket$}
\LeftLabel{$\p_\top^+$}
\UnaryInfC{$ \sigma \circ \llbracket \Gamma \rrbracket^+ (\mathsf{abs}): \llbracket \top , P , \Gamma \rrbracket$}
\DisplayProof

&

\AxiomC{$\sigma : \llbracket \vdash A,N,\Gamma \rrbracket $}
\LeftLabel{$\p_\oslash$}
\UnaryInfC{$\sigma : \llbracket \vdash A \oslash N, \Gamma \rrbracket$}
\DisplayProof

\\[2.5ex]

\end{tabular}

\hrule
\hrule
\end{small}
\end{figure*}

\begin{figure*}
\caption{Categorical Semantics for \textsf{WS1} (other rules, part 1)}
\begin{small}
\centering
\label{WS-sem2}

\vspace{1ex}
\hrule
\hrule

\begin{tabular}{cc}
\\[0.2ex]

\AxiomC{$\sigma : \llbracket \vdash M', \Gamma, M, N, \Delta \rrbracket$}
\UnaryInfC{$\llbracket \Delta \rrbracket^-(\psym) \circ \sigma : \llbracket \vdash M', \Gamma, N, M, \Delta \rrbracket$}
\DisplayProof

&

\AxiomC{$\sigma : \llbracket \vdash P, \Gamma, M, N, \Delta \rrbracket$}
\UnaryInfC{$\sigma \circ \llbracket \Delta \rrbracket^+(\psym_\multimap) : \llbracket \vdash P, \Gamma, N, M, \Delta \rrbracket$}
\DisplayProof

\\[4ex]

\AxiomC{$\sigma : \llbracket \vdash M, \Gamma, P, Q, \Delta \rrbracket$}
\UnaryInfC{$\llbracket \Delta \rrbracket^-(\psym_\multimap) \circ \sigma : \llbracket \vdash M, \Gamma, Q, P, \Delta \rrbracket$}
\DisplayProof

&

\AxiomC{$\sigma : \llbracket \vdash P', \Gamma, P, Q, \Delta \rrbracket$}
\UnaryInfC{$\sigma \circ \llbracket \Delta \rrbracket^+(\psym) : \llbracket \vdash P', \Gamma, Q, P, \Delta \rrbracket$}
\DisplayProof

\\[4ex]

\AxiomC{$\sigma : \llbracket \vdash P, \Gamma, M, \Delta \rrbracket$}
\UnaryInfC{$\sigma \circ \llbracket \Delta \rrbracket^+((\af \multimap \id ) \circ \unit_\multimap^{-1}) : \llbracket \vdash P, \Gamma, \Delta \rrbracket$}
\DisplayProof

&

\AxiomC{$\sigma : \llbracket \vdash N, \Gamma, M, \Delta \rrbracket$}
\UnaryInfC{$\llbracket \Delta \rrbracket^-(\unit_\oslash \circ (\id \oslash \af)) \circ \sigma : \llbracket \vdash N, \Gamma, \Delta \rrbracket$}
\DisplayProof

\\[4ex]

\AxiomC{$\sigma : \llbracket \vdash M, \Gamma, \Delta \rrbracket$}
\UnaryInfC{$\llbracket \Delta \rrbracket^-((\af \multimap \id) \circ \unit_\multimap^{-1}) \circ \sigma : \llbracket \vdash M, \Gamma, P, \Delta \rrbracket $}
\DisplayProof

&

\AxiomC{$\sigma : \llbracket \vdash P, \Gamma, \Delta^+ \rrbracket$}
\UnaryInfC{$\sigma \circ \llbracket \Delta \rrbracket^-(\unit_\oslash \circ (\id \oslash \af)): \llbracket \vdash P, \Gamma, Q, \Delta^+ \rrbracket$}
\DisplayProof

\\[4ex]

\AxiomC{$\sigma : \llbracket \vdash N, \Gamma, \Delta \rrbracket$}
\UnaryInfC{$\llbracket \Delta \rrbracket^-(\unit_\oslash^{-1}) \circ \sigma : \llbracket \vdash N, \Gamma, \mathbf{1}, \Delta \rrbracket$}
\DisplayProof

&

\AxiomC{$\sigma : \llbracket \vdash P, \Gamma, \Delta \rrbracket$}
\UnaryInfC{$\sigma \circ \llbracket \Delta \rrbracket^+(\unit_\multimap) : \llbracket \vdash P, \Gamma, \mathbf{1}, \Delta \rrbracket $}
\DisplayProof

\\[2.5ex]








\AxiomC{$\sigma : \llbracket \vdash M, \Gamma, \Delta \rrbracket$}
\UnaryInfC{$\llbracket \Delta \rrbracket^-(\unit_\multimap^{-1}) \circ \sigma : \llbracket \vdash M, \Gamma, \mathbf{0}, \Delta \rrbracket$}
\DisplayProof

&

\AxiomC{$\sigma : \llbracket \vdash P, \Gamma, \Delta \rrbracket$}
\UnaryInfC{$\sigma \circ \llbracket \Delta \rrbracket^+(\unit_\oslash) : \llbracket \vdash P, \Gamma, \mathbf{0}, \Delta \rrbracket$}
\DisplayProof

\\[4ex]

\AxiomC{$\sigma : \llbracket \vdash M', \Gamma, M, N, \Delta \rrbracket$}
\UnaryInfC{$\llbracket \Delta \rrbracket^-(\passoc)\circ \sigma : \llbracket \vdash M', \Gamma, M \otimes N, \Delta \rrbracket$}
\DisplayProof

&

\AxiomC{$\sigma : \llbracket \vdash P, \Gamma, M, N, \Delta \rrbracket$}
\UnaryInfC{$\sigma \circ \llbracket \Delta \rrbracket^+(\passoc_\multimap) : \llbracket \vdash P, \Gamma, M \otimes N, \Delta \rrbracket$}
\DisplayProof

\\[4ex]

\AxiomC{$\sigma : \llbracket \vdash P', \Gamma, P, Q, \Delta \rrbracket$}
\UnaryInfC{$\sigma \circ \llbracket \Delta \rrbracket^+(\passoc^{-1}) : \llbracket \vdash P', \Gamma, P \parr Q, \Delta \rrbracket$}
\DisplayProof

&

\AxiomC{$\sigma : \llbracket \vdash M, \Gamma, P, Q, \Delta \rrbracket$}
\UnaryInfC{$\llbracket \Delta \rrbracket^+(\passoc_\multimap^{-1}) \circ \sigma : \llbracket \vdash M, \Gamma, P \parr Q, \Delta \rrbracket$}
\DisplayProof

\\[4ex]

\AxiomC{$\sigma : \llbracket \vdash M, \Gamma, P_i, \Delta \rrbracket$}
\UnaryInfC{$\llbracket \Delta \rrbracket^-(\pi_i \multimap \id) \circ \sigma : \llbracket \vdash M, \Gamma, P_1 \oplus P_2, \Delta \rrbracket$}
\DisplayProof

&

\AxiomC{$\sigma : \llbracket \vdash Q, \Gamma, P_i, \Delta \rrbracket$}
\UnaryInfC{$\sigma \circ \llbracket \Delta \rrbracket^+(\id \oslash \pi_i) : \llbracket \vdash Q, \Gamma, P_1 \oplus P_2, \Delta \rrbracket$}
\DisplayProof

\\[4ex]

\AxiomC{$\sigma : \llbracket \vdash N, \Gamma, M_1 \& M_2, \Delta \rrbracket$}
\UnaryInfC{$\llbracket \Delta \rrbracket^-(\id \oslash \pi_i) \circ \sigma : \llbracket \vdash N, \Gamma, M_i, \Delta \rrbracket$}
\DisplayProof

&

\AxiomC{$\sigma : \llbracket \vdash Q, \Gamma, M_1 \& M_2, \Delta \rrbracket$}
\UnaryInfC{$\sigma \circ \llbracket \Delta \rrbracket^+(\pi_i \multimap \id) : \llbracket \vdash Q, \Gamma, M_i, \Delta \rrbracket$}
\DisplayProof

\\[2.5ex]

\end{tabular}

\vspace{4ex}

\hrule
\hrule
\end{small}
\end{figure*}

\begin{figure*}
\caption{Categorical Semantics for \textsf{WS1} (other rules, part 2)}
\begin{small}
\centering
\label{WS-sem3}

\vspace{1ex}
\hrule
\hrule

\begin{tabular}{cc}
\\[0.2ex]

\AxiomC{$\sigma : \llbracket \vdash M, \Gamma, \Delta^+ \rrbracket$}
\AxiomC{$\tau : \llbracket \vdash N, \Delta_1^+ \rrbracket$}
\LeftLabel{$\p_\mul$}
\BinaryInfC{$ \Lambda_I\Lambda(\wk \circ (\Lambda_I^{-1}(\sigma) \otimes \Lambda_I^{-1}(\tau))) : \llbracket \vdash M, \Gamma, N, \Delta^+, \Delta_1^+ \rrbracket$}
\DisplayProof

\\[4ex]

\AxiomC{$\sigma : \llbracket \vdash M, \Gamma, N^\perp, \Gamma_1 \rrbracket$}
\AxiomC{$\tau : \llbracket \vdash N, \Delta^+ \rrbracket$}
\LeftLabel{$\p_\cut$}
\BinaryInfC{$\llbracket \Gamma_1 \rrbracket^-(\tau^{\circ-}_{M,\Gamma}) \circ \sigma : \llbracket \vdash M , \Gamma, \Delta^+ , \Gamma_1 \rrbracket $}
\DisplayProof

\\[4ex]

\AxiomC{$\sigma : \llbracket \vdash P, \Gamma, N^\perp, \Gamma_1 \rrbracket$}
\AxiomC{$\tau : \llbracket \vdash N, \Delta^+ \rrbracket$}
\LeftLabel{$\p_\cut$}
\BinaryInfC{$\sigma \circ \llbracket \Gamma_1 \rrbracket^+(\tau^{\circ+}_{P,\Gamma}) : \llbracket \vdash P , \Gamma, \Delta^+ , \Gamma_1 \rrbracket $}
\DisplayProof

\\[4ex]

\AxiomC{}
\LeftLabel{$\p_\id$}
\UnaryInfC{$\Lambda_I(\id) : \llbracket \vdash N , N^\perp \rrbracket$}
\DisplayProof

\\[2.5ex]

\AxiomC{$\sigma : \llbracket \vdash N^\perp \rrbracket$}
\AxiomC{$\tau : \llbracket \vdash N , Q \rrbracket$}
\LeftLabel{$\p_\cut^0$}
\BinaryInfC{$\sigma \circ \Lambda_I^{-1}(\tau) : \llbracket \vdash Q \rrbracket$}
\DisplayProof

\\[4ex]

\AxiomC{$\sigma : \llbracket \vdash M , \Gamma , P \rrbracket$}
\AxiomC{$\tau : \llbracket \vdash N , \Delta^+ \rrbracket$}
\LeftLabel{$\p_\multimap$}
\BinaryInfC{$\psym_\multimap \circ \Lambda_I(\Lambda_I^{-1}(\tau) \multimap \Lambda_I^{-1}(\sigma)) : \llbracket \vdash M , \Gamma , P \oslash N , \Delta^+ \rrbracket$}
\DisplayProof

\\[4ex]

\AxiomC{$\sigma : \llbracket \vdash N , Q , \Delta^+ \rrbracket$}
\LeftLabel{${\p_\id}_\oslash$}
\UnaryInfC{$\Lambda_I\Lambda((\id \oslash \Lambda^{-1}\Lambda_I^{-1}(\sigma) \circ \sym) \circ \passoc_\oslash \circ \wk \circ \sym) : \llbracket \vdash M , N , M^\perp \lhd Q , \Delta^+ \rrbracket$}
\DisplayProof

\\[2.5ex]

\end{tabular}
\hrule
\hrule
\end{small}
\end{figure*}

\begin{figure*}[ht]
\caption{Semantics for \textsf{WS1} --- Exponential Rules}
\begin{small}
\centering
\label{WS!-sem}

\vspace{1ex}
\hrule
\hrule
Core rules: \\
\begin{tabular}{cc}
\\[0.2ex]

\AxiomC{$\sigma : \llbracket \vdash N , !N , \Gamma \rrbracket$}
\LeftLabel{$\p_!$}
\UnaryInfC{$\llbracket \Gamma \rrbracket^-(\alpha^{-1}) \circ \sigma : \llbracket \vdash !N , \Gamma \rrbracket$}
\DisplayProof

&

\AxiomC{$\sigma : \llbracket \vdash P , ?P , \Gamma \rrbracket$}
\LeftLabel{$\p_?$}
\UnaryInfC{$\sigma \circ \llbracket \Gamma \rrbracket^+(\alpha) : \llbracket \vdash ?P , \Gamma \rrbracket$}
\DisplayProof

\\[2.5ex]

\end{tabular}

Other rules: \\

\begin{prooftree}
\AxiomC{$\sigma : \llbracket \vdash M , P^\perp , P \rrbracket$}
\LeftLabel{$\p_{\mathsf{ana}}$}
\UnaryInfC{$\Lambda_I(\leftmoon \Lambda_I^{-1}(\sigma) \rightmoon) : \llbracket \vdash !M , P \rrbracket$}
\end{prooftree} 

\begin{tabular}{cc}

\AxiomC{$\sigma : \llbracket \vdash P , \Gamma , !M , \Delta \rrbracket$}
\UnaryInfC{$\sigma \circ \llbracket \Delta \rrbracket^+(\der \multimap \id): \llbracket \vdash P , \Gamma , M , \Delta \rrbracket$}
\DisplayProof

&

\AxiomC{$\sigma : \llbracket \vdash !M , \Delta \rrbracket$}
\UnaryInfC{$\llbracket \Delta \rrbracket^-(\der) \circ \sigma : \llbracket \vdash M , \Delta \rrbracket$}
\DisplayProof

\\[2.5ex]

\AxiomC{$\sigma : \llbracket \vdash P , \Gamma , !M , \Delta \rrbracket$}
\UnaryInfC{$\sigma \circ \llbracket \Delta \rrbracket^+((\delta \multimap \id) \circ \passoc_\multimap^{-1}): \llbracket \vdash P , \Gamma , !M , !M , \Delta \rrbracket$}
\DisplayProof

&

\AxiomC{$\sigma : \llbracket \vdash N , \Gamma , !M , \Delta \rrbracket$}
\UnaryInfC{$\llbracket \Delta \rrbracket^-(\id \oslash \der) \circ \sigma : \llbracket \vdash N , \Gamma , M , \Delta \rrbracket$}
\DisplayProof

\\[2.5ex]

\AxiomC{$\sigma : \llbracket \vdash N , \Gamma , !M , \Delta \rrbracket$}
\UnaryInfC{$\llbracket \Delta \rrbracket^-(\passoc \circ (\id \oslash \delta)) \circ \sigma : \llbracket \vdash N , \Gamma , !M , !M , \Delta \rrbracket$}
\DisplayProof

&

\AxiomC{$\sigma : \llbracket \vdash !M , \Delta \rrbracket$}
\UnaryInfC{$\llbracket \Delta \rrbracket^-(\con) \circ \sigma : \llbracket \vdash !M , !M , \Delta \rrbracket$}
\DisplayProof

\\[2.5ex]

\AxiomC{$\sigma : \llbracket \vdash M , \Gamma , ?P , ?P , \Delta \rrbracket$}
\UnaryInfC{$\llbracket \Delta \rrbracket^-((\delta \multimap \id) \circ \passoc_\multimap^{-1}) \circ \sigma: \llbracket \vdash M , \Gamma , ?P , \Delta \rrbracket$}
\DisplayProof

&

\AxiomC{$\sigma : \llbracket \vdash ?P , ?P , \Delta \rrbracket$}
\UnaryInfC{$\sigma \circ \llbracket \Delta \rrbracket^+(\con) : \llbracket \vdash ?P , \Delta \rrbracket$}
\DisplayProof

\\[2.5ex]

\AxiomC{$\sigma : \llbracket \vdash Q , \Gamma , ?P , ?P , \Delta \rrbracket$}
\UnaryInfC{$\sigma \circ \llbracket \Delta \rrbracket^+(\passoc \circ (\id \oslash \delta)) : \llbracket \vdash Q , \Gamma , ?P , \Delta \rrbracket$}
\DisplayProof

&

\AxiomC{$\sigma : \llbracket \vdash P , \Delta \rrbracket$}
\UnaryInfC{$\sigma \circ \llbracket \Delta \rrbracket^+(\der) : \llbracket \vdash ?P , \Delta \rrbracket$}
\DisplayProof

\\[2.5ex]

\AxiomC{$\sigma : \llbracket \vdash Q , \Gamma , P , \Delta \rrbracket$}
\UnaryInfC{$\sigma \circ \llbracket \Delta \rrbracket^+(\id \oslash \der) : \llbracket \vdash Q , \Gamma , ?P , \Delta \rrbracket$}
\DisplayProof

&

\AxiomC{$\sigma : \llbracket \vdash M , \Gamma , P , \Delta \rrbracket$}
\UnaryInfC{$\llbracket \Delta \rrbracket^-(\der \multimap \id) \circ \sigma: \llbracket \vdash M , \Gamma , ?P , \Delta \rrbracket$}
\DisplayProof

\\[2.5ex]

\end{tabular}

\hrule
\hrule
\end{small}
\end{figure*}

\section{Semantics of atoms, quantifiers and equality}
We shall now complete the semantics of \textsf{WS1} by interpreting atoms and quantifiers based on our categories of games and strategies. (The requisite structure could be axiomatised for any WS!-category, but we shall not do so here.) 
We have seen that a sequent $X;\Theta \vdash \Gamma$ of \textsf{WS1}
can be interpreted as a family of games, indexed over
$\Theta$-satisfying $\mathcal{L}$-models over $X$. We shall interpret a
proof of $X ; \Theta \vdash \Gamma$ as a uniform family of strategies for each such game. 

For example, the family denoted by $\top \oslash (\phi \lhd \top)$ has games of the following form:

\begin{center}
\includegraphics[scale=0.6]{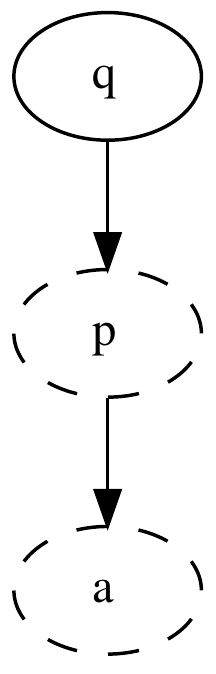}
\end{center}

\noindent Here we represent the forest of plays $P_A$ directly. The moves in dotted circles are only available if $(L,v) \models \overline{\phi}$. There is a unique total strategy on the (positive) game above in both cases, and this family is uniform in the sense that the strategy on models which satisfy $\phi$ is a \emph{substrategy} of the strategy on models satisfying $\overline{\phi}$ --- if $(L,v) \models \phi$ and $(L',v') \models \overline{\phi}$ then $\sigma_{\llbracket \top \oslash (\phi \lhd \top) \rrbracket(L,v)} \subseteq \sigma_{\llbracket \top \oslash (\phi \lhd \top) \rrbracket(L',v')}$.

In contrast, consider the formula $\bot \lhd (\overline{\phi} \oplus (\top \oslash \phi))$. The game forest is given as follows, using the same notation as above:

\begin{center}
\includegraphics[scale=0.6]{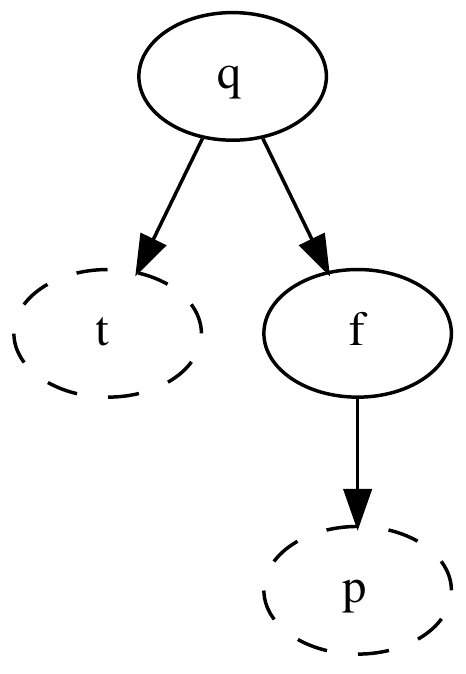}
\end{center}

\noindent There is a family of strategies on this (negative) game: if
$\phi$ is true, Player plays \texttt{f} and if $\overline{\phi}$ is
true, Player plays \texttt{t}. However, this strategy is not uniform
as the choice of second move depends on the truth value of $\phi$ in
the appropriate $\mathcal{L}$-structure. Correspondingly, the formula is not
provable in \textsf{WS1}. 

We now formalise this notion of uniformity of strategies as a
naturality property.

\subsection{Uniform Strategies}

\subsubsection{Game Embeddings}

We wish to formalise categorically the notion of a game $A$ being a subgame of $B$: we can then state that a family of strategies is uniform if whenever $A$ is a subgame of $B$, the restriction of $\sigma_B$ to $A$ is $\sigma_A$. If we consider games as trees, we require a tree embedding from $P_A$ into $P_B$. We use the following machinery: 

\begin{definition}
Let $\mathcal{C}$ be a poset-enriched category. The category $\mathcal{C}_e$ has the same objects as $\mathcal{C}$ and a map $A \rightarrow B$ in $\mathcal{C}_e$ consists of a pair $(i_f , p_f)$ where $i_f : A \rightarrow B$ and $p_f : B \rightarrow A$ in $\mathcal{C}$, such that $p_f \circ i_f = \id$ and $i_f \circ p_f \sqsubseteq \id$.
\begin{itemize}
\item The identity is given by $(\id, \id)$.
\item For composition, set $(i_f, p_f) \circ (i_g , p_g) = (i_f \circ i_g , p_g \circ p_f)$. We need to check this is a valid pairing: $p_{f \circ g} \circ i_{f \circ g} = p_g \circ p_f \circ i_f \circ i_g = p_g \circ \id \circ i_g = \id$ and $i_{f \circ g} \circ p_{f \circ g} = i_f \circ i_g \circ p_g \circ p_f \sqsubseteq i_f \circ \id \circ p_f = i_f \circ p_f \sqsubseteq \id$.
\item It is clear that composition is associative and that $f = f \circ \id = \id \circ f$.
\end{itemize}
\end{definition}

\noindent Let $\mathcal{G}$ denote the poset-enriched category of games and (not-necessarily winning) strategies, and $\mathcal{G}_s$ its subcategory of strict strategies, with $\sqsubseteq$ given by strategy inclusion. A tree embedding of $A$ into $B$ corresponds to a map $A \rightarrow B$ in $\mathcal{G}_e$.


\begin{proposition}
If $f : A \rightarrow B$ in $\mathcal{G}_e$ then $i_f$ and $p_f$ are strict.
\end{proposition}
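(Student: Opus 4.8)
The plan is to argue by contradiction from the two defining (in)equations $p_f \circ i_f = \id_A$ and $i_f \circ p_f \sqsubseteq \id_B$, using the elementary observation that in a composite the opening behaviour of the strategy applied \emph{last} (the one whose codomain is the codomain of the composite) is exposed verbatim, since the opening move of the composite game enters its head directly. I would note first that $A \multimap B = B \lhd A^\perp$ and $B \multimap A = A \lhd B^\perp$ are negative games, so by the strategy axioms both $i_f$ and $p_f$ contain $\epsilon$; hence in any composition we are free to leave the hidden middle component untouched.

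Take $p_f$ first. Since $B \multimap A$ has head $A$, strictness asks that whenever $p_f$ responds to an opening move $a_0$ of $A$ it does so with a move in the domain $B$. Suppose instead $p_f$ answered $a_0$ by a move $m$ in $A$. Then the sequence over the three components $(A, B, A)$ with empty middle and $(a_0, m)$ in the final component is a legitimate interaction witness: its restriction to the first two components is $\epsilon \in i_f$, and its restriction to the last two is the play $a_0 m$ of $B \multimap A$, which lies in $p_f$ because by assumption that play never leaves the head $A$. Hiding the middle component yields a play of $p_f \circ i_f = \id_A$ consisting of two consecutive moves in the codomain factor of $A \multimap A$. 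But the copycat strategy answers an opening move in the codomain by copying it into the domain, so it has no play of that shape --- a contradiction. Hence $p_f$ is strict. The argument for $i_f$ is entirely symmetric: $i_f$ is applied last in $i_f \circ p_f : B \to B$, so if $i_f$ replied to an opening move $b_0$ of $B$ (the head of $A \multimap B$) by a move $m$ in the codomain $B$ rather than in the domain $A$, the same construction (empty middle, $(b_0, m)$ in the final $B$) would put a play with two consecutive codomain moves into $i_f \circ p_f$, contradicting $i_f \circ p_f \sqsubseteq \id_B$ since no copycat play on $B \multimap B$ has that shape.

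The only point that really needs care --- and the part I would write out in detail --- is checking that the ad hoc interaction sequences above are genuine witnesses for ``parallel composition plus hiding'': that each projects to a valid play of the component game in question, lies in the appropriate strategy (which is exactly where $\epsilon \in i_f$ and $\epsilon \in p_f$ are used), and is globally alternating. All of this is immediate once the definitions are unwound, and since nothing uses winning conditions the result holds in $\mathcal{G}_e$ as stated (hence also over the winning subcategory).
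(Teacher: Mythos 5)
Your argument is correct and is essentially the paper's own proof: the paper likewise observes that the last-applied strategy's response to the opening move survives hiding, so a non-strict $i_f$ would make $i_f \circ p_f$ answer the opening move of $B$ in $B$, contradicting $i_f \circ p_f \sqsubseteq \id$, and a non-strict $p_f$ would do the same to $p_f \circ i_f = \id$. You merely unwind the ``parallel composition plus hiding'' witness (empty middle component, using $\epsilon \in i_f$, $\epsilon \in p_f$) that the paper leaves implicit.
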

\begin{proof}
If $i_f$ responds to an opening move in $B$ with a move in $B$ then so does $i_f \circ p_f$ and so $i_f \circ p_f \sqsubseteq \id$ fails. Similarly, if $p_f$ responds to an opening move in $A$ with a move in $A$ then so does $p_f \circ i_f$ and so $p_f \circ i_f = \id$ fails. \qed
\end{proof}

\noindent We can thus define identity-on-objects functors $i : \mathcal{G}_e \rightarrow \mathcal{G}_s$ and $p : \mathcal{G}_e \rightarrow \mathcal{G}_s^\op$.
We can show that our operations on games lift to functors on $\mathcal{G}_e$.

\begin{proposition}
All of the operations $\multimap$,$\oslash$,$\otimes$,$\&$,$!$ extend to covariant (bi)functors on $\mathcal{G}_e$.
\end{proposition}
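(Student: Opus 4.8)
The plan is to reduce everything to the already-established functoriality of these operations on $\mathcal{G}$ (and on $\mathcal{G}_s$), together with the fact that each is \emph{locally monotone} with respect to strategy inclusion. Given a morphism $f = (i_f,p_f) : A \rightarrow B$ in $\mathcal{G}_e$ (and likewise $g = (i_g,p_g) : C \rightarrow D$), I would define the action of each operation by feeding the embedding component $i_{(-)}$ into each covariant slot and the projection component $p_{(-)}$ into each contravariant slot of the underlying functor. Thus I set $\otimes(f,g) = (i_f \otimes i_g,\, p_f \otimes p_g)$, $\&(f,g) = (i_f \& i_g,\, p_f \& p_g)$, $!(f) = (!i_f,\, !p_f)$, $\oslash(f,g) = (i_f \oslash i_g,\, p_f \oslash p_g)$ --- which is well-typed because $i_f,p_f$ are strict by the preceding proposition, so the action $\oslash : \mathcal{G}_s \times \mathcal{G} \rightarrow \mathcal{G}_s$ applies --- and, since $\multimap$ is contravariant in its first argument, $\multimap(f,g) = (p_f \multimap i_g,\, i_f \multimap p_g)$. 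Note that because an $\mathcal{G}_e$-morphism carries maps in \emph{both} directions, $\multimap$ becomes a genuinely covariant bifunctor on $\mathcal{G}_e$.

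Next I would check that each such pair is a legitimate $\mathcal{G}_e$-morphism. The retraction equation is immediate from functoriality on $\mathcal{G}_s$ together with $p_f \circ i_f = \id$, $p_g \circ i_g = \id$: e.g. $(p_f \otimes p_g)\circ(i_f \otimes i_g) = (p_f \circ i_f) \otimes (p_g \circ i_g) = \id$, and in the $\multimap$ case $(i_f \multimap p_g)\circ(p_f \multimap i_g) = (p_f \circ i_f) \multimap (p_g \circ i_g) = \id \multimap \id = \id$. The lax inequality $i_{F(f,g)} \circ p_{F(f,g)} \sqsubseteq \id$ is the one place where local monotonicity is needed: from $i_f \circ p_f \sqsubseteq \id$ and $i_g \circ p_g \sqsubseteq \id$ one gets $(i_f \otimes i_g)\circ(p_f \otimes p_g) = (i_f \circ p_f)\otimes(i_g \circ p_g) \sqsubseteq \id \otimes \id = \id$, and similarly for $\&$, $\oslash$, $!$, and for $\multimap(f,g)$ via $\multimap(i_f \circ p_f,\, i_g \circ p_g) \sqsubseteq \multimap(\id,\id) = \id$ --- here observing that the contravariant slot of $\multimap$ is still \emph{monotone} (not antitone) for $\sqsubseteq$, since $\mathcal{C}^{\op}$ inherits the same order on hom-sets.

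Finally I would verify functoriality of the extensions: $F(\id_A) = \id_{FA}$ since $\id_A = (\id,\id)$ and $F$ preserves identities on $\mathcal{G}_s$; and $F(g' \circ f) = F(g') \circ F(f)$ by unwinding the composition law of $\mathcal{G}_e$ (which reverses the projection components) and applying $\mathcal{G}_s$-functoriality componentwise --- for the contravariant slot of $\multimap$ the order reversal built into $\mathcal{G}_e$-composition is exactly cancelled by the contravariance, so the components still match up.

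The only substantive point --- the anticipated obstacle --- is confirming that the underlying operations are locally monotone (poset-enriched) (bi)functors on $\mathcal{G}$ and $\mathcal{G}_s$, in particular that $!$ preserves strategy inclusion. This is essentially folklore: $\mathcal{G}$ is cpo-enriched (used already in the proof of Proposition~\ref{bangcoalgG}), $\otimes$, $\&$, $\multimap$ are locally continuous, and $!(-)$ is built as a least upper bound of a chain of composites of these monotone operations (the anamorphism construction). I would therefore cite \cite{Lai_FPC} for the underlying monotonicity and spell out explicitly only the verification of $i_{F(f,g)} \circ p_{F(f,g)} \sqsubseteq \id$ in one representative case.
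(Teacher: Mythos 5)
Your proposal is correct and follows essentially the same route as the paper: define the action by putting embedding components in covariant slots and projection components in contravariant slots (the paper's own worked example is exactly your $(p_f \multimap i_g,\, i_f \multimap p_g)$), then verify the section/retraction identity by functoriality and the lax inequality by monotonicity. Your additional remarks on strictness for $\oslash$ and monotonicity of $!$ only make explicit what the paper leaves implicit.
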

\begin{proof}
  Each case exploits functoriality and monotonicity of the relevant
  operation. We just give an example: set $(i,p) \multimap (i',p') =
  (p \multimap i' , i \multimap p')$. Then $(i \multimap p') \circ (p
  \multimap i') = (p \circ i) \multimap (p' \circ i') = \id \multimap
  \id = \id$ and $(p \multimap i') \circ (i \multimap p') = (i \circ
  p) \multimap (i' \circ p') \sqsubseteq \id \multimap \id = \id$. \qed
\end{proof}

\subsubsection{Lax natural Transformations}

Given an embedding $e : A \rightarrow B$ and strategies $\sigma_A : A$, $\sigma_B : B$, $\sigma_B$ \emph{restricts to} $\sigma_A$ if $\sigma_A = p_e \circ \sigma_B$. We generalise this idea using the notion of \emph{lax natural transformations}.

\begin{definition}
Let $\mathcal{C}$ be a category, $\mathcal{D}$ a poset-enriched category and $F, G : \mathcal{C} \rightarrow \mathcal{D}$. A \emph{lax natural transformation} $F \Rightarrow G$ is a family of arrows $\mu_A : F(A) \rightarrow G(A)$ such that $\eta_B \circ F(f) \sqsupseteq G(f) \circ \eta_A$.
\end{definition}
\begin{diagram}
F(A) & \rTo^{\mu_A} & G(A) \\
\dTo^{F(f)} & \sqsupseteq & \dTo_{G(f)} \\
F(B) & \rTo_{\mu_B} & G(B) \\
\end{diagram}

\noindent We can compose lax natural transformations using vertical
composition. There is also a form of horizontal composition, provided
that one of the two functors is the identity: Let $H,G : \mathcal{C}
\rightarrow \mathcal{D}$ and $\mu : G \Rightarrow H$ a lax natural
transformation. Then a) if $F : \mathcal{B} \rightarrow \mathcal{C}$
then there is a lax natural transformation $\mu F : G \circ F
\Rightarrow H \circ F$ given by $(\mu F)_A = \mu_{F(A)}$ and b) if $J
: \mathcal{D} \rightarrow \mathcal{E}$ is monotonic then there is a
lax natural transformation $J \mu : J \circ G \rightarrow J \circ H$
given by $(J \mu)_A = J(\mu_A)$.

\subsubsection{Uniform Winning Strategies}
\label{uniwinstrats}

\begin{definition}
Let $F , G : \mathcal{C} \rightarrow \mathcal{G}_e$. A \emph{uniform strategy} from $F$ to $G$ is a lax natural transformation $\sigma : i \circ F \Rightarrow i \circ G$. A \emph{uniform total strategy} is a uniform strategy $\sigma$ where each $\sigma_A$ is total. A \emph{uniform winning strategy} is a uniform strategy where each $\sigma_A$ is winning.
\end{definition}

\noindent If $f : A \rightarrow B$, the lax naturality condition is that $i_{G(f)} \circ \sigma_A \sqsubseteq \sigma_B \circ i_{F(f)}$. Thus $\sigma_A = p_{G(f)} \circ i_{G(f)} \circ \sigma_A \sqsubseteq p_{G(f)} \circ \sigma_B \circ i_{F(f)}$. But since $\sigma_A$ is total, it is maximal in the ordering $\sqsubseteq$ and we must have $\sigma_A = p_{G(f)} \circ \sigma_B \circ i_{F(f)}$. Similarly, we see that $\sigma_A = p_{G(f)} \circ \sigma_B \circ i_{F(f)}$ implies the lax naturality condition as $i_{G(f)} \circ \sigma_A = i_{G(f)} \circ p_{G(f)} \circ \sigma_B \circ i_{F(f)} \sqsubseteq \sigma_B \circ i_{F(f)}$. Thus, lax naturality captures the fact that $\sigma_A$ is determined by $\sigma_B$ via restriction. If $F$ is the constant functor $\kappa_I$, this reduces to $\sigma_A = p_{G(e)} \circ \sigma_B$.

We can construct a WS-category of uniform strategies over a
base category $\mathcal{C}$. Let $\mathcal{G}^\mathcal{C}$ be the
category where:
\begin{itemize}
\item Objects are functors $\mathcal{C} \rightarrow \mathcal{G}_e$
\item An arrow $F \rightarrow G$ is a uniform strategy $F
  \Rightarrow G$
\item Composition is given by vertical composition of lax natural
  transformations
\item The identity on a functor $F$ is given by the lax natural
  transformation $\eta : F \Rightarrow F$ where $\eta_A =
  \id_{F(A)}$. It is clear that this is lax natural.

\end{itemize}

\noindent Similarly, we can construct a category $\mathcal{W}^\mathcal{C}$ of functors and uniform winning strategies.

\begin{proposition}
$\mathcal{G}^\mathcal{C}$ is a WS!-category.
\label{uniws}
\end{proposition}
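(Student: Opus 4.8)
The plan is to lift the WS!-category structure of $(\mathcal{G},\mathcal{G}_s)$ to $(\mathcal{G}^{\mathcal{C}},\mathcal{G}^{\mathcal{C}}_s)$ pointwise, using throughout that in a category whose morphisms are (lax) natural transformations valued in a poset-enriched category, limits, monoidal coherence, the characteristic isomorphisms, and final coalgebras of pointwise endofunctors are all computed componentwise, so that the required equations and coherence diagrams reduce to the corresponding statements in $\mathcal{G}$, which hold by Proposition~\ref{WScats}. Concretely, since each of $\multimap,\oslash,\otimes,\&,!$ extends to a (bi)functor on $\mathcal{G}_e$, we define the corresponding operation on $\mathcal{G}^{\mathcal{C}}$ by postcomposition, e.g. $(F\otimes G)(A)=F(A)\otimes G(A)$ and $(F\otimes G)(f)=F(f)\otimes G(f)$; the monoidal unit is the constant functor $\kappa_{\mathbf{1}}$ and the distinguished object is $o=\kappa_{\bot}$. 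On morphisms, $\sigma\otimes\tau$ (and likewise for every operation) has components $(\sigma\otimes\tau)_A=\sigma_A\otimes\tau_A$ computed from the action of the corresponding operation of $\mathcal{G}$; functoriality, and the fact that $\mathcal{G}^{\mathcal{C}}_s$ (uniform strategies with strict components) is a full-on-objects subcategory closed under these operations with $J$ the inclusion reflecting isomorphisms, are inherited from $(\mathcal{G},\mathcal{G}_s)$. Finite products of $\mathcal{G}^{\mathcal{C}}_s$ are given componentwise by $\&$ with terminal object $\kappa_{\mathbf{1}}$, and the internal hom is $\multimap$ componentwise, the two adjunction isomorphisms (symmetric monoidal closure of $\mathcal{G}^{\mathcal{C}}$, and $\Lambda_s$) being the pointwise families of the corresponding isomorphisms of $\mathcal{G}$.

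Next I would handle the structural natural transformations: the associators, unitors, symmetry, $\wk$, $\unit_\oslash$, $\passoc$, $\dec$, $\dec^0$, $\dist$, $\dist_0$, $\lfe$, and the coalgebra map $\alpha$. Each is defined componentwise, taking at $A\in\mathcal{C}$ the corresponding structural map of $\mathcal{G}$ at $F_1(A),\dots,F_n(A)$. Two points then require checking. First, each componentwise family is actually a uniform strategy, i.e. a lax natural transformation: each structural map of $\mathcal{G}$ is a copycat strategy and hence a genuine natural transformation of the relevant functors on $\mathcal{G}$ (or $\mathcal{G}_s$), so its components are natural with respect to the strict strategies $i_{F(f)}$ and $p_{F(f)}$ associated with an embedding $F(f):F(A)\to F(B)$ in $\mathcal{G}_e$; the relevant square commutes on the nose, which a fortiori gives lax naturality (laxness rather than strictness is only genuinely needed at the anamorphism below). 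Second, every coherence diagram of a WS!-category commutes in $\mathcal{G}^{\mathcal{C}}$ iff it commutes componentwise, which holds by Proposition~\ref{WScats}. In particular $o=\kappa_\bot$ satisfies linear functional extensionality and the decomposability and distributivity isomorphisms exist, since $\lfe$, $\dec$, $\dist$ and their inverses are computed pointwise.

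Finally I would treat the coalgebraic monoidal exponentials. For an object $F$ take $!F$ pointwise and $(\alpha_F)_A=\alpha_{F(A)}$. Given a coalgebra $\sigma:H\Rightarrow F\oslash H$ in $\mathcal{G}^{\mathcal{C}}_s$, define $\leftmoon\sigma\rightmoon:H\Rightarrow !F$ componentwise by $(\leftmoon\sigma\rightmoon)_A=\leftmoon\sigma_A\rightmoon$ (Proposition~\ref{bangcoalgG}). The coalgebra equation $\alpha_F\circ\leftmoon\sigma\rightmoon=(\id\oslash\leftmoon\sigma\rightmoon)\circ\sigma$ holds componentwise, and uniqueness of the anamorphism in $\mathcal{G}^{\mathcal{C}}$ follows from componentwise uniqueness (any lax-natural coalgebra morphism must, pointwise, be $\leftmoon\sigma_A\rightmoon$). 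The substantive point is that $\leftmoon\sigma\rightmoon$ is again a uniform strategy. Here I would use that $\mathcal{G}^{\mathcal{C}}$ is cpo-enriched --- pointwise directed sups of lax natural transformations are again lax natural, by continuity of composition in $\mathcal{G}$ --- and that, in the construction of Proposition~\ref{bangcoalgG}, the chain $s_k=\alpha_k^{-1}\circ(\id\oslash\_)^k(\epsilon)\circ\leftmoon\sigma\rightmoon_k$ consists of lax natural transformations, being composites of $\sigma$, the structural maps $\alpha_k^{-1}$, the bottom strategy $\epsilon$, and applications of the pointwise functors $(\id\oslash\_)^j$, each of which preserves lax naturality; then $\leftmoon\sigma\rightmoon=\bigsqcup_k s_k$ is lax natural (and strict, as in $\mathcal{G}_s$), hence a morphism of $\mathcal{G}^{\mathcal{C}}_s$. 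The second clause of the definition --- that $!F\otimes !G$ with the described $\alpha_{F,G}$ is final for $(F\times G)\oslash\_$ --- is equivalent to $\langle\der_F\otimes\term,\term\otimes\der_G\rangle^\dagger$ being an isomorphism, which again holds pointwise by Proposition~\ref{WScats}.

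The main obstacle is precisely this last verification: that passing from the pointwise anamorphisms to the cpo-sup of the approximants (rather than merely invoking finality at each object) still yields a lax natural, strict transformation, so that $!F$ is a final coalgebra inside the strict subcategory $\mathcal{G}^{\mathcal{C}}_s$ as the WS!-category axioms demand --- this rests on the cpo-enrichment of $\mathcal{G}$ and the continuity of composition. Everything else --- products, monoidal and sequoidal coherence, the characteristic isomorphisms, and linear functional extensionality --- is a routine transport of structure along the functor-category construction, with Proposition~\ref{WScats} supplying the base case at each object of $\mathcal{C}$.
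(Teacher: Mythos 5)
Your proposal is correct and takes essentially the same route as the paper: all structure is lifted pointwise, lax naturality of the componentwise structural maps is obtained from their strict naturality in $\mathcal{G}_s$ (the paper packages this as horizontal composition of a natural transformation with the functors $i\circ F$, which is the same observation as your ``commutes on the nose'' argument), and the coherence equations are inherited componentwise from Proposition~\ref{WScats}. Your detailed verification that the pointwise anamorphism is lax natural --- via the cpo-enrichment and the chain of approximants from Proposition~\ref{bangcoalgG} --- soundly fills in a step the paper only asserts; the one small correction is that finality of $(\mathord{!}F,\alpha_F)$ is required in $\mathcal{G}^{\mathcal{C}}$ itself rather than in the strict subcategory, though your argument delivers exactly that.
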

\begin{proof}
  We first exhibit the symmetric monoidal structure. $F \otimes G$ is
  defined to be $\otimes \circ (F \times G) \circ \Delta$ where
  $\Delta : \mathcal{C} \rightarrow \mathcal{C} \times \mathcal{C}$ is
  the diagonal. So, $(F \otimes G)(A) = F(A) \otimes G(A)$. On arrows, 
  we set $(\eta \otimes \rho)_A = \eta_A \otimes \rho_A$. 
  We need to show that if $f : L
  \rightarrow K$ then $(i_{A(f)} \otimes i_{C(f)}) \circ (\eta_K
  \otimes \rho_K) \sqsupseteq (i_{B(f)} \otimes i_{D(f)}) \circ
  (\eta_L \otimes \rho_L)$. That is, we need to show that $(i_{A(f)}
  \circ \eta_K) \otimes (i_{C(f)} \circ \rho_K) \sqsupseteq (i_{B(f)}
  \circ \eta_L) \otimes (i_{D(f)} \circ \rho_L)$. But this is clear by
  lax naturality of $\eta$ and $\rho$ and monotonicity of $\otimes$.

  The tensor unit $I$ is the constant functor, sending all objects to
  the game $I$ and arrows to $\id_I$.

  The morphisms $\assoc$, $\ri$, $\li$ and $\sym$ are defined
  pointwise: for example, $(\assoc_{F,G,H})_X =
  \assoc_{F(X),G(X),H(X)}$.  To check for lax naturality, we must use
  horizontal composition. For example, consider the map $\assoc : (F
  \otimes G) \otimes H \rightarrow F \otimes (G \otimes H)$ defined
  pointwise as described. The domain is $(F \otimes G) \otimes H =
  ((\_ \otimes \_) \otimes \_) \circ (i \circ F \times i \circ G
  \times i \circ H) \circ \Delta_3$ where $\Delta_3$ is the diagonal
  functor $\mathcal{C} \rightarrow \mathcal{C} \times \mathcal{C}
  \times \mathcal{C}$. Similarly, the codomain is $(\_ \otimes (\_
  \otimes \_)) \circ (i \circ F \times i \circ G \times i \circ H)
  \circ \Delta_3$. We can thus see that $\assoc$ is equal to the
  horizontal composition $\assoc J$ where $J = (i \circ F \times i
  \circ G \times i \circ H) \circ \Delta_3$ and $\assoc$ is the
  natural transformation $\_ \otimes (\_ \otimes \_) \Rightarrow (\_
  \otimes \_) \otimes \_$ in $\mathcal{G}_s$.
\begin{diagram}
\mathcal{C} & \rTo^{\Delta_3} & \mathcal{C} \times \mathcal{C} \times
\mathcal{C} & \rTo^{i \circ F \times i \circ G \times i \circ H} &
\mathcal{G}_s \times \mathcal{G}_s \times \mathcal{G}_s & \rTo^{\_
  \otimes (\_ \otimes \_)} & \mathcal{G}_s \\
& \dImplies^{\id} & & \dImplies^{\id} & & \dImplies^{\assoc} \\
\mathcal{C} & \rTo^{\Delta_3} & \mathcal{C} \times \mathcal{C} \times
\mathcal{C} & \rTo^{i \circ F \times i \circ G \times i \circ H} &
\mathcal{G}_s \times \mathcal{G}_s \times \mathcal{G}_s & \rTo^{(\_
  \otimes \_) \otimes \_} & \mathcal{G}_s \\
\end{diagram}

  One can similarly express the other monoidal isomorphisms in this
  way to see lax naturality. The coherence equations of symmetric
  monoidal categories inherit pointwise from $\mathcal{G}$.

  Symmetric monoidal closure, products, sequoidal closure and linear
  functional extensionality lift pointwise from $\mathcal{G}$ using
  horizontal composition. We can also show that the coalgebraic monoidal
  exponential structure lifts from $\mathcal{G}$. \qed
\end{proof}

\begin{proposition}
$\mathcal{W}^\mathcal{C}$ is a WS!-category.
\end{proposition}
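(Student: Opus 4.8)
The plan is to identify $\W^{\C}$ as the wide (lluf) subcategory of $\G^{\C}$ having the same objects --- functors $\C \to \G_e$ --- but retaining only those uniform strategies all of whose components are winning, and then to show that the entire \textsf{WS!}-structure exhibited on $\G^{\C}$ in Proposition~\ref{uniws} restricts to this subcategory. First I would check that $\W^{\C}$ is a well-defined category: vertical composition of lax natural transformations whose components are winning is again lax natural with winning components, since winning strategies compose \cite{Hy_GS}; and the identities are the componentwise copycat strategies, which are winning. The same remarks, together with the fact that inverses of strict (winning) isomorphisms are strict and winning, give the strict subcategory $\W^{\C}_s$ and show that the inclusion $J$ reflects isomorphisms.

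The bulk of the argument then reduces to a single observation: every structural morphism of $\G^{\C}$ used in the proof of Proposition~\ref{uniws} was defined componentwise from the corresponding morphism of $\G$ (via horizontal composition of lax natural transformations), and each of those --- the associator, unitors, symmetry, $\wk$, $\app_s$, $\Lambda_s$, $\passoc$, $\unit_\oslash$, $\lfe$, $\af$, $\der$, $\delta$, the decomposability and distributivity maps, and so on --- is already a \emph{winning} strategy in $\W$, because $(\W,\W_s)$ is itself a \textsf{WS!}-category by Proposition~\ref{WScats}. Hence all this data, together with the object $o$ (the constant functor at $\bot$), lies in $\W^{\C}$, and the coherence equations hold since they already hold in $\G^{\C}$. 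What remains is to see that the relevant universal properties survive the restriction, i.e. that mediating morphisms built from winning data are themselves winning: for products because a pairing of winning strategies is winning (indeed $\W$ has all small products, as already noted); for symmetric monoidal and sequoidal closure because $\Lambda$, $\Lambda_s$ and their inverses preserve winning in $\W$; and for linear functional extensionality because $\W$ satisfies it. Uniqueness of mediating morphisms in $\W^{\C}$ follows a fortiori from uniqueness in $\G^{\C}$.

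The one place needing slightly more care --- and the step I expect to be the main obstacle, though it is still routine --- is the coalgebraic monoidal exponential, since this is the only part where a genuine universal property, rather than mere pointwise structural data, must be shown to stay inside the winning-strategy subcategory. Given a uniform winning coalgebra $\sigma : F \Rightarrow (N \oslash \_)\circ F$ in $\W^{\C}$, the componentwise anamorphisms $\leftmoon \sigma_A \rightmoon : F(A) \to \mathop{!}N(A)$ are winning by Proposition~\ref{bangcoalgW}, and they assemble into a lax natural transformation by exactly the computation already performed for $\G^{\C}$ in the proof of Proposition~\ref{uniws}; hence $\leftmoon \sigma \rightmoon$ is a uniform winning strategy, and it is the unique one making the coalgebra square commute (uniqueness coming componentwise from Proposition~\ref{bangcoalgW}). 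Finality of $\mathop{!}F \otimes \mathop{!}G$ for the functor $(N \times M)\oslash\_$ then transfers just as it did in $\G^{\C}$, being equivalent to invertibility of a morphism built from $\der$, $\term$ and $\delta$, all of which are winning (Proposition~\ref{coexpcomWG}). This establishes all the clauses of the definition of a \textsf{WS!}-category for $\W^{\C}$.
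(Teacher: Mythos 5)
Your proposal is correct and matches the paper's own argument, which simply lifts the WS!-structure pointwise from $\mathcal{W}$ exactly as in Proposition \ref{uniws}, with winningness of all structural and mediating morphisms (including anamorphisms, via Proposition \ref{bangcoalgW}) inherited componentwise from $\mathcal{W}$. Your write-up just spells out in more detail what the paper states in two sentences.
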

\begin{proof}
  We proceed precisely as in Proposition \ref{uniws}, lifting the
  structure of a WS!-category in $\mathcal{W}$ to that in
  $\mathcal{W}^\mathcal{C}$. In particular, pointwise-winningness of
  the relevant morphisms in $\mathcal{W}^\mathcal{C}$ inherits from
  the winningness in $\mathcal{W}$. \qed
\end{proof}


\subsection{Quantifiers}

\subsubsection{Category of $\mathcal{L}$-structures}

\begin{definition}
  Given a set of variables $X$ and set of atomic formulas $\Theta$, we
  let $\mathcal{M}_\Theta^X$ denote the category of
  $\Theta$-satisfying $\mathcal{L}$-models over $X$. Objects are
  $\mathcal{L}$-models over $X$ that satisfy each formula in
  $\Theta$. A morphism $(L,v) \rightarrow (L',v')$ is a map $f : |L|
  \rightarrow |L'|$ such that:
\begin{itemize}
\item For each $x \in X$, $v'(x) = f(v(x))$
\item If $(L,v) \models \overline{\phi}(\overrightarrow{a})$ for $\overrightarrow{a} \in |L|^{\ar(\phi)}$ then $(L',v') \models \overline{\phi}(\overrightarrow{f(a)})$
\item For each function symbol $g$ in $\mathcal{L}$, $f(I_L(g)(\overrightarrow{a})) = I_{L'}(g)(\overrightarrow{f(a)})$.
\end{itemize}
\end{definition}

\noindent Note that since the positive atoms include inequality, such morphisms must be injective. Also note that if $f : (L,v) \rightarrow (L',v')$ and $(L,v) \models \overline{\phi}(\overrightarrow{s})$ then $(L',v') \models \overline{\phi}(\overrightarrow{s})$.

If $v$ is a valuation on $X$, define $v[x \mapsto l]$ on $X \cup \{ x \}$ to be the valuation sending $y$ to $v(y)$ if $y \neq x$, and $x$ to $l$. Given $f : (L,v) \rightarrow (M,w)$ in $\mathcal{M}_X^\Theta$ and $s$ a term with $FV(s) \subseteq X$, $f$ is also a map $(L,v[x \mapsto v(s)]) \rightarrow (M,w[x \mapsto w(s)])$ in $\mathcal{M}_{X \cup \{ x \}}^\Theta$. We know that $f$ preserves all of the valuations other than $x$, and for $x$ we see that $f(v[x \mapsto v(s)](x)) = f(v(s)) = w(s) = w[x \mapsto w(s)](x)$. \\

\noindent We will give semantics of sequents $X ; \Theta \vdash
\Gamma$ as functors $\mathcal{M}_\Theta^X \rightarrow \mathcal{G}_e$,
and proofs as uniform winning strategies. 

\subsubsection{Quantifiers as Adjoints}

In this section, we will describe an adjunction that will allow us to
interpret the quantifiers. 

\begin{itemize}
\item If $FV(s) \subseteq X$ we can define a functor $\mathsf{set}^x_s : \mathcal{M}_X^\Theta \rightarrow \mathcal{M}_{X \uplus \{x\}}^\Theta$ by $\mathsf{set}^x_s(L,v) = (L,v[x \mapsto v(s)])$ and if $f : (L,v) \rightarrow (M,w)$ we set $\mathsf{set}^x_s(f) = f$. We need to check that $\mathsf{set}^x_s(f)$ is a valid morphism. We know that $\mathsf{set}^x_s(f)$ preserves all variables in $X$, and $\mathsf{set}^x_s(f)(v[x \mapsto v(s)](x)) = f(v(s)) = w(s) = w[x \mapsto w(s)](x)$ as required. It is clear that $\mathsf{set}^x_s$ is functorial.

From this we can extract a functor $\mathsf{set}'^x_s : \mathcal{W}^{\mathcal{M}_{X \uplus \{ x \}}^\Theta} \rightarrow \mathcal{W}^{\mathcal{M}_X^\Theta}$, mapping $F$ to $F \circ \mathsf{set}^x_s$, with an action on arrows defined by horizontal composition.

\item Provided $x$ does not occur in $\Theta$, there is an evident forgetful functor $U_x : \mathcal{M}_{X \uplus \{x\}}^\Theta \rightarrow \mathcal{M}_X^\Theta$ mapping $(L,v)$ to $(L,v-x)$. From this we can extract a functor $U'_x : \mathcal{W}^{\mathcal{M}_X^\Theta} \rightarrow \mathcal{W}^{\mathcal{M}_{X \uplus \{ x \}}^\Theta}$ mapping $F$ to $F \circ U_x$, with an action on arrows defined by horizontal composition. Note that $U_x \circ \mathsf{set}^x_s = \id$ and so $\mathsf{set}'^x_s \circ U'_x = \id$.
\end{itemize}

We will show that $U'_x$ has a right adjoint $\forall x . \_ $. Assuming empty $\Gamma$, this allows us to interpret the rules $\p_\forall$ and $\p_\exists$.

\begin{definition}
Let $\mathcal{C}$ be a category. We define the category $\FamInj(\mathcal{C})$. An object is a set $I$ and a family of $\mathcal{C}$-objects $\{ A_i : i \in I \}$. An arrow $\{ A_i : i \in I \} \rightarrow \{ B_j : j \in J \}$ is a pair $(f, \{ f_i : i \in I \})$ where $f$ is an injective function $I \rightarrow J$ and each $f_i : A_i \rightarrow B_{f(i)}$. We will often write such a map as $(f , \{ f_i \})$ when we wish to leave the indexing set implicit.
\begin{itemize}
\item Composition is defined by $(f,\{f_i\}) \circ (g,\{g_i\}) = (f \circ g , \{ f_{g(i)} \circ g_i \})$.
\item The identity $\{ A_i : i \in I \} \rightarrow \{A_i : i \in I \}$ is given by $(\id, \{ \id_{A_i} \})$.
\item Satisfaction of the categorical axioms is inherited from $\mathcal{C}$.
\end{itemize}
\end{definition}

\begin{definition}
Let $F : \mathcal{C} \rightarrow \mathcal{D}$. We define $\FamInj(F) : \FamInj(\mathcal{C}) \rightarrow \FamInj(\mathcal{D})$. On objects, $\FamInj(F)(\{ A_i : i \in I \}) = \{ F(A_i) : i \in I \}$. On arrows, we set $\FamInj(F)(f,\{f_i\}) = (f , \{ F(f_i) \})$.
\end{definition}

%

\noindent We define a distributivity functor $\mathsf{dst} : \FamInj(\mathcal{C}) \times \mathcal{D} \rightarrow \FamInj(\mathcal{C} \times \mathcal{D})$ by $\dst(\{A_i : i \in I \}, B) = \{ (A_i,B) : i \in I \}$ and $\dst((f,\{f_i\}),g) = (f,\{(f_i,g)\})$.

Suppose $F$ is an object in $\mathcal{W}^{\mathcal{M}_{X \uplus \{ x
    \}}^\Theta}$ (a functor $\mathcal{M}_{X \uplus \{ x \}}^\Theta
\rightarrow \mathcal{G}_e$). We define $\forall x . F$ as an object in
$\mathcal{W}^{\mathcal{M}_X^\Theta}$ (a functor $\mathcal{M}_X^\Theta
\rightarrow \mathcal{G}_e$). We first define a product functor
$\mathsf{prod} : \FamInj(\mathcal{G}_e) \rightarrow \mathcal{G}_e$. On
objects, $\fprod$ sends $\{ G_i : i \in I \}$ to $\prod_{i \in I}
G_i$. On arrows, let $f : \{ G_j : j \in J \} \rightarrow \{ H_h : h
\in H \}$. The embedding part of $\mathsf{prod}(f)$ is given by
$\langle g_h \rangle_h$ where $g_h = i_{f_j} \circ \pi_j$ if $h =
f(j)$ and $\epsilon$ otherwise. The projection part is given by
$\langle p_{f_j} \circ \pi_{f(j)} \rangle_j$. We can check that
$\fprod$ defines a functor into $\mathcal{G}_e$. Finally, given $F :
\mathcal{M}_{X \uplus \{ x \}}^\Theta \rightarrow \mathcal{G}_e$ we
define $\forall x . F : \mathcal{M}_X^\Theta \rightarrow
\mathcal{G}_e$ to be $\mathsf{prod} \circ \FamInj(F) \circ
\mathsf{add}_x$.

\begin{proposition}
The functor $U_x' : \mathcal{W}^{\mathcal{M}_X^\Theta} \rightarrow \mathcal{W}^{\mathcal{M}_{X \uplus \{ x \}}^\Theta}$ has a right adjoint given by $\forall x . \_ = \fprod \circ \FamInj(\_) \circ \mathsf{add}_x : \mathcal{W}^{\mathcal{M}_{X \uplus \{ x \}}^\Theta} \rightarrow \mathcal{W}^{\mathcal{M}_X^\Theta}$
\end{proposition}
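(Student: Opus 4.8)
The plan is to establish the adjunction directly, by producing a bijection on hom-sets
\[
\Psi_{F,G}\colon \mathcal{W}^{\mathcal{M}_{X\uplus\{x\}}^\Theta}(U_x'F,\,G)\;\cong\;\mathcal{W}^{\mathcal{M}_X^\Theta}(F,\,\forall x.\_\,G)
\]
that is natural in $F$ and $G$; throughout, the hypothesis that $x$ does not occur in $\Theta$ is in force, this being exactly what makes $U_x$, hence $U_x'$, defined. The underlying intuition is that $U_x$ has \emph{discrete fibres}: the preimage of $(L,v)$ under $U_x$ is the set $\{(L,v[x\mapsto l]) : l\in|L|\}$, so a uniform strategy out of $U_x'F = F\circ U_x$ amounts to an $|L|$-indexed family of uniform strategies out of $F$, i.e.\ a single uniform strategy into the pointwise product $\fprod\circ\FamInj(\_)\circ\mathsf{add}_x$. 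Concretely, given a uniform winning strategy $\mu\colon i\circ F\circ U_x\Rightarrow i\circ G$, I would set $\Psi(\mu)_{(L,v)} = \langle\mu_{(L,v[x\mapsto l])}\rangle_{l\in|L|}$, using that $U_x(L,v[x\mapsto l]) = (L,v)$ so each $\mu_{(L,v[x\mapsto l])}$ is a (winning) strategy on $i(F(L,v))\multimap i(G(L,v[x\mapsto l]))$, and that maps into the product game $(\forall x.\_\,G)(L,v) = \prod_{l}G(L,v[x\mapsto l])$ correspond to such families; conversely, given $\nu\colon i\circ F\Rightarrow i\circ(\forall x.\_\,G)$, I would set $\Phi(\nu)_{(M,w)} = \pi_{w(x)}\circ\nu_{(U_x(M,w))}$, which lands in $i(G(M,w))$ because $(w-x)[x\mapsto w(x)] = w$. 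Equivalently, one can package the adjunction through its unit (the diagonal $F(L,v)\to\prod_l F(L,v)$) and counit (projection to the $w(x)$-th coordinate $\prod_m G(M,(w-x)[x\mapsto m])\to G(M,w)$), the triangle identities then being immediate from the universal property of the product.

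The crux is verifying that $\Psi$ and $\Phi$ carry lax natural transformations to lax natural transformations. The key observation is that every morphism $h\colon (M,w)\to(M',w')$ of $\mathcal{M}_{X\uplus\{x\}}^\Theta$ is of the form $g_l$ for a \emph{unique} pair $g = U_x(h)\colon(M,w-x)\to(M',w'-x)$ in $\mathcal{M}_X^\Theta$ and index $l = w(x)$ (with $g(l) = w'(x)$), so that the family of lax-naturality squares for $\mu$ is in bijection with the family of coordinatewise squares extracted from $\nu$. To make this precise I would unwind the action of $\forall x.\_\,G$ on an arrow $g$ of $\mathcal{M}_X^\Theta$: by the definitions of $\mathsf{add}_x$, $\FamInj(G)$ and $\fprod$, the embedding part of $(\forall x.\_\,G)(g)$ sends the $l$-th coordinate into the $g(l)$-th by $i_{G(g_l)}$ and annihilates coordinates outside the image of $g$. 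Post-composing the lax-naturality inequality for $\nu$ with a projection, and using this description, reduces it coordinatewise to precisely the lax-naturality inequality for $\mu$ at $g_l$ — coordinates outside $\mathrm{im}(g)$ being handled trivially, since $\{\epsilon\}$ is the bottom element for $\sqsubseteq$. The same computation, run in reverse, shows $\Phi(\nu)$ is lax natural.

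It then remains to check three routine points: (i) $\Psi$ and $\Phi$ are mutually inverse, which is immediate from $\pi_{w(x)}\circ\langle\mu_{(M,(w-x)[x\mapsto l])}\rangle_l = \mu_{(M,w)}$ and $\langle\pi_l\circ\nu_{(L,v)}\rangle_l = \nu_{(L,v)}$; (ii) $\Psi$ restricts to the winning sub-hom-sets, because a strategy into a product game is winning iff all of its projections are, the projections of $\Psi(\mu)_{(L,v)}$ being the $\mu_{(L,v[x\mapsto l])}$, while $\Phi(\nu)_{(M,w)}$ is a projection of the winning strategy $\nu_{(U_x(M,w))}$; and (iii) naturality of $\Psi$ in $F$ and $G$, which is formal, $\Psi$ being given by pairing with projections and these commuting with the functorial actions of $U_x'$ and $\forall x.\_$. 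I expect the only genuinely delicate part to be the bookkeeping of the previous paragraph — correctly tracking how $\FamInj$, $\mathsf{add}_x$, $\fprod$ and the embedding/projection pairs of $\mathcal{G}_e$ combine in the action of $\forall x.\_\,G$ on morphisms; once that description is in hand, everything else is formal manipulation of (small) products and lax natural transformations.
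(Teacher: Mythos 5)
Your proposal is correct and is essentially the paper's own argument in a different packaging: the paper exhibits the counit pointwise as $\pi_{v(x)}$ and the transpose as $\langle f_{(L,v[x\mapsto l])}\rangle_l$, then checks factorisation and uniqueness, which is exactly your $\Phi$/$\Psi$ bijection. Your explicit coordinatewise verification of lax naturality (via the description of $(\forall x.G)$ on arrows and the observation that morphisms of $\mathcal{M}_{X\uplus\{x\}}^\Theta$ are determined by their image under $U_x$ together with the index $w(x)$) just fills in the step the paper leaves as ``one can check''.
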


\begin{proof}
  We must first give the unit of this adjunction. For each $F$, we
  must give a uniform winning strategy $\eta : U'_x(\forall x . F)
  \Rightarrow F$. Such an $\eta$ is a winning uniform strategy
  $\mathsf{prod} \circ \FamInj(F) \circ \mathsf{add}_x \circ U_x
  \Rightarrow F$. Note that $(\mathsf{prod} \circ \FamInj(F) \circ
  \mathsf{add}_x \circ U_x)(L,v) = \mathsf{prod}(\{ F(L,v-x[x \mapsto
  l]) : l \in L \} ) = \prod_{l \in L} F(L,v[x \mapsto l])$. Thus
  $\eta_{(L,v)}$ must be a winning strategy $\prod_{l \in L} F(L,v[x
  \mapsto l]) \rightarrow F(L,v)$ and we take $\eta_{(L,v)} =
  \pi_{v(x)}$. One can check that this transformation is lax natural.

Given $f : U'_x(F) \rightarrow G$ we must show that there is a unique $\hat{f} : F \rightarrow \forall x . G$ such that $f = \eta_G \circ U_x'(\hat{f})$. Let $f$ be such a uniform winning strategy. Then we must give winning strategies $\hat{f}_{(L,v)} : F(L,v) \rightarrow \prod_{l \in L} G(L,v[x \mapsto l])$. Set $\hat{f}_{(L,v)} = \langle h_l \rangle_l$ where $h_l : F(L,v) \rightarrow G(L,v[x \mapsto l])$ is defined by $f_{(L,v[x \mapsto l])}$. We can check that $\hat{f}$ satisfies lax naturality.


We next need to show that $\hat{f}$ satisfies the universal property. Firstly, we must show that $f = \eta_G \circ U_x'(\hat{f})$. It suffices to show that for each $(L,v)$, $f_{(L,v)} = ((\eta_G) \circ U_x'(\hat{f}))_{L,v}$. Composition in is given by vertical composition. Thus, the RHS is given by $\pi_{v(x)} \circ \langle f_{(L,v[x \mapsto l])} \rangle_l = f_{(L,v[x \mapsto v(x)])} = f_{(L,v)}$ as required.

We need to show that $\hat{f} : F \rightarrow \forall x . G$ is the
unique uniform strategy satisfying $f = \eta_G \circ
U_x'(\hat{f})$. Suppose $h : F \rightarrow \forall x . G$ in
$\mathcal{W}^{\mathcal{M}_X^\Theta}$ satisfies this property. Then
given $(L,v)$ in $\mathcal{M}_{X \uplus \{ x \}}^\Theta$, we know that
$f_{(L,v)} = \eta_{G(L,v)} \circ h_{(L,v - x)} = \pi_{v(x)} \circ
h_{(L,v - x)}$. Let $(L,v) \in \mathcal{M}_X^\Theta$. We must show that $h_{(L,v)} = \hat{f}_{(L,v)} = \langle f_{(L,v[x \mapsto l])} \rangle_l$. Thus we need to show that for each $l$, $\pi_l \circ h_{(L,v)} = f_{(L,v[x \mapsto l])}$. But consider the model $(L,v[x \mapsto l])$. This is $f_{(L,v[x \mapsto l])} = \pi_{v[x \mapsto l](x)} \circ h_{(L, v[x \mapsto l] - x)} = \pi_l \circ h_{(L,v)}$, as required. \qed
\end{proof}

If $N : \mathcal{M}^\Theta_{X \uplus \{ x \}} \rightarrow
\mathcal{G}_e$ then on objects $\llbracket \forall x . N
\rrbracket(L,v) = \prod_{l \in |L|} \llbracket N \rrbracket(L,v[x
\mapsto l])$. For the action of $\forall x . N$ on arrows, suppose $f
: (L,v) \rightarrow (L',w)$. Then $\llbracket \forall x . N
\rrbracket(f) : \prod_{l \in |L|} \llbracket N \rrbracket(L,v[x
\mapsto l]) \rightarrow \prod_{l \in |L'|} \llbracket N
\rrbracket(L',w[x \mapsto l])$ is given as follows: The embedding part
(left to right) is given by $\langle g_{m} \rangle_{m}$ where $g_{m} =
\epsilon$ if $m$ is not in the image of $f$, and $g_{m} = i \llbracket
N \rrbracket(f) \circ \pi_l$ if $m = f(l)$ (note in this case $l$ is
unique by injectivity of $f$). The projection part is given by
$\langle p \llbracket N \rrbracket (f) \circ \pi_{f(l)} \rangle_l$.

Consider the map $\set'^x_s(\eta) : \forall x . F =
\set'^x_s(U'_x(\forall x . F)) \rightarrow \set'^x_s(F)$ in the
category $\mathcal{M}_X^\Theta$. Pointwise, $\set'^x_s(\eta)_{(L,v)} :
\prod_{l \in L} F(L,v[x \mapsto l]) \rightarrow F(L,v[x \mapsto
v(s)])$ is given by $\pi_{v(s)}$, and so we will write $\pi_s$ for
this map.


\subsection{Semantics of Sequents}

We  define the semantics of sequents $X ; \Theta \vdash \Gamma$ as
functors $\mathcal{M}_X^\Theta \rightarrow \mathcal{G}_e$ inductively, via the equations given in the previous section, extended with the following interpretations of atoms and quantifiers:  

\begin{small}
\[
\begin{array}{lcllcl}
  \llbracket \Phi \vdash \phi(\overrightarrow{s}) \rrbracket(L,v) & = & I $ if $ (L,v) \models \overline{\phi}(\overrightarrow{s}) & \llbracket \Phi \vdash \overline{\phi}(\overrightarrow{s}) \rrbracket(L,v) & = & I $ if $ (L,v) \models \overline{\phi}(\overrightarrow{s}) \\
  \llbracket \Phi \vdash \phi(\overrightarrow{s}) \rrbracket(L,v) & = & o $ if $ (L,v) \models \phi(\overrightarrow{s}) & \llbracket \Phi \vdash \overline{\phi}(\overrightarrow{s}) \rrbracket(L,v) & = & o $ if $ (L,v) \models \phi(\overrightarrow{s}) \\
\end{array}
\]
\[
\begin{array}{lcl}
\llbracket X ; \Theta \vdash \forall x . N \rrbracket & = & \forall x . \llbracket X \uplus \{ x \} ; \Theta \vdash N \rrbracket \\
\llbracket X ; \Theta \vdash \exists x . P \rrbracket & = & \forall x . \llbracket X \uplus \{ x \} ; \Theta \vdash P \rrbracket \\
\end{array}
\]
\end{small}


In the case of atoms, the functors are specified pointwise on objects,
and we must also define the (functorial) action on arrows. Let $f :
(L,v) \rightarrow (L',v')$. If the truth value of
$\phi(\overrightarrow{s})$ is the same in $(L,v)$ and $(L',v)$, we use
the identity embedding $(\id,\id)$. If the truth value of
$\phi(\overrightarrow{s})$ is different, we must have $(L,v) \models
\phi(\overrightarrow{s})$ and $(L',v) \models
\overline{\phi}(\overrightarrow{s})$ since morphisms in
$\mathcal{M}_X$ preserve truth of positive atoms. Thus we need an
embedding $I \rightarrow o$. We can take $(\epsilon_{I \multimap o},
\epsilon_{o \multimap I})$ where $\epsilon_A$ is the strategy
containing just the empty sequence. Note that $\epsilon_{o \multimap
  I} \circ \epsilon_{I \multimap o} = \epsilon_I = \id_I$ and
$\epsilon_{I \multimap o} \circ \epsilon_{o \multimap I} = \epsilon
\sqsubseteq \id_o$ ($\epsilon$ is the bottom element with respect to
$\sqsubseteq$).

We must check functoriality. We have already noted that if the truth
value of $\phi(\overrightarrow{s})$ is the same in $(L,v)$ and
$(L',v')$ then $\llbracket \phi(\overrightarrow{s}) \rrbracket(f) =
\id$, so in particular $\llbracket \phi(\overrightarrow{s})
\rrbracket(\id) = \id$. For composition, suppose $f : (L,v) \rightarrow (L',v')$ and $g : (L',v') \rightarrow (L'',v'')$. We can consider the truth value of $\phi(\overrightarrow{s})$ in each of these models (only some cases are possible, as morphisms preserve truth of positive atoms). \\

\noindent \begin{tabular}{|c|c|c|l|}
\hline
$(L,v) \models$ & $(L',v') \models$ & $(L'',v'') \models$ & $\llbracket \phi(\overrightarrow{s}) \rrbracket(g) \circ \llbracket \phi(\overrightarrow{s}) \rrbracket(g) = \llbracket \phi(\overrightarrow{s}) \rrbracket(g \circ f)$ \\
\hline
$\phi(\overrightarrow{s})$ & $\phi(\overrightarrow{s})$ & $\phi(\overrightarrow{s})$ & $(\id , \id) \circ (\id , \id) = (\id , \id)$ \\
$\phi(\overrightarrow{s})$ & $\phi(\overrightarrow{s})$ & $\overline{\phi}(\overrightarrow{s})$ & $(\epsilon , \epsilon) \circ (\id , \id) = (\epsilon, \epsilon)$\\
$\phi(\overrightarrow{s})$ & $\overline{\phi}(\overrightarrow{s})$ & $\overline{\phi}(\overrightarrow{s})$ & $(\id , \id) \circ (\epsilon , \epsilon) = (\epsilon , \epsilon)$ \\
$\overline{\phi}(\overrightarrow{s})$ & $\overline{\phi}(\overrightarrow{s})$ & $\overline{\phi}(\overrightarrow{s})$ & $(\id , \id) \circ (\id , \id) = (\id , \id)$ \\
\hline
\end{tabular}








\subsection{Semantics of Proofs}
We now extend the semantics of proof rules given in the previous section with interpretations for the rules for quantifiers, atoms and equality, completing the semantics of \textsf{WS1}. 

\noindent We first show that if $x \not \in FV(\Gamma)$ there is an
isomorphism $\dist_\Gamma : \llbracket \forall x . A , \Gamma
\rrbracket \cong \forall x . \llbracket A , \Gamma \rrbracket$ in
$\mathcal{W}^{\mathcal{M}_X^\Theta}$. Observe that there is a natural isomorphism $$\dist_\oslash :
\_ \oslash \_ \circ (\mathsf{prod} \times \id) \Rightarrow
\mathsf{prod} \circ \FamInj(\_ \oslash \_) \circ \mathsf{dst} :
\FamInj(\mathcal{G}_s) \times \mathcal{G}_s \rightarrow
\mathcal{G}_s$$ which is concretely  a family of winning strategies
$$\mathsf{prod}(\{ G_i : i \in I \}) \oslash M \rightarrow \mathsf{prod}(\{ G_i \oslash M : i \in I \})$$ given by
$\dist_\oslash = \langle \pi_i \oslash \id \rangle_i$. Each
$\dist_\oslash$ is a natural isomorphism in $\mathcal{W}_s$.

Similarly, we can define a natural isomorphism $$\dist_\multimap :
\fprod(\{ M \multimap G_i : i \in I \}) \cong M \multimap \fprod(\{
G_i : i \in I \})$$ between functors $$\_ \multimap \_ \circ
(\mathsf{prod} \times \id) \Rightarrow \mathsf{prod} \circ \FamInj(\_
\multimap \_) \circ \mathsf{dst} : \FamInj(\mathcal{G}_s) \times
\mathcal{G}_s \rightarrow \mathcal{G}_s.$$

For each $\Gamma$, we can then construct a
map $$\mathsf{dist}_{b,\Gamma} : \llbracket \Gamma \rrbracket^b_1
\circ (\mathsf{prod} \times \id) \cong \fprod \circ \FamInj(\llbracket
\Gamma \rrbracket^b_1) \circ \mathsf{dst} : \FamInj(\mathcal{G}_s)
\times \mathcal{M}_X^\Theta \rightarrow \mathcal{G}_s$$ proceeding by
induction on $\Gamma$.

Finally, given a sequent $A,\Gamma$ we define $\dist_{\Gamma}$ as the
following horizontal composition, where $b$ is the polarity of $A$. It
is easy to see by checking pointwise that the functor $\forall x
. \llbracket A , \Gamma \rrbracket$ is equal to the given
decomposition.
\begin{scriptsize}
\begin{diagram}
\forall x . \llbracket A , \Gamma \rrbracket : & \mathcal{M}_X^\Theta & \rTo^{\langle \mathsf{add}_x , \id \rangle} & \FamInj(\mathcal{M}_{X \uplus \{ x \}}^\Theta) \times \mathcal{M}_X^\Theta & \rTo^{\FamInj(A) \times \id} & \FamInj(\mathcal{G}_s) \times \mathcal{M}_X^\Theta & \rTo^{\fprod \circ \FamInj(\llbracket \Gamma \rrbracket^b_1) \circ \mathsf{dst}} & \mathcal{G}_s \\
& & \dImplies^\id & & \dImplies^{\id} & & \dImplies^{\mathsf{dist}_{b,\Gamma}^{-1}} \\
\llbracket \forall x . A , \Gamma \rrbracket : & \mathcal{M}_X^\Theta & \rTo^{\langle \mathsf{add}_x , \id \rangle} & \FamInj(\mathcal{M}_{X \uplus \{ x \}}^\Theta) \times \mathcal{M}_X^\Theta & \rTo^{\FamInj(A) \times \id} & \FamInj(\mathcal{G}_s) \times \mathcal{M}_X^\Theta & \rTo^{\llbracket \Gamma \rrbracket^b_1 \circ (\mathsf{prod} \times \id)} &  \mathcal{G}_s \\
\end{diagram}
\end{scriptsize}

\noindent Since $\dist_\Gamma$ is a natural isomorphism, and pointwise
winning, it is an isomorphism in $\mathcal{W}^{\mathcal{M}_X^\Theta}$.

\begin{proposition}
  $\pi_{v(x)} \circ {\dist_{\Gamma}}_{(L,v)} = \llbracket \Gamma
  \rrbracket^b(\pi_{v(x)})$
\label{contprodpoint}
\end{proposition}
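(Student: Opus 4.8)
The plan is to prove the equation pointwise by induction on the list $\Gamma$, following exactly the recursion used to build $\dist_{b,\Gamma}$ (and hence $\dist_\Gamma$, which is the horizontal composite of the pointwise isomorphisms $\dist_{b,\Gamma}^{-1}$ with the relevant functors). The only structural facts needed about the canonical distributivity maps are their explicit componentwise descriptions, $\pi_j \circ \dist_\oslash = \pi_j \oslash \id$ and $\pi_j \circ \dist_\multimap = \id \multimap \pi_j$, together with functoriality of $\oslash$ and $\multimap$ in each argument and the definition of the action of the endofunctor $\llbracket \Gamma \rrbracket^b$ on morphisms.

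For the base case $\Gamma = \epsilon$: $\llbracket \epsilon \rrbracket^b$ is the identity functor, $\forall x . \llbracket A , \epsilon \rrbracket$ and $\llbracket \forall x . A , \epsilon \rrbracket$ are literally the same functor, $\dist_\epsilon$ is the identity, and both sides of the claimed equation reduce to $\pi_{v(x)}$. For the inductive step, write $\Gamma = \Delta , C$; there are two sub-cases, according to whether $\llbracket \Delta , C \rrbracket^b = \llbracket \Delta \rrbracket^b \oslash \llbracket C \rrbracket$ or $\llbracket \Delta , C \rrbracket^b = \llbracket C \rrbracket \multimap \llbracket \Delta \rrbracket^b$ (determined by the polarity of $C$ relative to $b$). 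In either case ${\dist_{\Delta,C}}_{(L,v)}$ factors, by construction, as $\dist_\oslash \circ ({\dist_\Delta}_{(L,v)} \oslash \id_{\llbracket C \rrbracket})$, respectively $\dist_\multimap \circ (\id_{\llbracket C \rrbracket} \multimap {\dist_\Delta}_{(L,v)})$. Pre-composing with $\pi_{v(x)}$, applying $\pi_{v(x)} \circ \dist_\oslash = \pi_{v(x)} \oslash \id$ (resp. $\pi_{v(x)} \circ \dist_\multimap = \id \multimap \pi_{v(x)}$), pushing $\pi_{v(x)}$ through $\oslash$ (resp. $\multimap$) by functoriality, and invoking the induction hypothesis $\pi_{v(x)} \circ {\dist_\Delta}_{(L,v)} = \llbracket \Delta \rrbracket^b(\pi_{v(x)})$, the composite collapses to $\llbracket \Delta \rrbracket^b(\pi_{v(x)}) \oslash \id_{\llbracket C \rrbracket}$ (resp. $\id_{\llbracket C \rrbracket} \multimap \llbracket \Delta \rrbracket^b(\pi_{v(x)})$), which is precisely $\llbracket \Delta , C \rrbracket^b(\pi_{v(x)})$ by the definition of the functor's action on morphisms.

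I expect the only real obstacle to be bookkeeping: keeping the orientations of $\dist_\oslash$, $\dist_\multimap$ and their inverses consistent across the polarity combinations of the head formula $A$ and the appended formula $C$, and checking carefully that the horizontal composition defining $\dist_\Gamma$ does evaluate, at a fixed model $(L,v)$, to the factorization through ${\dist_\Delta}_{(L,v)}$ described above — this is where the $\mathsf{dst}$/$\FamInj$ plumbing from the previous subsection has to be unwound once. Once that is done each case is a two-line diagram chase, and no ideas beyond the already-established naturality of $\dist_\oslash$ and $\dist_\multimap$ are required.
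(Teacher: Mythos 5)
Your proposal is correct and is essentially the paper's own argument: the paper proves this by exactly the induction on $\Gamma$ you describe, mirroring the inductive construction of $\dist_{b,\Gamma}$ from $\dist_\oslash$ and $\dist_\multimap$ and using their componentwise descriptions ($\dist_\oslash = \langle \pi_i \oslash \id \rangle_i$, and $\langle \id \multimap \pi_i \rangle_i$ for the $\multimap$ direction). The only caveat is the orientation point you already flag: with the paper's convention $\dist_\multimap : \fprod(\{M \multimap G_i\}) \cong M \multimap \fprod(\{G_i\})$, the identity you need is $\pi_j \circ \dist_\multimap^{-1} = \id \multimap \pi_j$, after which your case analysis goes through verbatim.
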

\begin{proof}
  We can check this by induction on $\Gamma$, as in Proposition
  \ref{contprod}. \qed
\end{proof}

We next give semantics to the rules involving atoms and
quantifiers. We first introduce some notation. Suppose $\mathcal{C}$
is the coproduct of two categories $\mathcal{D}$ and $\mathcal{E}$
(the disjoint union of the two categories, where there are no maps
between them). If $F : \mathcal{C} \rightarrow \mathcal{G}_e$ we write
$F|_\mathcal{D}$ and $F|_\mathcal{E}$ for the restriction of $F$ to
$\mathcal{D}$ and $\mathcal{E}$ respectively. If $\eta : F \Rightarrow
G$ then we can restrict $\eta$ to a natural transformation
$F|_\mathcal{D} \Rightarrow G|_\mathcal{D}$, and we write
$\eta|_\mathcal{D}$ for this restriction. If $\eta : F|_\mathcal{D}
\Rightarrow G|_\mathcal{D}$ and $\sigma : F|_\mathcal{E} \Rightarrow
G|_{\mathcal{E}}$ then we write $[ \eta , \sigma
]_{\mathcal{D},\mathcal{E}}$ for the lax natural transformation
defined by $[ \eta , \sigma ]_A = \eta_A$ if $A \in \mathcal{D}$ and
$[\eta , \sigma]_A = \sigma_A$ if $A \in \mathcal{E}$. Lax naturality
of $[\eta , \sigma]$ inherits from lax naturality of $\eta$ and
$\sigma$, since there are no maps between $\mathcal{D}$ and
$\mathcal{E}$ when viewed as subcategories of $\mathcal{C}$.  If
$\mathcal{C} = \mathcal{M}_X^\Theta$ then we will write $[\eta ,
\sigma]_{\alpha , \beta}$ for $[\eta , \sigma]_{\mathcal{M}_X^{\Theta
    , \alpha} , \mathcal{M}_X^{\Theta , \beta}}$.

We construct an isomorphism $$H_{x,y,z} : \mathcal{M}_X^{\Theta,x=y}
\cong \mathcal{M}_{X / \{x,y\} \uplus \{ z \}}^{\Theta[ \frac{z}{x} ,
  \frac{z}{y} ]} : H^{-1}_{x,y,z}$$ with $H_{x,y,z}(M,v) = (M,v [z
\mapsto v(x)] - x - y)$ and $H^{-1}(M,v) = (M,v[x \mapsto v(z),y
\mapsto v(z)] - z)$. We can show that $\llbracket (X ; \Theta \vdash
\Gamma) [ \frac{z}{x} , \frac{z}{y} ] \rrbracket = \llbracket X ;
\Theta , x = y \vdash \Gamma \rrbracket H_{x,y,z}^{-1}$ by induction
on $\Gamma$.

Semantics of the rules involving atoms and quantifiers are given in
Figure \ref{WS1-sem}. We must justify lax naturality of
$\p_\mathsf{at-}$: the following diagram must lax commute:
\begin{diagram}
I & \rTo^{\llbracket \p_\mathsf{at-}(p) \rrbracket(M,w)} & \llbracket \phi(\overrightarrow{s}) , \Gamma \rrbracket(M,w) \\
\dTo^\id & \sqsupseteq & \dTo_{i\llbracket \phi(\overrightarrow{s}) , \Gamma \rrbracket(f)} \\
I & \rTo_{\llbracket \p_\mathsf{at-}(p) \rrbracket(L,v)} & \llbracket \phi(\overrightarrow{s}) , \Gamma \rrbracket(L,v) \\
\end{diagram}

\noindent To see this, note that if $(L,v)$ and $(M,w)$ agree on $\phi(\overrightarrow{x})$ then the diagram lax commutes by lax naturality of $\epsilon$ or $\llbracket p \rrbracket$. If they disagree, then we must have $(L,v) \models \overline{\phi}(\overrightarrow{x})$ and $(M,w) \models \phi(\overrightarrow{x})$.
We need to show that $\llbracket \p_\mathsf{at-}(p) \rrbracket(L,v) \sqsupseteq i \llbracket \phi(\overrightarrow{x}) , \Gamma \rrbracket(f) \circ \llbracket \p_\mathsf{at-}(p) \rrbracket(M,w)$.
But $\llbracket \p_\mathsf{at-}(p) \rrbracket(M,w) = p\llbracket \phi(\overrightarrow{x}) , \Gamma \rrbracket(f) \circ \llbracket \p_\mathsf{at-}(p) \rrbracket(L,v)$ as both sides map into the terminal object, so $\llbracket \p_\mathsf{at-}(p) \rrbracket(L,v) \sqsupseteq i\llbracket \phi(\overrightarrow{x}) , \Gamma \rrbracket(f) \circ p\llbracket \phi(\overrightarrow{x}) , \Gamma \rrbracket(f) \circ \llbracket \p_\mathsf{at-}(p) \rrbracket(L,v) = i \llbracket \phi(\overrightarrow{x}) , \Gamma \rrbracket(f) \circ \llbracket \p_\mathsf{at-}(p) \rrbracket(M,w)$.

\begin{figure*}[ht]
\caption{Semantics of Rules involving Atoms and Quantifiers}
\begin{small}
\centering
\label{WS1-sem}

\vspace{1ex}
\hrule
\hrule
\begin{tabular}{cc}
\\[0.2ex]

\AxiomC{$\sigma : \llbracket \Theta , \overline{\phi}(\overrightarrow{s}) \vdash \bot, \Gamma \rrbracket$}
\LeftLabel{$\p_\mathsf{at-}$}
\UnaryInfC{$[\sigma , \epsilon]_{\overline{\phi}(\overrightarrow{s}) , \phi(\overrightarrow{s})} : \llbracket \Theta \vdash \phi(\overrightarrow{s}) , \Gamma \rrbracket$}
\DisplayProof

\\[4ex]

\AxiomC{$\sigma : \llbracket \Theta , \overline{\phi}(\overrightarrow{s}) \vdash \top , \Gamma \rrbracket$}
\LeftLabel{$\p_\mathsf{at+}$}
\UnaryInfC{$\sigma : \llbracket \Theta , \overline{\phi}(\overrightarrow{s}) \vdash \overline{\phi}(\overrightarrow{x}) , \Gamma \rrbracket$}
\DisplayProof

\\[4ex]

\AxiomC{$\sigma : \llbracket (X ; \Theta \vdash \Gamma)[ \frac{z}{x} , \frac{z}{y} ] \rrbracket$}
\AxiomC{$\tau : \llbracket X ; \Theta , x \neq y \vdash \Gamma \rrbracket$}
\LeftLabel{$\p_\match^{x,y,z}$}
\BinaryInfC{$[\sigma H_{x,y,z}, \tau]_{x = y , x \neq y} : \llbracket X ; \Theta \vdash \Gamma \rrbracket$}
\DisplayProof

\\[4ex]

\AxiomC{}
\LeftLabel{$\p_{\neq}$}
\UnaryInfC{$\emptyset : \llbracket \Theta , x \neq x \vdash \Gamma \rrbracket$}
\DisplayProof

\\[4ex]

\AxiomC{$\sigma : \llbracket X \uplus \{ x \} ; \Theta \vdash N , \Gamma \rrbracket$}
\LeftLabel{$\p_\forall$}
\RightLabel{\small{$x \not \in FV(\Theta,\Gamma)$}}
\UnaryInfC{$ \dist_{\Gamma}^{-1} \circ \hat{\sigma} : \llbracket X ; \Theta \vdash \forall x . N , \Gamma \rrbracket$}
\DisplayProof

\\[4ex]

\AxiomC{$\sigma : \llbracket X ; \Theta \vdash P[s/x] , \Gamma \rrbracket$}
\LeftLabel{$\p_\exists^s$}
\RightLabel{$FV(s) \subseteq X$}
\UnaryInfC{$\sigma \circ \pi_s \circ \dist_\Gamma : \llbracket X ; \Theta \vdash \exists x . P, \Gamma \rrbracket$}
\DisplayProof

\\[4ex]




\AxiomC{$\sigma : \llbracket X ; \Theta \vdash M , \Gamma , \forall x . N , \Delta \rrbracket$}
\LeftLabel{$\p_\forall^\mathsf{T}$}
\RightLabel{$FV(s) \subseteq X$}
\UnaryInfC{$\llbracket \Delta \rrbracket^-(\id \oslash \pi_{s}) \circ \sigma : \llbracket X ; \Theta \vdash M , \Gamma , N[s/x] , \Delta \rrbracket$}
\DisplayProof

\\[4ex]

\AxiomC{$\sigma : \llbracket X ; \Theta \vdash Q , \Gamma , \forall x . N , \Delta \rrbracket$}
\LeftLabel{$\p_\forall^\mathsf{T}$}
\RightLabel{$FV(s) \subseteq X$}
\UnaryInfC{$\sigma \circ \llbracket \Delta \rrbracket^+(\pi_{s} \multimap \id) : \llbracket X ; \Theta \vdash Q , \Gamma , N[s/x] , \Delta \rrbracket$}
\DisplayProof

\\[4ex]

\AxiomC{$\sigma : \llbracket X ; \Theta \vdash M , \Gamma , P[s/x] , \Delta \rrbracket$}
\LeftLabel{$\p_\exists^\mathsf{T}$}
\RightLabel{$FV(s) \subseteq X$}
\UnaryInfC{$\llbracket \Delta \rrbracket^-(\pi_{s} \multimap \id) \circ \sigma : \llbracket X ; \Theta \vdash M , \Gamma , \exists x . P , \Delta \rrbracket$}
\DisplayProof

\\[4ex]

\AxiomC{$\sigma : \llbracket X ; \Theta \vdash Q , \Gamma , P[s/x] , \Delta \rrbracket$}
\LeftLabel{$\p_\exists^\mathsf{T}$}
\RightLabel{$FV(s) \subseteq X$}
\UnaryInfC{$\sigma \circ \llbracket \Delta \rrbracket^+(\id \oslash \pi_{s}) : \llbracket X ; \Theta \vdash Q , \Gamma , \exists x . P , \Delta \rrbracket$}
\DisplayProof

\\[4ex]

\end{tabular}

\hrule
\hrule
\end{small}
\end{figure*}


\section{Full Completeness}

We next show a full completeness result for the \emph{function-free}
fragment of \textsf{WS1}: in this section we assume that $\mathcal{L}$
contains no function symbols. Thus, the only uses of the $\p_\exists$
rule are of the form $\p_\exists^y$ where $y$ is some variable in
scope.

We show that the core rules suffice to represent any uniform
winning strategy $\sigma$ on a type object provided $\sigma$ is
\emph{bounded} --- i.e. there is a bound on the size of plays
occurring in $\sigma$. In particular, such a strategy is the semantics
of a unique \emph{analytic} proof --- a proof using only the core
rules, with some further restrictions on the use of the matching
rule. Given a sequent $X; \Theta \vdash \Gamma$, we say $\Theta$ is
\emph{lean} if it contains $x \neq y$ for all distinct $x$ and $y$ in
$X$ and does not contain $x \neq x$. We assume an arbitrary ordering
on variables.

\begin{definition}
A proof in \textsf{WS1} is \emph{analytic} if it
uses only core rules and has the following additional restrictions:
\begin{itemize}
\item Rules other than $\p_{\neq}$ and $\p_\match^{x,y,z}$ can only
  conclude sequents with a lean $\Theta$
\item If $\p_\match^{x,y,z}$ is used to conclude $X ; \Theta \vdash
  \Gamma$ then $\Theta$ does not contain $w \neq w$ for any $w$; $(x,y)$ is
  the least pair with $x,y \in X$, $x \not \equiv y$ and $x \neq y
  \not \in \Theta$; and $z$ is the least variable in $\mathsf{Fr}(X ;
  \Theta \vdash \Gamma)$ (the least fresh variable).
\end{itemize}
\end{definition}

\begin{theorem}
  Let $X ; \Theta \vdash \Gamma$ be a sequent of \textsf{WS1} and
  $\sigma$ a bounded uniform winning strategy on $\llbracket X; \Theta
  \vdash \Gamma \rrbracket$. Then there is a unique analytic proof $p$
  of $X ; \Theta \vdash \Gamma$ with $\llbracket p \rrbracket =
  \sigma$.
  \label{fullcomp}
\end{theorem}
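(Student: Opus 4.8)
The plan is to prove existence and uniqueness simultaneously, by well-founded induction on the pair $(\sigma, X;\Theta\vdash\Gamma)$ ordered lexicographically: primary component $\mathsf{depth}(\sigma)$; secondary component the ``leanness defect'' of $(X,\Theta)$, i.e. the size of $X$ together with the number of distinct pairs $x,y\in X$ with $x\neq y\notin\Theta$; tertiary component a structural size of $\Gamma$ in which each exponential $!N$ is counted as a $\mathsf{depth}(\sigma)$-fold $\oslash$-chain of copies of $N$. The engine of the argument is the remark made after Figure~\ref{coreequals}: analytic proof search is deterministic --- the shape of $X;\Theta\vdash\Gamma$ together with the behaviour of $\sigma$ forces a unique applicable core rule. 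First I would dispose of the $\Theta$-bookkeeping. If $x\neq x\in\Theta$ then $\mathcal{M}_X^\Theta$ is empty, $\sigma$ is the empty family, and $\p_{\neq}$ is forced and is the unique analytic proof. If $\Theta$ omits $x\neq y$ for the least distinct pair $(x,y)$, then the objects of $\mathcal{M}_X^\Theta$ partition into the $x=y$-models and the $x\neq y$-models, $\sigma$ decomposes accordingly exactly as in the semantics of $\p_\match^{x,y,z}$ (for $z$ the least fresh variable, as the analytic discipline demands), and the two premise strategies strictly decrease the secondary component (the left premise shrinks $X$, the right adds $x\neq y$ to $\Theta$). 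With $\Theta$ lean, the last rule is forced by the top connective of the head of $\Gamma$.

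In the main case I would follow the two-phase pattern. In the head-decomposition phase one applies the core introduction rules of Figure~\ref{coreintros}. For each top connective the corresponding rule has --- using the game isomorphisms of Figure~\ref{game-isos} and product-preservation of the context functors $\llbracket\Gamma\rrbracket^b$ (Proposition~\ref{contprod}) --- a semantics that is a \emph{bijection} between tuples of bounded uniform winning strategies on the premises and bounded uniform winning strategies of the matching shape on the conclusion; inverting it yields the (unique) premise strategies, each of $\mathsf{depth}$ at most $\mathsf{depth}(\sigma)$ and of strictly smaller structural size (for $\p_\otimes$ and $\p_\&$ both premises occur and are both determined). For ${\p_\parr}_i$, ${\p_\oplus}_i$ and $\p_\exists^y$, which variant applies is read off $\sigma$: the first move of $\sigma$ lies in the left or in the right $\parr$- or $\oplus$-component, and for $\exists x.P$, uniformity --- applied to the morphism into $(L,v)$ of the submodel generated by $v(X)$, which is a morphism of $\mathcal{M}_X^\Theta$ precisely because the language is function-free --- forces $\sigma$ to play a witness of the form $v(y)$ for a fixed $y\in X$, so that $\p_\exists^y$ is applicable. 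For an atomic head $\phi(\overrightarrow{s})$: on models of $\phi$ the head-game is the empty game so $\sigma$ is trivial there, on models of $\overline{\phi}(\overrightarrow{s})$ lax naturality shows $\sigma$ restricts to a strategy on the premise of $\p_\mathsf{at-}$, and these assemble exactly as the semantics of that rule prescribes; dually a positive atomic head with $\overline{\phi}(\overrightarrow{s})\in\Theta$ calls for $\p_\mathsf{at+}$. Once the head is a unit, a head $\mathbf{1}$ (or a negative atom true in $\Theta$) closes the branch with $\p_\mathbf{1}$, a head $\top$ with empty tail with $\p_\top$, and a head $\top$ or $\bot$ with non-empty tail is reduced by the unique applicable rule among those of Figure~\ref{coreelims} together with the base cases $\p_\top^-,\p_\bot^+$: each core elimination rule realises a game isomorphism that leaves $\sigma$ unchanged but strictly shortens the tail, while $\p_\top^-/\p_\bot^+$ promote the remaining tail formula to head position and consume its opening move, so $\mathsf{depth}(\sigma)$ strictly drops; the quantifier rules $\p_\forall,\p_\exists^y$ likewise consume an opening move. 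In every case the measure strictly decreases, the induction hypothesis supplies unique analytic proofs of the premises, and assembling these with the forced rule yields $p$.

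The one genuinely delicate point --- and the main obstacle --- is termination at $\p_!$, which replaces a head $!N$ by $N,!N$ and so \emph{naively increases} the sequent. The resolution uses that $\sigma$ is bounded: in the isomorphism $!N\cong N\oslash!N$ the opening move of the tail copy of $!N$ is blocked until the explicit copy $N$ has been opened, so $\sigma$ either never reaches the tail (then the head $N$ is processed down to a unit and the tail is discarded by $\p_\bot^-$) or reaches it only after consuming at least one move in $N$ (then the residual strategy on the tail has strictly smaller depth). Measuring each $!N$ in $\Gamma$ against the budget $\mathsf{depth}(\sigma)$, as in the definition of the tertiary component of the measure, turns this into a strict decrease, so only finitely many applications of $\p_!$ occur along any branch, the measure is well-founded, and the resulting analytic proof is \emph{finite}. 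Finally, uniqueness is the same analysis in reverse. A routine induction on analytic proofs, using closure of $\mathcal{W}^{\mathcal{M}_X^\Theta}$ under the constructions used in the semantics, shows $\llbracket q\rrbracket$ is always bounded uniform winning; and given any analytic $q$ with $\llbracket q\rrbracket=\sigma$, the analytic discipline and the form of the sequent force its last rule (with $\llbracket q\rrbracket=\sigma$ pinning down the ${\p_\parr}_i$/${\p_\oplus}_i$/$\p_\exists^y$ choices), injectivity of the rule semantics forces the denotations of its immediate subproofs, and the induction hypothesis identifies them with the subproofs of $p$; hence $q=p$.
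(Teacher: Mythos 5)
Your overall route — deterministic, semantics-guided proof search with a lexicographic termination measure, then soundness and uniqueness by induction — is the same as the paper's reification argument, but two steps as you have written them would fail. The first is termination. Your claim that the quantifier rules ``consume an opening move'' is false: quantifiers are interpreted as set-indexed additive products and sums, $\llbracket\forall x.N\rrbracket(L,v)=\prod_{l\in|L|}\llbracket N\rrbracket(L,v[x\mapsto l])$, whose plays are just tagged plays of the components, and the semantics of $\p_\forall$ and $\p_\exists^s$ (Figure \ref{WS1-sem}) is a pairing/projection, so the premise strategy has exactly the same depth as $\sigma$; only $\p_\bot^+$ and $\p_\top^-$ strip a move. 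Consequently your measure (depth, leanness defect, structural size) strictly \emph{increases} at $\p_\forall$: depth is unchanged while adding the fresh variable $x$ to $X$ enlarges the leanness defect, which you made the second component. The paper avoids this by making the defect the \emph{least} significant component and inserting two ``phase'' components (length of the tail when the head is $\bot$ or $\top$, else $\infty$; size of the head formula), so that $\p_\match$ changes only the last component while every core rule other than $\p_\bot^+/\p_\top^-$ decreases one of the middle ones. Your exponential bookkeeping has a similar problem: counting $!N$ as a $\mathsf{depth}(\sigma)$-fold $\oslash$-chain does not strictly decrease at $\p_!$, since the premise $N,!N,\Gamma$ carries one more copy of $N$ at the same depth. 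No such device is needed: $\p_!$ decreases the head-formula size, and the copy of $!N$ pushed into the tail only returns to head position after a depth-decreasing $\p_\bot^+$ or $\p_\top^-$ step, which is exactly what the paper's ordering exploits.

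The second gap is the uniform-choice step. For ${\p_\oplus}_i$, ${\p_\parr}_i$ and $\p_\exists^y$ you say the branch ``is read off $\sigma$'', but $\sigma$ is a \emph{family} of strategies, and the whole point is to show that every member makes the same component choice, and for $\exists$ the same variable $y$; this is where leanness of $\Theta$ does real work. The paper proves it (Propositions \ref{firstkeyobs}--\ref{choice2}) by constructing the term model $(X,\id)$ and the amalgam $(L,v)\sqcup(M,w)$ into which both given models embed, and then using lax naturality together with totality/maximality to force agreement; for $\exists$ an auxiliary two-element extension additionally rules out witnesses outside the image of the valuation. Your observation about the submodel generated by $v(X)$ only yields the latter fact (the witness is some $v(y)$, per model); the cross-model agreement of $y$, and of the $\oplus/\parr$ branch, is asserted but not proved in your sketch and must be supplied before the induction can even name the rule to apply.
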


All strategies on the denotations of exponential-free sequents are
bounded. Consequently, in the affine fragment we can perform
reduction-free normalisation from proofs to (cut-free) core proofs, by
reification of their semantics. We thus see that all of the non-core
rules are admissible (when restricted to this fragment).


The rest of this section sketches the proof of this full completeness
result, and describes an extension to reify unbounded strategies as
\emph{infinitary} analytic proofs. We perform a semantics-guided proof
search procedure, following \cite{HO_PCF,AJM_PCF,Lau_PG,m_ag4}.

\subsection{Uniform Choice}

When constructing a proof of a given sequent out of core rules there
is a choice of which rule to use when the outermost head connective is
$\oplus$ (either $\p_\oplus^1$ or $\p_\oplus^2$) or $\exists$ (which
$s$ to use in $\p_\exists^s$). Our choice of rule will depend on the
given strategy, depending on which component Player plays in
first. However, the input to our procedure is a family of strategies,
and we need to ensure that the same component choice is made in each
strategy. We will next show that our uniformity condition ensures
this.


\begin{proposition}
If $\Theta$ is lean and $(L,v),(M,w) \in \mathcal{M}_X^\Theta$ there exists an $\mathcal{L}$-model $(L,v) \sqcup (M,w)$ with maps $f_{(L,v,M,w)} : (L,v) \rightarrow (L,v) \sqcup (M,w)$ and $g_{(L,v,M,w)} : (M,w) \rightarrow (L,v) \sqcup (M,w)$.
\label{firstkeyobs}
\end{proposition}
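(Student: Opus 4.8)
\textbf{Proof strategy for Proposition \ref{firstkeyobs}.}

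The plan is to construct $(L,v) \sqcup (M,w)$ as an amalgam of the two $\mathcal{L}$-structures over the common "skeleton" picked out by the variable set $X$. Since $\Theta$ is lean, it contains $x \neq y$ for all distinct $x,y \in X$, so both $v$ and $w$ are injective on $X$; moreover morphisms in $\mathcal{M}_X^\Theta$ are forced to be injective (as noted just after the definition of $\mathcal{M}_\Theta^X$, because the positive atom $\neq$ must be preserved). The idea is therefore to take the disjoint union $|L| \uplus |M|$ and then quotient by the smallest equivalence relation identifying $v(x)$ with $w(x)$ for each $x \in X$; call the resulting set $|L \sqcup M|$, with quotient-inclusion maps $f : |L| \to |L\sqcup M|$ and $g : |M| \to |L\sqcup M|$. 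The valuation on $|L \sqcup M|$ sends $x$ to the common class of $v(x)$ and $w(x)$; by leanness this is well-defined and still injective, so $\Theta$ (which only constrains atoms over $X$, using inequalities) is satisfied by the result.

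Next I would equip $|L\sqcup M|$ with the interpretation of the function and predicate symbols. For function symbols this is the delicate point: I would define $I_{L\sqcup M}(h)$ on a tuple of classes by choosing representatives all drawn from a single side when possible (i.e.\ if the whole tuple lies in the image of $f$, use $f \circ I_L(h)$; if in the image of $g$, use $g \circ I_M(h)$) and otherwise picking any fixed value — since we are only asked for the \emph{existence} of an $\mathcal{L}$-model with the two maps, we have freedom here. However, because we are in the function-free fragment for the full completeness proof (and indeed $\mathcal{L}$ here may contain function symbols in general, but the use of this proposition is governed by the surrounding development), the cleanest route, and the one I expect the authors take, is to observe that well-definedness of $f$ and $g$ as $\mathcal{L}$-morphisms reduces to preservation of positive atoms and commutation with function symbols, and to define $I_{L\sqcup M}$ precisely so that these hold on the images of $f$ and $g$; for tuples straddling both images one assigns arbitrary values, which is harmless. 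For predicate symbols: set a positive atom $\overline{\phi}(\vec a)$ true in $L\sqcup M$ iff $\vec a$ is (the image under $f$ of) a tuple making $\overline{\phi}$ true in $L$, or (the image under $g$ of) one making it true in $M$; the complementary negative atom gets the opposite value; for $=$ one uses literal equality of classes, which is forced and is compatible because $v,w$ are injective on $X$ so no spurious identifications among named elements occur.

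Finally I would check the two required facts: that $f$ and $g$ are genuine morphisms in $\mathcal{M}_X^\Theta$ (they respect the valuations by construction, preserve positive atoms by the definition of the predicate interpretation, and commute with function symbols by the construction of $I_{L\sqcup M}$ on the relevant images), and that $(L,v)\sqcup(M,w)$ itself satisfies every formula of $\Theta$ (immediate, since $\Theta$ is a set of positive atomic formulas over $X$, each of which is preserved along $f$ from $(L,v)$). The main obstacle is the function-symbol case: ensuring $I_{L\sqcup M}(h)\circ (f\times\cdots\times f) = f\circ I_L(h)$ and the analogue for $g$ simultaneously, which is only possible because $f$ and $g$ have overlapping images exactly on the classes of the $X$-named elements, and there $v$ and $w$ already agree on the relevant function values in the sense needed (any disagreement would contradict $f,g$ being the inclusions of a common substructure); the function-free hypothesis in the surrounding section removes this obstacle entirely, so I would note that simplification and give the amalgam construction in full only in that setting.
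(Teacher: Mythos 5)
Your construction is correct in the setting where the proposition is actually used, and it is essentially the paper's construction in disguise: since leanness forces $v$ and $w$ to be injective, your quotient of $|L|\uplus|M|$ identifying $v(x)$ with $w(x)$ is exactly the paper's carrier $X \uplus U_{(L,v)} \uplus U_{(M,w)}$ (where $U_{(L,v)}$ is the set of elements outside the image of $v$), with valuation $\mathsf{inj}_1$ and $f,g$ the evident injections. The differences are two. First, for the predicates the paper does not take the ``union'' of the two truth assignments but simply declares \emph{all} positive atoms true in the amalgam (equality aside, which is forced); since morphisms need only \emph{preserve} positive atoms, this maximal choice makes the morphism checks and the verification of $\Theta$ completely trivial, whereas your union assignment needs (and gets) a small well-definedness check. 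Second, your discussion of function symbols is unnecessary: the Full Completeness section opens by assuming $\mathcal{L}$ has no function symbols, which is the only setting in which this proposition is stated and used. Be aware that your parenthetical general argument there (``any disagreement would contradict $f,g$ being the inclusions of a common substructure'') is circular and would not rescue the amalgam in the presence of function symbols — $I_L(h)$ and $I_M(h)$ may genuinely disagree on the named elements, and then no single value for $I_{L\sqcup M}(h)$ on the shared class makes both squares commute — so the retreat to the function-free hypothesis, which you do make, is the correct and only resolution.
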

\begin{proof}
If $(L,v)$ is an $\mathcal{L}$-model, define $U_{(L,v)}$ to be the elements of $|L|$ not in the image of $v$. Then the carrier of $(L,v) \sqcup (M,w)$ is defined to be $X \uplus U_{(L,v)} \uplus U_{(M,w)}$. The $\mathcal{L}$-structure validates all positive atoms, and the valuation is just $\mathsf{inj}_1$. Then the map $f_{(L,v,M,w)}$ sends $v(x)$ to $\mathsf{inj}_1(x)$ and $u \in U_{(L,v)}$ to $\mathsf{inj}_2(u)$. This is an injection because $\Theta$ is lean. $g_{(L,v,M,w)}$ is defined similarly. \qed
\end{proof}


We also recall that if $f : (L,v) \rightarrow (M,w)$ then $\sigma_{(L,v)}$ is determined entirely by $f$ and $\sigma_{(M,w)}$. In particular, uniformity for positive strategies $\sigma : N \Rightarrow o$ requires that $\sigma_{(L,v)} \sqsubseteq \sigma_{(M,w)} \circ N(f)$ but since $\sigma_{(L,v)}$ is total, it is maximal in the ordering and so we must have $\sigma_{(L,v)} = \sigma_{(M,w)} \circ N(f)$.

\begin{proposition}
Let $X ; \Theta \vdash \Gamma$ be a sequent and suppose $\Theta$ is lean. Then there exists an object in $\mathcal{M}_X^\Theta$.
\label{leannonempty}
\end{proposition}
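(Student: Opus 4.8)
The plan is to build an explicit witness model. Take the carrier to be $|L| = X$ itself (if $X = \emptyset$, take instead $|L| = \{\ast\}$ for a fresh element $\ast$, to keep the domain nonempty), with valuation $v = \mathrm{id}_X$. For the interpretation function, set $I_L(=)$ to be genuine equality on $|L|$, and interpret every other positive predicate symbol $\overline{\phi}$ as the constant function $\tru$; the complementation requirement then forces $I_L(\phi)$ to be the constant function $\ff$, and since $\mathcal{L}$ has no function symbols in this section there is nothing further to specify.

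First I would verify that $(L,v)$ is a genuine $\mathcal{L}$-structure: the requirement $I_L(\phi)(\overrightarrow{a}) \neq I_L(\overline{\phi})(\overrightarrow{a})$ holds for $\phi \not\equiv {=}$ because the two sides are respectively the constants $\ff$ and $\tru$; and for $\phi = {=}$ it holds because $I_L(=)$ is genuine equality, whose complement $\neq$ is interpreted correctly by construction. Also $v : X \rightarrow |L|$ is by definition a valuation over $X$, so $(L,v)$ is an $\mathcal{L}$-model over $X$.

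Next I would check $(L,v) \models \Theta$. Since $\Theta$ is a set of positive atomic formulas with $FV(\Theta) \subseteq X$, and $\mathcal{L}$ has no function symbols, each member of $\Theta$ has the form $\overline{\phi}(x_1,\ldots,x_n)$ with each $x_i \in X$. If $\overline{\phi}$ is not $\neq$, then $(L,v) \models \overline{\phi}(\overrightarrow{x})$ because $I_L(\overline{\phi})$ is constantly $\tru$. If $\overline{\phi}$ is $\neq$, the atom is $x_i \neq x_j$; since $\Theta$ is lean it does not contain $x \neq x$ for any $x$, so $x_i \not\equiv x_j$ as variables, and hence $v(x_i) = x_i \neq x_j = v(x_j)$ in $|L|$, so the atom is satisfied. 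Thus $(L,v) \in \mathcal{M}_X^\Theta$, which proves the proposition.

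\textbf{Main obstacle.} There is no substantial difficulty: the only genuine subtlety is the degenerate case $X = \emptyset$ (handled by adjoining a point), and the bulk of the argument is routine bookkeeping against the definitions of $\mathcal{L}$-structure, $\mathcal{L}$-model, and satisfaction. Note that leanness is used in exactly one place — to rule out $x \neq x \in \Theta$, which is the only obstruction to satisfiability of a set of positive atoms.
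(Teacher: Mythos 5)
Your proof is correct and takes essentially the same route as the paper: the paper also builds the witness $(X,\id)$ with carrier $X$ and identity valuation, declaring positive atoms true (it makes exactly those in $\Theta$ true, whereas you make all non-equality positive atoms true --- an immaterial difference), and uses leanness only to rule out $x \neq x$ while the identity valuation handles the atoms $x \neq y$ for distinct variables. Your explicit treatment of $=$ and of the degenerate case $X = \emptyset$ are harmless refinements the paper leaves implicit.
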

\begin{proof}
Note that $\Theta$ just contains positive atoms. We can take $(X,\id)$, with $(X,\id) \models \overline{\phi}(\overrightarrow{x})$ just if $\overline{\phi}(\overrightarrow{x}) \in \Theta$. Then each formula in $\Theta$ is satisfied: each such formula is either $\overline{\phi}(\overrightarrow{x})$, or $x \neq y$ for distinct $x,y$. \qed
\end{proof}

We now use the above lemmas to show that in any uniform winning
strategy on a sequent whose head formula is $P \oplus Q$, either all
strategies play their first move in $P$, or all strategies play their
first move in $Q$.

\begin{proposition}
Let $M_1 , M_2 : \mathcal{M}_X^\Theta \rightarrow \mathcal{G}_e$. Suppose $\Theta$ is lean, and let $\sigma : M_1 \times M_2 \Rightarrow o$ be a uniform total (resp. winning) strategy. Then $\sigma = \tau \circ \pi_1$ for some uniform total (resp. winning) strategy $\tau : M_1 \Rightarrow o$, or $\sigma = \tau \circ \pi_2$ for some uniform total (resp. winning) strategy $\tau : M_2 \Rightarrow o$. \label{choice1}
\end{proposition}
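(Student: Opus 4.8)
The plan is to use the finality/choice machinery just developed. Since the head formula is $P\oplus Q$, for each model $(L,v)\in\mathcal{M}_X^\Theta$ the game $\llbracket P\oplus Q,\Gamma\rrbracket(L,v)$ decomposes, via the isomorphism $(P\oplus Q)\oslash N\cong (P\oslash N)\oplus (Q\oslash N)$ (and its iterates along $\Gamma$), into $M_1(L,v)\times M_2(L,v)$ with Player's first move being the choice of a component; concretely $\llbracket M_1,M_2:\cdot\rrbracket$ are the two functors as in the statement. A winning (in particular total) strategy $\sigma_{(L,v)}:M_1(L,v)\times M_2(L,v)\Rightarrow o$ is nonempty, so it contains some move, which is necessarily a choice of one of the two projections; thus for each $(L,v)$ there is $c(L,v)\in\{1,2\}$ and a total (resp.\ winning) strategy $\tau_{(L,v)}:M_{c(L,v)}(L,v)\Rightarrow o$ with $\sigma_{(L,v)}=\tau_{(L,v)}\circ\pi_{c(L,v)}$. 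The crux is to show $c$ is \emph{constant}, and then that the family $(\tau_{(L,v)})$ is itself uniform.

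First I would show $c$ is constant. Suppose $c(L,v)=1$ and $c(M,w)=2$ for some two models; since $\Theta$ is lean, Proposition~\ref{leannonempty} gives that $\mathcal{M}_X^\Theta$ is nonempty and, more to the point, Proposition~\ref{firstkeyobs} yields a model $(L,v)\sqcup(M,w)$ with morphisms $f:(L,v)\to(L,v)\sqcup(M,w)$ and $g:(M,w)\to(L,v)\sqcup(M,w)$. As recalled just before the statement, for a total uniform strategy the lax-naturality inequality is in fact an equality: $\sigma_{(L,v)}=\sigma_{(L,v)\sqcup(M,w)}\circ (M_1\times M_2)(f)$ and likewise for $g$. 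Now $(M_1\times M_2)(f)$ is an embedding pair respecting the coproduct structure — the first Player move in the codomain is a component choice, and $f$ preserves/reflects this — so the component Player plays first in $\sigma_{(L,v)\sqcup(M,w)}$ must equal $c(L,v)=1$; the same argument through $g$ forces it to equal $c(M,w)=2$, a contradiction. Hence $c$ is some fixed $k\in\{1,2\}$; WLOG $k=1$.

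Next I would verify $\tau:=(\tau_{(L,v)})$ is a uniform total (resp.\ winning) strategy $M_1\Rightarrow o$, i.e.\ a lax natural transformation $i\circ M_1\Rightarrow i\circ(\kappa_o)$. Each $\tau_{(L,v)}$ is total (resp.\ winning) because $\sigma_{(L,v)}$ is and the component choice merely prefixes one Player move. For lax naturality, take $h:(L,v)\to(M,w)$; from $\sigma_{(L,v)}=\sigma_{(M,w)}\circ(M_1\times M_2)(h)$, the fact that $(M_1\times M_2)(h)$ is the pair $(M_1(h),M_2(h))$ on the embedding side, and the compatibility of the coproduct isomorphism with the first move, one extracts $\tau_{(L,v)}=\tau_{(M,w)}\circ M_1(h)$ — which is exactly the (strengthened, total-case) lax-naturality condition. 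Then $\sigma=\tau\circ\pi_1$ holds pointwise by construction, and $\pi_1$ is itself the obvious uniform strategy $M_1\times M_2\Rightarrow M_1$, so the composite is as claimed.

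\textbf{Main obstacle.} The delicate point is the constancy of $c$: it requires knowing that the embedding $M_i(h)$ associated to a morphism $h$ of $\mathcal{L}$-models genuinely preserves the "choice of $\oplus$-component" structure of the first Player move, i.e.\ that the coproduct decomposition $\llbracket P\oplus Q,\Gamma\rrbracket\cong M_1\times M_2$ is natural in the model and compatible with the tree-embedding functors $M_i:\mathcal{M}_X^\Theta\to\mathcal{G}_e$. This is where the careful bookkeeping of how atoms act on arrows (the $(\epsilon,\epsilon)$ vs.\ $(\id,\id)$ embeddings) and the functoriality lemmas for $\oplus$, $\oslash$, $\lhd$ on $\mathcal{G}_e$ must be invoked; everything else is routine manipulation of the already-established equalities for total uniform strategies.
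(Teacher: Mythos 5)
Your proposal is correct and follows essentially the same route as the paper: pointwise decomposition of each $\sigma_{(L,v)}$ as $\tau_{(L,v)}\circ\pi_{c(L,v)}$ by the first-move ($\pi$-atomicity) argument, constancy of the component index via the amalgamated model $(L,v)\sqcup(M,w)$ of Proposition~\ref{firstkeyobs} together with the equality form of lax naturality for total strategies (using $\pi_k\circ(M_1\times M_2)(f)=M_k(f)\circ\pi_k$, which holds by definition of the product's action on embeddings, so your ``main obstacle'' is not really an obstacle in the abstract setting), and then uniformity of $\tau$. The only cosmetic difference is the last step: you cancel $\pi_1$ directly to get lax naturality of $\tau$, whereas the paper obtains it for free by writing $\tau=\sigma\circ\iota_1$ with $\iota_1=\langle\id,\epsilon\rangle:M_1\Rightarrow M_1\times M_2$.
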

\begin{proof}
We know that each $\sigma_{(L,v)}$ is of the form $\tau_{(L,v)} \circ \pi_i$ for some $i \in \{ 1 , 2 \}$ since in the game $M_1(L,v) \times M_2(L,v) \multimap o$ we must respond to the initial Opponent-move either with a move in $M_1$ or a move in $M_2$ (the $\pi$-atomicity condition). But we need to check that $i$ is uniform across components. Suppose that $i$ is not uniform --- then we have $(L,v)$ and $(T,w)$ with $\sigma_{(L,v)} = \tau_{(L,v)} \circ \pi_1$ and $\sigma_{(T,w)} = \tau_{(T,w)} \circ \pi_2$. Now consider $(L,v) \sqcup (T,w)$ and let $k$ be such that $\sigma_{(L,v) \sqcup (T,w)} = \tau_{(L,v) \sqcup (T,w)} \circ \pi_k$. By uniformity and totality, $\sigma_{(L,v)} = \sigma_{(L,v) \sqcup (T,w)} \circ (M_1 \times M_2)(f_{(L,v,T,w)}) = \tau_{(L,v) \sqcup (T,w)} \circ \pi_k \circ (M_1 \times M_2)(f_{(L,v,T,w)}) = \tau_{(L,v) \sqcup (T,w)} \circ M_k(f_{(L,v,T,w)}) \circ \pi_k$. But since $\sigma_{(L,v)}$ is of the form $\tau_{(L,v)} \circ \pi_1$, we must have $k = 1$. But we can reason similarly using $\sigma_{(T,w)}$ and $g_{(L,v,T,w)}$ and discover that $k = 2$. This is a contradiction.

Thus there is some $i$ such that each $\sigma_{(L,v)}$ can be decomposed into $\tau_{(L,v)} \circ \pi_i$. In particular, we can take $i$ such that $\sigma_{(X,\id)} = \tau_{(X,\id)} \circ \pi_i$ where $(X,\id)$ is as defined in Proposition \ref{leannonempty}. We only need to show that $\tau$ is lax natural. We can construct a natural transformations $\iota_1 : \langle \id , \epsilon \rangle : M_1 \rightarrow M_1 \times M_2$ and $\iota_2 : \langle \epsilon , \id \rangle : M_2 \rightarrow M_1 \times M_2$. Then $\tau = \sigma \circ \iota_i$, and so is lax natural. \qed
\end{proof}


We next show that in any uniform family of winning
strategies on a sequent with head $\exists x. P$, Player chooses the
same $x$ in each strategy component. Moreover, the chosen $x$ is the value of some variable in scope.

\begin{proposition}
Let $M : \mathcal{M}_{X \uplus \{ x \}}^\Theta \rightarrow \mathcal{G}_e$. Suppose $\Theta$ is lean, and let $\sigma : \forall x . M \Rightarrow o$ be a uniform total (resp. winning) strategy. Then there exists a unique variable $y \in X$ and uniform total (resp. winning) strategy $\tau : M \mathsf{set}^{x}_y \Rightarrow o$ such that $\sigma = \tau \circ \pi_{y}$. \label{choice2}
\end{proposition}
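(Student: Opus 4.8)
The statement is the existential-quantifier analogue of Proposition \ref{choice1}, and the plan is to follow the same three-phase strategy: (1) pointwise, show that each $\sigma_{(L,v)}$ picks \emph{some} domain element as Player's first move; (2) use the uniformity/totality argument together with the pushout-like construction $\sqcup$ of Proposition \ref{firstkeyobs} to show that this choice is forced to be the image of some fixed variable $y \in X$, the same for every model; (3) package the residual data into a genuine uniform (winning) strategy $\tau$ and verify lax naturality. First I would unpack the semantics of the head connective: $\forall x . M$ applied to $(L,v)$ is $\prod_{l \in |L|} M(L, v[x \mapsto l])$, so a strategy $\sigma_{(L,v)} : \prod_{l} M(L,v[x\mapsto l]) \Rightarrow o$ must, by $\pi$-atomicity (the same property invoked in Proposition \ref{choice1}), respond to Opponent's opening move with a move in exactly one component $l \in |L|$, i.e. $\sigma_{(L,v)} = \tau_{(L,v)} \circ \pi_{l}$ for a unique $l = l(L,v)$.

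\textbf{Forcing the choice to be a variable.} The key step — and the one that differs from the $\oplus$ case — is showing not merely that $l(L,v)$ is ``uniform'' but that it equals $v(y)$ for a fixed $y \in X$. I would argue as follows. Given any $f : (L,v) \rightarrow (M,w)$ in $\mathcal{M}_X^\Theta$, uniformity and totality give $\sigma_{(L,v)} = \sigma_{(M,w)} \circ (\forall x . M)(f)$; tracing through the definition of $(\forall x . M)(f)$ on products (its projection part is $\langle p\llbracket M \rrbracket(f) \circ \pi_{f(l)}\rangle_l$, as spelled out after the adjunction proposition), this forces $l(L,v)$ to lie in the image of $f$ and to satisfy $f(l(L,v)) = l(M,w)$. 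Now take the specific model $(X,\id)$ from Proposition \ref{leannonempty} and set $y := l(X,\id) \in |X| = X$ — I must check $y$ is really \emph{in $X$} rather than some fresh element, but $(X,\id)$ has carrier exactly $X$, so this is automatic. For an arbitrary $(L,v)$, glue with $(X,\id)$ via $(L,v) \sqcup (X,\id)$ (Proposition \ref{firstkeyobs}) and run the transfer argument in both directions, exactly as in Proposition \ref{choice1}: from the $(X,\id)$-side we learn $l$ of the glued model is $f_{(X,\id,\dots)}(y)$, and from the $(L,v)$-side we learn it is $f_{(L,v,\dots)}(l(L,v))$; since $f_{(X,\id,\dots)}$ sends $y = \id(y)$ to $\mathsf{inj}_1(y)$ and $f_{(L,v,\dots)}$ sends $v(y)$ to $\mathsf{inj}_1(y)$ (the glued valuation being $\mathsf{inj}_1$), injectivity forces $l(L,v) = v(y)$. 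Uniqueness of $y$ is immediate: two variables $y, y'$ with $v(y) = v(y')$ for all $(L,v) \in \mathcal{M}_X^\Theta$ would be identified already in $(X,\id)$, contradicting $y \not\equiv y'$ (here leanness of $\Theta$ guarantees distinct variables are genuinely distinct in $(X,\id)$).

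\textbf{Building $\tau$ and finishing.} With $y$ in hand, I would define $\tau : M\,\mathsf{set}^x_y \Rightarrow o$ componentwise by $\tau_{(L,v)} := \tau_{(L,v)}$ from step (1) — noting $M\,\mathsf{set}^x_y (L,v) = M(L, v[x \mapsto v(y)])$ is precisely the $l(L,v) = v(y)$ component of the product — so that by construction $\sigma = \tau \circ \pi_y$, where $\pi_y$ is the map $\set'^x_y(\eta)$ described just before the Semantics of Sequents subsection. Lax naturality of $\tau$ follows from that of $\sigma$ by precomposing with the natural embedding that picks out the $y$-component: concretely, there is a natural transformation $\iota_y : M\,\mathsf{set}^x_y \Rightarrow \forall x . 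M$ (the ``$y$-th injection'', with the other coordinates the empty strategy $\epsilon$), and $\tau = \sigma \circ \iota_y$, which is lax natural as a vertical composite. Totality (resp. winningness) of each $\tau_{(L,v)}$ is inherited from $\sigma_{(L,v)}$ since $\tau_{(L,v)}$ is just $\sigma_{(L,v)}$ viewed on a subgame that it already confines play to. The main obstacle is the middle step — pinning down that the uniform choice is not just consistent across models but is actually realised by a \emph{variable} in $X$ rather than by ``new'' elements — and this is exactly where the construction of $(X,\id)$ and the gluing $\sqcup$ (both available from the preceding propositions) do the work; leanness of $\Theta$ is essential throughout to make $(X,\id)$ an object of $\mathcal{M}_X^\Theta$ and to make the relevant maps injective.
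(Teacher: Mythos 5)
Your proof is correct, and its middle step takes a genuinely leaner route than the paper's. The paper argues in two phases: first, for each fixed $(L,v)$ it rules out by contradiction that the chosen component is some $u \in U_{(L,v)}$ outside the image of the valuation, building the auxiliary model $X \uplus \{a,b\} \uplus U_{(L,v)}$ and two morphisms $m_1,m_2$ sending $u$ to the distinct fresh elements $a$ and $b$; only then does it glue with $(X,\id)$ along both legs to force the same variable in every model. Your single transfer lemma --- for any $f:(L,v)\rightarrow(M,w)$, totality plus lax naturality force the component chosen at $(M,w)$ to be the $f$-image of the component chosen at $(L,v)$ --- subsumes both phases: applied along the two legs of $(L,v)\sqcup(X,\id)$ it pins the glued model's component to $\mathsf{inj}_1(y)$ (since $U_{(X,\id)}=\emptyset$, the choice $y$ at $(X,\id)$ is automatically a variable), and injectivity of $f_{(L,v,X,\id)}$, which separates $\mathsf{inj}_1$-tagged from $\mathsf{inj}_2$-tagged elements, then gives $l(L,v)=v(y)$ and excludes non-variable choices for free, so the auxiliary two-fresh-element model is never needed. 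What each buys: the paper's per-model contradiction is self-contained, whereas your lemma isolates the reusable content (it is essentially the computation the paper performs twice) and shortens the argument; your explicit injection $\iota_y$ for lax naturality of $\tau$ mirrors the $\iota_i$ device of Proposition \ref{choice1} and is more detailed than the paper's ``we can easily check'' at this point. Two cosmetic slips only: the restriction equation $\sigma_{(L,v)}=\sigma_{(M,w)}\circ(\forall x.M)(f)$ involves the \emph{embedding} part of $(\forall x.M)(f)$, not the projection part you quote (the trace is the same either way); and it is the component chosen at the target, not $l(L,v)$, that is forced to ``lie in the image of $f$'' --- the equation $f(l(L,v))=l(M,w)$, which is what you actually use, is the correct statement (likewise, the map out of $(X,\id)$ into the glued model is the paper's $g_{(L,v,X,\id)}$ rather than an $f$-indexed one, which is immaterial).
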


\begin{proof}
We firstly show that given any $\mathcal{L}$-model $(L,v)$ there is some $x$ with $\sigma_{(L,v)} = \tau_{(L,v)} \circ \pi_{v(x)}$. Suppose for contradiction that $\sigma_{(L,v)} = \tau_{(L,v)} \circ \pi_u$ for some $u \in U_{(L,v)}$. Build the $\mathcal{L}$-model $L' = X \uplus \{ a , b \} \uplus U_{(L,v)}$ with valuation $\mathsf{inj}_1$ and validating all positive atoms. Let $\sigma_{(L',\mathsf{inj}_1)} = \tau_{(L',\mathsf{inj}_1)} \circ \pi_r$. Define $m_1 : (L,v) \rightarrow (L',\mathsf{inj}_1)$ sending $v(x)$ to $\mathsf{inj}_1(x)$, $u$ to $\mathsf{inj}_2(a)$ and $v \in U_{(L,v)} - \{ u \}$ to $\mathsf{inj}_3(v)$. Then $\sigma_{(L,v)} = \sigma_{(L',\mathsf{inj}_1)} \circ \pi_r \circ \forall x . M(m_1)$.
\begin{itemize}
\item If $r = \mathsf{inj}_2(b)$ then this is $\sigma_{(L',\mathsf{inj}_1)} \circ \epsilon$ which is $\epsilon$ as $\sigma_{(L',\mathsf{inj}_1)}$ must be strict (as its total and a map into $o$). This is impossible.
\item If $r = \mathsf{inj}_1(x)$ then this is $\sigma_{(L',\mathsf{inj}_1)} \circ M(m_1) \circ \pi_{v(x)}$, which is impossible by assumption.
\item Hence we must have $r = \mathsf{inj}_2(a)$.
\end{itemize}
Define $m_2 : (L,v) \rightarrow (L',\mathsf{inj}_1)$ sending $v(x)$ to $\mathsf{inj}_1(x)$, $u$ to $\mathsf{inj}_2(b)$ and $v \in U_{(L,v)} - \{ u \}$ to $\mathsf{inj}_3(v)$. We can use similar reasoning to show that $r = \mathsf{inj}_2(b)$. This is a contradiction.

Hence, given any $(L,v)$ there is some variable $x$ such that $\sigma_{(L,v)} = \tau_{(L,v)} \circ \pi_{v(x)}$. Let $y \in X$ be the unique variable such that $\sigma_{(X,\id)} = \tau_{(X,\id)} \circ \pi_y$ where $(X,\id)$ is constructed as in Proposition \ref{leannonempty}. We now show the stronger fact that $\sigma_{(L,v)} = \tau_{(L,v)} \circ \pi_{v(y)}$. Suppose that $\sigma_{(L,v)} = \tau_{(L,v)} \circ \pi_{v(x)}$ and $\sigma_{(L,v) \sqcup (X,\id)} = \tau_{(L,v) \sqcup (X,\id)} \circ \pi_{\mathsf{inj}_1(z)}$. By lax naturality, $\tau_{(L,v)} \circ \pi_{v(x)} = \sigma_{(L,v)} = \sigma_{(L,v) \sqcup (X,\id)} \circ \forall x . M(f_{(L,v,X,\id)}) = \tau_{(L,v) \sqcup (X,\id)} \circ \pi_{v(z)} \circ \forall x . M(f_{(L,v,X,\id)})$. Since $\mathsf{inj}_1(z) = f_{(L,v,X,\id)}(v(z))$, we have $\sigma_{(L,v)} = \tau_{(L,v) \sqcup (X,\id)} \circ M(f_{(L,v,X,\id)}) \circ \pi_{v(z)}$ and so we must have $x = z$. By similar reasoning using $g_{(L,v,X,\id)}$, we see that $y = z$, so $x = y$.


Hence there is a variable $y$ such that for all $(L,v)$, $\sigma_{(L,v)} = \tau_{(L,v)} \circ \pi_{v(y)}$ for some $\tau_{(L,v)} : M(L,v[x \mapsto v(y)]) \Rightarrow o$. Since $\Theta$ is lean, $y$ is the unique variable such that $\sigma_{(L,v)} = \tau_{(L,v)} \circ \pi_{v(y)}$. Note that $M(L,v[x \mapsto v(y)]) = M(\mathsf{set}^x_y(L,v))$. We can easily check that the resulting transformation $\tau : M \mathsf{set}^x_y \Rightarrow o$ is lax natural. \qed
\end{proof}




\subsection{Reification of Strategies}

We define a procedure $\reify$ which transforms a bounded uniform
winning strategy on a formula object into a proof of that formula. It
may be seen as a semantics-guided proof search procedure: given such a
strategy $\sigma$ on the interpretation of $\Gamma$, $\reify$ finds a
proof which denotes it. Reading upwards, the procedure first 
decomposes the head formula into a unit (nullary connective) using the
head introduction rules. If this unit is $\mathbf{1}$, we are done. It
cannot be $\mathbf{0}$, as there are no (total) strategies on this
game. If the unit is $\top$ or $\bot$, the procedure then consolidates
the tail of $\Gamma$ into a single formula, using the core elimination
rules. Once this is done, the head unit is removed using $\p_\bot^+$
or $\p_\top^-$, strictly decreasing the size of the sequent. These
steps are then repeated until termination. We further have to deal
with equality: whenever a free variable is introduced, we must
consider if it is equal to each of the other free variables using the
$\p_\match$ rule.

Informally, if $\Theta$ is not lean:
\begin{itemize} 
\item If $\Theta$ contains $x \neq x$ we use $\p_{\neq}$ and halt.
\item Otherwise, we consider the least two variables $x,y \in X$ that are not declared distinct by $\Theta$ and split the family into those models that identify $x$ and $y$, and those that do not. In the former case, we can substitute fresh $z$ for both $x$ and $y$. We then apply the inductive hypothesis to both halves and apply $\p_\match^{x,y,z}$ using $H^{-1}_{x,y,z}$.
\end{itemize}
If $\Theta$ is lean, then:
\begin{itemize}
\item The case $\Gamma = \mathbf{0} , \Gamma'$ is impossible: there are no total strategies on this game.
\item If $\Gamma = \mathbf{1} , \Gamma'$ then $\sigma$ must be the empty strategy, since it is the unique total strategy on this game. This is the interpretation of the proof $\p_\mathbf{1}$.
\item If $\Gamma = \top$ then $\sigma$ must similarly be the unique total strategy on this game, i.e. the interpretation of $\p_\top$.
\item If $\Gamma = \top , P , \Gamma'$ then $\sigma$ can never play in $P$ since if it did the play restricted to $\top , P$ would not be alternating. Thus $\sigma$ is a strategy on $\top , \Gamma'$. We can call $\reify$ inductively yielding a proof of $\vdash \top , \Gamma'$, and apply $\p_\top^+$ to yield a proof of $\top , P , \Gamma$.
\item If $\Gamma = \top , N , P , \Gamma'$ then $\sigma$ is a total strategy on $\top , N \lhd P , \Gamma$ up to retagging and we can proceed inductively using $\p_\top^\lhd$. If $\Gamma = \top , N , M , \Gamma'$ we can proceed similarly, using $\p_\top^\otimes$.
\item If $\Gamma = \top , N$ then $\sigma$ is a total strategy on $\downarrow N$: we can strip off the first move yielding a total strategy on $N$, apply $\reify$ inductively yielding a proof of $\vdash N$, and finally apply $\p_\top^-$ yielding a proof of $\vdash \top , N$.
\item The case $\Gamma = \bot$ is impossible: there are no total strategies on this game. Other cases where $\bot$ is the head formula proceed as with $\top$: if the tail is a single positive formula, we remove the first move and apply $\p_\bot^+$, otherwise we shorten the tail using $\p_\bot^-$, $\p_\bot^\oslash$ or $\p_\bot^\parr$.
\item If $\Gamma = A \oslash N , \Gamma'$ then $\sigma$ is also a strategy on $A , N , \Gamma$. We can call $\reify$ inductively yielding a proof of $\vdash A , N , \Gamma$ that denotes $\sigma$, and apply $\p_\oslash$. We can proceed similarly in the following case $\Gamma = A \lhd P , \Gamma'$.
\item If $\Gamma = M \& N , \Gamma'$ then we can split $\sigma$ into those plays that start with $M$ and those that start with $N$. This yields total strategies on $M , \Gamma$ and $N , \Gamma$ respectively, which we can $\reify$ inductively and apply $\p_\&$.
\item If $\Gamma = M \otimes N , \Gamma'$ then we can split $\sigma$ into those plays that start with $M$ and those that start with $N$. This yields total strategies on $M , N , \Gamma$ and $N , M , \Gamma$ respectively, which we can $\reify$ inductively and apply $\p_\otimes$.
\item If $\Gamma = P \oplus Q , \Gamma$ then $\sigma$ specifies a
  first move that must either be in $P$ or in $Q$. In the former case,
  we have a strategy on $P , \Gamma$ and can $\reify$ inductively,
  finally applying ${\p_\oplus}_1$. In the latter case, we have a
  strategy on $Q , \Gamma$ and can $\reify$ inductively and apply
  ${\p_\oplus}_2$. The case of $\Gamma = P \parr Q , \Gamma$ is
  similar. 
\item If the head formula is a positive atom $\overline{\phi}(\overrightarrow{x})$ then we must have $\overline{\phi}(\overrightarrow{x})$ in $\Theta$, as otherwise there can be no uniform winning strategies on $\llbracket \Gamma \rrbracket$ (since some games in that family have no winning strategies). Thus we can proceed inductively and apply $\p_{\mathsf{at+}}$.
\item If the head formula is a negative atom $\phi(\overrightarrow{x})$ then we can split the family $\sigma$ into those models that satisfy $\phi(\overrightarrow{x})$ and those that do not. All strategies in the latter group must be empty, as there are no moves to play. All strategies in the former group form a uniform winning strategy on $\llbracket \Theta, \phi(\overrightarrow{x}) \vdash \bot , \Gamma \rrbracket$ and we can proceed inductively using $\p_\mathsf{at-}$.
\item If $\sigma : \llbracket X ; \Theta \vdash \Gamma = \forall x . N , \Gamma' \rrbracket$ then $\dist_{\Gamma'} \circ \sigma : I \Rightarrow \forall x . \llbracket N , \Gamma' \rrbracket$. Using our adjunction, this corresponds to a map $\eta \circ U'_x(\dist_{\Gamma'} \circ \sigma) : I \Rightarrow \llbracket N , \Gamma' \rrbracket$ in $\mathcal{W}^{\mathcal{M}_{X \uplus \{ x \}}^\Theta}$. We can then $\reify$ this inductively to yield a proof of $X \uplus \{ x \} ; \Theta \vdash N , \Gamma'$ and apply $\p_\forall$.

\item If $\Gamma = \exists x . P , \Gamma'$ then $\sigma \circ \dist_{+,\Gamma'} : \forall x . \llbracket P , \Gamma' \rrbracket \Rightarrow o$. By Proposition \ref{choice2}, there is a unique $y$ and natural transformation $\tau : \llbracket P , \Gamma' \rrbracket \mathsf{set}^x_y \Rightarrow o$ such that $\sigma \circ \dist_{+,\Gamma'} = \tau \circ \pi_{y}$. Since $x$ does not occur in $\Gamma$, we have $\llbracket P , \Gamma' \rrbracket \mathsf{set}^x_y = \llbracket P[y/x] , \Gamma' \rrbracket$. This yields a lax natural transformation $\llbracket P[y/x] , \Gamma' \rrbracket \Rightarrow o$. We can then apply the inductive hypothesis use the $\p_\exists^y$ rule.
\end{itemize}

\noindent We will later show that $\reify$ is well founded by giving a
measure on sequents that decreases on each call to the inductive
hypothesis.

\subsection{Definition of Reify}

\noindent $\reify_\Gamma$ is defined inductively in Figure
\ref{WS-reify}. Following the above remarks, the following properties
hold:
\label{compaxioms}
\begin{description}
\item [1a]The unique map $i: \varnothing  \Rightarrow \mathcal{C}(I,o)$ is a bijection.  
\item [1b]The map $\mathsf{d} = [\lambda f . f \circ \pi_1, \lambda g . f \circ \pi_2] : \mathcal{C}(M,o) + \mathcal{C}(N,o) \Rightarrow \mathcal{C}(M \times N, o)$ is a bijection. (\emph{$\pi$-atomicity} \cite{A_ADFC}).
\item [2]The map $\_\multimap o : \mathcal{C}(I,M) \Rightarrow \mathcal{C}(M \multimap o, I \multimap o)$ is a bijection.
\end{description}

\begin{figure*}
\caption{Reification of Strategies as Analytic Proofs}
\begin{small}
\centering
\label{WS-reify}

\vspace{1ex}
\hrule
\hrule

\[
\begin{array}{lll}
\mbox{For non-lean $\Theta$:} \\
\reify_{X , x \neq x ; \Theta \vdash \Gamma}(\sigma) & = & \p_{\neq} \\
\reify_{X, x , y ; \Theta \vdash \Gamma}(\sigma) & = & \p_\match^{x,y,z}(\reify(\sigma|_{\mathcal{M}_X^{\Theta,x = y}} \circ H^{-1}_{x,y,z}) , \reify(\sigma|_{\mathcal{M}_X^{\Theta,x \neq y}})) \\
& & \mbox{if $(x,y) \in X \times X$ is least such that $x \not \equiv y$ and $(x \neq y) \not \in \Theta$} \\
& & \mbox{ and $z$ is the least element in $\mathsf{Fr}(X;\Theta
  \vdash \Gamma)$} \\ 
\mbox{For lean $\Theta$: }\\
\reify_{X ; \Theta \vdash \phi(\overrightarrow{x}),\Gamma}(\sigma) & = & \p_{\mathsf{at}-}(\reify(\sigma|_{\mathcal{M}_X^{\Theta,\overline{\phi}(\overrightarrow{x})}})) \\
\reify_{X ; \Theta \vdash \overline{\phi}(\overrightarrow{x}),\Gamma}(\sigma) & = & \p_{\mathsf{at}+}(\reify(\sigma)) \\
\reify_{X ; \Theta \vdash \forall x . N , \Gamma}(\sigma) & = & \p_\forall(\reify(\eta \circ U'_x(\dist_{\Gamma'} \circ \sigma))) \\
\reify_{X ; \Theta \vdash \exists x . N , \Gamma}(\sigma) & = &
\p_\exists^y(\reify(\tau)) \mbox{where $\sigma \circ \dist_{\Gamma}^{-1} =
\tau \circ \pi_{y}$} \\ 
\reify_{\mathbf{1},\Gamma}(\sigma) & = & \p_\mathbf{1} \\
\reify_{\bot, N, \Gamma}(\sigma) & = & \p_\bot^- (\reify_{\bot , \Gamma}( \llbracket \Gamma \rrbracket^-(\mathsf{abs}) \circ \sigma)) \\
\reify_{\bot, P}(\sigma) & = & \p_\bot^+ (\reify_P(\Lambda_I^{-1}(\sigma))) \\
\reify_{\bot, P , Q , \Gamma}(\sigma) & = & \p_\bot^\parr (\reify_{\bot , P \parr Q ,\Gamma}(\llbracket \Gamma \rrbracket^-((\sym \multimap \id) \circ \passoc_\multimap^{-1} \circ \sigma))) \\
\reify_{\bot , P , N , \Gamma}(\sigma) & = & \p_\bot^- (\reify_{\bot , P \oslash N,\Gamma} (\llbracket \Gamma \rrbracket^-(\lfe) \circ \sigma)) \\
\reify_{M \& N , \Gamma}(\sigma) & = & \p_\& (\reify_{M,\Gamma} (\pi_1 \circ \dist_{-,\Gamma} \circ \sigma), \reify_{N,\Gamma} (\pi_2 \circ \dist_{-,\Gamma} \circ \sigma )) \\
\reify_{M \otimes N , \Gamma}(\sigma) & = & 
\p_\otimes(\reify_{M,N,\Gamma} (\pi_1 \circ \sigma'),
\reify_{N,M,\Gamma} (\pi_2 \circ \sigma')) \\ & & \mbox{where $\sigma'
  = \dist_{-,\Gamma} \circ \llbracket \Gamma \rrbracket^-(\dec) \circ
  \sigma$
} \\
\reify_{A \oslash N , \Gamma}(\sigma) & = & \p_\oslash (\reify_{A,N,\Gamma} (\sigma)) \\
\reify_{A \lhd P , \Gamma}(\sigma) & = & \p_\lhd (\reify_{A,P,\Gamma} (\sigma)) \\

\reify_{\top}(\sigma) & = & \p_\top \\
\reify_{\top, P, \Gamma}(\sigma) & = & \p_\bot^- (\reify_{\bot , \Gamma}(\sigma \circ \llbracket \Gamma \rrbracket^+(\mathsf{abs}^{-1}))) \\
\reify_{\top , N}(\sigma) & = & \p_\top^- (\reify_N ((\_ \multimap o)^{-1}(\unit_\multimap^{-1} \circ \sigma))) \\
\reify_{\top, N , M , \Gamma}(\sigma) & = & \p_\top^\otimes (\reify_{\top , N \otimes M,\Gamma}(\sigma \circ \llbracket \Gamma \rrbracket^+ (\passoc_\multimap \circ (\sym \multimap \id)))) \\
\reify_{\top , N , P , \Gamma}(\sigma) & = & \p_\top^\lhd (\reify_{\top , N \lhd P, \Gamma}(\sigma \circ \llbracket \Gamma \rrbracket^+(\lfe^{-1}))) \\
\reify_{P \oplus Q , \Gamma}(\sigma) & = &
[ {\p_\oplus}_1 \circ \reify_{P,\Gamma} , {\p_\oplus}_2 \circ \reify_{Q,\Gamma} ] \circ \mathsf{d}^{-1}(\sigma \circ \dist_{+,\Gamma}^{-1}) \\
\reify_{P \parr Q, \Gamma}(\sigma) & = &
[ {\p_\parr}_1 \circ \reify_{P,Q,\Gamma}, {\p_\parr}_2 \circ \reify_{Q,P,\Gamma} ]\circ \mathsf{d}^{-1}(\sigma \circ \llbracket \Gamma \rrbracket^+ (\dec^{-1}) \circ \dist_{+,\Gamma}^{-1}) \\
\reify_{!N,\Gamma}(\sigma) & = & \p_!(\reify_{N,!N,\Gamma}(\llbracket \Gamma \rrbracket^-(\alpha) \circ \sigma)) \\
\reify_{?P,\Gamma}(\sigma) & = & \p_?(\reify_{P,?P,\Gamma}(\sigma \circ \llbracket \Gamma \rrbracket^+(\alpha^{-1}))) \\
\end{array}
\]

\hrule
\hrule
\end{small}
\end{figure*}

\subsection{Termination of Reify}

We next argue for termination of our procedure. Intuitively, the full completeness procedure first breaks down the head formula until it is $\bot$ or $\top$. It then uses the core elimination rules to compose the tail into (at most) a single formula. These steps do not increase the size of the strategy. Finally, the head is removed using $\p_\bot^+$ or $\p_\top^-$, strictly reducing the size of the strategy. If $\Theta$ is not lean, the number of distinct variable pairs that are not declared distinct in $\Theta$ is reduced by using $\p_\mathsf{ma}$.

Formally, we can see this as a lexicographical ordering of four measures on $\sigma$,$X$,$\Theta$,$\Gamma$:
\begin{itemize}
\item The most dominant measure is the length of the longest play in $\sigma$.
\item The second measure is the length of $\Gamma$ as a list if the head of $\Gamma$ is $\bot$ or $\top$, and $\infty$ otherwise.
\item The third measure is the size of the head formula of $\Gamma$.
\item The fourth measure is $$|\{ (x , y) \in X \times X : \\ x \not\equiv y \wedge x \neq y \notin \Theta \} |$$
\end{itemize}

If $\Theta$ is lean:
\begin{itemize}
\item If $\Gamma = \bot , P$ or $\top , N$ then the first measure decreases in the call to the inductive hypothesis.
\item Otherwise, if $\Gamma = A , \Gamma'$ with $A \in {\bot, \top}$ the first measure does not increase and the second measure decreases.
\item If $\Gamma = A , \Gamma'$ with $A \not \not \in \{ \bot , \top \}$, the first measure does not increase and either the second or third measure decreases.
\end{itemize}

If $\Theta$ is not lean and the $\p_\mathsf{ma}$ rule is applied, in the call to the inductive hypotheses the first three measures stay the same and the fourth measure decreases.

Thus, the inductive hypothesis is used with a smaller value in the compound measure on $\mathbb{N} \times \mathbb{N} \cup \{ \infty \} \times \mathbb{N} \times \mathbb{N}$ ordered lexicographically.

\subsection{Soundness and Uniqueness}

\begin{lemma}
  For all $\sigma : \llbracket \vdash \Gamma \rrbracket$ we have
  $\llbracket \reify_\Gamma (\sigma) \rrbracket = \sigma$.
  \label{refsound}
\end{lemma}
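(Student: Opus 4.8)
The plan is to prove this by induction on the same well-founded measure that was used to establish termination of $\reify$ --- the lexicographic quadruple consisting of the length of the longest play in $\sigma$, the length of $\Gamma$ when its head is $\bot$ or $\top$ (and $\infty$ otherwise), the size of the head formula of $\Gamma$, and the number of pairs in $X$ not declared distinct by $\Theta$ --- case-splitting exactly as in the clauses defining $\reify_\Gamma$ in Figure~\ref{WS-reify}. The uniform pattern in every clause is that $\reify_\Gamma(\sigma)$ has the shape $\p_r(\reify_{\Gamma_1}(\sigma_1),\ldots,\reify_{\Gamma_k}(\sigma_k))$ where each $\sigma_i$ is obtained from $\sigma$ by pre- or post-composition with an \emph{inverse} of precisely the morphism (isomorphism, projection, or transpose) that the semantics of $\p_r$ in Figures~\ref{WS-sem}, \ref{WS-sem2}, \ref{WS!-sem} and \ref{WS1-sem} subsequently applies. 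So one invokes the inductive hypothesis $\llbracket \reify_{\Gamma_i}(\sigma_i)\rrbracket = \sigma_i$, reads off $\llbracket \p_r(\ldots)\rrbracket$ from the relevant figure, and checks that the two operations cancel to leave $\sigma$. (En route this also confirms that each $\reify_\Gamma(\sigma)$ is a well-formed analytic proof.)

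For the bulk of the cases this cancellation is immediate. The clauses for $\p_\oslash$, $\p_\lhd$, $\p_\bot^-$, $\p_\bot^\parr$, $\p_\bot^\oslash$, $\p_\top^+$, $\p_\top^\otimes$, $\p_\top^\lhd$, $\p_!$ and $\p_?$ all have semantics of the form ``apply $\llbracket\Gamma\rrbracket^b(\iota)$'' for an iso $\iota$, while $\reify$ feeds $\llbracket\Gamma\rrbracket^b(\iota^{-1})\circ\sigma$ (or $\sigma\circ\llbracket\Gamma\rrbracket^b(\iota^{-1})$) to the recursive call, so functoriality of $\llbracket\Gamma\rrbracket^b$ finishes them. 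The $\p_\&$ and $\p_\otimes$ clauses use the universal property $\langle\pi_1\circ h,\pi_2\circ h\rangle=h$ together with $\dist_{-,\Gamma}^{-1}\circ\dist_{-,\Gamma}=\id$ and (for $\otimes$) $\dec^{-1}\circ\dec=\id$; the ${\p_\oplus}_i$ and ${\p_\parr}_i$ clauses use $\pi$-atomicity (bijectivity of $\mathsf{d}$, property~\textbf{1b}) so that $\mathsf{d}^{-1}$ decomposes $\sigma\circ\dist_{+,\Gamma}^{-1}$ into exactly the summand that ${\p_\oplus}_i\circ\reify$ then rebuilds, invoking the coherence of the distributivity isomorphisms with projections from Proposition~\ref{contprod}; and $\reify_{\mathbf 1,\Gamma}$, $\reify_\top$, $\reify_{\bot,P}$, $\reify_{\top,N}$ are settled by property~\textbf{1a}, by uniqueness of the total strategy on the game in question, and by property~\textbf{2} (with $\Lambda_I$ and $\unit_\multimap$ the named isomorphisms), respectively.

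The delicate cases are those touching the base category of $\mathcal{L}$-models. For $\p_\mathsf{at-}$ and $\p_\mathsf{at+}$ one uses that $\mathcal{M}_X^\Theta$ is the coproduct of $\mathcal{M}_X^{\Theta,\overline{\phi}(\overrightarrow{s})}$ and $\mathcal{M}_X^{\Theta,\phi(\overrightarrow{s})}$, that on the second of these any winning strategy on the head-atom game must be the empty strategy $\epsilon$ (there is no move to play), and hence that the gluing $[\,\cdot\,,\epsilon]_{\overline{\phi}(\overrightarrow{s}),\phi(\overrightarrow{s})}$ recovers $\sigma$ from its restriction; the $\p_\match^{x,y,z}$ clause is analogous, decomposing along $x=y$ versus $x\neq y$ and transporting the identified half across $H_{x,y,z}$ via the identity $\llbracket(X;\Theta\vdash\Gamma)[\tfrac{z}{x},\tfrac{z}{y}]\rrbracket=\llbracket X;\Theta,x=y\vdash\Gamma\rrbracket\,H_{x,y,z}^{-1}$. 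The quantifier clauses carry the most weight and I expect them to be the main obstacle: for $\p_\forall$ one uses the adjunction $U'_x\dashv\forall x.\_$, transporting $\dist_\Gamma\circ\sigma$ to $\eta\circ U'_x(\dist_\Gamma\circ\sigma)$, reifying, re-applying the adjoint transpose $\hat{(\cdot)}$ and $\dist_\Gamma^{-1}$, and closing with the triangle identities together with the fact that $\dist_\Gamma$ is an isomorphism in $\mathcal{W}^{\mathcal{M}_X^\Theta}$; for $\p_\exists^y$ one uses Proposition~\ref{choice2}, which says $\sigma\circ\dist_\Gamma^{-1}$ factors uniquely as $\tau\circ\pi_y$ with $y$ a variable in scope, and then observes that the semantics of $\p_\exists^s$, namely $\tau\mapsto\tau\circ\pi_s\circ\dist_\Gamma$, inverts exactly this factorisation when $s\equiv y$. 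Once the slogan ``reification composes with the inverse of whatever the rule's semantics applies'' is made precise, the remaining bookkeeping --- keeping straight the horizontal composites, the several $\dist$ isomorphisms, and the pointwise form of Proposition~\ref{choice2} --- is routine but lengthy.
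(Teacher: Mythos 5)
Your proposal is correct and follows essentially the same route as the paper: induction on the reification measure with case analysis mirroring the clauses of $\reify_\Gamma$, discharging each case by cancelling the rule's semantic operation (via functoriality, the universal properties of products and $\mathsf{d}$, the adjunction for $\forall$, and Proposition~\ref{choice2} for $\exists$) against the inverse that $\reify$ pre-/post-composes. The paper's own proof states exactly this induction and works out a representative sample of cases (matching, negative atom, $\forall$, $\parr$), leaving the rest as routine, just as you do.
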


\begin{proof}
  We proceed by induction on our reification measure $\langle |\Gamma|
  , \mathsf{tl}(\Gamma) , \mathsf{hd}(\Gamma) \rangle$ using equations
  that hold in the categorical model. We perform case analysis on
  $\Gamma$. The calculation is routine, we demonstrate only a few
  cases.

\begin{itemize}
\item If $\Theta$ is not lean with $(x,y) \in X \times X$ least such that $x \not \equiv y$ and $(x \neq y) \not \in \Theta$ and $z$ is the least element in $\mathsf{Fr}(X;\Theta \vdash \Gamma)$, then $\llbracket \reify(\sigma) \rrbracket $\\$ =
\llbracket \p_\match^{x,y,z}(\reify(\sigma|_{\mathcal{M}_X^{\Theta,x = y}} \circ H^{-1}_{x,y,z} , \reify(\sigma|_{\mathcal{M}_X^{\Theta,x \neq y}}))) \rrbracket $\\$ =
[\llbracket \reify(\sigma|_{\mathcal{M}_X^{\Theta,x = y}} \circ H^{-1}_{x,y,z}) \rrbracket H_{x,y,z}, \llbracket \reify(\sigma|_{\mathcal{M}_X^{\Theta,x \neq y}}) \rrbracket ]_{x = y , x \neq y} $\\$ =
[\sigma|_{\mathcal{M}_X^{\Theta,x = y}} \circ H^{-1}_{x,y,z} \circ H_{x,y,z} , \sigma|_{\mathcal{M}_X^{\Theta,x \neq y}}]_{x = y , x \neq y} $\\$ = 
[\sigma|_{\mathcal{M}_X^{\Theta,x = y}} , \sigma|_{\mathcal{M}_X^{\Theta,x \neq y}}]_{x = y , x \neq y} =
\sigma$.
\item If $\Theta$ is lean and $\Gamma = \phi(\overrightarrow{x}) , \Gamma'$ then $$\llbracket \reify(\sigma) \rrbracket = \llbracket \p_\mathsf{at-}(\sigma|_{\mathcal{M}_X^{\Theta,\overline{\phi}(\overrightarrow{x})}}) \rrbracket = [ \sigma|_{\mathcal{M}_X^{\Theta,\overline{\phi}(\overrightarrow{x})}} , \epsilon]_{\overline{\phi}(\overrightarrow{x}),\phi(\overrightarrow{x})} = \sigma$$ as we must have $\sigma|_{\mathcal{M}_X^{\Theta,\phi(\overrightarrow{x})}} = \epsilon$ since $\llbracket \phi(\overrightarrow{x}) , \Gamma \rrbracket_A$ is the terminal object for each $A$ in $\mathcal{M}_X^{\Theta , \phi(\overrightarrow{x})}$.
\item If $\Gamma = \forall x . N , \Gamma'$ then $\llbracket \reify(\sigma) \rrbracket = \llbracket \p_\forall(\reify(\eta \circ U'_x(\dist_{\Gamma'} \circ \sigma))) \rrbracket = $\\$ \dist_{\Gamma'}^{-1} \circ \widehat{\llbracket \reify(
\eta \circ U'_x(\dist_{\Gamma'} \circ \sigma)) \rrbracket} = \dist_{\Gamma'}^{-1} \circ \widehat{
(\eta \circ U'_x(\dist_{\Gamma'} \circ \sigma))} = \dist_{\Gamma'}^{-1} \circ \dist_{\Gamma'} \circ \sigma = \sigma$ as required.
\item If $\Gamma = P_1 \parr P_2, \Delta$ then $\llbracket \reify_{P_1
    \parr P_2,\Delta}(\sigma) \rrbracket = \llbracket [ {\p_\parr}_1
  \circ \reify_{P_1,P_2,\Delta} , {\p_\parr}_2 \circ
  \reify_{P_2,P_1,\Delta} ] \circ \mathsf{d}^{-1}(\sigma \circ
  \llbracket \Delta \rrbracket^+(\dec^{-1}) \circ \dist_{+,\Delta}^{-1})
  \rrbracket$. Suppose $\mathsf{d}^{-1} (\sigma \circ \llbracket
  \Delta \rrbracket^+(\dec^{-1}) \circ \dist_{+,\Delta}^{-1}) =
  \ini(\tau)$, so $\tau \circ \pi_i = \sigma \circ \llbracket \Delta
  \rrbracket^+(\dec^{-1}) \circ \dist_{+,\Delta}^{-1}$.

  If $i = 1$ then $\llbracket [ {\p_\parr}_1 \circ
  \reify_{P_1,P_2,\Delta} , {\p_\parr}_2 \circ \reify_{P_2,P_1,\Delta} ]
  \circ \mathsf{d}^{-1}(\sigma \circ \llbracket \Delta
  \rrbracket^+(\dec) \circ \dist_{+,\Delta}^{-1}) \rrbracket =
  \llbracket {\p_\parr}_1 ( \reify_{P_1,P_2,\Delta}(\tau)) \rrbracket =
  \llbracket \reify_{P_1,P_2,\Delta}(\tau) \rrbracket \circ \llbracket
  \Delta \rrbracket^+(\wk) =$ \\$\llbracket \reify_{P_1,P_2,\Delta}(\tau)
  \rrbracket \circ \llbracket \Delta \rrbracket^+(\pi_1 \circ \dec) =
  \tau \circ \llbracket \Delta \rrbracket^+(\pi_1 \circ \dec) = \tau
  \circ \pi_1 \circ \dist_{+,\Delta} \circ \llbracket \Delta
  \rrbracket^+(\dec) = \sigma \circ \llbracket \Delta
  \rrbracket^+(\dec^{-1}) \circ \dist_{+,\Delta}^{-1} \circ
  \dist_{+,\Delta} \circ \llbracket \Delta \rrbracket^+(\dec) =
  \sigma$.

  The case for $i = 2$ is similar.  \qed
\end{itemize}
\end{proof}

\begin{lemma}
  For any analytic proof $p$ of $\vdash \Gamma$ we have $\reify_\Gamma
  (\llbracket p \rrbracket) = p$.
  \label{refunique}
\end{lemma}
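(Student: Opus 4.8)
The plan is to proceed by induction on the structure of the analytic proof $p$ (for finite $p$; the infinitary case follows by reading the same argument coinductively, matching the extension of $\reify$ sketched at the end of this section). The key observation is that, under the analyticity discipline, the last rule of $p$ is essentially forced by its conclusion $X;\Theta\vdash\Gamma$, and the case split in the definition of $\reify_\Gamma$ in Figure \ref{WS-reify} is organised along exactly these possibilities: if $\Theta$ contains some $w\neq w$ the last rule must be $\p_{\neq}$; if $\Theta$ is not lean but contains no such formula it must be $\p_\match^{x,y,z}$ with $(x,y)$ the least pair not declared distinct and $z$ the least fresh variable --- precisely the data chosen by $\reify$; and if $\Theta$ is lean the last rule is the unique core rule whose conclusion has the shape of $\Gamma$, the only ambiguity being the index $i$ in ${\p_\oplus}_i$/${\p_\parr}_i$ and the witness $y$ in $\p_\exists^y$.

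In each case write $p = R(p_1,\dots,p_k)$, so that $\llbracket p\rrbracket = \llbracket R\rrbracket(\llbracket p_1\rrbracket,\dots,\llbracket p_k\rrbracket)$ with $\llbracket R\rrbracket$ the operation on morphisms from Figures \ref{WS-sem}, \ref{WS-sem2}, \ref{WS-sem3}, \ref{WS!-sem} and \ref{WS1-sem}, while the corresponding line of Figure \ref{WS-reify} computes $R(\reify_{\Gamma_1}(\psi_1(\llbracket p\rrbracket)),\dots,\reify_{\Gamma_k}(\psi_k(\llbracket p\rrbracket)))$ for suitable semantic maps $\psi_i$. The heart of the argument is the identity $\psi_i(\llbracket R\rrbracket(\sigma_1,\dots,\sigma_k)) = \sigma_i$: the deconstruction carried out by $\reify$ is a left inverse of the construction carried out by the semantics. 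For the structural rules this is immediate from the categorical isomorphism identities already collected in the proof of Lemma \ref{refsound} --- for instance $\pi_i\circ\dist_{-,\Gamma}\circ\dist_{-,\Gamma}^{-1}\circ\langle\sigma_1,\sigma_2\rangle = \sigma_i$ for $\p_\&$, the fact that $\llbracket\Gamma\rrbracket^b$ of an isomorphism composed with its inverse is the identity for the $\bot/\top$ rules, the equation $H_{x,y,z}\circ H_{x,y,z}^{-1} = \id$, and the fact that restricting $[\eta,\rho]_{\alpha,\beta}$ to one of the two subcategories returns $\eta$ or $\rho$ for $\p_\match$, $\p_\mathsf{at-}$ and $\p_{\neq}$. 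Once $\psi_i(\llbracket p\rrbracket) = \llbracket p_i\rrbracket$ is in hand, the induction hypothesis applied to the immediate subproof $p_i$ gives $\reify_{\Gamma_i}(\llbracket p_i\rrbracket) = p_i$, whence $\reify_\Gamma(\llbracket p\rrbracket) = R(p_1,\dots,p_k) = p$.

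The main obstacle is the small family of rules involving a genuine choice, namely ${\p_\oplus}_i$, ${\p_\parr}_i$ and $\p_\exists^y$, where one must show not merely that the semantic deconstruction inverts the construction, but that $\reify$ recovers the \emph{same} choice that $p$ made. For ${\p_\oplus}_i$ and the $\parr$ analogue one observes that $\llbracket {\p_\oplus}_i(p_1)\rrbracket$ factors through $\pi_i$, so the decomposition $\mathsf{d}^{-1}$ appearing in the $\reify$ clause --- well defined and unique by $\pi$-atomicity (property 1b) --- necessarily yields $\mathsf{in}_i(\llbracket p_1\rrbracket)$, forcing the reconstructed rule to be ${\p_\oplus}_i$ applied to a proof with denotation $\llbracket p_1\rrbracket$. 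For $\p_\exists^y$ one computes $\llbracket\p_\exists^y(p_1)\rrbracket\circ\dist_\Gamma^{-1} = \llbracket p_1\rrbracket\circ\pi_y$ and appeals to the uniqueness clause of Proposition \ref{choice2} --- applicable since $\p_\exists$ is a core rule and so concludes a lean sequent --- to conclude that the variable $y$ and strategy $\tau$ located by $\reify$ are exactly $y$ and $\llbracket p_1\rrbracket$. The $\p_\match$ case is subtle in a different way: here one leans on the analyticity restriction built into the definition of analytic proof to see that the triple $(x,y,z)$ used by $p$ is forced, and that the restrictions of $[\llbracket p_1\rrbracket H_{x,y,z},\llbracket p_2\rrbracket]_{x=y,x\neq y}$ to the two subcategories of $\mathcal{M}_X^\Theta$ recover $\llbracket p_1\rrbracket H_{x,y,z}$ and $\llbracket p_2\rrbracket$, after which the induction hypothesis applies. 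Combining Lemmas \ref{refsound} and \ref{refunique} then yields Theorem \ref{fullcomp}: $\reify_\Gamma$ and $\llbracket-\rrbracket$ are mutually inverse bijections between bounded uniform winning strategies on $\llbracket X;\Theta\vdash\Gamma\rrbracket$ and analytic proofs of $X;\Theta\vdash\Gamma$.
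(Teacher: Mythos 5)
Your proposal is correct and follows essentially the same route as the paper: induction on the structure of the analytic proof $p$, computing $\reify_\Gamma(\llbracket p\rrbracket)$ case by case, cancelling the semantic construction against the deconstruction performed by $\reify$ (e.g. $\dist_{-,\Gamma}\circ\dist_{-,\Gamma}^{-1}$, $\Lambda_I^{-1}\Lambda_I$, $\mathsf{d}^{-1}\circ\mathsf{d}$), and then invoking the induction hypothesis on the subproofs. The paper treats only a few representative cases and calls the rest routine; your extra attention to the choice rules ${\p_\oplus}_i$, ${\p_\parr}_i$, $\p_\exists^y$ (via $\pi$-atomicity and Proposition \ref{choice2}) and to the forced $(x,y,z)$ in $\p_\match$ just spells out that routine part.
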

\begin{proof}
  We proceed by induction on $p$.  The calculation is routine, we
  demonstrate only a few cases.

\begin{itemize}
\item If $p = \p_\bot^+(p')$ with $\Gamma = \bot , P$ then $\reify_\Gamma
  (\llbracket p \rrbracket) = \p_\bot^+ (\reify_P (\Lambda_I^{-1}
  (\llbracket p \rrbracket))) = $\\$ \p_\bot^+ (\reify_P
  (\Lambda_I^{-1}\Lambda_I\llbracket p' \rrbracket)) = \p_\bot^+(
  \reify_P(\llbracket p' \rrbracket)) = \p_\bot^+(p') = p$.
\item If $p = \p_\& (p_1, p_2)$ with $\Gamma = M \& N , \Delta$ then
  $\reify_\Gamma(\llbracket p \rrbracket) $\\$ = \p_\&
  (\reify_{M,\Delta}(\pi_1 \circ \dist_{-,\Delta} \circ \llbracket p
  \rrbracket), \reify_{N,\Delta}(\pi_2 \circ \dist_{-,\Delta} \circ
  \llbracket p \rrbracket)) $\\$ = \p_\& (\reify_{M,\Delta}(\pi_1 \circ
  \dist_{-,\Delta} \circ \dist_{-,\Delta}^{-1} \circ \langle
  \llbracket p_1 \rrbracket, \llbracket p_2 \rrbracket \rangle),
  \reify_{N,\Delta}(\pi_2 \circ \dist_{-,\Delta} \circ
  \dist_{-,\Delta}^{-1} \circ \langle \llbracket p_1 \rrbracket,
  \llbracket p_2 \rrbracket \rangle)) $\\$ = \p_\& 
  (\reify_{M,\Delta}(\llbracket p_1 \rrbracket),
  \reify_{N,\Delta}(\llbracket p_2 \rrbracket)) $\\$ = \p_\& (p_1, p_2) = p$.
\item If $p = {\p_\parr}_1 (p')$ with $\Gamma = P_1 \parr P_2 , \Delta$
  then $\reify_\Gamma(\llbracket p \rrbracket) =
  \reify_\Gamma(\llbracket p' \rrbracket \circ \llbracket \Delta
  \rrbracket^+(\wk)) = [{\p_\parr}_1 \circ \reify_{P_1,P_2,\Delta} ,
  {\p_\parr}_2 \circ \reify_{P_2,P_1,\Delta}] \circ
  \mathsf{d}^{-1}(\llbracket p' \rrbracket \circ \llbracket \Delta
  \rrbracket^+(\wk) \circ \llbracket \Delta \rrbracket^+(\dec^{-1})
  \circ \dist_{+,\Delta}^{-1}) = [{\p_\parr}_1 \circ
  \reify_{P_1,P_2,\Delta} , {\p_\parr}_2 \circ \reify_{P_2,P_1,\Delta}]
  \circ \mathsf{d}^{-1}(\llbracket p' \rrbracket \circ \llbracket
  \Delta \rrbracket^+(\pi_1) \circ \dist_{+,\Delta}^{-1}) =
  [{\p_\parr}_1 \circ \reify_{P_1,P_2,\Delta} , {\p_\parr}_2 \circ
  \reify_{P_2,P_1,\Delta}] \circ \mathsf{d}^{-1}(\llbracket p'
  \rrbracket \circ \pi_1) = [{\p_\parr}_1 \circ \reify_{P_1,P_2,\Delta}
  , {\p_\parr}_2 \circ \reify_{P_2,P_1,\Delta}] \circ
  \mathsf{d}^{-1}(\mathsf{d}(\inn_1(\llbracket p' \rrbracket))) =
  {\p_\parr}_1(\reify_{P_1,P_2,\Delta}(\llbracket p' \rrbracket)) =
  {\p_\parr}_1(p') = p$ as required. \qed
\end{itemize}
\end{proof}

\noindent This completes our proof of Theorem \ref{fullcomp}.

\subsection{Infinitary Analytic Proofs}

\label{ws!-infcore}

We have seen that any bounded winning strategy is the denotation of a
unique analytic proof of \textsf{WS1}. We cannot use this to normalise
proofs to their analytic form 
because proofs do not necessarily denote bounded strategies. We will
next show that our reification procedure can be extended to winning
strategies that may be unbounded, provided the resulting analytic proofs
are allowed to be \emph{infinitary} --- that is, proofs using the core
rules that may be infinitely deep. More precisely, we will show that
\emph{total} strategies on a type object correspond precisely to the
infinitary analytic proofs. Thus we can normalise any proof of
\textsf{WS1} to an infinitary normal form, by taking its semantics
and then constructing the corresponding infinitary analytic proof. Two
proofs of \textsf{WS1} are semantically equivalent if and only if they
have the same normal form as an infinitary analytic proof.

\subsubsection{Infinitary Proofs as a Final Coalgebra}

Let $L$ be a set. Let $\mathcal{T}_L$ denote the final coalgebra of
the functor $X \mapsto L \times X^\ast$ in \textbf{Set}. The
inhabitants of $T_L$ are $L$-labelled trees of potentially infinite
depth. We let $\alpha : \mathcal{T}_L \rightarrow L \times
\mathcal{T}_L^\ast$ describe the arrow part of this final coalgebra:
this maps a tree to its label and sequence of subtrees. Given a
natural number $n$, we define a function $N_n : \mathcal{T}_L
\rightarrow \mathcal{P}(L \times \mathcal{T}_L^\ast)$, by induction:
$N_0(T) = \emptyset$ and $N_{n+1}(T) = \{ \alpha(T) \} \uplus \bigcup
\{ N_n(T') : T' \in \pi_2(\alpha(T)) \}$. We define the set of nodes
$N(T)$ to be $\{ N_n(T) : n \in \mathbb{N} \}$. Let $\prf$ be the set
of (names of) proof rules of \textsf{WS1} and $\seqo$ the set of
sequents of \textsf{WS1}.

\begin{definition}
An \emph{infinitary analytic proof} of \textsf{WS1} is an infinitary proof using only the core rules of \textsf{WS1}. Formally, this is an element $T$ of $\mathcal{I} = \mathcal{T}_{\prf \times \seqo}$ such that for each node $((\p_x,X;\Theta\vdash \Gamma),c) \in N(T)$ we have $|c| = \mathsf{ar}(\p_x)$ and if $(\pi_2 \circ \pi_1 \circ \alpha)(c_i) = X_i;\Theta_i \vdash \Gamma_i$ then the following is a valid core rule of \textsf{WS1}:

\begin{prooftree}
\AxiomC{$X_1;\Theta_1 \vdash \Gamma_1$}
\AxiomC{$\ldots$}
\AxiomC{$X_{|c|};\Theta_{|c|} \vdash \Gamma_{|c|}$}
\LeftLabel{$\p_x$}
\TrinaryInfC{$X;\Theta \vdash \Gamma$}
\end{prooftree}

\noindent We let $\mathcal{I}_\Gamma$ denote the set of infinitary analytic proofs of $\vdash \Gamma$.
\end{definition}

Let $\{ A_{X;\Theta \vdash \Gamma} : X;\Theta \vdash \Gamma \in \seqo
\}$ be family of sets indexed by sequents. We can construct a family
of maps $A_{X;\Theta \vdash \Gamma} \rightarrow \mathcal{I}_{X;\Theta
  \vdash \Gamma}$ by giving, for each $X;\Theta \vdash \Gamma$ and $a
\in A_{X;\Theta \vdash \Gamma}$, a proof rule that concludes $X;\Theta
\vdash \Gamma$ from $X_1;\Theta_1 \vdash \Gamma_1$, \ldots ,
$X_n;\Theta_n \vdash \Gamma_n$ and for each $i$ an element $a_i \in
A_{X_i;\Theta_i \vdash \Gamma_i}$.

\subsubsection{Infinitary Proofs as a Limit of Paraproofs}

We can consider an alternative approach for presenting our infinitary analytic proofs. We consider partial proofs, that may ``give up'' in the style of \cite{Gir_LS}.

\begin{definition}
An \emph{analytic paraproof} of \textsf{WS1} is a proof made up of the core proof rules of \textsf{WS1}, together with a \emph{d\ae mon} rule that can prove any sequent:
\begin{center}{\Large $\over \Phi \vdash \Gamma$}\mbox{ $\p_\epsilon$}
\end{center}
\end{definition}

Note that each analytic proof is also an analytic paraproof. Let $\mathcal{C}_\Gamma$ represent the set of analytic paraproofs of $\vdash \Gamma$. We can introduce an ordering $\sqsubseteq$ on this set, generated from the least congruence with $\p_\epsilon$ as a bottom element. We can take the completion of $\mathcal{C}_\Gamma$ with respect to $\omega$-chains generating an algebraic cpo $\mathcal{D}_\Gamma$. The maximal elements in this domain are precisely the infinitary analytic proofs $\mathcal{I}_\Gamma$, and the compact elements are the analytic paraproofs $\mathcal{C}_\Gamma$. 

\subsubsection{Semantics of Infinitary Analytic Proofs}

We next describe semantics of infinitary analytic proofs via the semantics of analytic paraproofs.

We can interpret analytic paraproofs as partial strategies. We
interpret paraproofs of $X;\Theta \vdash \Gamma$ in
$\mathcal{G}^{\mathcal{M}_X^\Theta}$. For the rules other than
$\p_\epsilon$, we use the fact that
$\mathcal{G}^{\mathcal{M}_X^\Theta}$ is a WS!-category. We interpret
$\p_\epsilon$ as the strategy $\{ \epsilon \}$ where $\epsilon$
denotes the empty play on any game.  We can hence interpret a analytic
paraproof of $\vdash \Gamma$ as a strategy on $\llbracket \vdash
\Gamma \rrbracket$.

The category $\mathcal{G}^{\mathcal{M}_X^\Theta}$ is cpo-enriched,
with $\sigma \sqsubseteq \tau$ if for each $A$, $\sigma_A \subseteq
\tau_A$ as a set of plays. The bottom element is the uniform strategy
that is $\{ \epsilon \}$ at each component. Composition, pairing and
currying are continuous maps of hom sets; as are the operations used
in the first-order structure.

\begin{proposition}
If $p$ and $q$ are analytic paraproofs of $\vdash \Gamma$ and $p \sqsubseteq q$ then $\llbracket p \rrbracket \sqsubseteq \llbracket q \rrbracket$.
\end{proposition}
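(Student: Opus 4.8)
The plan is to prove this by induction on the structure of the analytic paraproof $p$, using the fact that $\mathcal{G}^{\mathcal{M}_X^\Theta}$ is cpo-enriched and that every operation used to interpret the core rules is monotone (indeed continuous). Since $p \sqsubseteq q$ is generated from the least congruence with $\p_\epsilon$ as bottom element, it suffices to handle two situations: the base case $p = \p_\epsilon$, and the inductive case where $p$ and $q$ have the same final rule $\p_x$ with premises $p_1, \ldots, p_n$ and $q_1, \ldots, q_n$ respectively, with $p_i \sqsubseteq q_i$ for each $i$. (By transitivity of $\sqsubseteq$ and the congruence generation, any $p \sqsubseteq q$ is a finite composite of such elementary steps, and $\sqsubseteq$ on strategies is also transitive, so it is enough to treat these.)

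First I would dispatch the base case: if $p = \p_\epsilon$ then $\llbracket p \rrbracket$ is by definition the uniform strategy that is $\{\epsilon\}$ at each component, which is the bottom element of $\mathcal{G}^{\mathcal{M}_X^\Theta}$ with respect to $\sqsubseteq$, so $\llbracket p \rrbracket \sqsubseteq \llbracket q \rrbracket$ trivially. For the inductive step, I would observe that for each core rule $\p_x$, the semantic clause in Figures \ref{WS-sem}, \ref{WS1-sem} and \ref{WS!-sem} interprets the conclusion by applying some combination of pre-composition, post-composition, pairing $\langle -, - \rangle$, currying $\Lambda$, the functorial actions $\llbracket \Gamma \rrbracket^b(-)$, and the fixed structural isomorphisms ($\dec$, $\lfe$, $\mathsf{abs}$, $\passoc$, $\unit_\oslash$, $\alpha$, $\dist$, etc.) to the denotations of the premises. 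All of these are monotone operations on hom-sets: composition, pairing and currying are continuous (as stated just above the proposition), the functorial actions $\llbracket \Gamma \rrbracket^b$ are built from $\oslash$ and $\multimap$ which are monotone, and pre/post-composition with a fixed morphism is monotone. Hence from $\llbracket p_i \rrbracket \sqsubseteq \llbracket q_i \rrbracket$ for all $i$ we conclude $\llbracket \p_x(p_1,\ldots,p_n) \rrbracket \sqsubseteq \llbracket \p_x(q_1,\ldots,q_n) \rrbracket$ by monotonicity applied componentwise in $\mathcal{M}_X^\Theta$.

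I anticipate the only mildly delicate point is the case of $\p_\forall$ and $\p_\exists$, whose semantics involves the hat operation $\hat{\sigma}$ from the adjunction $U'_x \dashv \forall x . (-)$ and the projections $\pi_s$; here I would note that $\hat{(-)}$ is defined pointwise by $\hat{\sigma}_{(L,v)} = \langle \sigma_{(L,v[x\mapsto l])} \rangle_l$, which is monotone in $\sigma$ since pairing $\langle - \rangle$ of a family is monotone, and $\dist_\Gamma^{-1}$ and $\pi_s$ are fixed morphisms so pre/post-composing with them preserves $\sqsubseteq$. Similarly the $\p_\cut$ and $\p_\mul$ cases use the constructions $\tau^{\circ \pm}$ and $\Lambda_I^{-1}$, all built from composition and currying, hence monotone. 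So there is no real obstacle: the proposition reduces entirely to the observation that cpo-enrichment makes every clause in the semantics monotone, and a routine induction completes it. \qed
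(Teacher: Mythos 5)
Your proposal is correct and follows essentially the same route as the paper: a structural induction over the paraproof, with the d\ae mon case handled by strictness ($\llbracket \p_\epsilon \rrbracket$ is the bottom element) and the inductive step by monotonicity of composition, pairing, currying and the other semantic operations (the paper notes just above the proposition that these, including the first-order constructions, are continuous). The only cosmetic difference is that you also treat $\p_\cut$ and $\p_\mul$, which cannot occur in an analytic paraproof, but this does no harm.
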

\begin{proof}
A simple induction on the proof rules for \textsf{WS1}, using the fact that composition, pairing and currying are monotonic operations. Note that $\llbracket - \rrbracket$ is also strict, as $\llbracket \p_\epsilon \rrbracket = \{ \epsilon \}$. \qed
\end{proof}


Hom sets of $\mathcal{G}^{\mathcal{M}_X^\Theta}$ are algebraic
domains: each strategy is the limit of its compact (finite)
approximants. Our monotonic map $\mathcal{C}_\Gamma \rightarrow
\llbracket X;\Theta \vdash \Gamma \rrbracket$ thus extends uniquely to
a continuous map $\mathcal{D}_\Gamma \rightarrow \llbracket X;\Theta
\vdash \Gamma \rrbracket$. By construction this agrees with the
semantics given above for analytic paraproofs in
$\mathcal{D}_\Gamma$. Given any infinitary analytic proof $p$ if
$p\downarrow$ is the set of analytic paraproofs less than $p$ then
$\llbracket p \rrbracket = \bigsqcup \llbracket p\downarrow
\rrbracket$ using the cpo structure in $\mathcal{G}^{\mathcal{M}_X^\Theta}$.

We can show that this really does capture the intended semantics of infinitary analytic proofs.

\begin{proposition}
  The equations for the semantics of analytic proofs given in Figures
  \ref{WS-sem}, \ref{WS!-sem} and \ref{WS1-sem} hold for infinitary
  analytic proofs.
\label{infsemantics}
\end{proposition}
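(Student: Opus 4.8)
The plan is to reduce Proposition~\ref{infsemantics} to continuity of the rule operations, using the fact that the semantics of an infinitary analytic proof has already been pinned down as the join of the semantics of its analytic paraproof approximants. Fix an infinitary analytic proof $p$ of $X;\Theta\vdash\Gamma$ whose final rule is a core rule $\p_x$ with immediate subproofs $p_1,\dots,p_n$ (themselves infinitary analytic proofs), and write $\op_{\p_x}$ for the operation on morphisms of $\mathcal{G}^{\mathcal{M}_X^\Theta}$ assigned to $\p_x$ by Figures~\ref{WS-sem}, \ref{WS!-sem} and \ref{WS1-sem}. The goal is the equation $\llbracket p\rrbracket=\op_{\p_x}(\llbracket p_1\rrbracket,\dots,\llbracket p_n\rrbracket)$. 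Note that only core rules occur in infinitary analytic proofs, so from Figure~\ref{WS!-sem} we meet only $\p_!$ and $\p_?$; the coinductive anamorphism operation $\leftmoon-\rightmoon$ of $\p_{\mathsf{ana}}$ never arises.

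First I would describe the paraproof approximants. Since $\sqsubseteq$ on analytic paraproofs is the least congruence with $\p_\epsilon$ as bottom, a paraproof $q$ satisfies $q\sqsubseteq p$ exactly when $q=\p_\epsilon$ or $q=\p_x(q_1,\dots,q_n)$ with $q_i\sqsubseteq p_i$ for all $i$; hence $p{\downarrow}=\{\p_\epsilon\}\cup\{\p_x(q_1,\dots,q_n):q_i\in p_i{\downarrow}\}$, and each $p_i{\downarrow}$ is a directed, down-closed set. Since paraproof semantics interprets each core rule by precisely the operation of the corresponding figure, and $\llbracket\p_\epsilon\rrbracket$ is the bottom element of the relevant hom-cpo (so it may be dropped from a join), it follows that
\[
\llbracket p\rrbracket=\textstyle\bigsqcup\llbracket p{\downarrow}\rrbracket=\textstyle\bigsqcup_{q_i\in p_i{\downarrow}}\op_{\p_x}\bigl(\llbracket q_1\rrbracket,\dots,\llbracket q_n\rrbracket\bigr),
\]
the join being taken over $p_1{\downarrow}\times\cdots\times p_n{\downarrow}$.

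The remaining step is to move the join inside $\op_{\p_x}$. Each $\op_{\p_x}$ is built from composition, pairing and currying, together with pre- and post-composition by fixed morphisms (coherence isomorphisms such as $\dec$, $\passoc$, $\lfe$, $\mathsf{abs}$, $\alpha^{\pm1}$, $\dist_\Gamma$ and the projections $\pi_s$) and the first-order constructions (copairing $[\,\cdot\,,\,\cdot\,]$ on restrictions of lax natural transformations, the adjoint transpose $\widehat{(\cdot)}$, projections $\pi_{v(x)}$). Composition, pairing and currying are $\omega$-continuous on the hom-cpos of $\mathcal{G}^{\mathcal{M}_X^\Theta}$, as observed above, and the first-order operations are continuous too ($\widehat{(\cdot)}$ being defined pointwise from the components $\sigma_{(L,v[x\mapsto l])}$). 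Hence $\op_{\p_x}$ is continuous in each of its finitely many arguments, so continuous for the product order; and since each $p_i{\downarrow}$ is directed, the join over the product is taken componentwise, yielding
\[
\textstyle\bigsqcup_{q_i\in p_i{\downarrow}}\op_{\p_x}\bigl(\llbracket q_1\rrbracket,\dots\bigr)=\op_{\p_x}\bigl(\textstyle\bigsqcup_{q_1}\llbracket q_1\rrbracket,\dots,\bigsqcup_{q_n}\llbracket q_n\rrbracket\bigr)=\op_{\p_x}\bigl(\llbracket p_1\rrbracket,\dots,\llbracket p_n\rrbracket\bigr),
\]
using $\llbracket p_i\rrbracket=\bigsqcup\llbracket p_i{\downarrow}\rrbracket$ again. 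Running through the finitely many core rule schemes then gives the proposition.

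I expect the main obstacle to be the continuity bookkeeping for those rules whose operation is not visibly a composite of composition, pairing and currying — notably $\p_\forall$, which uses the adjoint transpose $\widehat{(\cdot)}$ and the distributivity isomorphism $\dist_\Gamma$, and $\p_!$, $\p_?$, which use the final-coalgebra isomorphism $\alpha$. Since $\alpha^{\pm1}$, $\dist_\Gamma$ and the $\pi_{v(x)}$ are fixed morphisms, pre- and post-composition by them is continuous, and $\widehat{(\cdot)}$ inherits continuity from its pointwise definition, so this is a matter of checking each scheme rather than of any new coinductive phenomenon. For the binary rules ($\p_\otimes$, $\p_\&$, $\p_\match^{x,y,z}$) one must in addition use directedness of both factors when passing from a join over the product to componentwise joins, but this is again routine.
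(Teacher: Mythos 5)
Your proposal is correct and follows essentially the same route as the paper: express $\llbracket p\rrbracket$ as the join of its paraproof approximants, decompose those approximants rule-by-rule (dropping the bottom $\p_\epsilon$), and use continuity of composition, pairing, currying and the first-order constructions to pull the directed join inside the operation interpreting the final core rule. The paper's proof is just a terser version of this, working the $\p_\otimes$ case explicitly and declaring the remaining cases analogous, while you spell out the bookkeeping (characterisation of $p{\downarrow}$, directedness, componentwise joins) that it leaves implicit.
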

\begin{proof}
We use the fact that the constructs used in the semantics of the core proof rules are continuous. We proceed by case analysis on the proof rule.

We just give an example. In the case of $\p_\otimes$, note that $\llbracket \p_\otimes(p,q) \rrbracket = \bigsqcup \{ \llbracket r \rrbracket : r \sqsubseteq \p_\otimes(p,q) \} = \bigsqcup \{ \llbracket \p_\otimes(p', q') \rrbracket : p' \sqsubseteq p \wedge q' \sqsubseteq q \} = \bigsqcup \{ \llbracket \Gamma \rrbracket^-(\dec^{-1}) \circ \dist^{-1}_{-,\Gamma} \circ \langle \llbracket p' \rrbracket , \llbracket q' \rrbracket \rangle : p' \sqsubseteq p \wedge q' \sqsubseteq q \} = \llbracket \Gamma \rrbracket^-(\dec^{-1}) \circ \dist^{-1}_{-,\Gamma} \circ \langle \llbracket \bigsqcup \{ p' : p' \sqsubseteq p \} \rrbracket , \llbracket \bigsqcup \{ q' : q' \sqsubseteq q \} \rrbracket \rangle = \llbracket \Gamma \rrbracket^-(\dec^{-1}) \circ \dist^{-1}_{-,\Gamma} \circ \langle \llbracket p \rrbracket , \llbracket q \rrbracket \rangle$ as required. All other cases are similar. \qed
\end{proof}


\subsubsection{Totality}

We need to show that given $p \in \mathcal{I}_\Gamma$, $\llbracket p \rrbracket$ is a total uniform strategy. Note that this is not true of arbitrary paraproofs in $\mathcal{D}_\Gamma$, nor is it true for infinite derivations in full \textsf{WS1} (for example, one could repeatedly apply the $\p_\sym$ rules forever).

To show this fact, we first introduce some auxiliary notions.

\begin{definition}
  Let $\sigma : N$ be a strategy on a negative game. We say that
  $\sigma$ is $n$-total if whenever $s \in \sigma \wedge |s| \leq n
  \wedge so \in P_N \Rightarrow \exists p . sop \in \sigma$. A uniform
  strategy is $n$-total if it is pointwise $n$-total.
\end{definition}

\noindent It is clear that a strategy is total if and only if it is
$n$-total for each $n$.

\begin{proposition}
\label{ntotal}
The following  hold:
\begin{enumerate}
\item If $\sigma$ is $n$-total and $\tau$ is an isomorphism then $\tau
  \circ \sigma$ is $n$-total. If $\sigma$ is $n$-total and $\tau$ is
  an isomorphism then $\sigma \circ \tau$ is $n$-total.
\item If $\sigma : A \rightarrow B$ and $\tau : A \rightarrow C$ are
  $n$-total then $\langle \sigma, \tau \rangle$ is also $n$-total. If
  $\sigma : A_i \multimap B$ is $n$-total then $\sigma \circ \pi_i :
  A_1 \times A_2 \multimap B$ is $n$-total.
\item If $\sigma : A \otimes B \multimap C$ is $n$-total then
  $\Lambda(\sigma)$ is $n$-total. If $\sigma: A \multimap B$ is
  $n$-total then $\sigma \multimap \id : (B \multimap o) \multimap (A
  \multimap o)$ is $(n+2)$-total.
\item If $\sigma$ and $\tau$ are $n$-total, then so is $[ \sigma ,
  \tau ]_{\mathcal{C},\mathcal{D}}$, $\sigma \circ H$. If $\sigma$ is
  $n$-total, then so is $\hat{\sigma}$.
\end{enumerate}
\end{proposition}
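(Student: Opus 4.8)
The plan is to prove Proposition~\ref{ntotal} by establishing each of the four clauses separately, in each case by a direct argument about plays of bounded length. The unifying principle is that $n$-totality only constrains the behaviour of a strategy on plays of length at most $n$, so all the constructions listed preserve it provided they do not ``shorten'' plays by more than a controlled amount, and provided they introduce no new Opponent moves that go unanswered. First I would record the basic observation: if $\sigma$ is $n$-total and we have $s\in(\tau\circ\sigma)$ with $|s|\le n$ and $so\in P$, then since isomorphisms are copycat-like (they relabel plays bijectively and preserve length), the corresponding play $s'$ in $\sigma$ has the same length, $s'o'\in P$ for the relabelled Opponent move, totality of $\sigma$ supplies the answer, and relabelling back gives the required $p$. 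This handles clause~(1) for both pre- and post-composition with an isomorphism; the only subtlety is to note that composition with an isomorphism does not change play length, which follows because isomorphisms in $\mathcal{W}$ are forest isomorphisms (Section~\ref{isotrees}), hence monotone length-preserving bijections on plays.

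For clause~(2): given $\langle\sigma,\tau\rangle:A\to B\times C$, any play visits either the $B$-component or the $C$-component after the first Opponent move (by the structure of $\times$ as disjoint union of plays, i.e. $\&$ on the target), so a play of length $\le n$ in $\langle\sigma,\tau\rangle$ restricts to a play of length $\le n$ in $\sigma$ or in $\tau$, and $n$-totality of the relevant factor supplies the continuation. Dually, $\sigma\circ\pi_i$ simply ignores one summand of the domain product; plays are unchanged up to tagging, so $n$-totality transfers directly. For clause~(3), currying $\Lambda(\sigma)$ is again a retagging of plays (it re-associates the source $A\otimes B\multimap C$ as $A\multimap(B\multimap C)$ via $\passoc_\multimap$, which is an isomorphism), so it preserves play length and $n$-totality follows from clause~(1). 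The genuinely non-trivial bookkeeping is the claim that if $\sigma:A\multimap B$ is $n$-total then $\sigma\multimap\mathrm{id}:(B\multimap o)\multimap(A\multimap o)$ is $(n+2)$-total: here one must track that a play of length $\le n+2$ on the right, after stripping the initial exchange that routes Opponent's move into the $o$ copies and the unique $o$-move, corresponds to a play of length $\le n$ in $\sigma$, so that the $+2$ offset exactly accounts for the two extra moves introduced by the functional-extensionality-style wrapping; I expect this length-accounting to be the main obstacle, since one must be careful about exactly which two moves are added and verify that no Opponent move of the new game at depth $\le n+2$ fails to be covered.

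Clause~(4) then combines the previous machinery. For $[\sigma,\tau]_{\mathcal{C},\mathcal{D}}$, a play lives entirely in one of the disjoint summand categories (there being no maps between $\mathcal{C}$ and $\mathcal{D}$), so pointwise $n$-totality of $[\sigma,\tau]$ at a model $A$ reduces to $n$-totality of whichever of $\sigma,\tau$ is defined at $A$. For $\sigma\circ H$ where $H$ is one of the isomorphisms $H_{x,y,z}$ of categories of $\mathcal{L}$-models: precomposition with $H$ merely relabels the indexing models, so the pointwise strategies are literally the same (up to the bijection on models), hence $n$-totality is preserved pointwise. For $\hat\sigma$ (the transpose under the adjunction $U'_x\dashv\forall x.\_$): by the explicit description in Section~6.2, $\hat\sigma_{(L,v)}=\langle h_l\rangle_l$ where each $h_l$ is some component $\sigma_{(L,v[x\mapsto l])}$, so $\hat\sigma$ is a pairing (indexed over $|L|$) of $n$-total strategies, and the infinitary version of clause~(2) for set-indexed products --- which is proved exactly as the binary case, since a play of length $\le n$ still visits only one summand --- shows $\hat\sigma$ is $n$-total. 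Throughout, the passage from ``$\sigma$ is $n$-total'' to ``$\sigma$ is pointwise $n$-total'' for uniform strategies is by definition, and no new uniformity/lax-naturality argument is needed, since we are only asserting totality degree, not uniformity, which is handled elsewhere.
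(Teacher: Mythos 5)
Your proposal is correct, and it is essentially the paper's own argument: the paper dismisses this proposition with ``Simple verification,'' and the verification it has in mind is exactly your play-length bookkeeping --- isomorphisms (being forest isomorphisms) and retaggings preserve play length, pairings and copairings reduce to a single component, and the $(n+2)$ offset in the $\sigma \multimap \id$ case accounts for the two initial $o$-moves before play proceeds as in $\sigma$.
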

\begin{proof}
Simple verification. \qed
\end{proof}

\begin{proposition}
Given any infinitary analytic proof $p$ of $X ; \Theta \vdash \Gamma$, $\llbracket p \rrbracket$ is total.
\label{inftotal}
\end{proposition}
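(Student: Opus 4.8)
The plan is to prove Proposition \ref{inftotal} by showing that $\llbracket p \rrbracket$ is $n$-total for every $n \in \mathbb{N}$, from which totality follows by the remark after the definition of $n$-totality. The key idea is that although an infinitary analytic proof $p$ may be infinitely deep, the behaviour of $\llbracket p \rrbracket$ on plays of length at most $n$ is determined by a \emph{finite} initial portion of $p$. More precisely, I would argue that for each $n$ there is a finite analytic paraproof $p_n \sqsubseteq p$ (obtained by truncating $p$ at a sufficiently large but finite depth and capping the remaining open leaves with $\p_\epsilon$) such that $\llbracket p_n \rrbracket$ and $\llbracket p \rrbracket$ agree on all plays of length $\leq n$; since $\llbracket p \rrbracket = \bigsqcup \llbracket p\!\downarrow \rrbracket$, this agreement is automatic once $p_n$ is deep enough, because each additional play of length $\leq n$ in $\llbracket p \rrbracket$ is contributed by some finite paraproof below $p$, and there are only finitely many such plays.

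The heart of the argument is then a structural induction establishing: for every $n$, and every finite analytic paraproof $q$ all of whose applications of $\p_\epsilon$ occur at depth strictly greater than the length of the longest play of size $\leq n$ in the sequent being proved — equivalently, $q$ is ``core down to depth relevant for $n$'' — the strategy $\llbracket q \rrbracket$ is $n$-total. Here I would use Proposition \ref{ntotal} as the engine: each core rule's semantics is built from isomorphisms, pairings, projections, currying, copairings $[\cdot,\cdot]$, the substitutions $\circ H$, and the $\hat{(\cdot)}$ operation, and Proposition \ref{ntotal} tells us each of these preserves $n$-totality (with the $\p_\bot^+$/$\p_\top^-$ steps, which strip a move, responsible for the shift from $n$ to $n+2$ in clause 3, exactly matching the fact that those rules decrease play-length by the opening O/P pair). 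The base cases $\p_\mathbf{1}$, $\p_\top$, $\p_{\neq}$ give (pointwise) total, hence $n$-total, strategies directly; and crucially, a leaf labelled $\p_\epsilon$ never arises within the relevant depth, so it is never invoked. For the first-order rules $\p_\forall$, $\p_\exists$, $\p_\mathsf{at\pm}$, $\p_\match$ one uses the pointwise formulation of $n$-totality together with clauses 2 and 4 of Proposition \ref{ntotal}.

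Concretely, I would: (i) fix $n$ and let $k = k(n,\Gamma)$ be the bound on the length of plays in $\llbracket \Gamma \rrbracket$ that can be reached within $n$ Player-moves (this is finite since each rule decomposes the head formula or shortens the tail, and only finitely many rule applications lie between consecutive removals of a head unit); (ii) form the truncation $p_n$ of $p$ at depth $k$, capping with $\p_\epsilon$, noting $p_n$ is a finite analytic paraproof with no $\p_\epsilon$ above the relevant depth; (iii) show $\llbracket p_n \rrbracket$ agrees with $\llbracket p \rrbracket$ on plays of length $\leq n$, using $\llbracket p \rrbracket = \bigsqcup\llbracket p\!\downarrow\rrbracket$ and compactness of each such play; (iv) prove by induction on the (finite) structure of $p_n$, using Proposition \ref{ntotal}, that $\llbracket p_n \rrbracket$ is $n$-total; (v) conclude $\llbracket p \rrbracket$ is $n$-total, and since $n$ was arbitrary, total. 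Uniformity is inherited because each $\llbracket p_n \rrbracket$, and hence the supremum $\llbracket p \rrbracket$, lives in $\mathcal{G}^{\mathcal{M}_X^\Theta}$.

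The main obstacle I anticipate is step (i)–(iii): making precise the claim that a finite prefix of $p$ suffices to determine $\llbracket p \rrbracket$ up to play-length $n$, and in particular verifying that the measure used for termination of $\reify$ (the lexicographic measure on longest play, tail length, head size, and variable pairs) indeed bounds the \emph{depth} of $p$ one must descend to exhaust all plays of a given length. The subtlety is that rules like $\p_\sym$, $\p_\oslash$, $\p_\lhd$ and the core elimination rules do not shorten plays, so one must argue that between two successive ``length-decreasing'' steps ($\p_\bot^+$, $\p_\top^-$) only boundedly many rule applications can intervene — which follows from the termination analysis already given for $\reify$, since that analysis shows the second and third components of the measure strictly decrease across exactly those intervening steps. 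Once this bookkeeping is in place, everything else is a routine induction driven by Proposition \ref{ntotal}.
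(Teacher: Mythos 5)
Your reduction to $n$-totality and your use of Proposition~\ref{ntotal} as the engine match the paper, but your outer route (truncate $p$ at a finite depth, cap with $\p_\epsilon$, argue agreement on short plays, then do structural induction on the finite paraproof) differs from the paper's proof, which performs a well-founded induction on the compound measure $\langle n , \mathsf{tl}^+(\Gamma), \mathsf{hd}^+(\Gamma), \mathsf{tl}^-(\Gamma), \mathsf{hd}^-(\Gamma), \mathsf{L}(X,\Theta) \rangle$ applied \emph{directly} to the infinitary proof, unfolding the semantics at the root via Proposition~\ref{infsemantics} and letting $n$ drop at $\p_\top^-$ (clause 3 of Proposition~\ref{ntotal}). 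That route needs no truncation and no continuity argument at all, which is exactly what makes your steps (i)--(iii) dispensable in the paper.

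As written, your step (iii) has a genuine gap. The claim that ``there are only finitely many plays of length $\leq n$'' is false in this setting: quantifiers are interpreted as products over $|L|$, which may be infinite, and a uniform strategy is a family indexed by \emph{all} $\Theta$-satisfying models, so there is no finite bound on the short plays to be captured, and you cannot collect them inside a single finite paraproof this way. The continuity fact $\llbracket p \rrbracket = \bigsqcup \llbracket p\!\downarrow \rrbracket$ only tells you each short play lies in $\llbracket q \rrbracket$ for \emph{some} finite $q \sqsubseteq p$; to conclude it already lies in the depth-$k$ truncation $p_n$ you need a lemma saying that the semantic clause of each core rule preserves agreement of denotations up to play length $n$ (with the $+2$ shift at the lift rules) --- i.e.\ precisely Proposition~\ref{equaln}, which the paper proves for the soundness of infinitary reification but which you never invoke. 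Note also that mere $n$-totality of $\llbracket p_n \rrbracket$ together with $\llbracket p_n \rrbracket \sqsubseteq \llbracket p \rrbracket$ does not transfer to $\llbracket p \rrbracket$: a short play of $\llbracket p \rrbracket$ need not lie in $\llbracket p_n \rrbracket$, so the two-sided agreement (up to length $n+2$) is essential. With Proposition~\ref{equaln}-style propagation added, your depth-bookkeeping via the termination measure does yield a finite $k(n,\Gamma)$ and the argument goes through; without it, the crucial step is unsupported.
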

\begin{proof}
We show that $\llbracket p \rrbracket$ is $n$-total for each $n$. We proceed by induction on a compound measure.
\begin{itemize}
\item Define $\mathsf{tl}^+(A,\Gamma)$ to be the length of $\Gamma$ as a list if $A = \top$ or $\infty$ otherwise.
\item Define $\mathsf{hd}^+(A,\Gamma)$ to be $|A|$ if $A$ is positive or $\infty$ otherwise.
\item Define $\mathsf{tl}^-(A,\Gamma)$ to be the length of $\Gamma$ as a list if $A = \bot$ or $\infty$ otherwise.
\item Define $\mathsf{hd}^-(A,\Gamma)$ to be $|A|$ if $A$ is negative or $\infty$ otherwise.
\end{itemize}
We proceed by induction on $$f(n,X,\Theta,\Gamma) = \langle n ,
\mathsf{tl}^+(\Gamma), \mathsf{hd}^+(\Gamma), \mathsf{tl}^-(\Gamma),
\mathsf{hd}^-(\Gamma), \mathsf{L}(X,\Theta) \rangle.$$ We proceed by
case analysis on $p$. If $p = \p_\otimes(p_1,p_2)$ then $\llbracket
\p_\otimes(p_1,q_2) \rrbracket = \llbracket \Gamma
\rrbracket^-(\dec^{-1}) \circ \dist^{-1}_{-,\Gamma} \circ \langle
\llbracket p_1 \rrbracket , \llbracket p_2 \rrbracket \rangle$. By by
Proposition \ref{ntotal} $\llbracket \p_\otimes(p,q) \rrbracket =
\llbracket p \rrbracket$ is $n$-total.
The remaining cases work in an entirely analogous way. For $\p_\bot^+$ we must use the fact that currying is continuous and preserves $n$-totality. For termination:
\begin{itemize}
\item If $\Theta$ is not lean, in the call to the inductive hypothesis
  the first five measures do not increase, and the fifth measure
  $\mathsf{L}$ decreases.
\item In the case of $\p_\otimes$, $\p_\&$, $\p_!$ the first three measures ($n , \mathsf{tl}^+(\Gamma), \mathsf{hd}^+(\Gamma)$) stay the same and either the fourth measure $\mathsf{tl}^-(\Gamma)$ decreases, or the fourth measure stays the same and the fifth measure $\mathsf{hd}^-(\Gamma)$ decreases.
\item In the case of $\p_\bot^\parr$, $\p_\bot^\oslash$, $\p_\bot^-$ the first three measures stay the same and the fourth measure decreases.
\item In the cases of $\p_\bot^+$, $\p_\parr$, $\p_\oplus$, $\p_?$ the first measure $n$ stays the same and either the second measure $\tl^+(\Gamma)$ decreases, or the second measure stays the same and the third measure $\hd^+(\Gamma)$ decreases.
\item In the case of $\p_\top^\otimes$, $\p_\top^\lhd$, $\p_\top^+$ the first measure stays the same and the second measure decreases.
\item In the case of $\p_\top^-$, the first measure decreases. In particular, $\llbracket \p_\top^+(q) \rrbracket = \unit_\multimap \circ (\llbracket q \rrbracket \multimap \id)$. By induction $\llbracket q \rrbracket$ is $(n-2)$-total, and so $\llbracket q \rrbracket \multimap \id$ is $n$-total, and so $\llbracket p \rrbracket$ is $n$-total by Proposition \ref{ntotal}. \qed
\end{itemize}
\end{proof}

\noindent Note that there are infinitary analytic proofs that denote strategies that are total, but not winning. For example, there is an infinitary analytic proof of $\vdash \bot , ?(\top \lhd \bot)$ given by $\p_\bot^+(h)$ where $h$ is the infinitary analytic proof of $\vdash ?(\top \lhd \bot)$ given by $h = \p_?(\p_\lhd(\p_\top^\lhd(\p_\top^-(\p_\lhd(\p_\bot^+(h))))))$. But there are no winning strategies on this game.

\subsubsection{Reification of Total Strategies as Infinitary Analytic Proofs}

We next show that any total strategy $\sigma$ on the denotation of a sequent is the interpretation of a unique infinitary analytic proof $\reify(\sigma)$.

We first define $\reify$ for winning strategies. We have seen that we can construct a family of maps $A_{X;\Theta \vdash \Gamma} \rightarrow \mathcal{I}_{X;\Theta \vdash \Gamma}$ by giving, for each $X;\Theta \vdash \Gamma$ and $a \in A_{X;\Theta \vdash \Gamma}$, a proof rule that concludes $X;\Theta \vdash \Gamma$ from $X_1;\Theta_1 \vdash \Gamma_1$, \ldots , $X_n, \Theta_n \vdash \Gamma_n$ and for each $i$ an element $a_i \in A_{X_i;\Theta_i \vdash \Gamma_i}$.
\begin{diagram}
\sum_{X;\Theta \vdash \Gamma \in \seqo} A_\Gamma & \rTo^f & (\prf \times \seqo) \times (\sum_{X;\Theta \vdash \Gamma \in \seqo} A_{X;\Theta \vdash \Gamma})^\ast \\
\dTo^{\leftmoon f \rightmoon} & & \dTo_{\id \times \leftmoon f \rightmoon^\ast} \\
\mathcal{I} & \rTo_{\alpha} & (\prf \times \seqo) \times \mathcal{I}^\ast \\
\end{diagram}

Note that our reification function $\reify$ defined in Figure
\ref{WS-reify} is exactly of this shape. In this case $A_{X;\Theta
  \vdash \Gamma}$ is the set of uniform winning strategies on
$\llbracket X;\Theta \vdash \Gamma \rrbracket$. The function
specifies, for each strategy, the root-level proof rule and the
derived strategies that are given as input to $\reify$ coinductively.
In the case that $\sigma$ is bounded, we have seen that the process terminates and $\reify(\sigma)$ is a finite proof.

In fact, we note that this family of maps are still well defined if
$A_{X;\Theta \vdash \Gamma}$ is the set $\tot_{X;\Theta \vdash
  \Gamma}$ of uniform \emph{total} strategies on $\llbracket X ;
\Theta \vdash \Gamma
\rrbracket$. 
In particular, the composition of a total strategy and an isomorphism
is a total strategy; the composition of a total strategy and a
projection is a total strategy; and the completeness axioms in Section
\ref{compaxioms} hold with respect to total strategies. This procedure
provides, for each total strategy on $X;\Theta \vdash \Gamma$, a proof
rule $\p_x$ concluding $X;\Theta \vdash \Gamma$ from $X_1;\Theta_1
\vdash \Gamma_1, \ldots , X_n;\Theta_n \vdash \Gamma_n$ and
total strategies on each $\llbracket X_i;\Theta_i \vdash \Gamma_i
\rrbracket$. We write this map as $\mathsf{reif}_{X;\Theta \vdash
  \Gamma}$.
\begin{diagram}
\sum_{X;\Theta \vdash \Gamma \in \seqo} \tot_{X;\Theta \vdash \Gamma} & \rTo^{\mathsf{reif}} & (\prf \times \seqo) \times (\sum_{X; \Theta \vdash \Gamma \in \seqo} \tot_{X;\Theta \vdash \Gamma})^\ast \\
\dTo^{\reify = \leftmoon \mathsf{reif} \rightmoon} & & \dTo_{\id \times \reify^\ast} \\
\mathcal{I} & \rTo_{\alpha} & (\prf \times \seqo) \times \mathcal{I}^\ast \\
\end{diagram}

Thus we can take the anamorphism of this map yielding a map from total strategies on $\llbracket X;\Theta \vdash \Gamma \rrbracket$ to $\mathcal{I}_{X;\Theta \vdash \Gamma}$, as required.



\subsubsection{Soundness and Uniqueness}

We can show that given any winning strategy $\sigma$, $\reify(\sigma)$ is the unique infinitary analytic proof $p$ such that $\llbracket \reify(p) \rrbracket = \sigma$.

For soundness, we first introduce some auxiliary notions.

\begin{definition}
Let $\sigma$ and $\tau$ be strategies on $A$. We say that $\sigma =_n \tau$ if each play in $\sigma$ of length at most $n$ is in $\tau$, and each play in $\tau$ of length at most $n$ is in $\sigma$.
\end{definition}

\noindent It is clear that $=_n$ is an equivalence relation, and
$\sigma = \tau$ if and only if $\sigma =_n \tau$ for each $n \in
\mathbb{N}$. We can lift the relation $=_n$ to uniform total
strategies pointwise.

\begin{proposition}
\begin{enumerate}
\item If $\sigma =_n \tau$ and $\rho$ is an isomorphism then $\sigma \circ \rho =_n \tau \circ \rho$. If $\sigma =_n \tau$ and $\rho$ is an isomorphism then $\rho \circ \sigma =_n \rho \circ \tau$.
\item If $\sigma =_n \tau$ and $\rho =_n \delta$ then $\langle \sigma, \rho \rangle =_n \langle \tau , \delta \rangle$. If $\sigma =_n \tau$ then $\sigma \circ \pi_i =_n \tau \circ \pi_i$.
\item If $\sigma =_n \tau$ then $\Lambda(\sigma) =_n \Lambda(\tau)$. If $\sigma =_n \tau$ then $\sigma \multimap \id =_{n+2} \tau \multimap \id$.
 \item If $\sigma_1 =_n \sigma_2$ and $\tau_1 =_n \tau_2$ then $[ \sigma_1 , \tau_1 ]_{\mathcal{C},\mathcal{D}} = [ \sigma_2 , \tau_2 ]_{\mathcal{C},\mathcal{D}}$. If $\sigma_1 =_n \sigma_2$ then $\sigma_1 \circ H =_n \sigma_2 \circ H$. If $\sigma_1 =_n \sigma_2$ then $\hat{\sigma_1} =_n \hat{\sigma_2}$.
\end{enumerate}
\label{equaln}
\end{proposition}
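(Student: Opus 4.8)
The statement to prove is Proposition~\ref{equaln}, which asserts that the relation $=_n$ is preserved by the categorical operations used in reification: composition with isomorphisms, pairing, projection, currying, the $\multimap o$ functor (with a degree shift of $2$), copairing $[\cdot,\cdot]$, precomposition with the equality isomorphism $H$, and the $\widehat{\cdot}$ operation from the quantifier adjunction. This is the ``$=_n$''-analogue of Proposition~\ref{ntotal}, which established the same closure properties for $n$-totality, and I expect the proof to mirror that one closely.

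The plan is to verify each clause by a direct argument at the level of plays, using the fact that $=_n$ is determined by agreement on plays of length at most $n$. For clause~(1), if $\rho$ is an isomorphism then it is total and strict, so composition with $\rho$ induces a length-preserving bijection on plays (a relabelling together with some copycat-style padding that does not shrink plays): hence any play of length $\le n$ in $\sigma \circ \rho$ comes from a play of length $\le n$ in $\sigma$, and agreement of $\sigma,\tau$ up to length $n$ transfers. For clause~(2), a play of length $\le n$ in $\langle\sigma,\rho\rangle$ restricts to plays of length $\le n$ in $\sigma$ and $\rho$ (the pairing interleaves without duplication), and conversely; similarly $\sigma\circ\pi_i$ just reindexes the first component. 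Clause~(3) for $\Lambda$ is immediate since currying is a bijection on the underlying sets of plays that preserves length; for $\sigma \multimap \id$ the degree shift of $2$ is the key point --- a play of length $\le n+2$ in $(\sigma\multimap\id):(B\multimap o)\multimap(A\multimap o)$ ``uses'' $\sigma$ only after the two outer moves (the opening $O$-move in the codomain $o$ and its copycat response), so it corresponds to a play of length $\le n$ in $\sigma$, exactly as in the proof of Proposition~\ref{ntotal}(3). Clause~(4) is routine: $[\sigma_1,\tau_1]$ agrees with $[\sigma_2,\tau_2]$ on each $\mathcal{L}$-structure because on each such structure it is literally $\sigma_i$ or $\tau_i$; precomposition with $H$ (an isomorphism of model categories) is a relabelling of components and preserves play lengths; and $\widehat{\cdot}$ is built from pairing and projection along the adjunction, so it inherits the property from clauses~(1) and~(2).

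Concretely I would phrase the proof as ``Simple verification, as in Proposition~\ref{ntotal}'', and then spell out only the one case that carries genuine content, namely the $+2$ shift for $\sigma\multimap\id$. For that case I would note that $(\sigma\multimap\id)$ is $\Lambda(g\circ\app\circ(\id\otimes\sigma))$-style, so any play begins with the opening move of the target $o$ and Player's forced copycat reply into the argument position, after which all subsequent moves are governed by $\sigma$; thus a play of length $m+2$ in $\sigma\multimap\id$ projects to a play of length $m$ in $\sigma$, and conversely. Given $\sigma =_n \tau$, two plays of length $\le n+2$ in $\sigma\multimap\id$ and $\tau\multimap\id$ therefore project to plays of length $\le n$ in $\sigma$ and $\tau$, which agree, so the original plays agree; hence $\sigma\multimap\id =_{n+2}\tau\multimap\id$.

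The only mild obstacle is bookkeeping: making precise, for each operation, the correspondence between plays in the composite and plays in the components, and checking that it does not decrease length (so that the bound $n$, or $n+2$, is respected). But all of this is exactly parallel to the already-granted Proposition~\ref{ntotal}, where the same play-level correspondences were used to transfer $n$-totality; here we transfer ``agreement up to length $n$'' instead of ``defined up to length $n$'', and the arguments are identical in structure. So I would keep the proof brief, citing the analogy and expanding only the degree-shifting case, since that is the single place where the index arithmetic is not completely transparent.
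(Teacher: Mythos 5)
Your proposal is correct and matches the paper's approach: the paper's own proof is just ``Simple verification'', i.e.\ exactly the play-level checks you describe, patterned on Proposition~\ref{ntotal}. Your expansion of the one non-trivial point --- the $n+2$ shift for $\sigma \multimap \id$ coming from the two opening moves on the $o$ components before play is governed by $\sigma$ --- is the right account of why the index arithmetic works.
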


\begin{proof}
Simple verification. \qed
\end{proof}

\begin{proposition}
  For every uniform total strategy $\sigma : \llbracket \vdash
  \Gamma \rrbracket$, $\llbracket \reify(\sigma) \rrbracket =
  \sigma$. \label{refsound-ws!-inf}
\end{proposition}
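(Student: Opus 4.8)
The plan is to deduce the statement from the stronger claim that $\llbracket \reify(\sigma) \rrbracket =_n \sigma$ for every $n \in \mathbb{N}$; since $\sigma = \tau$ iff $\sigma =_n \tau$ for all $n$, this gives the result. This refinement of Lemma \ref{refsound} stands to that lemma exactly as Proposition \ref{inftotal} stands to finitary totality arguments: because $\reify(\sigma)$ may be infinitely deep, we cannot recurse on the structure of a finite proof, so instead we carry the ``depth counter'' $n$ as the dominant component of a well-founded compound measure and propagate a truncated version ($=_n$) of the equation through the coinductive definition of $\reify$ in Figure \ref{WS-reify}. The semantics of the (possibly infinitary) proof $\reify(\sigma)$ is computed via the cpo structure of $\mathcal{G}^{\mathcal{M}_X^\Theta}$, but by Proposition \ref{infsemantics} it satisfies the same defining equations (Figures \ref{WS-sem}, \ref{WS!-sem}, \ref{WS1-sem}) as in the finitary case, so those equations may be used freely.

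Concretely I would induct on $f(n,X,\Theta,\Gamma) = \langle n, \mathsf{tl}^+(\Gamma), \mathsf{hd}^+(\Gamma), \mathsf{tl}^-(\Gamma), \mathsf{hd}^-(\Gamma), \mathsf{L}(X,\Theta) \rangle$ ordered lexicographically --- the measure already used in Proposition \ref{inftotal} --- performing the same case analysis on $\Gamma$ (and on whether $\Theta$ is lean) as in Lemma \ref{refsound}. In each case the defining clause of $\reify$ and the corresponding semantic clause express $\llbracket \reify_\Gamma(\sigma) \rrbracket$ as $G(\llbracket \reify_{\Gamma_1}(\sigma_1) \rrbracket, \dots, \llbracket \reify_{\Gamma_k}(\sigma_k) \rrbracket)$, where $G$ is a composite of isomorphisms, projections, pairings, currying $\Lambda_I$, copairings $[-,-]$, $\hat{(-)}$ and $(-) \multimap \id$, and where the same $G$ sends $(\sigma_1,\dots,\sigma_k)$ to $\sigma$. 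By Proposition \ref{equaln} each of these operations preserves $=_m$, with $(-)\multimap\id$ the only one that shifts the index (from $m$ to $m+2$); the $\oplus$ and $\parr$ cases additionally use that the completeness axioms of Section \ref{compaxioms} hold for total strategies. The desired $=_n$ equality then follows from the induction hypotheses $\llbracket \reify_{\Gamma_i}(\sigma_i) \rrbracket =_{m_i} \sigma_i$, with $m_i = n$ in all cases except the move-stripping clauses $\reify_{\bot,P}$ (via $\p_\bot^+ = \Lambda_I$, where $m_i = n$) and $\reify_{\top,N}$ (via $\p_\top^- = \unit_\multimap \circ ((-)\multimap\id)$, where $m_i = n-2$). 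One checks, exactly as in the termination argument for Proposition \ref{inftotal}, that $f$ strictly decreases at every recursive call: in the $\p_\top^-$ case the dominant component $n$ drops, and in every other case $n$ is unchanged while one of $\langle \mathsf{tl}^+, \mathsf{hd}^+, \mathsf{tl}^-, \mathsf{hd}^-, \mathsf{L}\rangle$ strictly decreases.

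The main obstacle is precisely this index bookkeeping: one must confirm that every clause of $\reify$ in which $(-)\multimap\id$ occurs in the associated semantic operation is a clause that strictly decreases the dominant component $n$ (rather than only some structural component), so that consuming two units of ``depth'' is always affordable. Once the clause-by-clause correspondence with the termination check of Proposition \ref{inftotal} is made explicit, the remaining verifications are the routine equational calculations of Lemma \ref{refsound} with plain equality replaced throughout by $=_n$ and the semantic identities justified by Proposition \ref{infsemantics}; the base-like cases $\reify_{\mathbf{1},\Gamma}$, $\reify_\top$, $\p_{\neq}$ give the empty or unique total strategy and are immediate.
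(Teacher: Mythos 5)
Your proposal matches the paper's proof essentially verbatim: both establish $\llbracket \reify(\sigma) \rrbracket =_n \sigma$ for all $n$ by lexicographic induction on the same compound measure $\langle n, \mathsf{tl}^+, \mathsf{hd}^+, \mathsf{tl}^-, \mathsf{hd}^-, \mathsf{L}\rangle$ from Proposition~\ref{inftotal}, re-running the case analysis of Lemma~\ref{refsound} with $=$ replaced by $=_n$, propagating via Proposition~\ref{equaln} and justifying the semantic equations via Proposition~\ref{infsemantics}, with the $\top, N$ case using the inductive hypothesis at a smaller $n$. Your extra attention to the index bookkeeping for $(-)\multimap\id$ is exactly the point the paper handles by noting that only the $\p_\top^-$ clause invokes it and that clause decreases the dominant component $n$.
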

\begin{proof} We show that for each $n$, $\llbracket \reify(\sigma)
  \rrbracket =_n \sigma$. The structure of the induction follows that
  of Proposition \ref{inftotal}, lexicographically on $$\langle n ,
  \mathsf{tl}^+(\Gamma), \mathsf{hd}^+(\Gamma), \mathsf{tl}^-(\Gamma),
  \mathsf{hd}^-(\Gamma), \mathsf{L}(X,\Theta) \rangle.$$ In each
  particular case, the reasoning follows the proof of Proposition
  \ref{refsound} using $=_n$ in the inductive hypothesis rather than
  $=$, and propagating this to the main equation using Proposition
  \ref{equaln}. In the case of $\Gamma = \top , N$ we use the
  inductive hypothesis with a smaller $n$, using the final clause in
  Proposition \ref{equaln}. \qed
\end{proof}

\begin{proposition}
Given any infinitary analytic proof $p$, $\reify(\llbracket p \rrbracket) = p$.
\label{refunique-ws!-inf}
\end{proposition}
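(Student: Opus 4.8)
The plan is to argue by coinduction, using that $\mathcal{I}$ is a final coalgebra and that $\reify$ is by construction the anamorphism $\leftmoon\mathsf{reif}\rightmoon$ of the one-step reification map $\mathsf{reif}$ introduced above. First I would set up the domains carefully. By Proposition~\ref{inftotal} every infinitary analytic proof $p$ of $X;\Theta\vdash\Gamma$ denotes a \emph{total} strategy, and since the interpretation takes place in $\mathcal{G}^{\mathcal{M}_X^\Theta}$ the strategy $\llbracket p\rrbracket$ is also uniform; hence $\llbracket-\rrbracket$ restricts to a function from the set $\mathcal{I}^v$ of valid infinitary analytic proofs (collected over all sequents) into $\sum_{X;\Theta\vdash\Gamma\in\seqo}\tot_{X;\Theta\vdash\Gamma}$. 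The validity condition on trees is closed under passing to subtrees, so $\alpha$ restricts to a coalgebra structure $\alpha^v$ on $\mathcal{I}^v$ and the inclusion $\mathcal{I}^v\hookrightarrow\mathcal{I}$ is a coalgebra morphism into the final coalgebra; it is therefore \emph{the} such morphism.

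The crux will be a one-step recovery lemma: for every valid infinitary analytic proof $p$ of $X;\Theta\vdash\Gamma$ with root rule $\p_x$ and immediate subproofs $p_1,\ldots,p_n$ (of $X_1;\Theta_1\vdash\Gamma_1,\ldots,X_n;\Theta_n\vdash\Gamma_n$), one has $\mathsf{reif}_{X;\Theta\vdash\Gamma}(\llbracket p\rrbracket) = ((\p_x,X;\Theta\vdash\Gamma),(\llbracket p_1\rrbracket,\ldots,\llbracket p_n\rrbracket))$. Phrased differently, this says precisely that $\llbracket-\rrbracket : (\mathcal{I}^v,\alpha^v)\to(\sum\tot,\mathsf{reif})$ is a coalgebra morphism. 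I would prove it by case analysis on the root rule $\p_x$; each case is exactly the one-layer computation already done in the proof of Lemma~\ref{refunique}, but \emph{stopping short of the appeal to the inductive hypothesis}. Concretely, using that the semantic equations of Figures~\ref{WS-sem}, \ref{WS!-sem} and \ref{WS1-sem} are valid for infinitary analytic proofs (Proposition~\ref{infsemantics}), one cancels the isomorphisms that $\mathsf{reif}$ inserts against those appearing in the semantics of $\p_x$, leaving exactly $\llbracket p_i\rrbracket$ as the recursive inputs; for the additive and quantifier rules one invokes the $\pi$-atomicity axiom (axiom 1b of Section~\ref{compaxioms}) and totality of $\llbracket p\rrbracket$ — together with Proposition~\ref{choice1} for $\oplus,\parr$ and Proposition~\ref{choice2} for $\exists$ — to see that $\mathsf{reif}$ is forced to select the same rule $\p_x$ that $p$ uses; and for the equality rules ($\p_\match^{x,y,z}$, $\p_{\neq}$, $\p_\mathsf{at-}$, $\p_\mathsf{at+}$) the lean-$\Theta$ bookkeeping is unchanged from the finite proof.

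Granted the lemma, the conclusion is immediate: $\reify\circ\llbracket-\rrbracket$ is a composite of coalgebra morphisms $(\mathcal{I}^v,\alpha^v)\to(\sum\tot,\mathsf{reif})\to(\mathcal{I},\alpha)$ (the second being $\reify=\leftmoon\mathsf{reif}\rightmoon$), hence by finality of $(\mathcal{I},\alpha)$ it coincides with the inclusion $\mathcal{I}^v\hookrightarrow\mathcal{I}$, i.e. $\reify(\llbracket p\rrbracket)=p$ for every infinitary analytic proof $p$. Equivalently, one could prove by induction on $n$ that $N_n(\reify(\llbracket p\rrbracket))=N_n(p)$ for all $n$, using the one-step lemma at each level, and conclude since a tree is determined by its set of nodes. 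The main obstacle is the one-step recovery lemma — in particular, checking for the choice rules that $\mathsf{reif}$'s selection genuinely agrees with the rule actually used in $p$ (this is where totality of $\llbracket p\rrbracket$ and Propositions~\ref{choice1}--\ref{choice2} are essential) and that the isomorphism-chasing leaves precisely the $\llbracket p_i\rrbracket$; everything else is routine and parallels Lemma~\ref{refunique}.
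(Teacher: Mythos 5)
Your proposal is correct and follows essentially the same route as the paper: both arguments exploit finality of $\mathcal{I}$ (uniqueness of the coalgebra morphism induced by $\alpha$) and reduce the problem to a per-rule one-step computation that is exactly the calculation already carried out in Lemma~\ref{refunique}, transported to the infinitary setting by Proposition~\ref{infsemantics}. Your factoring of $\reify\circ\llbracket-\rrbracket$ through the coalgebra $(\sum\tot,\mathsf{reif})$, rather than verifying directly that it is a coalgebra endomorphism of $\mathcal{I}$ as the paper does, is only a cosmetic repackaging of the same argument.
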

\begin{proof} Since $\id = \leftmoon \alpha \rightmoon$, we know that $\id$ is the unique morphism $f$ such that:
\begin{diagram}
\mathcal{I}_\Gamma & \rTo^{\alpha} & (\prf \times \seqo) \times \mathcal{I}^\ast \\
\dTo^{f} & & \dTo_{\id \times f^\ast} \\
\mathcal{I}_\Gamma & \rTo^{\alpha} & (\prf \times \seqo) \times \mathcal{I}^\ast \\
\end{diagram}

Thus to show that $\reify \circ \llbracket - \rrbracket = \id$ it is sufficient to show that $\alpha \circ \reify \circ \llbracket - \rrbracket = \id \times (\reify \circ \llbracket - \rrbracket)^\ast \circ \alpha$, i.e. that for each infinitary analytic proof $p$ we have $\alpha(\reify(\llbracket p \rrbracket)) = (\id \times (\reify \circ \llbracket - \rrbracket)^\ast)(\alpha(p))$.
\begin{itemize}
\item For binary rules $\p_x$ we must show that $$\reify(\llbracket \p_x(p_1,p_2) \rrbracket) = \p_x(\reify(\llbracket p_1 \rrbracket), \reify(\llbracket p_2 \rrbracket)).$$
\item For unary rules $\p_x$ we must show that $\reify(\llbracket \p_x(p)) \rrbracket = \p_x(\reify(\llbracket p \rrbracket))$.
\item For nullary rules $\p_x$ we must show that $\reify(\llbracket \p_x \rrbracket) = \p_x$.
\end{itemize}
For each proof rule, we have already shown this in the proof of Proposition \ref{refunique}. Proposition \ref{infsemantics} ensures that the proof applies in this setting. \qed
\end{proof}

\subsubsection{Full Completeness and Normalisation}

We have thus shown:

\begin{theorem}
  Each total strategy $\sigma$ on $\vdash \Gamma$ is the denotation of
  a unique infinitary analytic proof $\reify(\sigma)$.
\label{infrefok}
\end{theorem}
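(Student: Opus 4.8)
The plan is to assemble the statement from the three components already developed in this section: the coinductive definition of $\reify$ on uniform total strategies, its soundness, and its uniqueness. The theorem is then essentially a corollary of Propositions~\ref{refsound-ws!-inf} and~\ref{refunique-ws!-inf}, once one checks that $\reify$ is well-defined on the larger class of total (as opposed to bounded winning) strategies.

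First I would verify that the map $\reify = \leftmoon \mathsf{reif} \rightmoon$ is genuinely well-defined when the indexed family is taken to be $\tot_{X;\Theta\vdash\Gamma}$. Running through the clauses of Figure~\ref{WS-reify}, one checks that each clause, applied to a uniform total strategy, returns a valid core proof rule of \textsf{WS1} together with uniform total strategies on its premises. The facts needed are exactly the ones already recorded: totality is preserved by pre- and post-composition with isomorphisms and by composition with the projections $\pi_i$ (cf.\ Proposition~\ref{ntotal}); the completeness axioms \textbf{1a}, \textbf{1b}, \textbf{2} of Section~\ref{compaxioms} hold for total strategies (a total strategy into $o$ on $M \times N$ is $\mathsf{d}$ of a total strategy into $o$ on $M$ or on $N$, a total strategy into $o$ is the name of a total strategy, etc.); the uniform-choice Propositions~\ref{choice1} and~\ref{choice2} ensure that for head formula $P \oplus Q$ or $\exists x . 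P$ the same component or witnessing variable is selected across the whole family; and the quantifier adjunction together with the copairing $[-,-]$ over subcategories of $\mathcal{M}_X^\Theta$ and the substitution isomorphisms $H_{x,y,z}$ preserve totality for $\p_\forall$, $\p_\mathsf{at-}$, $\p_\match^{x,y,z}$ and $\p_{\neq}$. Hence the anamorphism into the final coalgebra $\mathcal{I}$ of $X \mapsto (\prf \times \seqo) \times X^\ast$ exists and, by construction, lands in the well-formed infinitary analytic proofs.

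Next I would note that $\reify(\sigma)$ really is an infinitary analytic proof of the right sequent with total denotation: by Proposition~\ref{inftotal}, every infinitary analytic proof denotes a uniform total strategy, and by Proposition~\ref{infsemantics} the semantic equations of Figures~\ref{WS-sem}, \ref{WS!-sem} and~\ref{WS1-sem} govern that denotation. Soundness then follows directly from Proposition~\ref{refsound-ws!-inf}, which gives $\llbracket \reify(\sigma) \rrbracket = \sigma$ for every uniform total $\sigma$. For uniqueness, suppose $p \in \mathcal{I}_{\vdash \Gamma}$ satisfies $\llbracket p \rrbracket = \sigma$; then Proposition~\ref{refunique-ws!-inf} gives $\reify(\llbracket p \rrbracket) = p$, so $p = \reify(\sigma)$. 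This shows $\reify(\sigma)$ is the unique infinitary analytic proof denoting $\sigma$, as required.

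The main obstacle is localised in the first step: ensuring that totality, rather than the stronger winning condition exploited in the bounded case of Theorem~\ref{fullcomp}, is preserved through \emph{every} clause, including the first-order ones where the family is split along $\mathcal{L}$-structure — the negative-atom case, where the sub-family on models falsifying $\phi(\overrightarrow{x})$ must be the empty play (vacuously total on the game $\bot, \Gamma$), and the non-lean case, where $\p_\match$ copairs two total families over disjoint subcategories of $\mathcal{M}_X^\Theta$. One also has to check that the clauses which must be finitely deep — the collation steps using the core elimination rules, and the non-lean unfolding of the matching rule — do terminate, so that $\reify(\sigma)$ is a node-by-node-valid element of $\mathcal{I}$ and not an ill-founded pseudo-tree; this is exactly the termination argument already given (the measure $\langle \mathsf{tl}^+, \mathsf{hd}^+, \mathsf{tl}^-, \mathsf{hd}^-, \mathsf{L} \rangle$ in the proof of Proposition~\ref{inftotal}), reused here. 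With these bookkeeping points in place the theorem is immediate.
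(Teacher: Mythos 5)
Your proposal is correct and follows essentially the same route as the paper: the theorem is obtained by defining $\reify$ as the anamorphism of $\mathsf{reif}$ on uniform total strategies (checking, as the paper does, that totality and the completeness axioms of Section~\ref{compaxioms} make each clause well-defined), and then combining soundness (Proposition~\ref{refsound-ws!-inf}) with uniqueness (Proposition~\ref{refunique-ws!-inf}). Your extra worry about ill-founded trees is harmless but not needed for well-definedness, since the anamorphism into $\mathcal{I}$ always yields a node-by-node-valid infinitary proof; the measure argument is only used inside the soundness and totality propositions you already cite.
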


\noindent We hence have a bijection between infinitary analytic proofs
of a formula, and total strategies on the denotation of that formula,
via the semantics. Since any proof in \textsf{WS1} can be given
semantics as a winning strategy, and winning strategies are total, we
may $\reify$ the semantics of a \textsf{WS1} proof to generate its
infinitary normal form $\reify(\llbracket p \rrbracket)$.

\begin{theorem}
  For each \textsf{WS1} proof $p$, there is a unique infinitary
  analytic proof $q$ such that $\llbracket p \rrbracket = \llbracket q
  \rrbracket$.
\label{proofnorm}
\end{theorem}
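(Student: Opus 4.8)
The plan is to obtain $q$ as the infinitary analytic proof reified from the semantics of $p$, appealing directly to Theorem~\ref{infrefok}. First I would recall that the categorical semantics developed in Section~\ref{sec:catsemantics}, extended by the interpretation of atoms and quantifiers, assigns to every \textsf{WS1} proof $p$ of $X;\Theta\vdash\Gamma$ a uniform \emph{winning} strategy $\llbracket p\rrbracket$ in the category $\mathcal{W}^{\mathcal{M}_X^\Theta}$; this is verified rule by rule, using that $\mathcal{W}^{\mathcal{M}_X^\Theta}$ is a WS!-category and that winningness is preserved pointwise by each categorical operation used in the interpretation of the rules — in particular by the anamorphism, which lands among winning strategies by Proposition~\ref{bangcoalgW} lifted pointwise. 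Since every winning strategy is by definition total, $\llbracket p\rrbracket$ is a uniform total strategy on $\llbracket X;\Theta\vdash\Gamma\rrbracket$, hence lies in the domain of $\reify$.

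Next I would set $q := \reify(\llbracket p\rrbracket)$. By the construction of $\reify$ on total strategies (the anamorphism of $\mathsf{reif}$), $q$ is an element of $\mathcal{I}_{X;\Theta\vdash\Gamma}$, i.e.\ an infinitary analytic proof of $X;\Theta\vdash\Gamma$, and by Proposition~\ref{refsound-ws!-inf} it satisfies $\llbracket q\rrbracket=\llbracket p\rrbracket$. This gives the existence half of the statement. For uniqueness, suppose $q'$ is any infinitary analytic proof with $\llbracket q'\rrbracket=\llbracket p\rrbracket$. By Proposition~\ref{refunique-ws!-inf} we have $\reify(\llbracket q'\rrbracket)=q'$, and since $\llbracket q'\rrbracket=\llbracket p\rrbracket$ this yields $q'=\reify(\llbracket p\rrbracket)=q$. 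Hence $q$ is the unique infinitary analytic proof with denotation $\llbracket p\rrbracket$, which is exactly Theorem~\ref{proofnorm}.

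The argument is essentially an assembly of earlier results, so the only genuine content beyond bookkeeping is the soundness of the semantics: that proof interpretations land among the \emph{winning} strategies, not merely the total ones. This is what justifies totality of $\llbracket p\rrbracket$ and therefore the applicability of $\reify$; I expect this to be the main (mild) obstacle, and it is discharged by checking that each rule — most delicately the anamorphism and the other exponential rules on unbounded games — preserves winningness, which follows from finality of $(\mathop{!}N,\alpha)$ in $\mathcal{W}$ (Proposition~\ref{bangcoalgW}) together with the pointwise lifting of the WS!-structure to $\mathcal{W}^{\mathcal{M}_X^\Theta}$ (as in Proposition~\ref{uniws} and its winning-strategy analogue). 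Given that, the theorem is immediate from Theorem~\ref{infrefok} and Propositions~\ref{refsound-ws!-inf} and~\ref{refunique-ws!-inf}.
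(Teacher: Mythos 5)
Your proposal is correct and follows essentially the same route as the paper: take $q=\reify(\llbracket p\rrbracket)$, use Proposition~\ref{refsound-ws!-inf} for existence and Proposition~\ref{refunique-ws!-inf} for uniqueness, with the observation (which the paper also makes just before the theorem) that denotations of \textsf{WS1} proofs are winning, hence total, so $\reify$ applies. Your extra attention to rule-by-rule preservation of winningness is just an explicit spelling-out of what the paper assumes from its earlier semantic development.
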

\begin{proof}
  Let $q = \reify(\llbracket p \rrbracket)$. Then $\llbracket q
  \rrbracket = \llbracket \reify(\llbracket p \rrbracket) \rrbracket = \llbracket
  p \rrbracket$ by Proposition \ref{refsound-ws!-inf}. If $q'$ is an
  infinitary analytic proof with $\llbracket q' \rrbracket =
  \llbracket p \rrbracket$ then $\llbracket q' \rrbracket = \llbracket
  q \rrbracket$ and so
  $\reify(\llbracket q' \rrbracket) = \reify(\llbracket q \rrbracket)$
  and Proposition \ref{refunique-ws!-inf} ensures that $q' = q$. \qed
\end{proof}

\noindent While infinitary analytic proofs may denote strategies that
are not winning, any infinitary analytic proof generated as a result
of the above normalisation denotes a winning strategy. The above
result also ensures that proofs $p_1$ and $p_2$ in \textsf{WS1} denote
the same strategy if and only if their normal forms (as infinitary
analytic proofs) are identical.

\section{Further Directions}




In this paper, we have given some simple examples of ``stateful  
proofs''. We aim to investigate further examples in more expressive  
logics, and to specify additional properties of programs
in more powerful programming languages (such as the  
games-based language in  e.g. \cite{Long_PLGM}). Further extensions  
to our work which may be required  in order to do so include:
\begin{itemize}
\item \textsf{WS1} has been presented as a general first-order logic.  
By adding  axioms, we may specify and study programs in particular  
domains. For example, can we derive a version of  Peano Arithmetic in  
which proofs have constructive, stateful content (cf \cite{coq95})?
\item Extension with \emph{propositional variables} (and potentially,  
second-order quantification) would allow generic  ``copycat  
strategies'' to be captured. On the programming side, this would allow  
us to model languages with polymorphism.
\item We have interpreted the exponentials as greatest fixpoints.  
Adding general inductive and coinductive  types, as in $\mu L J$  
\cite{Cla_FIX} would extend \textsf{WS1} to a rich collection of  
datatypes (including finite and infinite lists, for example).
\end{itemize}

\paragraph{Acknolwedgements} The authors would like to thank
Pierre-Louis Curien, Alessio Guglielmi, Pierre Clairambault and
anonymous reviewers for earlier comments on this work. This work was
supported by the (UK) EPSRC grant EP/HO23097.

\bibliographystyle{model2-names}
\bibliography{mybib}

\end{document}